\newcolumntype{M}[1]{>{\centering\arraybackslash}m{#1}}
\theoremstyle{definition}
\newtheorem{defn}{Definition}[section]
\newtheorem{rmk}[defn]{Remark}
\newtheorem{example}[defn]{Example}
\theoremstyle{plain}
\newtheorem{thm}[defn]{Theorem}
\newtheorem{cor}[defn]{Corollary}
\newtheorem{lem}[defn]{Lemma}
\newtheorem{prop}[defn]{Proposition}
\newtheorem{claim}[defn]{Claim}
\numberwithin{equation}{section}
\xpatchcmd{\proof}{\itshape}{\normalfont\proofnameformat}{}{}
\newcommand{\proofnameformat}{}
\renewcommand{\proofnameformat}{\bfseries}
\newcommand{\diff}{\mrm{d}}
\newcommand{\Diff}{\mrm{D}}
\newcommand{\inv}{^{-1}}
\newcommand{\T}[1]{\text{#1}}
\newcommand{\mbb}[1]{\mathbb{#1}}
\newcommand{\mc}[1]{\mathcal{#1}}
\newcommand{\mf}[1]{\mathfrak{#1}}
\newcommand{\mrm}[1]{\mathrm{#1}}
\newcommand{\cc}[1]{\overline{#1}}
\newcommand{\R}{\mathbb{R}}
\newcommand{\C}{\mathbb{C}}
\newcommand{\Z}{\mathbb{Z}}
\newcommand{\K}{\mathbb{K}}
\newcommand{\GL}{\mrm{GL}}
\newcommand{\OR}{\mrm{O}}
\newcommand{\SO}{\mrm{SO}}
\newcommand{\U}{\mrm{U}}
\newcommand{\SU}{\mrm{SU}}
\newcommand{\SL}{\mrm{SL}}
\newcommand{\Spin}{\mrm{Spin}}
\newcommand{\Pin}{\mrm{Pin}}
\newcommand{\Sp}{\mrm{Sp}}
\newcommand{\Cl}{\mrm{Cl}}
\newcommand{\E}{\mrm{E}}
\newcommand{\so}{\mf{so}}
\newcommand{\su}{\mf{su}}
\newcommand{\spin}{\mf{spin}}
\newcommand{\End}{\mrm{End}}
\newcommand{\Hom}{\mrm{Hom}}
\newcommand{\Aut}{\mrm{Aut}}
\newcommand{\Ad}{\mrm{Ad}}
\newcommand{\ad}{\mrm{ad}}
\newcommand{\id}{\mrm{Id}}
\newcommand{\diag}{\mrm{diag}}
\newcommand{\thetaw}{\theta_\mrm{w}}
\DeclareMathOperator{\rank}{rank}
\DeclareMathOperator{\spann}{span}
\DeclareMathOperator{\imag}{Im}
\DeclareMathOperator{\tr}{tr}
\newcommand{\toenv}[4]{\begin{equation*}
			\begin{tikzcd}[row sep=0cm, ampersand replacement=\&]
				#1 \ar[r] \& {#2} \\
				#3 \ar[r,mapsto] \& {#4}
			\end{tikzcd}
		\end{equation*}}
\newcommand{\toenvnum}[5]{\begin{equation}
			\begin{tikzcd}[row sep=0cm, ampersand replacement=\&]
				#1 \ar[r] \& {#2} \\
				#3 \ar[r,mapsto] \& {#4}
			\end{tikzcd}
			\label{#5}
		\end{equation}}
\begin{document}

\pagenumbering{Roman}


\begin{titlepage}
	\begin{center}
        \vspace*{0.5cm}
        \Huge				
        \textbf{\textsc{A Mathematical Construction\\ of an\\ $\mathrm{E}_6$ Grand Unified Theory}}
				\vspace{1cm}
				
				\large
				A Thesis Presented for the Degree of\\
        \emph{Master of Science}
				\vspace{1.3cm}
								
				\includegraphics[width=0.4\textwidth]{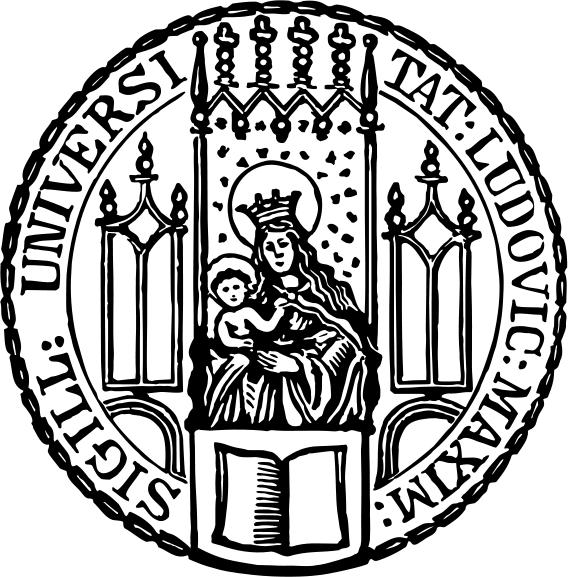}
				\vspace{1cm}
				
				\LARGE
        \textsc{Vivian Anthony Britto}
				\vspace{0.8cm}
				
				\large
				under the supervision of\\
				\smallskip
				\emph{Prof.~Dr.~Mark Hamilton}
				\vspace{0.7cm}
				
        Mathematisches Institut\\
        Ludwig-Maximilians-Universit\"at M\"unchen\\
				
				\vspace{0.4cm}
				September 2017
				
    \end{center}
\end{titlepage}

\clearpage
\thispagestyle{empty}
\mbox{}
\clearpage

\clearpage
\thispagestyle{empty}
\null\vfill
\begin{quotation}
\raggedleft
\emph{La gloria di colui che tutto move\\
per l'universo penetra, e risplende\\
in una parte pi\'u e meno altrove.}\\
\medskip
\textsc{Dante Alighieri, \emph{Paradiso}, Canto I}	
\end{quotation}
\vfill\vfill
\clearpage

\chapter*{Declaration}

I hereby declare that I am the sole author of this thesis. Where the work of others has been consulted, this is duly acknowledged in the references that appear at the end of this text. I have not used any other unnamed sources. All verbatim or referential use of the sources named in the references has been specifically indicated in the text. This work was not previously presented to another examination board and has not been published.

\vspace{2cm}
\begin{flushright}
	21st September 2017, M\"unchen
\end{flushright}

\chapter*{Acknowledgements}

I am indebted to Prof.~Hamilton for his guidance and patience during the writing of this thesis, and for making me a better mathematician and researcher in the process. Gratitude is also owed to Dr.~Robert Helling for many enlightening discussions, and for providing the inspiration for the final chapter of this paper.

I was lucky to have the finest colleagues at LMU, some of whom deserve a special mention: Alex Tabler and Danu Thung for their companionship, and the endless conversations about mathematics, physics, LaTeX, and everything in between; Martin Dupont for his camaraderie and careful proofreading of the final draft, and Agnes Rugel, who gently walked me through my many existential crises. Finally, I must thank my parents, to whom I owe everything, and my brother, for being a constant source of joy in my life.

\chapter*{Introduction}

Nature is simple. This article of faith, often taken for granted, sometimes fought over bitterly, has ever been at the centre of physicists' attempts to describe reality. A wonderful declaration of this is found in the third book of the \emph{Principia}, in a passage in which Newton sharpens the proverbial razor of that famous monk from Ockham: ``We are to admit no more causes of natural things than such as are both true and sufficient to explain their appearances. To this purpose the philosophers say that Nature does nothing in vain, and more is in vain when less will serve; for Nature is pleased with simplicity and affects not the pomp of superfluous causes.''

The Standard Model of particle physics is decidedly not simple. It is nevertheless an absolute triumph of modern science, remarkable for its economy in asserting but the single guiding principle of gauge symmetry within the framework of quantum field theory. Less pleasing is the actual $\U(1) \times \SU(2) \times \SU(3)$ gauge group itself, with the seemingly arbitrary charges of the $\U(1)$ hypercharge group. Other aspects also demand explanation: the Higgs mechanism and corresponding hierarchy problem, the observed non-zero masses of neutrinos, the 19 unrelated parameters within the Standard Model that need to be fine-tuned, and the inability to account for a potential cold dark matter particle; theoretical puzzles include the mathematical validity of the path integral, the chirality of the leptons and quarks, and the fact that these particles come in pairs. Perhaps most damning of all, the Standard Model cannot account for gravitation, because quantum field theories of gravity generally break down before reaching the Planck scale.

Grand unification attempts to answer some of these questions by positing that the symmetry of the Standard Model is a broken one, a shadow of some other, more fundamental symmetry of nature that is only accessible at extremely high energies. Mathematically, this corresponds to embedding the symmetry group of the Standard Model $G_\mrm{SM}$ into a larger, often simpler Lie Group, and picking a representation of the same such that it reduces to the Standard Model fermion representation when one restricts to $G_\mrm{SM}$. The first example came in 1974, when Georgi and Glashow proposed their $\SU(5)$ theory; though it was definitively disproved some twenty years later, it remains the prototypical grand unified theory for its aesthetic simplicity; its most important feature was certainly its logical explanation of the fractional charges of the quarks. This is a virtue that can be extended to the $\Spin (10)$ grand unified theory, another brainchild of Howard Georgi. The spinor representations in which it accommodates the Standard Model fermions are completely natural, separating the left- and right-handed particles in two different irreducible representations. We will study this theory in some detail in chapter \ref{spin10gutchapter}, since it plays a vital role in the construction of the $\E_6$ grand unified theory, the focus of this paper.

An $\E_6$ grand unified theory first appeared in 1976, due to G\"ursey, Ramond and Sikivie. Of the five exceptional groups, $\E_6$ is considered the most attractive for unification due to the following reasons: (i) it contains both $\Spin (10) \times \U(1)$ and $\SU (3) \times \SU(3) \times \SU(3)$ as maximal subgroups, each of which admit embeddings of the Standard Model; (ii) uniquely among the exceptional groups, it admits complex representations; in particular, its 27 dimensional fundamental representation accommodates one generation of left-handed fermions under the usual charge assignments; (iii) all of its representations are anomaly-free. We will discuss each of these aspects in the coming pages.

This thesis was originally conceived as an extension of Baez and Huerta's analysis in \emph{The Algebra of Grand Unified Theories} \cite{baez10} to the case of the $\E_6$ theory. There, the authors undertook a mathematical introduction to the representation theory of the Standard Model, and of grand unified theories---they treated in some detail the $\SU(5)$, $\Spin  (10)$ and Pati-Salam models, and then considered how these theories may be related to each other. Their pedagogical approach forms the basis for most of chapter \ref{standardmodelchapter}, and chapters \ref{thesu5gutsection} and \ref{thespin10gutsubsection}, where we, respectively, introduce the Standard Model, and prove that $\SU(5)$ and $\Spin (10)$ each extend it. Chapter \ref{e6isagutsection}, where we show that $\E_6$ is a grand unified theory, is the culmination of this project.

John Adams' \emph{Lectures on Exceptional Lie Groups} \cite{adams96} was the other major influence on this paper. His lucid presentation of the construction of these ``curiosities of Nature'' contained the three main ingredients we needed to prove that $\E_6$ extended the $\Spin (10)$ theory: (i) an explicit realisation of the subgroup $\Spin (10) \times \U(1) / \Z_4 \subset \E_6$; (ii) a route to the $27$-dimensional fundamental representations of $\E_6$, and (iii) the characterisation of the restriction of these representations to $\Spin (10) \times \U(1)$. Moreover, his introduction to Clifford Algebras, and the $\Spin$ groups and their representations, enabled us to deepen Baez and Huerta's discussion of the $\Spin (10)$ grand unified theory; in particular, we were able to make more precise the connection between the $\SU(5)$ fermion representation $\Lambda^* \C^5$, and the spinor representations $\Delta^\pm$ of $\Spin (10)$

Chapters \ref{e6isagutsection} and \ref{thenewfermionssubsec} contain our modest contributions to the literature. In the former, we explicitly check that that $\Z_4$ kernel of the homomorphism $\Spin (10) \times \U(1) \to \E_6$ acts trivially on every fermion; this is absolutely essential (in the cascade of unified theories that we consider) for $\E_6$ to extend the $\Spin (10)$ theory, and hence the Standard Model. We believe the reason that this result does not appear anywhere in the (predominantly physics) literature on the subject is the same reason that the $\Z_6$ kernel of the homomorphism $G_\mrm{SM} \to \SU(5)$ is rarely mentioned: physicists are often content to deal with these symmetry groups at the level of Lie algebras, which are indifferent to finite quotients of Lie groups. This affection for Lie algebras extends to their discussions of symmetry breaking in grand unified theories, which are almost universally analysed using Dynkin diagrams and related techniques. While computationally preferable, we felt that following this method would break with the spirit of the rest of the paper, so we attempted to understand symmetry breaking, in particular the symmetry breaking of the exotic $\E_6$ fermions under $\Spin (10) \to \SU(5)$, using a different approach: we explicitly embedded $\su(5) \hookrightarrow \so(10) \cong \spin (10)$, and then solved the related eigenvalue problem; this is the work of chapter \ref{thenewfermionssubsec}. The result of this calculation is table \ref{particlesinthenofe6}, where one sees how the Standard Model fermions and their new exotic compatriots fit into the fundamental representation of $\E_6$. This apparent bounty of new physics was the impetus for the final chapter, on the phenomenology of grand unified theories.

To avoid getting lost in quantum field theory, we restricted ourselves to the following question in chapter \ref{aspectsofphenochapter}: are there any predictions of grand unified theories that come solely out of representation theory? One of the most famous is certainly the Weinberg angle, and we treat this in section \ref{weakmixinganglesec}. We also consider in some detail, because it has a rather nice mathematical interpretation, the issue of anomaly cancellation; this is not a phenomenological prediction of grand unified theories per se, but rather, a requirement on their fermion representations: in section \ref{anomalycancellationsec}, we present Okubo's proof \cite{okubo77} that all representations of $\E_6$ are anomaly-free. We devote the final section of this paper to a brief but general discussion on the signatures of grand unified theories, and their present outlook.

\tableofcontents

\cleardoublepage\pagenumbering{arabic}

\chapter{The Standard Model}

\label{standardmodelchapter}

Any discussion of grand unification must begin with the Standard Model of particle physics. This rather uninspiringly-named theory is in fact a theory of almost everything: it describes three of the four known fundamental forces in the universe (the electromagnetic, weak, and strong interactions), as well as classifying all known elementary particles. It was developed in stages throughout the latter half of the twentieth century, with the current formulation being finalized in the mid-1970s upon experimental confirmation of the existence of quarks. The history of this development is a fascinating subject in its own right, featuring brilliant scientists, and set against the backdrop of some of the darkest periods of the last century. We refer the reader to \cite{crease96} for the history, and to \cite{hoddeson06} for a remarkable collection of scientific essays from the pioneers of the field.

Since those early days, experimental confirmation of Standard Model predictions have only added to its credence: highlights include the discover of the top quark in 1995, the tau neutrino in 2000, and the Higgs boson in 2012. Indeed, it can be said that the stunning experimental success of the Standard Model is often a cause for frustration among modern theoretical physicists, many of whom are holding out for evidence of new particles to lend support to the many projects of physics ``beyond the Standard Model''; a highly-readable overview of the major contenders can be found in \cite{lykken10}.

We have already encountered some of the shortcomings of the Standard Model: it does not fully explain baryon asymmetry, incorporate the full theory of gravitation as described by general relativity, or account for the accelerating expansion of the universe as possibly described by dark energy; the model does not contain any viable dark matter particle that possesses all of the required properties deduced from observational cosmology; it also does not incorporate neutrino oscillations and their non-zero masses. Understanding these difficulties is beyond the scope of this paper\footnote{Reference \cite{troitsky12} is a helpful starting point.}, but it is nevertheless clear that they are a strong motivation to look for other, hopefully more complete theories. In any case, the Standard Model lies at the heart of all model-building, of which grand unified theory is a part, so we absolutely must understand it before we move on.

\section{Preliminaries}

The basic theory of mathematical groups and their representations is really all that we will need to understand the algebra of the Standard Model, and of grand unified theories; in the first section below, we will briefly review the necessary concepts, and set the notation. We mention some references, noting that these represent but a sample of the literature: the book by Hall \cite{hall04} is a solid introduction to Lie groups, algebras and their representations; Fulton and Harris' text \cite{fulton99} on representation theory goes even deeper; for an approach more geared towards physicists, the book of Fuchs and Schweigert \cite{fuchs03} is an excellent resource.

In section \ref{symmetriessubsec}, we will formulate and motivate the two fundamental principles of the representation theory of particle physics. Though these rules are surprisingly easy to state and work with, their origins do require some preparation to appreciate, since they are best encountered within the framework of mathematical gauge theory. References for this field abound, we mention but three: for the mathematically inclined, there is the venerable text by Bleecker \cite{bleecker81}, and the lecture notes by Hamilton \cite{hamilton15}; the physicist can turn to Nakahara \cite{nakahara03} for his concise and clear presentation.

\subsection{Basic Definitions and Matrix Groups}
\label{basicdefinitionssubsection}
 
We follow \cite{adams96} in these paragraphs. A \emph{Lie group} $G$ is a group which is also a smooth manifold such that the maps $G \times G \to G$, $(g, h) \mapsto gh$ and $G \to G$, $g \mapsto g\inv$ are smooth. A \emph{homomorphism} $\theta : G \to H$ of Lie groups is a homomorphism of groups which is also a smooth map. A subgroup $H \subset G$ is said to be \emph{normal} if and only if $gH = Hg$ for all $g \in G$; a Lie group is called \emph{simple} if it possess no non-trivial connected normal subgroups.

A \emph{representation} $V$ of $G$, where $V$ is a finite-dimensional vector space over a field $\K = \R$ or $\C$, can be thought of as a map $G \times V \to V$ such that for all $g \in G$, $v \in V$, we have
\begin{itemize}
	\item $ev = v$ and $g (g' v) = (g g') v$, and
	\item $gv$ is a continuous function of $g$ and $v$, which is additionally $\K$-linear in $v$.
\end{itemize}
By choosing a basis for $V$, we get isomorphisms $V \cong \K^n$ for some integer $n$ and
\begin{equation*}
	\End (V) \cong M_n (\K) = \left\{n \times n \T{ matrices with entries in } \K \right\}\ .
\end{equation*}

A linear subspace $U \subset V$ is called \emph{$G$-invariant} if $gu \in U$ for all $u \in U$. A representation $V$ is said to be \emph{irreducible} if its only $G$-invariant subspaces are the trivial ones, $0$ and $V$; else, it is said to be \emph{reducible}. Irreducible representations will be fundamental in what follows; for brevity, we will call them irreps.

The general linear group of $V$,
\begin{equation*}
	\GL (V) = \Aut (V) = \{ A \in \End (V) \mid \exists\ A\inv \in \End (V) \}\ ,
\end{equation*}
is a group which is an open subset of $\End (V)$, and hence a smooth manifold. The product and inverse map in $\GL (V)$ are smooth, so $\GL (V)$ is a Lie group. If the dimension of $V$ over $\K$ is $n$, we will write $\GL(n, \K)$ for $\GL (V)$.

We can choose on $V$ a Hermitian form $\langle\ ,\ \rangle$ such that for $v, w \in V$, $\lambda \in \K$
\begin{itemize}
	\item $\langle v , w \lambda \rangle = \langle v , w \rangle \lambda$, and
	\item $\langle w , v \rangle = \cc{\langle v , w \rangle}$;
\end{itemize}
this form then defines taking the \emph{conjugate transpose} via $x^\dagger y := \langle x , y \rangle$. The subgroups $\{A \in \GL (n, \K) \mid A^\dagger A = \id \}$ are denoted $\OR(n), \U(n), \Sp (n)$ respectively for $\K = \R , \C, \mbb{H}$. For $\K = \R$ or $\C$, we also get the \emph{special linear} subgroups $\{A \in \GL(n, \K) \mid \det A = 1\}$ denoted $\SL (n, \R)$ and $\SL (n, \C)$. Finally, we have $\SO (n) = \SL (n, \R) \cap \OR(n)$ and $\SU(n) = \SL (n, \C) \cap \U(n)$. All these groups are collectively called \emph{classical groups}, and are in fact Lie groups. The groups $\OR (n), \SO (n), \U (n), \SU(n), \Sp (n)$ are all compact, while $\GL (n, \K)$and $\SL (n, \K)$ ($n \neq 1$) are not.

Since Lie groups possess the structure of a manifold, it is sensible to talk about tangent vectors; if the Lie group also happens to be  modelled on some vector space $V$, as the classical groups are, then the tangent spaces at each point, $T_p G$ will in fact be isomorphic to $V$. More will be said once we make the following

\begin{defn}[Lie Algebra]
	A Lie algebra is a vector space $V$ over a field $\K$ equipped with an operation $[\ ,\ ] : V \times V \to V$ called a \emph{Lie bracket}, which satisfies the following axioms.
	\begin{itemize}
		\item Bilinearity: $[ax + by, z] = a[x, z] + b[y, z]$ for all scalars $a, b \in \K$ and vectors $x, y, z \in V$.
		\item Alternativity: $[x, x] = 0$ for all $x \in V$.
		\item The Jacobi identity: $[x, [y, z]] + [z, [x, y]] + [y, [z, x]] = 0$ for all $x, y, z \in V$.
	\end{itemize}
	We will follow the standard convention of denoting the Lie algebra of a group by the same letters in lower-case Fraktur font, e.g.~$T_e G =: \mf{g}$.
\end{defn}

\begin{prop}
	For $G = \OR (n), \U(n)$ or $\Sp(n)$, a matrix $A \in \mf{g}$ if and only if $A^\dagger + A = 0$. For the groups $G = \SL (n, \R), \SL (n, \C)$, $A \in \mf{g}$ if and only if $\tr A$, the trace of $A$ vanishes.
\end{prop}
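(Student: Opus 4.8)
The plan is to compute each Lie algebra $\mf g = T_e G$ directly, by differentiating the defining matrix equations along curves through the identity, and to check the converse by exponentiating. I would treat the two families separately, in each case proving both implications of the ``if and only if''.

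For $G = \OR(n), \U(n), \Sp(n)$, recall from the construction above that $G = \{A \in \GL(n,\K) \mid A^\dagger A = \id\}$. If $A \in \mf g$, I would choose a smooth curve $\gamma \colon (-\varepsilon, \varepsilon) \to G$ with $\gamma(0) = \id$ and $\dot\gamma(0) = A$ — legitimate because the preceding discussion identifies $G$ as an embedded submanifold of the ambient matrix space, so that $T_e G$ is realised by such velocity vectors. Differentiating the identity $\gamma(t)^\dagger \gamma(t) = \id$ at $t = 0$, using that conjugate transposition is $\R$-linear and commutes with $\tfrac{d}{dt}$, yields $A^\dagger + A = 0$. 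Conversely, if $A^\dagger + A = 0$, then since $(\,\cdot\,)^\dagger$ is an anti-homomorphism of matrix algebras one has $(e^{tA})^\dagger = e^{tA^\dagger}$, and because $A^\dagger = -A$ commutes with $A$ we get $(e^{tA})^\dagger e^{tA} = e^{t(A^\dagger + A)} = \id$. Thus $t \mapsto e^{tA}$ is a curve in $G$ through the identity with velocity $A$, whence $A \in T_e G = \mf g$. The two inclusions together give the claimed characterisation.

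For $G = \SL(n, \R), \SL(n, \C)$, recall $G = \{A \in \GL(n,\K) \mid \det A = 1\}$, and I would run the same two-sided argument with Jacobi's formula in place of the product rule. If $A = \dot\gamma(0) \in \mf g$ for a curve $\gamma$ as above, then $\left.\tfrac{d}{dt}\right|_{t=0} \det\gamma(t) = \det\gamma(0) \cdot \tr\!\bigl(\gamma(0)\inv \dot\gamma(0)\bigr) = \tr A$, which must vanish since $\det\gamma \equiv 1$. Conversely, if $\tr A = 0$, the identity $\det(e^{tA}) = e^{t\,\tr A} = 1$ shows $e^{tA} \in G$ for all $t$, so again $A \in \mf g$.

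Essentially all of this is routine differentiation; the only points I expect to need genuine care — the main (minor) obstacle — are the exponential-map identities over $\C$ and $\mbb H$: namely that $(AB)^\dagger = B^\dagger A^\dagger$, so that the series manipulation giving $(e^{tA})^\dagger = e^{tA^\dagger}$ involves only powers of a single matrix together with the commutation $[A, A^\dagger] = 0$, and the formula $\det(\exp X) = \exp(\tr X)$, which follows by triangularising $X$ over $\C$ (or by noting that both sides solve the same linear ordinary differential equation in $t$ when $X$ is replaced by $tX$). Once these standard facts are in place, the proposition follows.
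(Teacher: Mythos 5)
The paper does not give its own proof of this proposition, deferring instead to standard references (Hall, Fulton--Harris, Fuchs--Schweigert). Your argument is the standard one found in those sources and it is correct: differentiating the defining relation $\gamma(t)^\dagger\gamma(t)=\id$ (respectively $\det\gamma(t)=1$ via Jacobi's formula) at $t=0$ gives one inclusion, and exponentiating a matrix satisfying $A^\dagger+A=0$ (respectively $\tr A=0$), using $[A,A^\dagger]=0$ and $\det(\exp X)=\exp(\tr X)$, gives the reverse inclusion; the quaternionic case for $\Sp(n)$ only ever invokes the $\R$-linear anti-homomorphism property of $\dagger$, which you correctly flag as the one point needing care.
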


\begin{thm}\leavevmode \samepage
\label{bigthmaboutliealgebras}
	\begin{enumerate}
		\item For any Lie group $G$, the space $T_e G$ of tangent vectors to the identity element is a Lie algebra over $\K$.
		\item For any matrix group $G \subset \GL (V)$, the Lie bracket is given by the commutator, $[X, Y] = XY - YX$.
		\item If $\theta : G \to H$ is a homomorphism of Lie groups, the induced map $\diff \theta : \mf{g} \to \mf{h}$ is a homomorphism of Lie algebras.
		\item For any representation $V$ of $G$, $V$ is a representation of $\mf{g}$.
		\item For a matrix group $G$ acting on $V$ as the endomorphism group, $\mf{g}$ acts in the same way.
		\item If $H$ acts on $V$ and we are given a homomorphism $\theta : G \to H$ of Lie groups so that $G$ acts on $V$, the resulting action of $\mf{g}$ on $V$ is $X \cdot v = (\diff \theta (X))\cdot v$, where $X \in \mf{g}$.
	\end{enumerate}
\end{thm}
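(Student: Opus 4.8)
The plan is to base everything on the classical device of left-invariant vector fields, from which the first three parts follow, the last three being essentially formal. For part (i): writing $L_g : G \to G$, $h \mapsto gh$ for left translation, call a vector field $X$ on $G$ \emph{left-invariant} if $(L_g)_* X = X$ for every $g$. Such a field is determined by $X_e \in T_e G$, and every $v \in T_e G$ extends to the left-invariant field $g \mapsto (L_g)_* v$; this gives a linear isomorphism between $T_e G$ and the space of left-invariant vector fields. The latter is closed under the Lie bracket of vector fields, because the bracket is natural: if $X$ and $Y$ are each $L_g$-related to themselves, so is $[X, Y]$. Transporting this bracket back to $T_e G$, bilinearity, alternativity, and the Jacobi identity are inherited from the corresponding properties of the vector-field bracket, so $T_e G$ is a Lie algebra over $\K$.

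For part (ii), a matrix group $G \subset \GL(V)$ is an open submanifold of the vector space $\End(V)$, and a tangent vector $X \in T_e G \subset \End(V)$ generates the left-invariant field $g \mapsto gX$ (matrix product). I would compute the bracket of $g \mapsto gX$ and $g \mapsto gY$ directly in the linear coordinates supplied by $\End(V)$: the second-order terms cancel, leaving $g(XY - YX)$ at $g$, hence the commutator at $e$. (Equivalently, one differentiates $\exp(tX)\exp(sY)\exp(-tX) \in G$ twice at the origin.) For part (iii), a Lie group homomorphism $\theta : G \to H$ satisfies $\theta \circ L_g = L_{\theta(g)} \circ \theta$, so the left-invariant field generated by $v \in T_e G$ is $\theta$-related to the one generated by $\diff\theta(v)$; since brackets of $\theta$-related fields are $\theta$-related, evaluating at $e$ gives $\diff\theta([v,w]) = [\diff\theta(v), \diff\theta(w)]$, and $\diff\theta$ is linear, hence a homomorphism of Lie algebras.

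Parts (iv)--(vi) are then formal. A representation of $G$ on $V$ is exactly a Lie group homomorphism $\rho : G \to \GL(V)$; by (iii), $\diff\rho : \mf{g} \to \End(V)$ is a Lie algebra homomorphism, i.e.\ a representation of $\mf{g}$, which is (iv). Applying this to the tautological inclusion $\iota : G \hookrightarrow \GL(V)$ of a matrix group, $\diff\iota$ is again an inclusion into $\End(V)$ and the induced action of $X$ on $v$ is $Xv$, the same matrix-vector product, which is (v). For (vi), if $H$ acts through $\sigma : H \to \GL(V)$ and $\theta : G \to H$ is a homomorphism, then $G$ acts through $\sigma \circ \theta$; by the chain rule $\diff(\sigma \circ \theta) = \diff\sigma \circ \diff\theta$, so $X \in \mf{g}$ acts on $v$ by $\diff\sigma(\diff\theta(X))\,v = (\diff\theta(X)) \cdot v$, the $\mf{h}$-action on $V$ coming from $\sigma$.

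The only genuine work is in (i) and (ii): setting up left-invariant vector fields and checking that the vector-field bracket of two of them is again left-invariant, and then the coordinate computation identifying that bracket with the commutator. Everything afterwards is the functoriality of $\diff(\cdot)$ together with the observation that a representation is nothing but a homomorphism into a general linear group.
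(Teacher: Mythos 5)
The paper does not actually prove this theorem; it defers with ``The proof of these statements can be found in any of the references listed at the beginning of this section,'' citing Hall, Fulton--Harris, Fuchs--Schweigert, and Adams. Your proposal supplies what those references do, via the standard route (left-invariant vector fields, naturality of the bracket, functoriality of $\diff(\cdot)$), so there is no competing argument in the paper to compare against. The overall structure and the reductions in (iii)--(vi) are correct.

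One imprecision in part (ii) is worth flagging. You write that ``a matrix group $G \subset \GL(V)$ is an open submanifold of the vector space $\End(V)$.'' That is true only for $G = \GL(V)$ itself; the groups the paper is actually concerned with --- $\OR(n)$, $\U(n)$, $\SU(n)$, $\SO(n)$, $\Sp(n)$, $\SL(n,\K)$ --- are closed, positive-codimension submanifolds, not open. So the phrase ``compute the bracket of $g \mapsto gX$ and $g \mapsto gY$ directly in the linear coordinates supplied by $\End(V)$'' does not literally apply to them as stated. The cleanest repair is to run the coordinate computation on $\GL(V)$, where it is legitimate, and then observe that the inclusion $G \hookrightarrow \GL(V)$ induces, by your part (iii), an injective Lie algebra homomorphism $\mf{g} \hookrightarrow \mf{gl}(V)$, so the bracket on $\mf{g}$ is the restriction of the commutator. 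Alternatively, the parenthetical you offer --- differentiating $\exp(tX)\exp(sY)\exp(-tX)$, which remains inside $G$ --- already handles the general matrix subgroup correctly, so that remark can simply be promoted from ``equivalently'' to the primary argument. With either fix the proof is sound.
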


The proof of these statements can be found in any of the references listed at the beginning of this section. We note that in item (iv) above, a representation of a Lie algebra is defined in the obvious way: a map from $\mf{g} \times V \to V$ that is linear for $v \in V$ and respects the Lie bracket, $[X, Y] v = X (Yv) - Y (Xv)$.

The final definition in this section is of significant importance to us, and it goes as follows. $G$ acts on itself by conjugation, $c_g : G \to G$, $h \mapsto g h g\inv$, and this is clearly a homomorphism. We hence obtain by item (iii) in the theorem above, for each $g \in G$, a corresponding Lie algebra automorphism $\diff c_g : \mf{g} \to \mf{g}$. This is the \emph{adjoint representation} of $G$ on its Lie algebra, $\Ad : G \to \Aut \ \mf{g}$. The differential of this representation gives the \emph{adjoint representation of the Lie algebra} on itself; this map is $\ad: \mf{g} \to \GL (\mf{g})$, $\ad_X (Y) = [X, Y]$.

\subsection{Symmetries}
\label{symmetriessubsec}

In the mathematics of particle physics, all vector spaces must be complex because of foundational axioms in quantum mechanics.\footnote{See the Dirac-von Neumann axioms \cite{dirac30, neumann32}.} With this in mind, the two fundamental principles of the representation theory of particle physics can be stated in a few words: given a gauge symmetry $G$ of a theory, the fermions (matter fields) of this theory are basis vectors of unitary irreps of $G$, while the gauge bosons, which mediate forces, are basis vectors of its complexified adjoint representation (which is irreducible if $G$ is simple). In this section, we will try to motivate these claims, assuming some prior knowledge of gauge theory, the arena in which classical field theory plays out.

For a field theory on a spacetime, the \emph{Lagrangian} is the name given to a function that describes the dynamics and all the interactions of the fields. There are a few core principles that one must follow when attempting to write down such a function, of which we will only consider \emph{symmetry}: the Lagrangian, and hence the laws of physics, should be invariant under the transformations of some specified symmetry group $G$ (such as the Poincar\'e group, which encodes the symmetry of Minkowski spacetime). It is clear at the outset that the fields (particles) out of which the Lagrangian is built must live in irreps of $G$, to keep the invariance under $G$ manifest. The unitarity requirement on the irreps then arises quite naturally from the desire to compute observables (matrix elements). For example, a state $| \psi \rangle \mapsto P | \psi \rangle$ under a Poincar\'e transformation $P$, so an observable would transform as
\begin{equation*}
\begin{tikzcd}[row sep = 0cm]
	\langle \psi_1 | \psi_2 \rangle = M \ar[r, mapsto, , "P"] & M = \langle \psi_1 | P^\dagger P | \psi_2 \rangle\ .\\
\end{tikzcd}
\end{equation*}
Therefore, we need $P^\dagger P = \id$, i.e.~$P$ must be unitary, if the matrix element is to be invariant.\footnote{We used the Poincar\'e group here since it is a natural example, but we will not be concerned with it in what follows. The reason for this is that the unitary irreps of this group are all infinite dimensional, as proved by Wigner in 1939 \cite{wigner39}, and we wish to restrict to finite-dimensional representation theory. See \cite[pp.~109-103]{schwartz14} for further discussion.}

It is more involved to see why the gauge bosons live the adjoint representation; we begin with a plausibility argument from physics known as \emph{minimal coupling}. Consider a matter field $\psi (x)$: for local gauge transformations, $\psi(x) \mapsto g(x) \psi (x)$, and ordinary derivatives transform as
\begin{equation*}
	\partial_\mu \psi (x) \mapsto g(x) \partial_\mu \psi(x) + (\partial_\mu g(x)) \psi (x)\ ,
\end{equation*}
that is, inhomogeneously. What we would like instead is a gauge-covariant derivative $\Diff_\mu$, which transforms as $\Diff_\mu \psi (x) \mapsto g(x) \Diff_\mu \psi (x)$. To achieve this, we define
\begin{equation*}
	\Diff_\mu \psi (x) = \partial_\mu \psi (x) - A_\mu \psi (x)\ ,
\end{equation*}
where $A_\mu$ is a Lie algebra-valued 1-form written in a local basis. As the notation suggests, this is our gauge boson, and it is forced to have the transformation law
\begin{equation}
\label{transformationlawforgaugebosoneqn}
	A_\mu \mapsto A'_\mu (x) = g (x) A_\mu g\inv (x) + (\partial_\mu g (x)) g\inv (x)\ .
\end{equation}
The first term is the desired adjoint transformation of $A_\mu$. The second term vanishes if $g (x)$ is taken to be constant locally; this is referred to as a \emph{rigid gauge transformation}. The group of rigid physical transformations form a group isomorphic to $G$.

By way of motivation, the above should suffice (and is often the last word in physics textbooks). But let us go deeper. In mathematics, matter fields like $\psi (x)$ are sections of the vector bundle $P \times_\rho V \to M$ associated to a $G$-principal bundle $P \to M$ by a representation $\rho : G \to \GL (V)$. The derivative $\Diff_\mu$ above is usually denoted $\nabla_\mu$ in mathematics, and is in fact the covariant derivative on the vector bundle associated to a connection 1-form $A$ on the principal bundle; in coordinate-free notation, for a vector field $X \in \mf{X} (M)$, it outputs a section
\begin{equation*}
	\nabla^A_X \psi = \diff \psi (X) + \diff \rho (A_s (X)) \psi\ ,
\end{equation*}
where $A_s = A \circ \Diff s \in \Omega^1 (U, \mf{g})$ is the local gauge field for $s : M \supset U \to P$ a choice of local gauge, and $\diff \rho : \mf{g} \to \End (V)$ is the induced Lie algebra representation. A clue that something more needs to be said about the somewhat ad hoc imposition of the transformation law (\ref{transformationlawforgaugebosoneqn}) for $A$ can be found in the fact that in the parlance of gauge theory, the corresponding result is purely a statement about what happens when we change gauge from $s_i : U_i \to P$ to $s_j : U_j \to P$ on the principal bundle\footnote{See \cite[Theorem 5.25]{hamilton16}.}, and has nothing whatsoever to do with the associated vector bundle, where all the physics takes place. Instead, what is needed is the following: one can show that the difference between two connections is in fact a section (field) on the base manifold $M$ with values in the vector bundle $\Ad (P) := P \times_{\Ad} \mf{g}$, associated to $P \to M$ by the adjoint representation of $G$. (Heuristically, one can see this from equation (\ref{transformationlawforgaugebosoneqn}) by noting that the difference of two transformed gauge fields kills the $(\partial_\mu g (x)) g\inv (x)$ term.) Naturally, sections of this bundle then transform in the adjoint representation, as desired. A rather beautiful physical interpretation of this result is found in \cite{hamilton16}: in quantum field theory, particles in general are described as excitations of a given vacuum field; in the case of a gauge field, one has to declare the vacuum field to be a certain specific connection 1-form $A^0$ on the principal bundle, with reference to which all other gauge fields would then by described\footnote{N.b.~the form $A^0 \equiv 0$ is \emph{not} a connection.}; by the result stated in the previous paragraph, this difference (excitation) $A - A^0$ can then be identified with a 1-form on the spacetime $M$, with values in $\Ad (P)$ that hence transforms in the adjoint representation of $G$.

Now that we have the transformation rules for all the particles in our theory, a natural question arises: how can the gauge bosons be said to ``mediate forces''? The mathematical mechanism is in fact quite straightforward. When we say that a force is invariant under the action of some group, this corresponds to the statement that any physical process caused by this force should be described by an ``intertwining operator'', which is a linear operator that respects the action of the group under consideration. More precisely, suppose that $V$ and $W$ are finite-dimensional Hilbert spaces on which some group $G$ acts as unitary operators. Then a linear operator $F : V \to W$ is an \emph{intertwining operator} if $F ( g \psi) = g F (\psi)$ for every $\psi \in V$ and $g \in G$.  Now we saw in theorem \ref{bigthmaboutliealgebras} that a representation $\rho : G \times V \to V$  of a group $G$ gives rise to a representation of its Lie algebra $\mf{g}$ on $V$; we think of this as the linear map $\diff \rho : \mf{g} \otimes V \to V$.  It is easy to check that this map is an intertwining operator, and it hence gives the gauge bosons agency to act on particles.

\section{The Fundamental Forces}

We begin our brief exposition of the Standard Model proper with the representation theory of quantum chromodynamics (QCD), since it is the most straightforward application of the principles that we encountered in the previous section. Many great minds contributed to the development of this theory\footnote{Reference \cite[Ch.~4]{crease96} has an account of the history.}, but it was the trio of Fritzsch, Gell-Mann and Leutwyler who formulated the concept of colour as the source of a ``strong field'' in a Yang-Mills theory in 1973 \cite{fritzsch73}.

This will be followed by a description of the weak force, which will then be expanded to include electromagnetism in the section on the electroweak interaction; this milestone in the history of unification was due to independent work by Glashow \cite{glashow70}, Salam \cite{salam68} and Weinberg \cite{weinberg67}, for which they were jointly awarded the Nobel prize in 1979. We will follow the article of Baez and Huerta \cite{baez10} in this section, and in the one following, on the fermion representation of the Standard Model.

\subsection{The Strong Interaction}

Let us begin with the nucleons of high-school chemistry, the protons and neutrons. It turns out that they are not fundamental particles, but are instead made up of other particles called \emph{quarks}, which come in a number of different \emph{flavours}. It takes two flavours to make protons and neutrons, the \emph{up} quark $u$, and the \emph{down} quark $d$: the proton can be written as $p = uud$, and the neutron, $n = udd$ (the notation will be clarified momentarily). It follows from the charges of the proton (+1) and neutron (0) that $u$ has a charge $2/3$, and $d$, $- 1/3$.

Quark \emph{confinement} is one of two defining characteristics of QCD, the other being \emph{asymptotic freedom}. The latter is unfortunately outside the scope of this paper; we refer the reader to a review article by Gross, one of the discoverers of aymptotic freedom \cite{gross99}. Now quark confinement, loosely speaking, is the statement that the force between quarks does not diminish as they are separated; thus, they are forever bound into \emph{hadrons} such as the proton and the neutron. Let us try to understand this mathematically. Each flavour of quark comes in three different states called \emph{colours}: \emph{red} ($r$), \emph{green} ($g$), and \emph{blue} ($b$). This means that the Hilbert space for a single quark is $\C^3$, with $r$, $g$, and $b$ the standard basis vectors. The \emph{colour symmetry group} $\SU (3)$ acts on $\C^3$ in the obvious way, via its fundamental representation. Since both up and down quarks come in three colour states, there are really six kinds of quarks in matter: three up quarks, spanning a copy of $\C^3$;	$u^r, u^b, u^g \in \C^3$, and similarly for down quarks. The group $\SU(3)$ acts on each space. All six quarks taken together span the vector space $\C^3 \oplus \C^3 \cong \C^2 \otimes \C^3$, where $\C^2$ is spanned by $u$ and $d$. Confinement amounts to the following decree: all observed states must be white, i.e.~invariant under the action of $\SU(3)$. Hence, we can never see an individual quark, nor particles made from two quarks, because there are no vectors in $\C^3$ or $\C^3 \otimes \C^3$ which transform trivially under $\SU(3)$. But we do see see particles made up of three quarks, such as nucleons, because there are unit vectors in $\C^3 \otimes \C^3 \otimes \C^3$ fixed by $\SU(3)$. Indeed, as a representation of $\SU (3)$, $\C^3 \otimes \C^3 \otimes \C^3$ contains precisely one copy of the trivial representation: the antisymmetric rank-three tensors, $\Lambda^3 \C^3$. This one-dimensional vector space is spanned by the wedge product of all three basis vectors, $r \wedge b \wedge g \in \Lambda^3 \C^3$, so up to normalisation, this must be colour state of a nucleon. We also now see that the colour terminology is well-chosen, since an equal mixture of red, green, and blue light is white. Hence, confinement is intimately related to colour. An explanation of the quark flavours is postponed until the next section.

We will have much more to say about the quarks, but as an introduction, what we have above suffices: the strong force is concerned with the quarks; the up and the down quarks together span the representation $\C^2 \otimes \C^3$ of $\SU(3)$, where $\C^2$ is trivial under $\SU(3)$. In the previous section, we took the trouble to understand how gauge bosons transform and act, and we now we reap the fruits of that labour: from the standpoint of representation theory, all there is to say is that strong force is mediated by the \emph{gluons}, usually denoted by $g$, which live in $\C \otimes \su(3) = \mf{sl} (3, \C)$, the complexified adjoint representation of $\SU(3)$. They act on quarks via the standard action of $\mf{sl} (3, \C)$ on $\C^3$.

\subsection{The Weak Interaction}
 
Our story of the weak force begins, interestingly enough, in early attempts to describe the strong force, particularly in the work of Heisenberg in 1932 \cite{heisenberg32}. He hypothesised that the proton and nucleon were the two possible observed states of a nucleon; a nucleon would hence live in the simplest Hilbert space possible for such a setup: $\C^2 = \C \oplus \C$. Shortly thereafer, in 1936, Cassen and Condon \cite{cassen36} suggested that the $\C^2$ space of nucleons is acted upon by $\SU(2)$, emphasising the analogy with the spin of an electron, which is also described by vectors in $\C^2$ acted upon by $\SU(2)$. The property that distinguishes the proton from the neutron was hence dubbed \emph{isospin}: the proton was declared to be isospin up, $I_3 = + 1/2$, and the neutron isospin down, $I_3 = - 1/2$. The charge and the isospin of the nucleons $N$ were seen to be related in the following simple way:
\begin{equation*}
	Q(N) = I_3 (N) + \frac{1}{2}\ .
\end{equation*}
This turned out to be a special instance of what came to be called the \emph{Gell-Mann--Nishijima formula} (abbreviated as the NNG formula):
\begin{equation*}
	Q = I_3 + \frac{Y}{2}\ .
\end{equation*}
$Y$ is a quantity called \emph{hypercharge} which depends only on the ``family'' of the particle. For the moment, this simply means that $Y$ is required to be constant on representations of the isospin symmetry group, $\SU(2)$. To now understand how all of this relates to the modern theory of weak interactions, we have to introduce a new particle.

Along with the electron $e^-$ and the up and down quarks, the \emph{neutrinos} $\nu$ form the first generation of fundamental fermions. They carry no charge and no colour, and interact only through the weak force, first proposed by Enrico Fermi in 1933 \cite{fermi34}. The weak force is \emph{chiral}, i.e.~it cares about the handedness of particles: every particle thus far discussed comes in left- and right-handed varieties, which we will denote by subscript-$L$ and -$R$ respectively. Remarkably, the weak force interacts only with the left-handed particles, and right-handed \emph{antiparticles}. We have been silent about antiparticles until now, but they are quite simple to introduce: to each particle, there is a corresponding antiparticle, which is just like the original particle, but with opposite charge and isospin; mathematically, this just means that we pass to the dual representation. Returning to the weak interaction, when the neutron decays for example, we always have
\begin{equation*}
	n_L \to p_L + e^-_L + \cc{\nu}_R\ ,
\end{equation*}
and never
\begin{equation*}
	n_R \to p_R + e^-_R + \cc{\nu}_L\ .
\end{equation*}
This parity violation of the weak force, proposed by Yang and Lee in 1956 \cite{lee56} is still startling; no other physics, classical or quantum, looks different when viewed in a mirror. One important corollary of this oddity is that the right-handed neutrino $\nu_R$ has never been observed directly; we will discuss this particle in the context of grand unified theories in sections \ref{thespin10gutsubsection} and \ref{furtherreadingsec}.

The isospin mentioned above is an extremely useful quantity since it is conserved during quantum interactions; as such, we would like to extend it to weak interactions. First, for the proton and neutron to have the right isospins of $\pm \frac{1}{2}$, we must have the isospin of the up and down quarks defined to be $ \pm \frac{1}{2}$ respectively (making these particles the up and down states at which their names hint). A quick check then shows that isospin is not automatically conserved in weak interactions; for example, in the above neutron decay,
\begin{equation*}
	u_L d_L d_L \to u_L u_L d_L + e^-_L + \cc{\nu}_R\ ,
\end{equation*}
the right-hand side has $I_3 = - 1/2$ while the left-hand side has $I_3 = 1/2$. What is needed is an extension of the concept of isospin to the \emph{leptons}, i.e.~the particles which do not feel the strong force, $e^-$ and $\nu$; simply setting $I_3 (\nu_L) = \frac{1}{2}$ and $I_3 (e^-_L) = - \frac{1}{2}$ does the trick. This extension of isospin is called \emph{weak isospin}, and unlike the isospin of the nucleons, is an exact symmetry. We will simply refer to it as isospin from now on.

We come to the description of the weak force. This is a theory with the isospin symmetry group $\SU(2)$; the particles in the same representation are paired up in \emph{doublets},
\begin{equation*}
	\begin{pmatrix}
		\nu_L\\
		e^-_L
	\end{pmatrix}\ ,\quad \begin{pmatrix}
		u_L\\
		d_L
	\end{pmatrix}\ ,
\end{equation*}
with the particle with the higher $I_3$ on the top; this is just a shorthand way of writing that these particles live in (and span) the same irrep $\C^2$ of $\SU(2)$. The fact that only the left-handed particles are combined into doublets reflects the fact that only they participate in weak interactions. Every right-handed fermion, on the other hand, is trivial under $\SU(2)$: they are called \emph{singlets}, and span the trivial representation $\C$.

The particles in the doublets interact via the exchange of the so-called $W$ bosons,
\begin{equation*}
	W^+ = \begin{pmatrix}
		0 & 1\\
		0 & 0
	\end{pmatrix}\ , \quad \ W^0 = \begin{pmatrix}
		1 & 0\\
		0 & -1
	\end{pmatrix}\ , \quad\ W^- = \begin{pmatrix}
		0 & 0\\
		1 & 0
	\end{pmatrix}\ .
\end{equation*}
As we would expect, these span the complexified adjoint representation of $\SU(2)$, $\mf{sl}(2, \C)$, and they act on each of the particles in the doublets via the action of $\mf{sl} (2, \C)$.

We close this section with the afore-promised explanation of quark flavour splitting. Recall that colour is related to confinement; in much the same way, flavour is related to isospin. Indeed, we can use quarks to explain the isospin symmetry of the nucleons: protons and neutrons are so similar, with nearly the same mass and strong interactions, because $u$ and $d$ quarks are so similar, with nearly the same mass, and truly identical colours. As mentioned above, the isospin of the proton and neutron arises from the isospin of the quarks, once we define $I_3 (u) = 1/2$, and $I_3 (d) = - 1/2$; we see that the proton obtains the right $I_3$:
\begin{equation*}
	I_3 (p) = \frac{1}{2} + \frac{1}{2} - \frac{1}{2} = \frac{1}{2}\ ,
\end{equation*}
and a quick check shows the same for the neutron. This is a good start, but what we really need to do is to confirm that $p$ and $n$ span a copy of the fundamental representation $\C^2$ of $\SU(2)$. It turns out that the states $u \otimes u \otimes d$ and $u \otimes d \otimes d$ do \emph{not} span a copy of the fundamental representation of $\SU(2)$ inside $\C^2 \otimes \C^2 \otimes \C^2$; what is needed, for the proton for instance, is some linear combination of the $I_3 = 1/2$ flavour states which are made of two $u$'s and one $d$:
\begin{equation*}
	u \otimes u \otimes d\ , \quad u \otimes d \otimes u\ , \quad d \otimes u \otimes u \quad \in \C^2 \otimes \C^2 \otimes \C^2\ .
\end{equation*}
The exact linear combination required to make $p$ and $n$ work also involves the spin of the quarks, which is outside the scope of our discussion. What we can do however, is see that this is at least possible, i.e.~that $\C^2 \otimes \C^2 \otimes \C^2$ really does contain a copy of the fundamental representation $\C^2$ of $\SU(2)$. First note that any rank-2 tensor can be decomposed into symmetric and antisymmetric parts, $\C^2 \otimes \C^2 \cong \mrm{Sym}^2 \C^2 \oplus \Lambda^2 \C^2$. Now $\mrm{Sym}^2 \C^2$ is the unique 3-dimensional irrep of $\SU(2)$, and $\Lambda^2 \C^2$, as the top exterior power of its fundamental representation $\C^2$, is the trivial 1-dimensional irrep; as a representation of $\SU(2)$, we therefore have
\begin{equation*}
	\C^2 \otimes \C^2 \otimes \C^2 \cong \C^2 \otimes ( \mrm{Sym}^2 \C^2 \oplus \C) \cong (\C^2 \otimes \mrm{Sym}^2 \C^2) \oplus \C^2\ . 
\end{equation*}
So indeed, $\C^2$ is a subrepresentation of $\C^2 \otimes \C^2 \otimes \C^2$. As a final remark, we note that the NNG formula still works for quarks, provided we define the hypercharge for both quarks to be $Y = 1/3$.
 
\subsection{The Electroweak Interaction}
\label{theewinteractionsec}

All the fermions have now been grouped into $\SU(2)$ representations based on their isospin. Let us now consider the other piece of NNG formula, hypercharge. Just as we did for isospin, we can extend the notion of hypercharge to encompass the leptons, calling this new quantity \emph{weak hypercharge}. It is a matter of simple arithmetic to see that we must have $Y = -1$ for both left-handed leptons; for right handed leptons, since $I_3 = 0$, we must set $Y = 2Q$.

How can we understand hypercharge? Let us frame the discussion by briefly discussing isospin again: it is an observable, and we know from quantum mechanics that it hence corresponds to a self-adjoint operator; indeed, from an eigenvalue expression like $\hat{I}_3 \nu_L = \frac{1}{2} \nu_L$, it is easy to see that we must have
\begin{equation*}
	\hat{I}_3 = \begin{pmatrix}
		1/2 & 0\\
		0 & -1/2
	\end{pmatrix}\ .
\end{equation*}
The story with hypercharge is similar: corresponding to hypercharge $Y$ is an observable $\hat{Y}$, which is also proportional to a gauge boson, although this gauge boson lives in the complexified adjoint representation of $\U(1)$.

The details are as follows. Particles with hypercharge $Y$ span irreps\footnote{Since $\U(1)$ is abelian, all of its irreps are one-dimensional.} $\C_Y$ of $\U(1)$; by $\C_Y$ we denote the one-dimensional vector space $\C$ with action of $\alpha \in \U(1)$ given by
\begin{equation*}
	\alpha \cdot z = \alpha^{3Y} z\ .
\end{equation*}
The factor of three is inserted because $Y$ is not guaranteed to be an integer, but only an integer multiple of $1/3$. For example, the left-handed leptons $\nu_L$ and $e^-_L$ both have hypercharge $Y = -1$, so each span a copy of $\C_{-1}$. Hence, $\nu_L , e^-_L \in \C_{-1} \otimes \C^2$, where the $\C^2$ is trivial under $\U(1)$.

Now, given a particle $\psi \in \C_Y$, to find out how the gauge boson in $\C \otimes \mf{u}(1) \cong \C$ acts on it, we can differentiate the $\U(1)$ action above. We obtain
\begin{equation*}
	i \cdot \psi = 3 i Y \psi \implies \hat{Y} = \frac{1}{3} \in \C\ .
\end{equation*}
Following convention, we set the so-called $B$ boson equal to $\hat{Y}$; particles with hypercharge interact by exchanging this boson. Note that the $B$ boson is a lot like the familiar photon, and the hypercharge force which $B$ mediates is a lot like electromagnetism, except that its strength is proportional to hypercharge rather than charge.

The unification of electromagnetism and the weak force is called the \emph{electroweak interaction}. This is a $\U(1) \times \SU(2)$ gauge theory, and we have now encountered it in full detail: the fermions live in representations of hypercharge $\U(1)$ and weak isospin $\SU(2)$, and we tensor these together to get representations of  $\U(1) \times \SU(2)$. These fermions interact by exchanging $B$ and $W$ bosons, which span $\C \oplus \mf{sl} (2, \C)$, the complexified adjoint representation of $\U(1) \times \SU(2)$.

We close with a word on \emph{symmetry breaking}. Despite electroweak unification, electromagnetism and the weak force are very different at low energies, including most interactions in the everyday world. Electromagnetism is a force of infinite range that we can describe by a $\U(1)$ gauge theory with the photon as gauge boson, while the weak force is of very short range and mediated by the $W$ and $Z$ bosons: we ``define'' the photon and the $Z$ boson by the following relation:
\begin{equation}
\label{photozbosonthetaw}
	\begin{pmatrix}
		\gamma\\
		Z
	\end{pmatrix} = \begin{pmatrix}
		\cos{\thetaw} & \sin{\thetaw}\\
		-\sin{\thetaw} & \cos{\thetaw}
	\end{pmatrix} \begin{pmatrix}
		B\\
		W^0
	\end{pmatrix}\ .
\end{equation}
We have introduced here the \emph{weak mixing angle}, or \emph{Weinberg angle} $\thetaw$; it can be thought of as the parameter that characterises how far the $B - W^0$ boson plane has been rotated by symmetry breaking, which is the mechanism that allows the full electroweak $\U(1) \times \SU(2)$ symmetry group to be hidden away at low energies, and replaced with the electromagnetic subgroup $\U(1)$. Moreover, the electromagnetic $\U(1)$ is not the obvious factor $\U(1) \times 1$, but another copy, wrapped around inside $\U(1) \times \SU(2)$ in a manner given by the NNG formula. Unfortunately, the dynamics of electroweak symmetry breaking is outside of our scope; we refer the reader to \cite[Ch.~29.1]{schwartz14} for the details. We will discuss symmetry breaking from a representation theoretic viewpoint in section \ref{thenewfermionssubsec}, and return to the Weinberg angle in section \ref{weakmixinganglesec}.

\section{The Standard Model Representation}

We are now in a position to put the whole Standard Model together in a single picture. It has the gauge group
\begin{equation*}
	G_{\mrm{SM}} = \U(1) \times \SU(2) \times \SU(3)\ ,
\end{equation*}
and the fundamental fermions described thus far combine in representations of this group. We summarise this information in the table below.
\begingroup
\renewcommand{\arraystretch}{1.7}
\begin{table}[ht]
\centering
\caption{The Standard Model Fermions}
	 \begin{tabular}{c c c} 
 \hline
 Particle Name & Symbol &  $\U(1) \times \SU(2) \times \SU(3)$ Rep. \\ [0.5em] 
 \hline\hline \\[-1em]
 Left-handed leptons & $\begin{pmatrix}
	\nu_L\\
	e^-_L
 \end{pmatrix}$ & $\C_{-1} \otimes \C^2 \otimes \C$\\ 

 Left-handed quarks & $\begin{pmatrix}
	u^r_L , u^g_L , u^b_L\\
	d^r_L , d^g_L , d^b_L	
 \end{pmatrix}$ & $\C_{1/3} \otimes \C^2 \otimes \C^3$\\ 

 Right-handed neutrino & $\nu_R$ & $\C_{0} \otimes \C \otimes \C$\\
 
 Right-handed electron & $e^-_R$ & $\C_{-2} \otimes \C \otimes \C$\\ 

 Right-handed up quarks & $\left( u^r_R , u^b_R, u^g_R \right)$ & $\C_{4/3} \otimes \C \otimes \C^3$\\

 Right-handed down quarks & $\left( d^r_R , d^b_R, d^g_R \right)$ & $\C_{-2/3} \otimes \C \otimes \C^3$\\[1em]
\hline
 \end{tabular}
\label{thesmfermionstable}
\end{table}
\endgroup

All the representations of $G_{\mrm{SM}}$ in the right-hand column are irreducible, since they are made by tensoring irreps of this group's three factors. On the other hand, if we take the direct sum of all these irreps,
\begin{equation*}
	F = (\C_{-1} \otimes \C^2 \otimes \C) \oplus \cdots \oplus ( \C_{-2/3} \otimes \C \otimes \C^3 )\ ,
\end{equation*}
we get a reducible representation containing all the first-generation fermions in the Standard Model. We call $F$ the \emph{fermion representation}. If we take the dual of $F$, we get a representation describing all the antifermions in the first generation. Taking the direct sum of these spaces, $F \oplus \cc{F}$, we get a representation of $G_{\mrm{SM}}$ that we will call the \emph{Standard Model representation}; it contains all the first-generation elementary particles in the Standard Model. The fermions interact by exchanging gauge bosons that live in the complexified adjoint representation of $G_{\mrm{SM}}$.
\begingroup
\renewcommand{\arraystretch}{1.3}
\begin{table}[h]
\caption{The Standard Model Gauge Bosons}
\centering
	\begin{tabular}{c c c} 
 \hline
 Force & Gauge Boson & Symbol \\ [0.5em] 
 \hline\hline \\[-1em]
 Electromagnetism & Photon & $\gamma$\\
Weak Force & $W$ and $Z$ bosons & $W^+$, $W^-$ and $Z$\\
Strong Force & Gluons & $g$\\[1em]
\hline
 \end{tabular}
\label{thesmgaugebosonstable}
\end{table}
\endgroup

\subsubsection{Generations}
For the purposes of describing grand unified theories, the above description of the Standard Model is all we need. For completeness however, we tabulate below the \emph{second} and \emph{third} generations of fermions, evidence of which first arose in the 1930s; an elegant summary of the physics can be found in \cite{hariri77}.
\begingroup
\renewcommand{\arraystretch}{1.5}
\begin{table}[h]
\centering
\caption{Quarks by Generation}
\begin{tabular}{M{9em} M{9em} M{9em}} 
	\hline
	1st Generation & 2nd Generation & 3rd Generation\\
	\hline
\end{tabular}
\begin{tabular}{M{4em} M{4em} M{4em} M{4em} M{4em} M{4em}}
Name & Symbol & Name & Symbol & Name & Symbol\\ 
 \hline\hline \\[-1em]
 Up & $u$ & Charm & $c$ & Top & $t$\\
 Down & $d$ & Strange & $s$ & Bottom & $b$\\ \\[-1em]
\hline
 \end{tabular}
\label{quarksbygenstable}
\end{table}

\begin{table}[h]
\centering
\caption{Leptons by Generation}
\begin{tabular}{M{9em} M{9em} M{9em}} 
	\hline
  1st Generation & 2nd Generation & 3rd Generation\\
	\hline
\end{tabular}
\begin{tabular}{M{4em} M{4em} M{4em} M{4em} M{4em} M{4em}}
Name & Symbol & Name & Symbol & Name & Symbol\\
 \hline\hline \\[-1em]
 Electron neutrino & $\nu_e$ & Muon neutrino & $\nu_\mu$ & Tau neutrino & $\nu_\tau$\\
 Electron & $e^-$ & Muon & $\mu^-$ & Tau & $\tau^-$ \\ \\[-1em]
\hline
 \end{tabular}
\label{leptonsbygentable}
\end{table}
\endgroup

Notice that we thus have a pattern in the Standard Model: there are as many flavours of quarks as there are of leptons. The Pati-Salam model explains this pattern by unifying quarks and leptons, but we will unfortunately not treat this theory here; the interested reader is referred to \cite[Ch.~3.3]{baez10}.

The second and third generations of quarks and charged leptons differ from the first by being more massive and able to decay into particles of the earlier generations. The various neutrinos do not decay, and for a long time it was thought they were massless, but now it is known that some and perhaps all are massive. This allows them to change back and forth from one type to another, a phenomenon called neutrino oscillation\footnote{See the article \cite{bilenky14} for a brief review of the theoretical and experimental aspects.}; the Standard Model explain this phenomenon by recourse to the famous ``Higgs mechanism''\footnote{See \cite{hamilton15} and \cite[Ch.~28]{schwartz14}.}. For our purposes however, the generations are identical: as representations of $G_{\mrm{SM}}$, each generation spans another copy of $F$, with the corresponding generation of antifermions spanning a copy of $\cc{F}$. All told, we thus have three copies of the SM representation, $F \oplus \cc{F}$. We will only need to discuss one generation, so we find it convenient to speak as if $F \oplus \cc{F}$ contains particles of the first generation. This redundancy in the Standard Model, three sets of very similar particles, remains a mystery.

\chapter{The $\Spin (10)$ Grand Unified Theory}

\label{spin10gutchapter}
Due to spontaneous symmetry breaking, not all of the symmetries of the Standard Model are seen in everyday life---the symmetries encoded by $G_\mrm{SM}$ are symmetries of the laws of physics, but not necessarily of the vacuum. Grand unified theories attempt to answer the question, what if this process continues? That is, could the symmetries of the Standard Model be just a subset of all the symmetries in nature? By way of motivation, consider that from a representation theoretic standpoint alone, the Standard Model leaves a lot to be desired: ``The representations of $G_\mrm{SM}$ seem ad hoc. Why these? What about all the seemingly arbitrary hypercharges? Why do both leptons and quarks come in left- and right- handed varieties, which transform so differently? Why do quarks come in charges which are units $1/3$ times and electron's charge? Why are there the same number of quarks and leptons?'' \cite[p.~32]{baez10}

In this chapter, we will encounter the earliest attempts to probe these questions. We begin by introducing some additional concepts in representation theory and motivating the exceptional Lie groups, after which we will elucidate which groups are to be considered potential grand unification groups. Then we will turn to, both from necessity and because of its intrinsic interest, the $\SU(5)$ grand unified theory. Thereafter, we will focus our attention on Clifford algebras; it is a short step from there to the Spin groups; once we then understand their representations, we will prove that $\Spin (10)$ extends the Standard Model in section \ref{thespin10gutsubsection}.
%

\section{Characters and Weights of Representations}
\label{charactersandweightsofrepssubsec}

The study of characters, and root and weight systems, is fundamental to representation theory. Our modest goals in this section of simply defining these terms and stating the main results will doubtless do a severe injustice to this branch of mathematics; we point to \cite[Ch.~8]{hall04} for a lucid introduction and additional references. We will follow \cite{adams96} here.

Particle physics demands that we restrict to complex representations, so let us do so right at the outset, reaping the added benefit that over $\C$, every irrep of a compact abelian group is 1-dimensional.

\begin{rmk}[Structure Maps]
\label{structuremapsremark}
	We note that this restriction involves no sacrifice of generality: consider that a representation $V$ over the quanterions $\mbb{H}$ is certainly a representation over $\C$; together with a conjugate linear structure $G$-map $j : V \to V$ such that $j^2 = -1$, $ij = -ji$, this representation does in fact return the original $\mbb{H}$-representation; on the other hand, a representation $V$ over $\R$ gives $V \otimes_\R \C$ and this carries a conjugate linear structure map $j : v \otimes z \mapsto v \otimes \cc{z}$ such that $j^2 = 1$; we can regard $V$ as the +1 eigenspace of $j$ (or the -1 eigenspace).
\end{rmk}

\begin{defn}[Characters]
	Suppose $V$ is a representation of $G$ over $\C$. Then its character is given by
	\toenv{\chi_V^{} :\& [-29pt] G}{\C\ ,}{\& g}{\tr_\C (g : V \to V)\ .}
\end{defn}

It follows from the definition that characters are \emph{class functions}, i.e.~$\chi_V^{} (ghg\inv) = \chi_V^{} (h)$ for all $g, h \in G$. Also,
\begin{align*}
	\chi_{V \oplus W}^{} (g) &= \chi_V^{} (g) + \chi_W^{} (g)\ ,\\
	\chi_{V \otimes W}^{} (g) &= \chi_V^{} (g) \cdot \chi_W^{} (g)\ .
\end{align*}
The following result clarifies the importance of characters.

\begin{thm}
	If $\chi_V^{} = \chi_W^{}$, then $V \cong W$.
\end{thm}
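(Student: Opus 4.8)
The plan is to prove the stronger structural fact that the character of a representation records the multiplicity with which each irreducible constituent occurs; equality of characters then forces the two decompositions into irreducibles to agree. Throughout I work over $\C$ and with $G$ compact, which covers every group we care about, since the classical groups $\OR(n), \SO(n), \U(n), \SU(n), \Sp(n)$ are all compact; fix the normalised Haar measure $dg$ on $G$, so that $\int_G dg = 1$.

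First I would invoke complete reducibility: averaging an arbitrary Hermitian form on $V$ over $G$ produces a $G$-invariant inner product, with respect to which $G$ acts unitarily, so every invariant subspace has an invariant complement, and by induction $V \cong \bigoplus_\lambda U_\lambda^{\oplus m_\lambda(V)}$, where $\{U_\lambda\}$ runs over a complete set of pairwise non-isomorphic irreps and $m_\lambda(V) \in \Z_{\geq 0}$. Since $\chi_{V \oplus W} = \chi_V + \chi_W$, this gives $\chi_V = \sum_\lambda m_\lambda(V)\,\chi_{U_\lambda}$.

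The substantive input is the Schur orthogonality relations: with respect to the inner product $\langle \phi, \psi \rangle = \int_G \phi(g)\overline{\psi(g)}\,dg$ on class functions, one has $\langle \chi_{U_\lambda}, \chi_{U_\mu} \rangle = \delta_{\lambda\mu}$. This follows by applying Schur's lemma to the averaged map $\overline{f} = \int_G g\circ f\circ g\inv\, dg$ attached to any linear $f$ between two irreps — it is a $G$-map, hence zero across non-isomorphic irreps and a scalar ($\tr f / \dim U_\lambda$) on a single one — and then contracting indices of matrix coefficients; I would simply cite this, e.g.\ from \cite[Ch.~8]{hall04} or \cite{adams96}, as it is the one genuinely non-formal ingredient.

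Granting the orthogonality relations, pair $\chi_V$ with $\chi_{U_\lambda}$ to get $\langle \chi_V, \chi_{U_\lambda} \rangle = m_\lambda(V)$, so each multiplicity depends only on $\chi_V$. Hence $\chi_V = \chi_W$ yields $m_\lambda(V) = \langle \chi_V, \chi_{U_\lambda}\rangle = \langle \chi_W, \chi_{U_\lambda}\rangle = m_\lambda(W)$ for all $\lambda$, so $V$ and $W$ have identical decompositions into irreducibles and therefore $V \cong W$. The main obstacle is precisely the orthogonality relations; the rest is bookkeeping with the additivity of characters. (For finite $G$ the same argument works verbatim with $\frac{1}{|G|}\sum_{g\in G}$ in place of $\int_G dg$, avoiding Haar measure altogether.)
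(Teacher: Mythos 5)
Your proof is correct and is the standard character-theoretic argument: complete reducibility (via averaging), Schur orthogonality of irreducible characters, and the multiplicity formula $m_\lambda(V) = \langle \chi_V, \chi_{U_\lambda}\rangle$. The paper does not supply its own proof but cites \cite[pp.~46--52]{adams83}, where essentially the same orthogonality argument is given, so your approach matches the one the paper relies on.

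One small point worth making explicit: the theorem as stated in the paper leaves the hypotheses on $G$ implicit, and the argument really does require $G$ compact (or finite) so that Haar averaging and complete reducibility are available. You flag this, which is right — for a noncompact group like $\SL(2,\R)$ the statement in this form fails because not every representation decomposes into irreducibles, so the restriction to compact $G$ is not cosmetic. Also, strictly speaking you want to note that the set of isomorphism classes of irreducibles appearing in a given finite-dimensional $V$ is finite, so the sums $\sum_\lambda m_\lambda(V)\chi_{U_\lambda}$ have only finitely many nonzero terms and the pairing computation is legitimate; this is automatic from $\dim V < \infty$ but is the kind of thing a careful write-up should say.
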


A proof is found in \cite[pp.~46--52]{adams83}. Consider now the torus, $T = \prod S^1$. Because $T$ is a compact, connected abelian group, the exponential map $\exp : \mf{t} \to T$ is a homomorphism, and we can thus regard $T$ as $T \cong \mf{t}/ \Gamma$, where $\Gamma := \ker \exp$ is a discrete subgroup of $\mf{t}$, called the \emph{integer lattice} of $T$.

Homomorphisms $\theta : T \to T'$ are easily described. We need only check for a linear map $\varphi : \mf{t} \to \mf{t}'$ such that $\varphi (\Gamma) \subset \Gamma'$, and if so, then $\theta = \tilde{\varphi} : \mf{t}/\Gamma \to \mf{t}'/\Gamma'$. All continuous homomorphisms arise in this way, and all 1-dimensional representations of $T$ arise from linear maps $\varphi : \mf{t} \to \mf{u}(1)$. Here we encounter the first connection to representation theory: given a representation $V$ of $T$, there are linear maps $\varphi: \mf{t} \to \mf{u}(1)$ such that $V$ decomposes as a direct sum of non-zero sub-representations $V_\varphi$, where $\mf{t}$ acts on $V_\varphi$ by $\tau (x) = \varphi (\tau) x$ for $\tau \in \mf{t}, x \in V_\varphi$.

\begin{defn}[Weights]
\label{weightsofarepdefn}
	The linear maps $\varphi$ on $\mf{t}$ are called the weights of $V$. The dimension of $V_\varphi$ is the \emph{multiplicity} of $\varphi$.
\end{defn}

\subsection{Sketch of the Classification of Compact Lie Groups}
\label{sketchofliegroupclasssubsec}

The remarkable history of the more than one-hundred-and-fifty years of Lie theory is studied in \cite{varadarajan07}; the Killing-Cartan classification of Lie groups is arguably the highpoint of this story, and certainly a significant achievement of modern mathematics. This section is the briefest of summaries of this classification scheme, and is important for us for two reasons: (i) it motivates the existence of the exceptional Lie groups, and (ii) the roots and weights of the classical Lie groups that we will derive along the way will be instrumental in constructing $\E_6$.

\begin{defn}[Maximal Torus]
\label{maximaltorusdefn}
	A maximal torus in a compact connected Lie group $G$ is a subgroup $T$ which is (i) a torus, and (ii) maximal, i.e.~if $T \subset T' \subset G$ for $T'$ a torus, then $T' = T$.
\end{defn}

\begin{example} The maximal torii of the classical Lie groups are as follows.
\label{maximaltoursofun}
	\begin{enumerate}
		\item In $\U(n)$, consider the subgroup of matrices $\T{diag} (e^{2\pi i x_1} , \ldots , e^{2 \pi i x_n})$, for $x_j \in \R$. This is a maximal torus in $\U(n)$: any matrix in $\U(n)$ which commutes with all diagonal matrices must be diagonal, and hence in $T$. Thus, $T$ is maximal among all abelian subgroups, connected or not.
		\item Since $\C \subset \mbb{H}$, the matrices of (i) are in $\Sp (n)$, and they form a torus in $\Sp (n)$. Since $\C^n$ can be regarded as $\R^{2n}$, we get an embedding $\U(n) \to \SO(2n)$ and we can again take the corresponding matrices, namely
		\begin{equation*}
			\begin{pmatrix}
				\cos 2 \pi x_1 & - \sin 2 \pi x_1 &&&&&&\\
				\sin 2 \pi x_1 & \cos 2 \pi x_1 &&&&&&\\
				&& \cos 2 \pi x_2 & - 2 \pi \sin x_2 &&&&\\
				&& \sin 2 \pi x_2 & \cos 2 \pi x_2 &&&&\\
				&&&& \ddots &&&\\
				&&&&& \cos 2 \pi x_n & - \sin 2 \pi x_n &\\
				&&&&& \sin 2 \pi x_n & \cos 2 \pi x_n &\\
			\end{pmatrix}\ .
		\end{equation*}
		These will form a torus in $\SO (2n)$.
		\item We can embed $\R^{2n}$ in $\R^{2n + 1}$ and thus $\SO(2n)$ in $\SO (2n + 1)$, where we map $A \mapsto \begin{psmallmatrix}
			A & 0\\
			0 & 1
		\end{psmallmatrix}$. This is the corresponding torus in $\SO (2n + 1)$.
		\item In $\SU(n)$ we take the matrices of (i) subject to $\sum x_i = 0$ to get a maximal torus. 		
 	\end{enumerate}
\end{example}

Maximal tori are fundamental in representation theory, as the following results demonstrate. Their proofs can be found in \cite[pp.~89--95]{adams83}.

\begin{thm}
	Let $T \subset G$ be a maximal torus of a compact, connected Lie group $G$. Then any $g \in G$ is conjugate to some element of $T$. That is, there exist elements $t \in T$, $h \in G$ such that $g = h t h\inv$.
\end{thm}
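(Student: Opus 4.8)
The plan is to prove the standard ``maximal torus theorem'' that every element of a compact connected Lie group lies in a conjugate of a fixed maximal torus $T$. The cleanest route is the one using degree theory (or a fixed-point argument), so I would first set up the conjugation map. Fix a maximal torus $T \subset G$ and let $G/T$ be the (compact, smooth) homogeneous space of left cosets. Consider the smooth map
\begin{equation*}
	q : (G/T) \times T \to G\ , \qquad (hT,\ t) \mapsto h t h\inv\ ,
\end{equation*}
which is well-defined because $T$ is abelian, so the value is independent of the representative $h$. The claim is exactly that $q$ is surjective. Since $G$ is connected, $q$ is surjective as soon as I show it is surjective onto a neighbourhood of every point of a dense subset, and the efficient way to do this is to show that $q$ has degree $\pm 1$; a smooth map between compact oriented manifolds of the same dimension with nonzero degree is surjective. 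So I would first check $\dim (G/T) + \dim T = \dim G$, which holds because $T$ is a closed subgroup, and orient both manifolds.

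Next I would compute the degree by finding a regular value and counting preimages with sign. The natural candidate is a \emph{generator} of $T$, i.e.\ an element $t_0 \in T$ whose powers are dense in $T$ (such elements exist and are dense in $T$ by Kronecker's theorem, since $T$ is a torus). The key structural facts needed here are: (i) the centraliser of such a generator $t_0$ in $G$ is exactly $T$ itself --- this uses that $T$ is \emph{maximal}, since the identity component of the centraliser is a connected abelian subgroup containing $T$, hence equals $T$, and one argues the centraliser is connected; and (ii) consequently the preimage $q\inv(t_0)$ is finite, indexed by the Weyl group $N_G(T)/T$, with each preimage point a regular point of $q$. Then one shows that at each such preimage the local degree (the sign of the Jacobian of $q$) is the \emph{same}, because left translation by elements of $N_G(T)$ permutes the preimages while preserving orientation. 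This makes $\deg q = \pm |N_G(T)/T| \neq 0$, and in particular $q$ is onto.

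Finally, from surjectivity of $q$ the theorem follows immediately: any $g \in G$ equals $q(hT, t) = h t h\inv$ for some $h \in G$, $t \in T$, which is precisely the assertion. The step I expect to be the main obstacle --- and the one genuinely using that $T$ is maximal rather than merely a torus --- is establishing that the centraliser $Z_G(t_0)$ of a topological generator equals $T$, and that it is connected; making this rigorous requires either the structure theory of compact Lie groups or a direct argument that a connected abelian subgroup is contained in a torus (so maximality forces it to be $T$). Everything else --- the dimension count, the orientability, the differential-topological degree argument, and the identification of the fibre with the Weyl group --- is routine once that centraliser computation is in hand. An alternative to the whole degree argument, avoiding orientation bookkeeping, is a Lefschetz fixed-point argument: for $g$ near $T$ the map $xT \mapsto g x T$ on $G/T$ has Euler characteristic $\chi(G/T) = |N_G(T)/T| \neq 0$, hence a fixed point, which unwinds to $g \in xTx\inv$; I would mention this as the backup route but carry out the degree version as the main proof.
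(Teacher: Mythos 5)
The paper does not actually give a proof of this theorem; it refers the reader to Adams' \emph{Lectures on Lie Groups} (pp.~89--95), and the argument there is exactly the degree-theoretic one you outline: consider $q:(G/T)\times T\to G$, $(hT,t)\mapsto hth\inv$, compute its degree at a topological generator $t_0\in T$, and observe $\deg q = |N_G(T)/T|\neq 0$, hence $q$ is onto. Your sketch is correct and matches the paper's cited source essentially point for point, down to the identification of $q\inv(t_0)$ with the Weyl group and the use of $Z_G(t_0)=T$ as the place where maximality enters. Your backup Lefschetz argument is the other standard route (due to Weil/Hopf) and is also sound; the only thing worth tightening in a full write-up is the centraliser computation you flag yourself --- the clean version shows $\mathrm{Lie}\,Z_G(T)=\mathfrak{t}$ via the weight-space decomposition and then deduces $Z_G(T)=T$, rather than arguing abstractly about ``connected abelian subgroup containing $T$'' --- but you correctly identify this as the crux and everything else is routine once it is in place.
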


\begin{cor}
	If $V, W$ are representations of a compact connected Lie group $G$ and $\chi_V^{}\big|_T^{} = \chi_W^{}\big|_T^{}$, then $\chi_V^{} = \chi_W^{}$, so $V \cong W$.
\end{cor}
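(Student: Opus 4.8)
The plan is to deduce this directly from the preceding theorem together with the two basic facts about characters recorded earlier: that characters are class functions, and that equality of characters implies isomorphism of representations. No new machinery should be needed.

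First I would fix an arbitrary $g \in G$ and invoke the conjugacy theorem just stated: since $T$ is a maximal torus of the compact connected group $G$, there exist $t \in T$ and $h \in G$ with $g = h t h\inv$. Next I would use that $\chi_V^{}$ and $\chi_W^{}$ are class functions, so $\chi_V^{}(g) = \chi_V^{}(hth\inv) = \chi_V^{}(t)$ and likewise $\chi_W^{}(g) = \chi_W^{}(t)$. Since $t \in T$, the hypothesis $\chi_V^{}\big|_T^{} = \chi_W^{}\big|_T^{}$ gives $\chi_V^{}(t) = \chi_W^{}(t)$, and hence $\chi_V^{}(g) = \chi_W^{}(g)$. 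As $g$ was arbitrary, this yields $\chi_V^{} = \chi_W^{}$ on all of $G$.

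Finally I would apply the earlier theorem that $\chi_V^{} = \chi_W^{}$ implies $V \cong W$, which closes the argument. There is essentially no obstacle here: the only point worth stating carefully is that the conjugacy theorem is exactly what lets one "test" a class function on the maximal torus alone, and the result is really just the observation that a class function on $G$ is determined by its restriction to $T$. If one wanted to be slightly more self-contained one could remark that this shows more generally that restriction to $T$ is injective on characters of $G$, but for the corollary as stated the three-line chain above suffices.
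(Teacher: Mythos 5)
Your proof is correct and is exactly the intended argument: the paper cites Adams for the proof and places the corollary immediately after the conjugacy theorem, and the chain conjugacy $\Rightarrow$ class functions determined on $T$ $\Rightarrow$ equal characters $\Rightarrow$ isomorphic representations is precisely what that placement signals. Nothing to add.
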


Hence the weights (together with the multiplicities) of a representation $V$ of $G$, determine $V$ up to equivalence. We also have
 
\begin{cor}
	Given two maximal tori $T, T'$ in a compact connected Lie group $G$, there exists an inner automorphism of $G$ taking $T$ to $T'$.
\end{cor}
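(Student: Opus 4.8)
The plan is to reduce the statement to the conjugacy theorem just proved, by exploiting the fact that a torus is \emph{topologically cyclic}. First I would recall the following standard fact (a consequence of Kronecker's approximation theorem): every torus $T' = \prod S^1$ contains a \emph{topological generator}, i.e.~an element $t'$ such that the cyclic subgroup $\langle t' \rangle$ is dense in $T'$; concretely, writing $T' \cong \mf{t}'/\Gamma'$ one takes $t' = \exp (x)$ for $x \in \mf{t}'$ whose coordinates, together with a suitable basis of $\Gamma'$, are rationally independent.

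Next, I would apply the conjugacy theorem to the single element $t' \in G$ and to the maximal torus $T$: there exist $h \in G$ and $t \in T$ with $t' = h t h\inv$. Then $t'$ lies in the subgroup $hTh\inv$, which is again a torus, being the image of $T$ under the continuous automorphism $c_h : G \to G$, and which is closed in $G$ since $G$ is compact (so $T$ is compact, hence $hTh\inv$ is compact, hence closed). A closed subgroup containing $t'$ must contain the closure $\overline{\langle t' \rangle}$, and by the choice of $t'$ this closure is all of $T'$. Therefore $T' \subseteq hTh\inv$.

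Finally, I would invoke maximality of $T'$: since $hTh\inv$ is a torus containing $T'$ and $T'$ is a maximal torus, we conclude $T' = hTh\inv$. Thus $c_h$ is the desired inner automorphism carrying $T$ onto $T'$. (One may equally well remark that $hTh\inv$ is itself a maximal torus, being a conjugate of one, which makes the last step symmetric.)

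The main obstacle is genuinely the first step — producing a topological generator of a torus — since this is the only point requiring input beyond formal group theory; it rests either on the Kronecker--Weyl equidistribution theorem or on a Baire category argument applied to the compact group $T'$. Once that lemma is in hand, the remainder is routine bookkeeping with closed subgroups together with the already-established conjugacy theorem for elements.
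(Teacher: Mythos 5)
Your proof is correct and is in fact the standard argument that the paper's cited reference (Adams, \emph{Lectures on Lie Groups}, pp.~89--95) uses for this corollary: take a topological generator $t'$ of $T'$, conjugate it into $T$ by the preceding theorem, observe that the compact torus $hTh\inv$ then contains the closure $\overline{\langle t'\rangle} = T'$, and finish by maximality of $T'$. The reduction to a single element, the closedness of $hTh\inv$, and the maximality step are all handled correctly, and you are right that the only nonroutine input is the existence of a topological generator, which is exactly the lemma Adams establishes first.
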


It follows from this corollary that any property of $G$ defined by reference to a maximal torus $T$ is independent of the choice of $T$. The most important example of this is the following

\begin{defn}[Rank]
	The rank of a compact connected Lie group $G$ is the dimension of the maximal torus of $G$. We will usually write $l = \rank G$.
\end{defn}

Suppose now that $T \subset G$ is a torus (not necessarily maximal). Then $G$ acts on $\mf{g}$ via the adjoint representation, so $T$ acts on $\mf{g}$ by restriction and $\mf{g} \otimes \C$ splits as a sum of 1-dimensional representations of $T$, with $T$ acting trivially on $\mf{t} \subset \mf{g}$. Thus the trivial 1-dimensional representation occurs at least $d = \dim T$ times. In fact, we have

\begin{prop}
	If $G$ is compact, then $T$ is maximal if and only if the trivial 1-dimensional representation occurs exactly $d$ times.
\end{prop}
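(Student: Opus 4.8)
The plan is to translate the statement about multiplicities into a statement about centralizers, and then read off maximality from it. Since $T$ is compact abelian, $\mathfrak{g}\otimes\C$ splits as a direct sum of one-dimensional $T$-representations (its weight spaces), so the number of copies of the trivial representation equals $\dim_\C(\mathfrak{g}\otimes\C)^T=\dim_\R\mathfrak{g}^T$, where $\mathfrak{g}^T=\{X\in\mathfrak{g}\mid \Ad(t)X=X\ \forall t\in T\}$. Because $T$ is connected and generated by $\exp\mathfrak{t}$, and $\Ad(\exp Y)=\exp(\ad_Y)$, one checks that $\mathfrak{g}^T=\mathfrak{z}(\mathfrak{t}):=\{X\in\mathfrak{g}\mid [Y,X]=0\ \forall Y\in\mathfrak{t}\}$, the centralizer of $\mathfrak{t}$ in $\mathfrak{g}$. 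As $T$ is abelian we have $\mathfrak{t}\subseteq\mathfrak{z}(\mathfrak{t})$, so the trivial representation always occurs at least $d=\dim T$ times; thus the claim reduces to showing that $\mathfrak{z}(\mathfrak{t})=\mathfrak{t}$ if and only if $T$ is maximal.

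For the implication ``$\mathfrak{z}(\mathfrak{t})=\mathfrak{t}$ $\Rightarrow$ $T$ maximal'' (the easy direction): if $T\subseteq T'\subseteq G$ with $T'$ a torus, then $\mathfrak{t}\subseteq\mathfrak{t}'$, and since $T'$ is abelian $[\mathfrak{t},\mathfrak{t}']\subseteq[\mathfrak{t}',\mathfrak{t}']=0$, so $\mathfrak{t}'\subseteq\mathfrak{z}(\mathfrak{t})=\mathfrak{t}$; hence $\mathfrak{t}'=\mathfrak{t}$, and as both tori are connected, $T'=T$.

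For the converse, assume $T$ is maximal and take $X\in\mathfrak{z}(\mathfrak{t})$; I want $X\in\mathfrak{t}$. From $[\mathfrak{t},X]=0$ we get $\Ad(\exp sX)|_{\mathfrak{t}}=\id$ for all $s$, hence $\exp(sX)$ commutes with $\exp\mathfrak{t}$ and therefore, by connectedness, with all of $T$. So the subgroup $A\subseteq G$ generated by $T$ and $\{\exp(sX)\mid s\in\R\}$ is abelian and connected; its closure $\bar A$ is a closed (by Cartan's theorem, hence Lie) connected compact abelian subgroup of $G$, so $\bar A$ is a torus containing $T$. Maximality forces $\bar A=T$, so $\exp(sX)\in T$ for all $s$; differentiating the curve $s\mapsto\exp(sX)\in T$ at $s=0$ gives $X\in T_eT=\mathfrak{t}$. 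Hence $\mathfrak{z}(\mathfrak{t})=\mathfrak{t}$, so the trivial representation occurs exactly $\dim_\C(\mathfrak{t}\otimes\C)=d$ times.

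I expect the only genuinely non-formal input to be the fact used in the converse direction that a compact connected abelian Lie group is a torus (so that $\bar A$ is eligible for the maximality hypothesis); this belongs to the standard structure theory of compact abelian Lie groups — the same fact underlying the identification $T\cong\mathfrak{t}/\Gamma$ used earlier — and I would simply cite it. Everything else is bookkeeping with the weight decomposition and direct manipulation of the adjoint action.
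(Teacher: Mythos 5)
Your argument is correct, and it is worth noting at the outset that the paper itself does not prove this proposition — it simply refers the reader to Adams \cite[p.~83]{adams83} — so there is no in-paper proof to compare against. Your reduction is the standard one: identify the multiplicity of the trivial weight with $\dim_\R \mathfrak{g}^T$, identify $\mathfrak{g}^T$ with the centraliser $\mathfrak{z}(\mathfrak{t})$ of $\mathfrak{t}$ in $\mathfrak{g}$ (using that $T$ is connected and $\Ad(\exp Y)=\exp(\ad_Y)$), and then show $\mathfrak{z}(\mathfrak{t})=\mathfrak{t}$ is equivalent to maximality. The forward implication is a clean Lie-algebra argument; for the converse, the move of passing from $\mathfrak{z}(\mathfrak{t})\ni X$ to the closure $\bar{A}$ of the abelian connected group generated by $T$ and the one-parameter subgroup of $X$, invoking Cartan's closed-subgroup theorem and the fact that a compact connected abelian Lie group is a torus, is exactly the right idea and is how Adams' treatment proceeds in spirit. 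The one background fact you flag — compact connected abelian implies torus — is indeed the only non-formal input, and it is already implicitly in play in the paper's discussion of $T\cong\mathfrak{t}/\Gamma$, so citing it is appropriate. No gaps.
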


A proof of this result can be found in \cite[p.~83]{adams83}. Henceforth, we suppose that $T$ is maximal and set $d = l$.

\begin{defn}[Roots]
\label{rootsdefn}
	The roots of a compact connected Lie group $G$ are the weights of the adjoint representation, excluding 0 (which occurs $l$ times).
\end{defn}
  
The roots are thus $\R$-linear functions on $\mf{t}$, that is, elements of $\mf{t}^*$. Since the adjoint representation of $G$ is real, the 1-dimensional summands of $\mf{g} \otimes \C$ occur in conjugate pairs and the roots occur in pairs $\pm \theta$.

\begin{example}
\label{rootsoftheothermatrixgroups}
The roots of the classical Lie groups are as follows.
\begin{itemize}
	\item For the maximal torus of $\U(n)$ described in example \ref{maximaltoursofun} above, the weights are 0, $n$ times, and $\pm (x_i - x_j)$, where $1 \leq i < j \leq n$.
	\vspace{-\topsep}
	\begin{proof}
	First note that $\mf{u}(n) \otimes \C \cong \mf{gl} (n, \C) \cong \End(\C^n)$ since $B \mapsto - B^\dagger$ is a conjugate linear structure map and its +1 eigenspace is $\mf{u} (n)$. Now take the basis $\{e_j\}$ of standard column vectors for $\C^n$; for $i < j$, define linear maps $\theta_{ij} \in \End(\C^n)$ by $\theta_{ij} (e_j) = e_i$, $\theta_{ij}(e_k) = 0$ for $k \neq j$. The matrix of $\theta_{ij}$ has a 1 in the $ij$-th place and zeroes elsewhere. The $\theta_{ij}$ are eigenvectors of the action of $T$ with eigenvalues $\exp (2 \pi i (x_i - x_j))$, so $x_i - x_j$ are eigenvalues for the action of $\mf{t}$ on $\mf{u}(n)$. (Here, we are taking $\mf{t}$ to be the diagonal matrices $\T{diag} ( i y_1, \ldots , i y_n)$, $y_j \in \R$, and $x_i \in \mf{t}^*$ is given by $x_i (\T{diag} ( i y_1, \ldots , i y_n)) = i y_i$.)
\end{proof}
\item The roots of the other matrix groups are\smallbreak
\begingroup
\renewcommand{\arraystretch}{1.2}
\begin{tabular}{l l l l l l}
\label{weightsofothermatrixgroups}
	$\SU(n)$ & : & $\pm (x_i - x_j)$, & $1 \leq i < j \leq n$; & 0, & $(n - 1)$ times.\\
	$\SO(2n)$ & : & $\pm x_i \pm x_j$, & $1 \leq i < j \leq n$; & 0, & $n$ times.\\
	$\SO(2n + 1)$ & : & $\pm x_i \pm x_j$, & $1 \leq i < j \leq n$; & 0, & $n$ times;\\
	&& $\pm x_i$, & $1 \leq i \leq n$.&&\\
	$\Sp(n)$ & : & $\pm x_i \pm x_j$, & $1 \leq i < j \leq n$; & 0, & $n$ times;\\
		&& $\pm 2 x_i$, & $1 \leq i \leq n$.&&\\
\end{tabular}
\endgroup
\end{itemize}
\end{example}

\begin{defn}[Weyl Group]
	The Weyl group $W$ of a compact, connected Lie group $G$ is the group of those automorphisms of a maximal torus which are given by inner automorphisms of $G$.
\end{defn}

\begin{example}
	In $\U(2)$, conjugation by $\begin{psmallmatrix}
		0 & -1\\
		1 & 0
	\end{psmallmatrix}$ is an element of $W$, and
	\begin{equation*}
		\begin{pmatrix}
			0 & -1\\
			1 & 0
		\end{pmatrix} \begin{pmatrix}
			e^{2\pi i x_1} & 0\\
			0 & e^{2 \pi i x_2}
		\end{pmatrix} \begin{pmatrix}
			0 & 1\\
			-1 & 0
		\end{pmatrix} = \begin{pmatrix}
			e^{2\pi i x_2} & 0\\
			0 & e^{2 \pi i x_1}
		\end{pmatrix} \in T\ .
	\end{equation*}
The Weyl groups of the other classical matrix groups are as follows \cite[pp.114--116]{adams83}\smallbreak
\begingroup
\renewcommand{\arraystretch}{1.2}
\begin{tabular}{l l p{9cm}}
	$\U(n)$ and $\SU(n)$ & : & $W$ = any permutation of $x_1, \ldots , x_n$.\\
	$\Sp (n)$ and $\SO (2n + 1)$ & : & $W = $ the group generated by all permutations of $x_1 , \ldots , x_n$ and all sign changes of $x_i$.\\
	$\SO(2n)$ & : & $W = $ the group generated by all permutations of $x_1 , \ldots , x_n$ and an even number of sign changes of $x_i$.
\end{tabular}
\endgroup
\end{example}

The Weyl group $W$ acts on $\mf{t}$ and permutes roots. If we regard the roots as elements of $\mf{t}^*$, they form a configuration with great symmetry and very distinctive properties \cite[Ch.~5]{adams83}. The Dynkin diagram encodes this configuration, as we proceed to describe.

We may choose on $\mf{t}^*$  a positive definite inner product invariant under $W$, so that we can define the lengths and angles of roots.  For each pair of roots $\pm \theta$, the kernel, $\ker \theta = \ker (-\theta)$, is a hyperplane in $\mf{t}$ called a \emph{root plane}. Conversely, it can be shown that each root plane comes from only one pair of roots, $\pm \theta$. The root planes form a figure in $\mf{t}$ called the \emph{(infinitesimal) Stiefel diagram}.

The root planes divide $\mf{t}$ into convex open sets called \emph{Weyl chambers}, and the Weyl group permutes these chambers in a way which is simply transitive. We choose one and call it the \emph{fundamental Weyl chamber} (FWC); we denote it by $C$. A root $\theta$ is \emph{positive} (resp.~\emph{negative}) if $\theta > 0$ (resp.~$\theta < 0$) on $C$. A positive root is \emph{simple} if it defines a wall of $C$; in the Stiefel diagram of $\SO (5)$ shown in figure \ref{stiefeldiagramofso5}, the roots $x_1$ and $x_1 - x_2$ are simple.
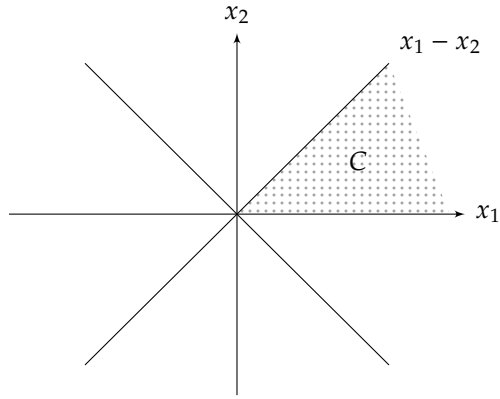
\begin{figure}
\centering
	\begin{tikzpicture}
	\draw[-latex'] 
		(-3, 0) -- (3, 0) node [right] {\small $x_1$};
	\draw[-latex'] 
		(0, -2.4) -- (0, 2.4) node [above] {\small $x_2$};
	\draw
		(-2, -2) -- (2, 2) node [above right] {\small $x_1 - x_2$}
		(-2, 2) -- (2, -2);
	\draw[pattern = dots, draw = none, opacity = 0.6]
		(0,0) -- (2.8,0) -- (2, 2);
	\draw (1.6, 0.7) node {\small $C$};
	\end{tikzpicture}
	\caption{The Stiefel Diagram of $\SO(5)$}
	\label{stiefeldiagramofso5}
\end{figure}

The \emph{Dynkin diagram} has one node for each simple root (i.e.~for each wall of the fundamental Weyl chamber). These nodes are joined by the following number of bonds:
\begin{equation*}
	\begin{cases}
		0 & \T{if the roots are orthogonal}\ ,\\
		1 & \T{if the roots are at } 60^{\circ} \T{ or }120^{\circ}\ ,\\
		2 & \T{if the roots are at } 45^{\circ} \T{ or }135^{\circ}\ ,\\
		3 & \T{if the roots are at } 30^{\circ} \T{ or }150^{\circ}\ .
	\end{cases}
\end{equation*}
These are the only possibilities \cite[pp.118--121]{adams83}. By definition, the Dynkin diagram of the torus is empty.

\begin{example} The Dynkin diagrams of the classical Lie groups are as follows.
\label{bigexamplewithdynkindiagrams}
	\begin{enumerate}
		\item For $\U (n)$, take $C$ to have $x_1 < x_2 < \cdots < x_n$. We obtain\\
			\begin{minipage}{\linewidth}
				\centering
				\bigskip
           \begin{tikzpicture}[scale=0.9, transform shape]
						\draw[fill=black]
						(0,0) circle [radius=.08] node [above] {\small $-x_1 + x_2$}
						(2,0) circle [radius=.08] node [above] {\small $-x_2 + x_3$}
						(4,0) circle [radius=.08] node [above] {\small $-x_3 + x_4$}
						(8,0) circle [radius=.08] node [above] {\small $-x_{n-1} + x_n$};
						\draw 
						(0, 0) -- (2, 0)
						(2, 0) -- (4, 0)
						(4, 0) -- (4.5, 0)
						(7, 0) -- (8, 0);
						\draw
						(5,0) -- (6.5, 0) [thick, loosely dotted];											
						\end{tikzpicture}
						\bigskip
      \end{minipage}
			$\SU (n)$ has the same Dynkin diagram as $\U(n)$, which is traditionally labelled $A_{n-1}$. (We always take the usual inner product on $\R^n$.)
		\item For $\SO(2n)$, take $C$ to have $-x_2 < x_1 < x_2 < x_3 < \cdots < x_n$. Then the diagram is denoted $D_n$ and is given by\\
			\begin{minipage}{\linewidth}
				\centering
				\bigskip
           \begin{tikzpicture}[scale=0.9, transform shape]
						\draw[fill=black]
						(-2, 1) circle [radius=.08] node [above] {\small $x_1 + x_2$\ \ \ \ \ \ \ \ \ \ }
						(-2, -1) circle [radius=.08] node [above] {\small $-x_1 + x_2$\ \ \ \ \ \ \ \ \ \ \ \ }
						(0,0) circle [radius=.08] node [above] {\small \ \ \ \ \ \ $-x_2 + x_3$}
						(2,0) circle [radius=.08] node [above] {\small $-x_3 + x_4$}
						(6,0) circle [radius=.08] node [above] {\small $-x_3 + x_4$}
						(8,0) circle [radius=.08] node [above] {\small $-x_{n-1} + x_n$};
						\draw
						(-2, 1) -- (0, 0)
						(-2, -1) -- (0, 0)
						(0, 0) -- (2, 0)
						(2, 0) -- (2.5, 0)
						(5, 0) -- (6, 0)
						(6, 0) -- (8, 0);
						\draw
						(3,0) -- (4.5, 0) [thick, loosely dotted];											
						\end{tikzpicture}
					\bigskip
      \end{minipage}
		\item For $\SO (2n + 1)$, take $C$ to have $0 < x_1 < x_2 < \cdots < x_n$. Then we have the Dynkin diagram $B_n$:\\
					\begin{minipage}{\linewidth}
				\centering
				\bigskip
           \begin{tikzpicture}[scale=0.9, transform shape]
						\draw[fill=black]
						(0,0) circle [radius=.08] node [above] {\small $x_1$} node [below, align=center] {\ \\ \small short}
						(2,0) circle [radius=.08] node [above] {\small $-x_1 + x_2$}
						(4,0) circle [radius=.08] node [above] {\small $-x_2 + x_3$}
						(8,0) circle [radius=.08] node [above] {\small $-x_{n-1} + x_n$};
						\draw 
						(0, 0.04) -- (2, 0.04)
						(0, -0.04) -- (2, -0.04)						
						(2, 0) -- (4, 0)
						(4, 0) -- (4.5, 0)
						(7, 0) -- (8, 0);
						\draw
						(5,0) -- (6.5, 0) [thick, loosely dotted];											
						\end{tikzpicture}
					\bigskip
      \end{minipage}
			\item For $\Sp (n)$, take $C$ to have $0 < x_1 < x_2 < \cdots < x_n$. Then we have the Dynkin diagram $C_n$:\\
					\begin{minipage}{\linewidth}
				\centering
				\bigskip
           \begin{tikzpicture}[scale=0.9, transform shape]
						\draw[fill=black]
						(0,0) circle [radius=.08] node [above] {\small $2x_1$} node [below, align=center] {\ \\ \small long}
						(2,0) circle [radius=.08] node [above] {\small $-x_1 + x_2$}
						(4,0) circle [radius=.08] node [above] {\small $-x_2 + x_3$}
						(8,0) circle [radius=.08] node [above] {\small $-x_{n-1} + x_n$};
						\draw 
						(0, 0.04) -- (2, 0.04)
						(0, -0.04) -- (2, -0.04)						
						(2, 0) -- (4, 0)
						(4, 0) -- (4.5, 0)
						(7, 0) -- (8, 0);
						\draw
						(5,0) -- (6.5, 0) [thick, loosely dotted];											
						\end{tikzpicture}
					\bigskip
      \end{minipage}
			Note that ``short'' (resp.~``long'') means $x_1 \cdot x_1 = 1 < 2$ (resp.~$2x_1 \cdot 2x_1 = 4 > 2$).
	\end{enumerate}
\end{example}

We now finally have a rule which associates to a compact Lie group a Dynkin diagram. The upshot: if $G$ and $G'$ are compact Lie groups, then $\mf{g}$ is isomorphic to $\mf{g}'$ if and only if $\mf{g} \otimes \C$ is isomorphic to $\mf{g}' \otimes \C$; thus the Dynkin diagram determines $\mf{g}$, and hence $G$, locally. In particular, corresponding to each Dynkin diagram, there is a unique compact, connected, simply connected Lie group, because to each of the diagrams in the \emph{Killing-Cartan classification}, i.e.~example \ref{bigexamplewithdynkindiagrams}, plus the exceptional Dynkin diagrams below, there is a unique simple Lie algebra---in fact, every complex simple Lie algebra is isomorphic to one of the algebras in this classification scheme---and hence a unique connected, simply connected, compact, simple Lie group. As we saw above, the groups $\SU(n + 1)$, $n \geq 1$, and $\T{Sp} (n)$, $n \geq 3$ correspond to $A_n$ and $C_n$ respectively; the groups $\Spin (2n + 1)$, $n \geq 2$ and $\Spin (2n)$, $n \geq 4$ correspond to $B_n$ and $D_n$. All these groups have rank $n$ and are pairwise non-isomorphic. The non-classical or ``exceptional'' Dynkin diagrams are as follows; the notation and conventions are explained in \cite[Ch.~9]{adams96}.

\begingroup
\renewcommand*{\arraystretch}{2.5}
\begin{tabular}{m{5em} l}
	$\mrm{G}_2$ &	\noindent\parbox[c]{\hsize}{\begin{tikzpicture}[scale=0.9, transform shape]
							\draw[fill=black]
							(0,0) circle [radius=.08] 
							(1,0) circle [radius=.08];
							\draw 
							(0,-.06) --++ (1,0)
							(0, 0) --++ (1, 0)
							(0,+.06) --++ (1,0);                      
					  \end{tikzpicture}}\\
	$\mrm{F}_4$ & \noindent\parbox[c]{\hsize}{\begin{tikzpicture}[scale=0.9, transform shape]
						\draw[fill=black]
						(0,0) circle [radius=.08] node [above] {\small 1} node [below] {\small $s$}
						(1,0) circle [radius=.08] node [above] {\small 2} node [below] {\small $s$}
						(2,0) circle [radius=.08] node [above] {\small 3} node [below] {\small $l$}
						(3,0) circle [radius=.08] node [above] {\small 4} node [below] {\small $l$};
						\draw 
						(0, 0) -- (1, 0)
						(1, 0.04) -- (2, 0.04)
						(1, -0.04) -- (2, -0.04)
						(2, 0) -- (3, 0);										
					 \end{tikzpicture}} \\
	$\E_6$ & \noindent\parbox[c]{\hsize}{\begin{tikzpicture}[scale=0.9, transform shape]
						\draw[fill=black]
						(0,0) circle [radius=.08] node [below] {\small 1} 
						(1,0) circle [radius=.08] node [below] {\small 2} 
						(2,0) circle [radius=.08] node [below] {\small 4}
						(2,1) circle [radius=.08] node [above] {\small 5}
						(3,0) circle [radius=.08] node [below] {\small 3}
						(4,0) circle [radius=.08] node [below] {\small 6};
						\draw 
						(0, 0) -- (1, 0)
						(1, 0) -- (2, 0)
						(2, 0) -- (2, 1)
						(2, 0) -- (3, 0)
						(3, 0) -- (4, 0);										
					 \end{tikzpicture}}\\
	$\E_7$ & \noindent\parbox[c]{\hsize}{\begin{tikzpicture}[scale=0.9, transform shape]
						\draw[fill=black]
						(0,0) circle [radius=.08] node [below] {\small 7} 
						(1,0) circle [radius=.08] node [below] {\small 1} 
						(2,0) circle [radius=.08] node [below] {\small 2}
						(3,0) circle [radius=.08] node [below] {\small 4}
						(3,1) circle [radius=.08] node [above] {\small 3}
						(4,0) circle [radius=.08] node [below] {\small 5}
						(5,0) circle [radius=.08] node [below] {\small 6};
						\draw 
						(0, 0) -- (1, 0)
						(1, 0) -- (2, 0)
						(2, 0) -- (3, 0)
						(3, 0) -- (3, 1)
						(3, 0) -- (4, 0)
						(4, 0) -- (5, 0);										
					 \end{tikzpicture}}\\
	$\E_8$ & \noindent\parbox[c]{\hsize}{\begin{tikzpicture}[scale=0.9, transform shape]
						\draw[fill=black]
						(0,0) circle [radius=.08] node [below] {\small 1} 
						(1,0) circle [radius=.08] node [below] {\small 2} 
						(2,0) circle [radius=.08] node [below] {\small 4}
						(2,1) circle [radius=.08] node [above] {\small 3}
						(3,0) circle [radius=.08] node [below] {\small 5}
						(4,0) circle [radius=.08] node [below] {\small 6}
						(5,0) circle [radius=.08] node [below] {\small 7}
						(6,0) circle [radius=.08] node [below] {\small 8};
						\draw 
						(0, 0) -- (1, 0)
						(1, 0) -- (2, 0)
						(2, 0) -- (2, 1)
						(2, 0) -- (3, 0)
						(3, 0) -- (4, 0)
						(4, 0) -- (5, 0)										
						(5, 0) -- (6, 0);										
					 \end{tikzpicture}}
\end{tabular}
\endgroup
\bigskip

We are primarily interested in the Lie group corresponding to the diagram $\E_6$, but to arrive at the same, we will need to construct the group $\E_8$; we do so in section \ref{theconstructionofe8sec}. Additionally, we describe the construction of the smallest exceptional Lie group $\mrm{G}_2$ in an appendix, by way of an illustrative example.

\section{Possible Grand Unification Groups}

By way of motivating grand unified theories, we have already raised several questions about the unsatisfactory aesthetics of the Standard Model representation. We introduce now considerations of a more technical nature, which will help us ``classify'' grand unification groups as it were, i.e.~understand which groups are preferred from the plenitude of available Lie groups that contain embeddings of the Standard Model.

\subsubsection{Coupling Constants in Gauge Theories}

\begin{defn}[Killing Form]
\label{thekillingformdefn}
	Let $A$ be a Lie algebra. The map
	\toenv{A \times A}{\C\ ,}{(X, Y)_K}{\tr (\ad (X) \circ \ad (Y))}
	is called its Killing form.
\end{defn}

The symmetry and bilinearity of this form are easy to check; it is also immediately clear that the Killing form of an abelian Lie algebra is zero. More interesting is the following

\begin{prop}
	Let $\sigma : A \to A$ be a Lie algebra automorphism. Then $(\sigma X, \sigma Y)_K = (X, Y)_K$ for all $X, Y \in A$. For $A = \mf{g}$ the Lie algebra of some Lie group, this holds in particular for the automorphism $\Ad (g)$ for an arbitrary $g \in G$.
\end{prop}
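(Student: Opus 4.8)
The plan is to reduce everything to the single identity $\ad(\sigma X) = \sigma \circ \ad(X) \circ \sigma\inv$ for any Lie algebra automorphism $\sigma$, and then to exploit the conjugation-invariance of the trace. So the first step is to verify this identity: for arbitrary $Z \in A$, write $\ad(\sigma X)(Z) = [\sigma X, Z]$, insert $Z = \sigma(\sigma\inv Z)$, and use the fact that $\sigma$ respects the bracket to get $[\sigma X, \sigma(\sigma\inv Z)] = \sigma[X, \sigma\inv Z] = \sigma\bigl(\ad(X)(\sigma\inv Z)\bigr)$. This is exactly $\bigl(\sigma \circ \ad(X) \circ \sigma\inv\bigr)(Z)$, and since $Z$ was arbitrary the operator identity follows. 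This step uses only that $\sigma$ is bijective and bracket-preserving.

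The second step is a short computation with endomorphisms of $A$. Composing,
\begin{equation*}
	\ad(\sigma X) \circ \ad(\sigma Y) = \bigl(\sigma \circ \ad(X) \circ \sigma\inv\bigr)\bigl(\sigma \circ \ad(Y) \circ \sigma\inv\bigr) = \sigma \circ \bigl(\ad(X) \circ \ad(Y)\bigr) \circ \sigma\inv\ .
\end{equation*}
Taking the trace and using $\tr(\sigma \circ \Phi \circ \sigma\inv) = \tr(\Phi)$ for any $\Phi \in \End(A)$ (valid because $\sigma \in \GL(A)$), we conclude $(\sigma X, \sigma Y)_K = \tr\bigl(\ad(\sigma X) \circ \ad(\sigma Y)\bigr) = \tr\bigl(\ad(X) \circ \ad(Y)\bigr) = (X, Y)_K$.

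For the final assertion, I would note that $\Ad(g) = \diff c_g : \mf{g} \to \mf{g}$ is a Lie algebra homomorphism by item (iii) of Theorem \ref{bigthmaboutliealgebras} applied to the conjugation homomorphism $c_g : G \to G$, and it is invertible with inverse $\Ad(g\inv) = \diff c_{g\inv}$ (since $c_g \circ c_{g\inv} = \id_G$ and $\diff$ is functorial). Hence $\Ad(g)$ is a Lie algebra automorphism, and the first part of the proposition applies verbatim with $\sigma = \Ad(g)$.

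There is essentially no serious obstacle here; the only point requiring a moment's care is making sure the automorphism hypothesis is used correctly in establishing $\ad(\sigma X) = \sigma \circ \ad(X) \circ \sigma\inv$ — in particular that both the homomorphism property (for the middle equality) and bijectivity (so that $\sigma\inv$ exists and the conjugation makes sense) are invoked. Everything after that is just cyclicity of the trace.
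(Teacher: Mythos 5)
Your proof is correct and follows essentially the same route as the paper: establish the operator identity $\ad(\sigma X) = \sigma \circ \ad(X) \circ \sigma^{-1}$ from the automorphism property, then conclude by conjugation-invariance of the trace. The only difference is that you spell out the final assertion about $\Ad(g)$ explicitly (via Theorem~\ref{bigthmaboutliealgebras}(iii) and functoriality of $\diff$), which the paper leaves implicit.
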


\begin{proof}
	Since $\sigma$ is a Lie algebra automorphism, we have
	\begin{equation*}
		\ad (\sigma X) Y = [\sigma X, Y] = \sigma ([X, \sigma\inv Y]) = (\sigma \circ \ad (X) \circ \sigma\inv) (Y)\ .
	\end{equation*}
	We hence compute
	\begin{align*}
		(\sigma X, \sigma Y)_K &= \tr (\ad (\sigma X) \circ \ad (\sigma Y))\\
		&= \tr (\sigma \circ \ad (X) \circ \ad (Y) \circ \sigma\inv)\\
		&= \tr (\ad (X) \circ \ad (Y))\\
		&= (X, Y)_K \qedhere
	\end{align*}
\end{proof}

We introduce some more terminology: if $\mf{g}$ is the Lie algebra of a compact Lie group $G$, it is in turn called \emph{compact}; a subspace $\mf{i} \subset \mf{g}$ is called an \emph{ideal} if it is closed under the Lie bracket, and satisfies $[\mf{g} , \mf{i}] \subseteq \mf{i}$; a Lie algebra is called \emph{simple} if its only ideals are $0$ and itself. Now, one can show that for $\mf{g}$ compact and simple, the negative of the Killing form is a positive-definite inner product; moreover, it turns out that up to a positive constant, it is the unique such form. The proof of this is not very hard, and can be found in \cite[Ch.~8.1]{fuchs03}, for example. From this, one can deduce the following result (see \cite[Ch.~2.10]{hamilton16} for a proof) which will in turn finally allow us to make the definition that we are after.

\begin{thm}
	Let $G$ be a compact, connected Lie group of the form
	\begin{equation*}
		G = \U(1) \times \cdots \times \U(1) \times G_1 \times \cdots \times G_n\ ,
	\end{equation*}
	(up to a finite quotient) where the $G_i$ are simple. Let $k : \mf{g} \times {g} \to \C$ be an $\Ad$-invariant positive definite scalar product on the Lie algebra $\mf{g}$. Then $k$ is the orthogonal sum of
	\begin{itemize}
		\item a positive definite scalar product $k_0$ on the abelian algebra $\mf{u}(1) \oplus \cdots \oplus \mf{u} (1)$, and
		\item $\Ad_{G_i}$-invariant positive definite scalar products $k_i$'s on the Lie algebras $\mf{g}_i$.
	\end{itemize}
	The scalar product $k_0$ is determined by a positive definite symmetric matrix, and the scalar products $k_i$ are determined by positive constants relative to some fixed $\Ad$-invariant positive definite scalar products on the corresponding Lie algebras (such as the negative Killing form).
\end{thm}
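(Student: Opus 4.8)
The plan is to work entirely at the level of Lie algebras --- passing to a finite quotient does not change $\mf{g}$ --- and to use the infinitesimal form of $\Ad$-invariance to separate the summands. Write $\mf{g} = \mf{a} \oplus \mf{g}_1 \oplus \cdots \oplus \mf{g}_n$, where $\mf{a} = \mf{u}(1) \oplus \cdots \oplus \mf{u}(1)$ is the centre of $\mf{g}$ (each $\mf{g}_i$ is simple, hence has trivial centre, so the centre is exactly the abelian factor) and $\mf{g}_i = \mrm{Lie}(G_i)$. Differentiating the identity $k(\Ad(g) X, \Ad(g) Y) = k(X, Y)$ at $g = e$ in a direction $Z \in \mf{g}$ yields the infinitesimal invariance $k([Z, X], Y) + k(X, [Z, Y]) = 0$ for all $X, Y, Z \in \mf{g}$, which is the one tool the argument really uses.

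First I would establish that the summands $\mf{a}, \mf{g}_1, \dots, \mf{g}_n$ are pairwise $k$-orthogonal. Since each $\mf{g}_i$ is simple and non-abelian, $[\mf{g}_i, \mf{g}_i] = \mf{g}_i$; so given $X \in \mf{g}_i$, write $X = \sum_\alpha [A_\alpha, B_\alpha]$ with $A_\alpha, B_\alpha \in \mf{g}_i$. For any $Y$ lying in $\mf{a}$ or in $\mf{g}_j$ with $j \neq i$, infinitesimal invariance gives $k(X, Y) = \sum_\alpha k([A_\alpha, B_\alpha], Y) = -\sum_\alpha k(B_\alpha, [A_\alpha, Y])$, and $[A_\alpha, Y] = 0$ because $\mf{a}$ is central and distinct ideals in a direct sum commute; hence $k(X, Y) = 0$. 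Thus $k$ is the orthogonal sum of $k_0 := k|_{\mf{a}}$ and the $k_i := k|_{\mf{g}_i}$. Each restriction is positive definite (a positive definite form stays positive definite on a subspace), and each $k_i$ is $\Ad_{G_i}$-invariant: realising $G_i$ as the $i$-th factor of $G$, the operator $\Ad_G(g)$ restricts on $\mf{g}_i$ to $\Ad_{G_i}(g)$ and fixes the other summands, so the invariance of $k$ descends to $k_i$.

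It remains to identify the two kinds of pieces. For the abelian part, $\Ad(g)$ is a Lie-algebra automorphism that fixes the centre pointwise (on the connected group $G$, each $\Ad(g)$ is a product of $\exp(\ad Z)$, and $\ad Z$ annihilates $\mf{z}(\mf{g})$), so \emph{every} inner product on $\mf{a}$ is $\Ad$-invariant; in the standard basis of $\mf{u}(1) \oplus \cdots \oplus \mf{u}(1)$ the form $k_0$ is recorded by an arbitrary positive definite symmetric matrix, as claimed. For each simple factor, set $B_i := -(\cdot,\cdot)_K|_{\mf{g}_i}$, which by the result quoted just before the statement is an $\Ad_{G_i}$-invariant positive definite inner product. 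There is then a unique $\Phi \in \End(\mf{g}_i)$ with $k_i(X, Y) = B_i(\Phi X, Y)$ for all $X,Y$; symmetry of $k_i$ makes $\Phi$ self-adjoint and positive with respect to $B_i$, and comparing the $\Ad_{G_i}$-invariance of $k_i$ and of $B_i$ forces $\Phi\, \Ad_{G_i}(g) = \Ad_{G_i}(g)\, \Phi$, hence, upon differentiating, $\Phi$ commutes with $\ad(\mf{g}_i)$. Because $\mf{g}_i$ is simple its adjoint representation is irreducible (an invariant subspace is an ideal), so every eigenspace of the self-adjoint $\Phi$ is either $0$ or all of $\mf{g}_i$; therefore $\Phi = \lambda_i\, \id$ with $\lambda_i > 0$ and $k_i = \lambda_i B_i$, which is exactly the asserted determination of $k_i$ by a positive constant.

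The routine parts --- bilinearity and symmetry of the forms, the differentiation producing infinitesimal invariance, positive definiteness passing to subspaces --- I would not belabour. The only step with genuine content is the last one: passing from ``$\Ad$-invariant inner product on a simple $\mf{g}_i$'' to ``positive scalar multiple of the negative Killing form.'' This is a real Schur's-lemma argument, resting on irreducibility of the adjoint representation together with the spectral theorem applied to the self-adjoint intertwiner $\Phi$; the complementary fact that the negative Killing form is itself positive definite on a compact simple Lie algebra (and is the reference form against which $\lambda_i$ is measured) I would treat as already in hand, as indicated in the remark preceding the statement.
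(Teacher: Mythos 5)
Your proof is correct. Note that the paper does not actually supply its own proof of this theorem: it defers to the cited reference \cite[Ch.~2.10]{hamilton16}, and your argument --- orthogonality of the summands via the differentiated $\Ad$-invariance $k([Z,X],Y) = -k(X,[Z,Y])$ together with $[\mf{g}_i,\mf{g}_i]=\mf{g}_i$, triviality of $\Ad$ on the centre making $k_0$ arbitrary, and Schur's lemma applied to the self-adjoint, $\ad$-commuting intertwiner $\Phi$ on each simple $\mf{g}_i$ --- is precisely the standard proof one finds there, so it fills that gap rather than diverging from the paper.
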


\begin{defn}[Coupling Constants]
	The constants that determine the positive definite $\Ad$-invariant scalar products on the abelian ideal $\mf{u} (1) \oplus \cdots \oplus \mf{u}(1)$ and the $\mf{g}_i$-summands relative to some standard $\Ad$-invariant scalar products, are called coupling constants.	
\end{defn}

Some insight from physics is in order. Gauge couplings are simply numbers, determined by experiment, that fix the interaction strength of the field that they correspond to. They are encountered most directly in \emph{pure Yang-Mills theories}, which lie at the heart of both electroweak unification and QCD. We consider them briefly, returning to the framework of gauge theory; we follow \cite[Ch.~7.2]{hamilton16}. Let $G: P \to M$ be a principal bundle with the structure group $G$ compact and finite dimensional. Further, fix an $\Ad$-invariant positive-definite scalar product $k$ on $\mf{g}$ as in the theorem above, and a $k$-orthonormal basis for $\mf{g}$. For $A$ a connection $1$-form with curvature $2$-form $F^A \in \Omega^2 (P, \mf{g})$, in a local gauge $s : M \supset U \to P$, the field strength is given by
\begin{equation*}
	F^A_s := s^* F^A \in \Omega^2 (M, \mf{g})\ .
\end{equation*}
The Yang-Mills Lagrangian is then simply defined by
\begin{equation*}
	\mc{L}_\mrm{YM}^{} = - \frac{1}{2} k (F_s^A , F_s^A)\ .
\end{equation*}
In the case that $G$ is simple, for instance, there is a single coupling constant $g > 0$ and it is clear that this ``determines the field strength'' in the sense that it directly scales the inner product that appears in the Lagrangian.

This brings us to unification. The coupling constants of the strong, weak, and electromagnetic interactions are known to be different at low energies ($\approx 1\,\mrm{GeV}$); to wit, the strong interaction is observed to be much stronger (obviously) than the weak and electromagnetic couplings. However in principle, there is no reason that the couplings cannot be unified at high energies, because in quantum field theory, these constants are in fact not constant; they depend on the energy scale. This phenomenon is known as \emph{renormalisation group running}. Calculations show (see \cite[Ch.~5.5]{mohapatra02}) that if the coupling constants are normalised as in the previous paragraph, i.e.~taken to be orthonormal with respect to the Killing form on $\mf{g}_\mrm{SM}$, the renormalisation group equation indicates that they roughly converge at high energies. This is a plausibility argument for a grand unification group with a single coupling constant, unifying the three forces of the Standard Model at high energies; this can only occur if the unification group is simple, or a product of identical simple groups, where the coupling constant for each factor is set the same by forcing the theory to have some sort of discrete symmetry. This is the first demand that we will make of any potential grand unification group.

\subsubsection{Chirality and Complex Representations}

In mathematics, the term ``complex representation'' simply refers to a group representation on a complex vector space; the term as used in physics denotes something different, and it is related to chirality. As we have seen, the weak force, and hence the Standard Model, is chiral. This unexpected feature detracts significantly from the symmetry of the rest of the theory, and one might expect that grand unified theories behave more naturally, or at least somehow explain this parity violation. But this is in fact not the case: Georgi \cite{georgi79} and Barbieri et al.~\cite{barbieri80b} have argued that the fermions that would have to be introduced into an achiral grand unified theory to recover the chirality of the Standard Model on symmetry breaking would be unacceptably heavy; this is an instance of the \emph{Survival Hypothesis}, which we will discuss in more detail in section \ref{furtherreadingsec}. For the moment, we will content ourselves with defining a complex representation, and seeing how it is concerned with chirality.

\begin{defn}[Complex Representation]
	Two representations $\pi_1 : G \to \GL (V_1)$ and $\pi_2 : G \to \GL (V_2)$ of a group $G$ are said to be \emph{equivalent} if there is an intertwining operator from $V_1$ to $V_2$ such that it is also a vector space isomorphism. If $\pi : G \to \GL (V)$ is a representation, the \emph{complex conjugate representation} $\cc{\pi}$ is defined over the complex conjugate vector space $\cc{V}$ by $\cc{\pi} (g) = \cc{\pi (g)}$. A representation of a group is said to be complex if it is not equivalent to its complex conjugate representation.
\end{defn}

The connection to handedness is pretty straightforward. We know that the way to get the antiparticle representation from the particle representation is simply to pass to the dual; for example, $\cc{\nu}_R \in \C_{1} \otimes \C^2 \otimes \C \cong \cc{\C_{-1} \otimes \C^2 \otimes \C} \ni \nu_L$ (we have used here the fact that $\C^2 \cong \cc{\C^2}$ under $\SU(2)$). So if we take a direct sum of the representations of all the left-handed fermions, call this $f_L$, it stands to reason that the direct sum of all the right-handed fermion representations is given by $f_R = \cc{f_L}$. Therefore, if $f_L \cong \cc{f_L}$, i.e.~if $f_L$ is real, the theory is manifestly achiral, since the right-handed particles transform as the left; such theories are called \emph{vectorlike}; on the other hand, if $f_L$ is complex, the theory is chiral. We will hence demand that our grand unification groups admit complex representations, to preserve this feature of the standard model. Let us summarise our work in the following

\begin{defn}[Possible Unification Group]
\label{possiblegrandunifgroupdefn}
	We call a Lie group $G$ a possible unification group if it satisfies the following properties.
	\begin{itemize}
		\item $G$ is simple, or a product of several copies of the same simple group.
		\item	$G$ contains (perhaps up to a finite quotient) the Standard Model gauge group $G_\mrm{SM}$.
		\item $G$ admits complex representations.
	\end{itemize}
\end{defn}

\subsubsection{Classification of Unification Groups}

We will restrict our discussion to Lie groups with rank less than $7$, since they are generally considered the most interesting for unification\footnote{For a discussion on higher rank unification groups, see \cite[Ch.~3.4]{langacker80} and \cite[Ch.~3]{slansky81}.}; they are listed as follows:
\begin{itemize}
	\item rank $1$: $\SU(2)$;
	\item rank $2$: $\SU(3)$, $\Spin (5)$, $\mrm{G}_2$;
	\item rank $3$: $\SU(4)$, $\Spin (7)$, $\Sp (3)$;
	\item rank $4$: $\SU(5)$, $\Spin (8)$, $\Spin (9)$, $\Sp (4)$, $\mrm{F}_4$\;
	\item rank $5$: $\SU(6)$, $\Spin (10)$, $\Spin (11)$, $\Sp (5)$;
	\item rank $6$: $\SU(7)$, $\Spin (12)$, $\Spin (13)$, $\Sp (6)$, $\E_6$.
\end{itemize}
Of these, the only ones for which we have not explicitly computed the rank are the exceptional groups $\mrm{G}_2$, $\mrm{F}_4$ and $\E_6$. Since we will momentarily eliminate $\mrm{F}_4$ as a possible unification group, we will not bother with this computation\footnote{The interested reader may refer to \cite[Ch.~8]{adams96}.}; the rank of $\E_6$ is computed in the proof of theorem \ref{bigliegrouptabletheorem}, and of $\mrm{G}_2$ in the appendix.

Mehta and Srivastava have classified the complex representations of all the classical Lie groups: only the $\SU(n)$'s, for $n > 2$, the $\Spin (4n + 2)$'s, for $n \geq 1$, and $\E_6$ admit complex representations \cite{mehta66, mehta66b}. Together with the fact that the gauge group of the Standard Model $G_\mrm{SM} = \U(1) \times \SU(2) \times \SU(3)$ has rank equal to $1 + 1 + 2 = 4$, and the further requirement on simplicity from definition \ref{possiblegrandunifgroupdefn}, we can immediately thin down the above list significantly.\footnote{In section \ref{anomalycancellationsec} we will consider the issue of anomaly cancellation, and its consequences for grand unified theories. This rather subtle requirement from quantum field theory is hard to motivate from a representation theoretic standpoint alone (though it does have a nice interpretation in the same), and was hence omitted in this section. In any case, it has no bearing on our list of possible unification groups.}

\begin{prop}
	The only possible grand unification groups with rank less than $7$ are the following:
	\begin{itemize}
		\item rank $4$: $\SU(3)^2$ and $\SU(5)$;
		\item rank $5$: $\SU(6)$ and $\Spin (10)$;
		\item rank $6$: $\SU(3)^3$, $\SU(4)^2$, $\SU(7)$ and $\E_6$.
	\end{itemize}
\end{prop}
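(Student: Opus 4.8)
The plan is to treat this as a finite bookkeeping problem resting entirely on the classification facts already quoted. Since $G_\mrm{SM} = \U(1) \times \SU(2) \times \SU(3)$ has rank $1 + 1 + 2 = 4$, any Lie group containing it, even up to a finite quotient, has rank at least $4$; so among the groups of rank less than $7$ it is enough to examine ranks $4$, $5$ and $6$. For each of these the candidates permitted by Definition~\ref{possiblegrandunifgroupdefn} fall into two families: the simple groups appearing in the list just above, and the products $H^k$ with $k \geq 2$, $H$ simple, and $k \cdot \rank H < 7$.

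First I would impose the complex-representation condition on the simple candidates, using the theorem of Mehta and Srivastava: a classical simple group admits complex representations exactly when it is $\SU(n)$ with $n > 2$ or $\Spin(4n+2)$ with $n \geq 1$, and among the exceptional groups only $\E_6$ does. Going down the rank-$4$ line this eliminates $\Spin(8)$ (as $8 \equiv 0 \pmod 4$), $\Spin(9)$, $\Sp(4)$ and $\mrm{F}_4$, leaving $\SU(5)$; at rank $5$ it eliminates $\Spin(11)$ and $\Sp(5)$, leaving $\SU(6)$ and $\Spin(10) = \Spin(4 \cdot 2 + 2)$; at rank $6$ it eliminates $\Spin(12)$, $\Spin(13)$ and $\Sp(6)$, leaving $\SU(7)$ and $\E_6$. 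For the products, the smallest simple group admitting complex representations is $\SU(3)$, of rank $2$, so the only options of rank less than $7$ are $\SU(3)^2$ and $\SU(3)^3$, together with the squares of rank-$3$ simple groups; of the latter only $\SU(4)^2$ passes the test, while $\Spin(7)^2$ and $\Sp(3)^2$ fail, and any product of copies of $\SU(2)$, $\Spin(5) \cong \Sp(2)$ or $\mrm{G}_2$ is excluded at once since none of these has complex representations. This reproduces precisely the list in the statement.

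The remaining task is to confirm that each of the eight survivors really does contain $G_\mrm{SM}$ up to a finite quotient — this is the only part carrying any content, and even it is routine. For $\SU(5)$, $\Spin(10)$ and $\E_6$ this is exactly what is established in sections~\ref{thesu5gutsection}, \ref{thespin10gutsubsection} and \ref{e6isagutsection}, and since $\SU(5) \subset \SU(6) \subset \SU(7)$ the other two unitary groups inherit the embedding. For the products it suffices to note that $\SU(3)$ contains $\U(1) \times \SU(2)$ up to a finite quotient — for instance via $(\alpha, A) \mapsto \diag(\alpha^{-2}, \alpha A)$ — so that in $\SU(3)^2$ and $\SU(3)^3$ one factor supplies an $\SU(3)$ and another a $\U(1) \times \SU(2)$; and that inside $\SU(4)$ the block embeddings $\SU(3) \hookrightarrow \SU(4)$ and $\U(1) \times \SU(2) \hookrightarrow \SU(4)$ let the two factors of $\SU(4)^2$ together furnish all of $G_\mrm{SM}$. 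I expect the main (modest) obstacle to be nothing more than keeping track of the finite quotients that appear in these embeddings; the genuinely hard input, namely which simple groups possess complex representations, has been imported wholesale from Mehta and Srivastava.
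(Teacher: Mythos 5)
Your bookkeeping reproduces exactly the paper's implicit argument: filter the finite list of simple groups and their powers by the Mehta--Srivastava classification of complex representations, the rank $\geq 4$ bound forced by $G_\mrm{SM}$, and the simplicity requirement of Definition~\ref{possiblegrandunifgroupdefn}. The paper stops at that filter and delegates the containment of $G_\mrm{SM}$ up to a finite quotient to the cited references, whereas you additionally spell out embeddings for each survivor; this is a more careful treatment but structurally identical to the paper's.
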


We provide references for these grand unified theories, where they exist. In the same paper \cite{georgi74} in which they proposed the $\SU(5)$ theory, Georgi and Glashow ruled out an $\SU(3)^2$ theory for physical reasons, leaving $\SU(5)$ the unique rank $4$ unification group; we will turn to this theory in the next section. A theory with unification group $\SU(6)$ was suggested in $2005$ by Hartanto and Handoko \cite{hartanto05}, while the $\Spin (10)$ grand unified theory was put forward by Georgi in $1974$ \cite{georgi74b} and Fritzsch and Minkowski in $1975$ \cite{fritzsch75}. Finally, a theory with $\SU(3)^3$ as gauge group, called \emph{trinification}, was demonstrated by de R\'uluja et al.~in $1984$ \cite{rujula75}, an $\SU(7)$ grand unification theory was studied by Umemura and Yamamoto in $1981$ \cite{umemura81}, and the subject of this thesis, the $\E_6$ grand unified theory, first appeared in a 1976 paper by G\"ursey et al.~\cite{gursey75}.

\section{The $\SU(5)$ Grand Unified Theory}
\label{thesu5gutsection}

Georgi and Glashow's $\SU(5)$ extension of the Standard Model was the first grand unified theory, and is still considered the prototypical example of the same. Unfortunately this theory has since been ruled out by experiment: it predicts that protons will decay faster than the current lower bound on proton lifetime.\footnote{Detailed studies and reviews of this theory abound in the literature, see \cite{mohapatra02, rosen} and references therein.} Our focus here will be simply to show what exactly we mean when we say that $\SU(5)$ is a grand unified theory; the questions we will ask and methodology we will develop will be highly instructive for us when we later consider the $\Spin (10)$ theory, and eventually the one of $\E_6$. We closely follow \cite{baez10} in this section.

For integers $m, n \geq 1$, let us define $\mrm{S} (\U (m) \times \U(n)) = \{ (A, B) \in \U(m) \times \U(n) \mid \det A \cdot \det B = 1 \}$. This Lie group is naturally a subgroup of $\SU (m + n)$ under the embedding
	\toenv{\mrm{S}(\U (m) \times \U(n))}{\SU(m + n)\ ,}{(A, B)}{\begin{pmatrix}
	A & 0\\
	0 & B
\end{pmatrix}\ .}
The key to the whole $\SU(5)$ theory is the following: the subgroup $\mrm{S} (\U(2) \times \U(3))$ is isomorphic to $G_\mrm{SM}$, modulo a finite subgroup. More precisely, consider the map
\toenv{\phi : \&[-29pt] \U(1) \times \SU(2) \times \SU(3)}{\SU(5)\ ,}{\& (\alpha, A, B)}{\begin{pmatrix}
	\alpha^3 A & 0\\
	0 & \alpha^{-2} B
\end{pmatrix}\ ;}
this is clearly a homomorphism from $G_\mrm{SM}$ to $\mrm{S} (\U(2) \times \U(3))$. Equally clear is the fact that it is not injective: its kernel is all elements of the form $(\alpha, \alpha^{-3}, \alpha^2)$. This kernel is $\Z_6$, because scalar matrices $\alpha^{-3}$ and $\alpha^2$ live in $\SU(2)$ and $\SU(3)$ simultaneously if and only if $\alpha$ is a sixth root of unity. So in short order, we have obtained
\begin{equation*}
	G_\mrm{SM} / \Z_6 \cong \mrm{S} (\U(2) \times \U(3)) \hookrightarrow \SU(5)\ .
\end{equation*}

This sets up a test that the $\SU(5)$ theory must pass for it to have any chance of success: not all representations of $G_\mrm{SM}$ factor through $G_\mrm{SM} / \Z_6$, but all those coming from representations of $\SU(5)$ must do so. In particular, we have to check that $\Z_6$ acts trivially on all the irreps inside $F$, that is, it must act trivially on all fermions (and antifermions, but that amounts to the same thing). For this to be true, some non-trivial relations between hypercharge, isospin and colour must hold. Consider for example the electron	$e_L^- \in \C_{-1} \otimes \C^2 \otimes \C$; for any $\alpha \in \Z_6$ we need $(\alpha, \alpha^{-3} , \alpha^2)$ to act trivially on this particle. We compute,
\begin{equation*}
	(\alpha, \alpha^{-3} , \alpha^2) \cdot e_L^- = \alpha^{-3} \alpha^{-3} e_L^- = \alpha^{-6} e_L^- = e_L^-\ ,
\end{equation*}
since $\alpha$ is a sixth root of unity. In principle, there are 15 other such cases to check, but these can be reduced to just four hypercharge relations that must be satisfied:
\begin{itemize}
	\item for the left-handed quarks, $Y = $ even integer $+1/3$,
	\item for the left-handed leptons, $Y = $ odd integer,
	\item for the right-handed quarks, $Y = $ odd integer $+1/3$, and
	\item for the right-handed leptons, $Y = $ even integer.
\end{itemize}
A glance at table \ref{thesmfermionstable} shows that all of these equalities hold, so our $\SU(5)$ theory has passed its first test. We remark here that not only is $\Z_6$ contained in the kernel of the Standard Model representation, but it is in fact the entire kernel. Hence, one could say that $G_\mrm{SM} / \Z_6$ is the ``true'' gauge group of the Standard Model.

Our next order of business is to find a representation of $\SU(5)$ that extends the Standard Model representation, and there is a beautiful choice that works: the exterior algebra $\Lambda^* \C^5$.  We have to check that pulling back the representation from $\SU(5)$ to $G_\mrm{SM}$ using $\phi$ gives the Standard Model representation $F \oplus \cc{F}$; our strategy will be to use the fact that, being representations of compact Lie groups, both $F \oplus \cc{F}$ and $\Lambda^* \C^5$ are completely reducible, and can be written as the direct sum of irreps; we will then match them up one irrep at a time. We already know what the decomposition of $F \oplus \cc{F}$ into irreps is, so let us look at $\Lambda^* \C^5$. Any element $A \in \SU(5)$ acts as an automorphism of the exterior algebra: $A \cdot (v \wedge w) = Av \wedge Aw$, where $v, w \in \Lambda^* \C^5$. Since we know how $A$ acts on vectors in $\C^5$, and these generate $\Lambda^*\C^5$, this rule is enough to tell us how $A$ acts on all of $\Lambda^* \C^5$. This action respects grades in $\Lambda^* \C^5$, so each exterior power in
\begin{equation*}
	\Lambda^* \C^5 \cong \Lambda^0\C^5 \oplus \Lambda^1\C^5 \oplus \Lambda^2\C^5 \oplus \Lambda^3\C^5 \oplus \Lambda^4\C^5 \oplus \Lambda^5\C^5
\end{equation*}
is a subrepresentation. More than that, they are all irreps of $\SU(5)$, though this is not so easy to see; we refer the reader to \cite[Ch.~15.2]{fulton99} for a proof.

$\Lambda^0 \C^5$ and $\Lambda^5 \C^5$ are both trivial irreps of $G_\mrm{SM}$, and there are exactly two trivial irreps in $F \oplus \cc{F}$, namely $\langle \nu_R \rangle$ and $\langle \cc{\nu}_L \rangle$ (we use the angle brackets to denote the Hilbert space spanned by a vector or vectors). Hence, these irreps must match up; we will select $\Lambda^0 \C^5 \cong \langle \cc{\nu}_L \rangle$ and $\Lambda^5 \C^5 \cong \langle \nu_R \rangle$ for reasons that will be clear in a moment. Consider next the irrep $\Lambda^1 \C^5 \cong \C^5$. The group $G_\mrm{SM}$ acts on $\C^5$ via $\phi$; just by inspection, we see that this action preserves a splitting of $\C^5$ into $\C^2 \oplus \C^3$, with the $\C^2$ part transforming in the hypercharge representation $\C_1$, and the $\C^3$ piece transforming in $\C_{-2/3}$. From table \ref{thesmfermionstable} then, we see that we must have
\begin{align*}
	\Lambda^1 \C^5 &\cong \left( \C_1 \otimes \C^2 \otimes \C \right) \oplus \left( \C_{-2/3} \otimes \C \otimes \C^3 \right)\\
	&\cong \left\langle {\begin{matrix} e_R^+ \\ \cc{\nu}_R \end{matrix} }\right\rangle \oplus \langle d_R \rangle\ ,	
\end{align*}
where we once again used the self-duality of $\C^2$ under $\SU(2)$.

The remainder of the irrep matching is similarly straightforward. The final result is as follows:
\begin{equation}
\label{su5particleassignment}
\begin{aligned}
	\Lambda^0 \C^5 &\cong \langle \cc{\nu}_L \rangle\ ,\qquad &&\Lambda^1 \C^5 \cong \left\langle {\begin{matrix} e_R^+ \\ \cc{\nu}_R \end{matrix} }\right\rangle \oplus \langle d_R \rangle\ ,\\
	\Lambda^2 \C^5 &\cong \langle e_L^+ \rangle \oplus \left\langle {\begin{matrix} u_L \\ d_L \end{matrix} }\right\rangle \oplus \langle \cc{u}_L \rangle\ ,\qquad &&\Lambda^3 \C^5 \cong \langle e_R^- \rangle \oplus \left\langle {\begin{matrix} \cc{d}_R \\ \cc{u}_R \end{matrix} }\right\rangle \oplus \langle u_R \rangle\ ,\\
	\Lambda^4 \C^5 &\cong \left\langle {\begin{matrix} \nu_L \\ e_L^- \end{matrix} }\right\rangle \oplus \langle \cc{d}_L \rangle \ ,\qquad &&\Lambda^5 \C^5 \cong \langle \nu_R \rangle\ .
\end{aligned}
\end{equation}
Hence, $\Lambda^* \C^5 \cong F \oplus \cc{F}$, as desired. Notice that our choice $\Lambda^0 \C^5 \cong \langle \cc{\nu}_L \rangle$ has led to a rather pleasing pattern: the left-handed particles transform in the even grades, while the right handed particles transform in the odd ones. At the level of the $\SU(5)$ theory, this is nice but not essential; for the $\Spin (10)$ theory, this is the only possibility; we will return to this point in section \ref{thespin10gutsubsection}.

We have now shown everything we needed to show: the mapping above defines a linear isomorphism $F \oplus \cc{F} \to \Lambda^* \C^5$ between representations of $G_\mrm{SM}$, i.e.~these representations are the same when we identify $\mrm{S} (\U(2) \times \U(3))$ with $G_\mrm{SM} / \Z_6$ using the isomorphism induced by $\phi$. This can be neatly summarised in a commuting diagram, the main result of this section. 

\begin{thm}
\label{su(5)isagut}
	$\SU(5)$ is a grand unified theory, i.e.~the following square commutes:
	\begin{equation*}
	\begin{tikzcd}[column sep = large, row sep = large]
		G_\mrm{SM} / \Z_6 \ar[r, hook] \ar[d] & \SU(5) \ar[d] \\
		F \oplus \cc{F} \ar[r, "\cong"] & \Lambda^* \C^5
	\end{tikzcd}
	\end{equation*}
\end{thm}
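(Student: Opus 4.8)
The plan is to read the square as the single assertion that the linear isomorphism $F \oplus \cc{F} \to \Lambda^* \C^5$ displayed in \eqref{su5particleassignment} is equivariant for $G_\mrm{SM}/\Z_6$, where this group acts on $\Lambda^* \C^5$ by restricting the $\SU(5)$-action along the embedding induced by $\phi$, and on $F \oplus \cc{F}$ through the action of $G_\mrm{SM}$ (which descends to $G_\mrm{SM}/\Z_6$ precisely because $\Z_6$ acts trivially on every fermion, as already checked above). Both sides are completely reducible: the irreducible summands of $F \oplus \cc{F}$ are the ones tabulated in table \ref{thesmfermionstable}, and each exterior power $\Lambda^k \C^5$, $0 \le k \le 5$, is a $G_\mrm{SM}$-subrepresentation of $\Lambda^* \C^5$ which is moreover irreducible over $\SU(5)$ by \cite[Ch.~15.2]{fulton99}. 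So the task reduces to decomposing each $\Lambda^k \C^5$ as a $G_\mrm{SM}$-representation and matching the resulting list grade by grade with the tabulated one; building the isomorphism in this way makes it equivariant by construction, hence makes the square commute.

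For the decomposition I would use that $\SU(5)$ acts on $\Lambda^* \C^5$ by algebra automorphisms, so the action is determined by its restriction to the generators $\Lambda^1 \C^5 = \C^5$. Since $\phi(\alpha, A, B) = \diag(\alpha^3 A, \alpha^{-2} B)$ is block-diagonal, $\C^5$ splits $G_\mrm{SM}$-equivariantly as $\C^2 \oplus \C^3$, with $\alpha \in \U(1)$ acting by $\alpha^3$ on the first block and $\alpha^{-2}$ on the second; under the convention $\alpha \cdot z = \alpha^{3Y} z$ on $\C_Y$ this identifies the blocks with $\C_{1} \otimes \C^2 \otimes \C$ and $\C_{-2/3} \otimes \C \otimes \C^3$. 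Then $\Lambda^k(\C^2 \oplus \C^3) \cong \bigoplus_{i+j = k} \Lambda^i \C^2 \otimes \Lambda^j \C^3$, and using $\Lambda^0 \C^n = \C$, $\Lambda^1 \C^n = \C^n$, $\Lambda^2 \C^2 = \C$, $\Lambda^2 \C^3 \cong (\C^3)^*$, $\Lambda^3 \C^3 = \C$, together with the additivity of hypercharge under tensor products, one writes out the $G_\mrm{SM}$-irreducible content of each $\Lambda^k \C^5$ explicitly for $k = 0, \dots, 5$. Comparing with table \ref{thesmfermionstable} then forces the assignment \eqref{su5particleassignment}.

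The one point that is a choice rather than a computation is that each side contains exactly two copies of the trivial $G_\mrm{SM}$-representation — $\langle \nu_R \rangle$ and $\langle \cc{\nu}_L \rangle$ on the left, $\Lambda^5 \C^5$ and $\Lambda^0 \C^5$ on the right — so one must decide which pairs with which; I would fix this as in \eqref{su5particleassignment} ($\Lambda^0 \C^5 \cong \langle \cc{\nu}_L \rangle$, $\Lambda^5 \C^5 \cong \langle \nu_R \rangle$) and check it is consistent with the duality $\Lambda^k \C^5 \cong (\Lambda^{5-k} \C^5)^* \otimes \Lambda^5 \C^5$, under which the even grades correspond to $F$ and the odd grades to $\cc{F}$, so that passing to antiparticles is dualization and the remaining assignments are mutually compatible. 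Beyond this there is no conceptual obstacle: the work is the bookkeeping of carrying the factor of $3$ in the hypercharge convention and the exponents $\alpha^3$, $\alpha^{-2}$ of $\phi$ correctly through all six graded pieces, while bearing in mind that irreducibility of the $\Lambda^k \C^5$ over $\SU(5)$ is imported from \cite{fulton99} rather than proved here. The content of the theorem is exactly that these numerical coincidences between hypercharges, isospins and colours all line up.
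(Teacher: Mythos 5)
Your approach is essentially the paper's: verify the $\Z_6$ acts trivially (which you cite as already done), decompose $\Lambda^*\C^5$ grade by grade under $G_\mrm{SM}$ using the splitting $\C^5 \cong \C^2 \oplus \C^3$ induced by $\phi$, and match the resulting irreps against table~\ref{thesmfermionstable}; your systematic use of $\Lambda^k(\C^2 \oplus \C^3) \cong \bigoplus_{i+j=k} \Lambda^i\C^2 \otimes \Lambda^j\C^3$ simply makes explicit what the paper dispatches as ``by inspection'' and ``similarly straightforward.''

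One factual slip in your side remark, though it does not affect the validity of the argument: under the assignment \eqref{su5particleassignment}, the even grades carry the \emph{left-handed} particles and the odd grades the \emph{right-handed} ones, not $F$ and $\cc{F}$ respectively. Indeed each $\Lambda^k$ mixes fermions and antifermions --- $\Lambda^2 \C^5$ contains both $u_L, d_L \in F$ and $e_L^+, \cc{u}_L \in \cc{F}$ --- and this mixing is precisely the mechanism by which $\SU(5)$ ties quark and lepton charges together, so a clean $F$-versus-$\cc{F}$ grading is impossible. Since the theorem only asserts the existence of an equivariant isomorphism, and your grade-by-grade matching supplies one regardless of how the two trivial summands are paired, the proof stands; only the motivational gloss needs the correction from particle/antiparticle to chirality.
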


\section{Clifford Algebras}
\label{cliffordalgebrassubsection}

To approach the $\Spin (10)$ grand unified theory, we need to understand Clifford algebras, which are the most natural environment in which to study the Spin groups. Moreover, many of the results that we will obtain will be required to construct $\E_6$ in due course.

Clifford algebras are a generalisation of the complex numbers, quaternions and octonions: indeed, they are sometimes constructed in the literature by adding the required number of square-roots of $-1$ to the algebra of the real numbers. We will not take this route, pursuing the more formal (and standard) treatment found in \cite{adams96} and \cite{fulton99}, for example. In what follows, $V$ is a finite-dimensional vector space over $\K = \R$ or $\C$.

\begin{defn}[Tensor Algebra]
\label{tensoralgebradefn}
	For a non-negative integer $k$, we define the \emph{$k$th tensor power} of $V$ to be the tensor product of $V$ with itself $k$ times:
	\begin{equation*}
		T^k V = V^{\otimes k} = \underbrace{V \otimes V \otimes \cdots \otimes V}_{k \mrm{ times}}\ . 
	\end{equation*}
	By convention, $T^0 V = \K$. Then the tensor algebra is given by
	\begin{equation*}
		T(V) = \bigoplus_{k = 0}^{\infty} T^k V\ .
	\end{equation*}
\end{defn}

We define multiplication as follows: if $v_1 \otimes \cdots \otimes v_p \in V^{\otimes p}$ and $w_1 \otimes \cdots \otimes w_q \in V^{\otimes q}$, then their product is $v_1 \otimes \cdots \otimes v_p \otimes w_1 \otimes \cdots \otimes w_q \in V^{\otimes (p + q)}$. For example, if $V$ has a basis $\{x. y\}$, then $T (V)$ has a basis $\{ 1, x, y, xy, yx, x^2, y^2, \ldots \}$ (the tensor product symbol has been omitted for brevity). In general, $T (V)$ is a free associative algebra.

\begin{defn}[Clifford Algebra]
Let $V$ be endowed with a symmetric bilinear form $\langle \ ,\ \rangle$. Let $J$ denote the two-sided ideal in $T(V)$ generated by the set $\left\{ v \otimes v - \langle v , v \rangle \cdot 1 \mid v \in V \right\}$, and define
\begin{equation*}
	\Cl (V) := T(V) / J\ ;
\end{equation*}
this is the Clifford Algebra over $(V , \langle \cdot , \cdot \rangle)$.
\end{defn}

It is clear that this is equivalent to the characterisation one usually sees, namely, that the Clifford algebra is the associative algebra freely generated by $V$ with relations
\begin{equation}
	vw + wv = -2 \langle v , w \rangle\ .
\label{cliffordalgebrarelationeqn}
\end{equation} 
A word on notation: though we should define a map $V \to \Cl (V)$ denoting the composition $V \hookrightarrow T(V) \to T(V) / J = \Cl (V)$, this is usually omitted in practice, and we write $v$ for an element of $V$ or its image in $\Cl (V)$.

Now, $T(V)$ is a $\Z_+$-graded algebra; let us set
\begin{equation*}
	T_0(V) := \bigoplus_{n\ \mrm{even}} V^{\otimes n}\ ,\ \ \ T_1(V) := \bigoplus_{n\ \mrm{odd}} V^{\otimes n}\ .
\end{equation*}
Then $v \otimes v - \langle v , v \rangle \cdot 1 \in T_0(V)$ and $J = J_0 + J_1$, where $J_i = J \cap T_i(V)$, and
\begin{equation*}
	\Cl (V) = \Cl_0 (V) \oplus \Cl_1 (V)\ ,
\end{equation*}
where $\Cl_i (V) = T_i(V) / J_i$. Hence, $\Cl (V)$ is a $\Z_2$-graded algebra.

\begin{prop}
\label{cliffordalgebradirectsum}
	If $V = V' \oplus V''$ with $\langle v' , v'' \rangle = 0$ for all $v' \in V'$, $v'' \in V''$, then
	\begin{equation*}
		\Cl (V) \cong \Cl (V') \otimes \Cl (V'')
	\end{equation*}
	is its Clifford algebra.
\end{prop}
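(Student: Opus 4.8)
A preliminary remark on notation is in order: the symbol $\otimes$ here must be read as the \emph{$\Z_2$-graded} tensor product of graded algebras, so that $\Cl(V')\,\hat\otimes\,\Cl(V'')$ carries the multiplication $(a \otimes b)(c \otimes d) = (-1)^{|b|\,|c|}(ac)\otimes(bd)$ on homogeneous elements, $|\cdot|$ denoting the degree in $\{0,1\}$. The sign is not cosmetic — without it the statement already fails for $V' = V'' = \R$ with the standard form — and it is exactly what reconciles the Clifford relations of $V'$ with those of $V''$ inside the product: since $v'\otimes 1$ and $1 \otimes v''$ are both odd they \emph{anticommute}, matching $v'v'' + v''v' = -2\langle v',v''\rangle = 0$ in $\Cl(V)$.

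The plan is to write down an explicit algebra homomorphism in each direction and check that they are mutually inverse. For the forward map I would use the defining property of $\Cl(V) = T(V)/J$: any $\K$-linear map $f : V \to A$ into an associative algebra satisfying $f(v)^2 = \langle v,v\rangle\cdot 1$ for all $v \in V$ extends uniquely to an algebra homomorphism $\Cl(V) \to A$. Take $A = \Cl(V')\,\hat\otimes\,\Cl(V'')$ and $f(v' + v'') = v'\otimes 1 + 1 \otimes v''$. Then
\[
f(v' + v'')^2 = (v'\otimes 1)^2 + (1\otimes v'')^2 + (v'\otimes 1)(1\otimes v'') + (1\otimes v'')(v'\otimes 1),
\]
where the last two terms cancel by the Koszul sign, and $(v'\otimes 1)^2 = \langle v',v'\rangle\cdot 1$, $(1\otimes v'')^2 = \langle v'',v''\rangle\cdot 1$; since $\langle v,v\rangle = \langle v',v'\rangle + \langle v'',v''\rangle$ by the orthogonality hypothesis, $f(v)^2 = \langle v,v\rangle\cdot 1$, and $f$ induces an algebra homomorphism $\Phi : \Cl(V) \to \Cl(V')\,\hat\otimes\,\Cl(V'')$.

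For the inverse, note that the inclusions $V' \hookrightarrow V$ and $V'' \hookrightarrow V$ are isometric for the restricted forms, so by the same universal property they induce $\Z_2$-graded algebra homomorphisms $\iota' : \Cl(V') \to \Cl(V)$ and $\iota'' : \Cl(V'') \to \Cl(V)$. Define $\Psi(a\otimes b) = \iota'(a)\,\iota''(b)$. The one genuine check is that $\Psi$ is an algebra homomorphism with respect to the \emph{graded} product on the source; by multiplicativity this reduces to the identity $\iota'(v')\,\iota''(v'') = -\,\iota''(v'')\,\iota'(v')$ for $v' \in V'$, $v'' \in V''$, which is precisely $v'v'' + v''v' = -2\langle v',v''\rangle = 0$ in $\Cl(V)$. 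Finally one verifies $\Psi\circ\Phi = \id$ and $\Phi\circ\Psi = \id$ by evaluating on generators ($v \in V$ on one side; $v'\otimes 1$ and $1\otimes v''$ on the other), which is immediate.

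I expect the only delicate point to be the sign bookkeeping: fixing the graded tensor product convention, seeing that the two cross terms in $f(v)^2$ really cancel, and — above all — checking that $\Psi$ respects multiplication rather than being merely linear. Everything else is a routine application of the universal property of $\Cl(V) = T(V)/J$. As an alternative to exhibiting $\Psi$ explicitly, one can finish by a dimension count: $\Phi$ is surjective because its image contains $V'\otimes 1$ and $1\otimes V''$, which generate the target, and then injectivity follows from $\dim_\K \Cl(V) = 2^{\dim V} = 2^{\dim V'}\,2^{\dim V''} = \dim_\K\bigl(\Cl(V')\,\hat\otimes\,\Cl(V'')\bigr)$.
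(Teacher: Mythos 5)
The paper does not actually give a proof of this proposition; it simply refers the reader to Adams \cite[pp.~14--15]{adams96}. Your argument is correct and is essentially the standard one found there, so there is nothing to reconcile with a paper-internal proof.

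Your opening remark is the most valuable part of the writeup: the isomorphism only holds for the $\Z_2$-graded tensor product, and the paper's statement silently elides this. Without the Koszul sign, $v'\otimes 1$ and $1\otimes v''$ would commute, whereas orthogonal vectors in $\Cl(V)$ anticommute, and the claim would be false (your $\Cl(\R\oplus\R)$ example, $\mbb{H}$ versus $\C\otimes_\R\C\cong\C\times\C$, is exactly the right counterexample). Given that the paper later uses this proposition to get $\Cl(\K^p)\otimes\Cl(\K^q)\cong\Cl(\K^{p+q})$ and hence a map $\Spin(p)\times\Spin(q)\to\Spin(p+q)$, flagging the graded convention is not pedantry. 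Both your construction of $\Phi$ via the universal property of $T(V)/J$ (checking $f(v)^2=\langle v,v\rangle\cdot 1$ with the cross terms cancelling by the Koszul sign) and your construction of $\Psi$ (with the one nontrivial verification that $\iota'(v')$ and $\iota''(v'')$ anticommute in $\Cl(V)$ by orthogonality) are correct, and checking mutual inverses on generators is enough since both algebras are generated in degree one. The dimension-count alternative you offer is also sound and is arguably the faster finish once surjectivity of $\Phi$ is in hand. No gaps.
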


The proof of this standard result can be found in, for example, \cite[pp.~14--15]{adams96}. Its corollaries bridge the gap between our abstract construction of Clifford algebras, and the motivation of generalising the complex numbers.

\begin{cor}
	If $\dim_\K V = n$ and $\{ e_1 \ldots e_n\}$ is an orthogonal basis for $V$ with $\langle e_i , e_j \rangle = \lambda_j \delta_{ij}$, then $\dim_\K \Cl (V) = 2^n$, and $\left\{ \prod e_j^{i_j} \right\}$ is a basis, where $i_j$ is 0 or 1.
\end{cor}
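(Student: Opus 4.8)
The plan is to argue by induction on $n = \dim_\K V$, using Proposition~\ref{cliffordalgebradirectsum} as the engine. The base case $n = 0$ is immediate: $V = 0$ forces $\Cl(V) = \K$, of dimension $2^0 = 1$, with the empty product $1$ as basis. For the inductive step, assume the claim for dimension $n-1$, and given $V$ with orthogonal basis $\{e_1, \dots, e_n\}$ set $V' = \spann\{e_1, \dots, e_{n-1}\}$ and $V'' = \spann\{e_n\}$. Orthogonality of the basis gives $\langle v', v'' \rangle = 0$ for all $v' \in V'$, $v'' \in V''$, so Proposition~\ref{cliffordalgebradirectsum} yields
\begin{equation*}
	\Cl(V) \cong \Cl(V') \otimes \Cl(V'')\ .
\end{equation*}

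The one-dimensional factor is computed directly from the defining relation~(\ref{cliffordalgebrarelationeqn}): $\Cl(\K e_n) \cong \K[t]/(t^2 - \lambda_n)$, in which $1$ and $t$ are manifestly linearly independent, so $\dim_\K \Cl(V'') = 2$ with basis $\{1, e_n\}$. By the inductive hypothesis, $\dim_\K \Cl(V') = 2^{n-1}$ with basis $\{\prod_{j=1}^{n-1} e_j^{i_j} : i_j \in \{0,1\}\}$. Hence $\dim_\K \Cl(V) = 2^{n-1}\cdot 2 = 2^n$. For the basis statement, recall that the tensor product of a basis of $\Cl(V')$ with a basis of $\Cl(V'')$ is a basis of $\Cl(V') \otimes \Cl(V'')$, and that the isomorphism of Proposition~\ref{cliffordalgebradirectsum} sends the elementary tensor $a \otimes b$ to the Clifford product $ab$ in $\Cl(V)$; tracking the bases through, the image is exactly $\{(\prod_{j=1}^{n-1} e_j^{i_j})\, e_n^{i_n} : i_j \in \{0,1\}\} = \{\prod_{j=1}^{n} e_j^{i_j}\}$, which is therefore a basis of $\Cl(V)$.

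The substantive point — and the only real obstacle — is the lower bound $\dim_\K \Cl(V) \geq 2^n$, i.e.\ the linear independence of the $2^n$ monomials. The relation~(\ref{cliffordalgebrarelationeqn}) lets one reorder factors and collapse squares, which immediately shows the monomials span, giving $\dim_\K \Cl(V) \leq 2^n$ for free; but independence is not formal, and is precisely what the induction delivers, since it is already baked into Proposition~\ref{cliffordalgebradirectsum} once the one-variable fact $\dim_\K \K[t]/(t^2 - \lambda_n) = 2$ is in hand. One small bookkeeping remark: the monomial $\prod e_j^{i_j}$ is read in increasing order of index, and the induction peels $e_n$ off the right-hand end, so no reordering inside the product is ever required; this is why the argument goes through cleanly as stated.
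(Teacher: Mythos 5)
Your proof is correct and follows exactly the route the paper implies: the corollary is stated immediately after Proposition~\ref{cliffordalgebradirectsum} with no separate argument, the expectation being that one peels off coordinate lines one at a time and applies the tensor decomposition inductively, which is precisely what you do. Two small remarks, neither a gap: the isomorphism in Proposition~\ref{cliffordalgebradirectsum} is an isomorphism of $\Z_2$-graded algebras (the graded tensor product), but since you only use the underlying vector-space structure this does not affect your dimension and basis count; and you quote relation~(\ref{cliffordalgebrarelationeqn}) but actually use the convention $v^2 = \langle v,v\rangle$ from the ideal $J$ in the definition — the paper itself is inconsistent about this sign, and it is immaterial here since all that matters is that $e_n^2$ is a scalar, so $\Cl(\K e_n)$ is $2$-dimensional either way.
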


\begin{example}
	Suppose $V$ is 1-dimensional with basis $\{ e \}$. Then $\Cl (V)$ has basis $\{1 , e\}$, because $T(V)$ has basis $\left\{ 1, e^1, e^2, \ldots \right\}$ and $J$ has a basis $\left\{ e^2 - \langle e, e \rangle \cdot 1 , e^3 - \langle e , e \rangle \cdot e , \ldots \right\}$. Hence, $\Cl (\R) = \C$, as desired. The next result shows that $\Cl(\C) = \mbb{H}$.
\end{example}

\begin{cor}
\label{cliffordbasiscor}
	Again, assume that $\{ e_i \}$ is a basis for $V$ and that $\langle e_i , e_j \rangle = - \delta_{ij}$. Then the products in $\Cl (V)$ are determined by the following relations:
	\begin{itemize}
		\item $e_i^2 = -1$,
		\item $e_i e_j = - e_j e_i$, for $i \neq j$.
	\end{itemize}
\end{cor}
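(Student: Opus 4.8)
The plan is to read both relations straight off the defining ideal and then invoke the preceding corollary to explain in what sense they ``determine the products.'' Recall that $\Cl(V) = T(V)/J$ with $J$ the two-sided ideal generated by the elements $v \otimes v - \langle v, v \rangle \cdot 1$; passing to the quotient, this says precisely that $v^2 = \langle v, v \rangle \cdot 1$ in $\Cl(V)$ for every $v \in V$ (equivalently, one may extract the same fact from equation (\ref{cliffordalgebrarelationeqn}) by setting $v = w$, up to the sign convention fixed there). This single quadratic identity is the only input; everything else is polarisation and bookkeeping.

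First I would specialise the quadratic relation to $v = e_i$. Since $\langle e_i, e_i \rangle = -\delta_{ii} = -1$, this immediately gives $e_i^2 = -1$. Next I would polarise to get the anticommutation relation: for $i \neq j$, applying the quadratic relation to $v = e_i + e_j$ and using bilinearity, symmetry, and $\langle e_i, e_j \rangle = 0$,
\[
(e_i + e_j)^2 = \big( \langle e_i, e_i \rangle + 2\langle e_i, e_j \rangle + \langle e_j, e_j \rangle \big)\cdot 1 = -2 \cdot 1 ,
\]
while expanding the left-hand side inside $\Cl(V)$ and using $e_i^2 = e_j^2 = -1$ gives $(e_i + e_j)^2 = -2\cdot 1 + e_i e_j + e_j e_i$. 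Comparing the two expressions yields $e_i e_j + e_j e_i = 0$, i.e.\ $e_i e_j = -e_j e_i$.

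Finally, to justify that these two relations ``determine the products,'' I would appeal to the corollary just above this one: $\Cl(V)$ has as a $\K$-basis the $2^n$ monomials $\prod_j e_j^{i_j}$ with $i_j \in \{0,1\}$ and indices written in increasing order. Given any two such basis monomials, their product is a priori some word in the $e_j$'s; one sorts its letters into increasing index order by repeatedly applying $e_i e_j = -e_j e_i$ (each transposition contributing a sign), and then deletes each repeated pair using $e_i^2 = -1$, arriving at $\pm$ one of the basis monomials. Thus the full multiplication table on the chosen basis, and hence the entire algebra structure of $\Cl(V)$, is pinned down by the two displayed relations. There is no real obstacle here, only bookkeeping: the one thing worth noting explicitly is that this reduction process is well-defined and terminating, which is guaranteed precisely because the monomials above are \emph{already known} to form a basis, so no hidden relations among them can arise.
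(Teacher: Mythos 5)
The paper states this corollary without proof, so your task was to supply the argument from scratch, and what you wrote is correct and is the standard proof: read $v^2 = \langle v,v\rangle$ off the defining ideal, specialise to $v = e_i$ to get $e_i^2 = -1$, polarise with $v = e_i + e_j$ to get the anticommutation relation, and then appeal to the preceding corollary (the $2^n$ monomials form a basis) to argue that these two relations pin down the full multiplication table.

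Two small remarks. First, you were right to be cautious about equation (\ref{cliffordalgebrarelationeqn}): the paper's definition via the ideal $v\otimes v - \langle v,v\rangle\cdot 1$ yields $v^2 = \langle v,v\rangle$, and polarising that gives $vw + wv = 2\langle v,w\rangle$, \emph{not} $-2\langle v,w\rangle$ as (\ref{cliffordalgebrarelationeqn}) claims. The corollary's conclusion $e_i^2 = -1$ under $\langle e_i,e_i\rangle = -1$ is consistent only with the definition's sign, so basing the computation on the defining ideal (as you did) is the right call; taking (\ref{cliffordalgebrarelationeqn}) at face value would give $e_i^2 = +1$ and contradict the statement being proved. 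Second, your closing sentence slightly conflates two things: that the sorting-and-cancelling procedure terminates is a purely combinatorial fact about words (each step reduces length or the number of inversions), independent of whether the monomials are a basis; what the earlier corollary about the basis actually buys you is that no \emph{further} linear relations collapse distinct reduced monomials, so the multiplication table you compute is faithful. Neither point affects the validity of the proof.
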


\subsubsection{Structure Maps on Clifford Algebras}

We define structure maps on Clifford algebras, analogously to remark \ref{structuremapsremark}. Consider $\alpha : \Cl(V) \to \Cl (V)$, induced by $-1 : V \to V$; we have that $\alpha |_{\Cl_0 (V)}^{} = + 1$, and $\alpha |_{\Cl_1 (V)}^{} = -1$. Further, define $\beta : T(V) \to T(V)$ by $\beta (v_1 \otimes \cdots \otimes v_n) = v_n \otimes \cdots \otimes v_1$. This is an anti-automorphism, $\beta (xy) = \beta (y) \beta (x)$, and induces $\beta : \Cl (V) \to \Cl(V)$ with $\beta |_{V}^{} = 1$. Finally, $\gamma := \alpha \beta = \beta \alpha : \Cl (V) \to \Cl (V)$ is an anti-automorphism such that $\gamma |_V^{} = -1$.

\begin{example}\leavevmode
	\begin{itemize}
		\item $\Cl(\R)$ has generators $\{ 1, i \}$ with $\alpha (1) = 1$, $\alpha(i) = -i$, $\beta (1) = 1$, $\beta (i) = i$, $\gamma (1) = 1$ and $\gamma (i) = -i$.
		\item	The algebra $\mbb{H} = \Cl (\C)$ has a basis $\{ 1, i, j, ij = k \}$, with the action of $\alpha, \beta, \gamma$ given by
		\begin{center}
		\begingroup
			\renewcommand{\arraystretch}{1.3}
			\begin{tabular}{c | c c c c}
			& 1 & $i$ & $j$ & $k$\\
			\hline
			$\alpha$ & 1 & $-i$ & $-j$ & $k$\\
			$\beta$ & 1 & $i$ & $j$ & $-k$\\
			$\gamma$ & 1 & $-i$ & $-j$ & $-k$
			\end{tabular}
			\endgroup
		\end{center}
	\end{itemize}
\end{example}

Note that for $\Cl (\R) = \C$ or $\Cl (\C) = \mbb{H}$, $\gamma$ is the usual conjugation map.

\subsection{The Spin Groups}

We now posses the machinery to introduce the Spin groups. Let $V = \K^n$ with $\langle\ ,\ \rangle$ the standard inner product, and construct its Clifford algebra $\Cl (V)$ with respect to $- \langle\ ,\ \rangle$. With $\{ e_i \}$ the standard basis for $V$, we have the products $e_r e_s$ for $r < s$ in $\Cl (V)$---there are $n (n-1) /2$ such products---and they span a Lie algebra with Lie bracket $[ e_r e_s , e_t e_u] = e_r e_s e_t e_u - e_t e_u e_r e_s$. By corollary \ref{cliffordbasiscor}, we have that 
\begin{itemize}
	\item if $r, t, s, u$ are different, the bracket vanishes, and also for $r = t, s = u$;
	\item if $r, t, u$ are different, $[ e_r e_u , e_t e_u ] = 2 e_r e_t$.
\end{itemize}
We wish to see this Lie algebra as the Lie algebra of a Lie group.

\begin{defn}[The $\Pin$ Groups]
	$\Pin (V) \subset \Cl (V)$ is the subset of elements $x$ such that
	\begin{itemize}
		\item $x (\gamma x) = (\gamma x) x = 1$, and
		\item the map $\pi x : V \to \Cl (V)$ defined by $(\pi x) v = xv (\beta x) $ maps $V \subset \Cl(V)$ into $V$.
	\end{itemize}
\end{defn}

\begin{example}
	$\Pin (\R) = \{ z \in \Cl(\R) \mid z \cc{z} = 1, z^2 \in \R \} = \{ \pm 1, \pm i \}$.
\end{example}

\begin{prop} The $\Pin$ groups satisfy the following properties.
\label{pingrouppprop}
\begin{itemize}
	\item $\Pin (V)$ is a subgroup of the invertible elements of $\Cl (V)$ and its Lie algebra is the one specified above.
	\item The map $\pi : \Pin (V) \to \OR (V)$, where $\OR (V)$ is the group of $\K$-linear maps $V \to V$ preserving $- \langle \ ,\ \rangle$, is a surjection with $\ker \pi = \{ \pm 1\}$.
	\item The closed subsets $\pi\inv (\det\inv 1 )$ and $\pi\inv (\det\inv -1 )$ of $\Pin (V)$ are in $\Cl_0 (V)$ and $\Cl_1(V)$ respectively, and are connected for $n > 2$.
\end{itemize}
\end{prop}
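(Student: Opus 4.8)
The plan is to take the three claims in order: first the group axioms (which are essentially formal), then the image and kernel of $\pi$, and finally the Lie algebra and the decomposition into two components. I expect the only step carrying real content to be the determination of $\ker\pi$; the connectedness for $n>2$ will then come from a single explicit path.

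\emph{Group structure.} Recall that $\gamma=\alpha\beta=\beta\alpha$ is an anti-automorphism with $\gamma^2=\id$, so the first defining condition says precisely that $x\in\Pin(V)$ is a unit of $\Cl(V)$ with $x\inv=\gamma x$. Closure under products and inverses, and the fact that $\pi$ is a homomorphism, then follow from $\gamma$ and $\beta$ being anti-automorphisms: $(xy)\gamma(xy)=x\bigl(y\gamma(y)\bigr)\gamma(x)=1$ and symmetrically, $(xy)v\beta(xy)=x\bigl(yv\beta(y)\bigr)\beta(x)$ so that $\pi(xy)=\pi(x)\circ\pi(y)$ and the second condition is preserved, and a direct check gives $\gamma x\in\Pin(V)$. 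Since both defining conditions are closed, $\Pin(V)$ is a closed subgroup of the Lie group of units of $\Cl(V)$, hence itself a Lie group, and by Theorem~\ref{bigthmaboutliealgebras}(ii) its bracket is the commutator in $\Cl(V)$.

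\emph{Image and kernel of $\pi$.} The crux is the identity $\pi(v)w=vwv=w-2\langle v,w\rangle v$ for a unit vector $v\in V$ (using $v^2=-1$ from Corollary~\ref{cliffordbasiscor} and the Clifford relation): it shows $v\in\Pin(V)$, with $v\gamma v=1$, and that $\pi(v)=R_v$, the reflection of $V$ in $v^\perp$, so $\det R_v=-1$. Surjectivity of $\pi$ onto $\OR(V)$ then drops out of the Cartan–Dieudonné theorem, every orthogonal map being a product of at most $n$ reflections. For the kernel I would take $\pi(x)=\id$; since $x\inv=\gamma x$ and $\beta=\alpha\gamma$ we have $\beta(x)=\alpha(x\inv)$, so the relation reads $xv=v\,\alpha(x)$ for all $v\in V$, and comparing $\Z_2$-homogeneous parts of $x=x_0+x_1$ forces $x_0$ to commute, and $x_1$ to anticommute, with every vector. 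Then $x_0$ lies in the centre of $\Cl(V)$, which meets $\Cl_0(V)$ only in the scalars, while a short computation with the standard monomial basis of $\Cl(V)$ (Corollary~\ref{cliffordbasiscor}) shows that no nonzero odd element anticommutes with all of $V$; hence $x_1=0$, and then $x\gamma x=1$ leaves $x=\pm1$. So $\ker\pi=\{\pm1\}$ and $\pi$ is a genuine double cover of $\OR(V)$.

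\emph{Lie algebra and components.} For each $r<s$ I would exhibit the one-parameter subgroup $t\mapsto\cos t+(\sin t)e_re_s$; it lies in $\Pin(V)$ because $(e_re_s)^2=-1$ and $\gamma(e_re_s)=\beta(e_re_s)=-e_re_s$ (Corollary~\ref{cliffordbasiscor}), which gives $x\gamma x=1$, and because conjugation by $\cos t+(\sin t)e_re_s$ rotates the $e_r$–$e_s$ plane and fixes its orthogonal complement, so the second condition holds. Differentiating at $t=0$ places each $e_re_s$ in the Lie algebra $\mf g$ of $\Pin(V)$; since $\ker\pi$ is discrete, $\diff\pi$ is injective, so $\dim\mf g\le\dim\OR(V)=\binom n2$, exactly the number of pairs $r<s$, and the (linearly independent) $e_re_s$ therefore form a basis of $\mf g$. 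Finally, since $\pi$ is onto with kernel $\{\pm1\}$, every $x\in\Pin(V)$ equals $\pm v_1\cdots v_k$ for unit vectors $v_i$ (take a reflection decomposition of $\pi(x)$), hence is $\Z_2$-homogeneous of parity $k\bmod 2$ with $\det\pi(x)=(-1)^k$; so $\pi\inv(\det\inv 1)\subseteq\Cl_0(V)$ and $\pi\inv(\det\inv(-1))\subseteq\Cl_1(V)$. For $n>2$ the group $\SO(V)$ is connected, so $\pi\inv(\det\inv 1)=\pi\inv(\SO(V))$ is a double cover of a connected space, hence has at most two components and is connected precisely when $1$ and $-1$ are joined by a path; the path $t\mapsto\cos(t/2)+\sin(t/2)\,e_1e_2$, $t\in[0,2\pi]$, runs inside it from $1$ to $-1$, so it is connected, and $\pi\inv(\det\inv(-1))=v\cdot\pi\inv(\det\inv 1)$ for a unit vector $v$ is homeomorphic to it, hence connected too. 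The one genuine obstacle is the kernel step — ruling out a nonzero odd element that anticommutes with every vector; once that and Cartan–Dieudonné are secured, the Lie algebra is a dimension count and the connectedness is the single explicit path above.
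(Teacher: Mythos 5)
Your proof is correct, and it takes a genuinely different route to the middle step. For surjectivity of $\pi$ you use Cartan--Dieudonn\'e: showing directly that $\pi(v)$ is the reflection in $v^{\perp}$ for a unit vector $v$, and then factoring an arbitrary orthogonal map into reflections. This buys you two things at once. First, surjectivity is immediate, with no appeal to $\exp$/$\log$ on the Lie algebra. Second, and more valuably, it shows that every $x\in\Pin(V)$ is of the form $\pm v_1\cdots v_k$, hence is $\Z_2$-homogeneous with parity matching $\det\pi(x)$; this \emph{proves} the containments $\pi\inv(\det\inv 1)\subset\Cl_0(V)$ and $\pi\inv(\det\inv(-1))\subset\Cl_1(V)$, which in the paper's proof are established rather more indirectly via the identity component argument. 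The paper instead goes the other way: it exhibits the one-parameter subgroups $t\mapsto\cos t+(\sin t)e_re_s$, checks that $\diff\pi$ surjects onto $\mathfrak{o}(V)$, and then uses $\exp$/$\log$ together with connectedness of $\SO(V)$ to deduce surjectivity. Both routes end up exhibiting the same path $t\mapsto\cos t+(\sin t)e_1e_2$ from $1$ to $-1$, so the connectedness argument is shared. For the kernel your parity decomposition $x=x_0+x_1$ — with $x_0$ landing in the even part of the centre (hence scalar) and $x_1$ anticommuting with $V$ (hence zero by a monomial-basis check) — is a clean reorganization of what the paper's lemma does by conjugating $x$ with each invertible $e_s$ and reading off coefficients; the content is the same. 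Overall your version is slightly more self-contained and algebraic, while the paper's leans more heavily on the Lie-theoretic machinery already set up around it.
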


\begin{proof}
	It is straightforward to check that $\Pin (V)$ is a closed subgroup of invertible elements of $\Cl (V)$. We move on to showing that $\pi (x) \in \OR(V)$:
	\begin{align*}
		\langle (\pi x )v , (\pi x) v \rangle &= ((\pi x) v)^2\\
		&= -xv (\beta x) (\alpha x) v (\gamma x)\\
		&= -x v v \gamma (x)\\
		&= \langle v , v \rangle x \gamma x\\
		&= \langle v, v \rangle\ ,
	\end{align*}
where in the second line we used the fact that $- (\pi x) v = - (\alpha x ) v (\gamma x)$.

It is easy to see that $\pi$ is a homomorphism; let us try to identify its kernel. Suppose that $x \in \ker \pi$. Then $v = x v \beta x$ for all $x \in V$, so $v \alpha x = xv (\beta x) (\alpha x) = xv$; by the lemma below, $x$ must be a scalar. But $x \gamma x = 1$, so $x^2 = 1 \implies x = \pm 1 \implies \ker \pi \subset \{ \pm 1\}$; since the inclusion certainly holds in the other direction, we conclude that $\ker \pi$ is identically $\{ \pm 1\}$.

\begin{lem}
\label{lemmaforpingroupsprop}
	If $\langle\ ,\ \rangle$ on $V$ is non-singular and $x \in \Cl (V)$ is such that $xv = v (\alpha x)$ for all $v \in V$, then $x$ is a scalar.
\end{lem}
\begin{proof}[Proof of Lemma]
	Over $\K = \R$ or $\C$, we can diagonalise $\langle\ ,\ \rangle$ and choose a basis $\{ e_i \}$ such that $\langle e_r , e_s \rangle = \delta_{rs} \lambda_r$, with $\lambda_r \neq 0$. Then any $x \in \Cl (V)$ can be written as $\sum_I \lambda_I \prod_j e_j^{i_j}$, where $\lambda_I \in \K$. Now, $e_s$ is invertible since $e_s e_s = \lambda_s \neq 0$, so that if $x e_s = e_s (\alpha x)$, we have $e_s\inv x e_s = \alpha x$. But
	\begin{equation*}
		e_s\inv \left( \prod_j e_j^{i_j} \right) e_s = \begin{cases}
		(-1)^{\sum i_j} \prod e_j^{i_j} & \T{if } i_s = 0\ ,\\
		(-1)^{-1 + \sum i_j} \prod e_j^{i_j} & \T{if } i_s = 1\ ,
		\end{cases}
	\end{equation*}
	while $\alpha (\prod_j e_j^{i_j}) = (-1)^{\sum i_j} \prod e_j^{i_j}$ in all cases; thus $x e_s = e_s \alpha x$ if and only if $\lambda_I = 0$ whenever $i_s = 1$, i.e.~$\lambda_I \neq 0$ only for $I = (0, 0, \ldots, 0)$, so $x$ is a scalar.
\end{proof}
	
	Next, we proceed to show that $\diff \pi$ from the Lie algebra of $\Pin (V) \cap \Cl_0 (V)$ to $\mf{o} (V)$ is a surjection. The first step is of course to check that $\Pin (V) \cap \Cl_0 (V)$ is a closed subgroup, but this is straightforward. Then, for $r < s$, $(e_r e_s) (e_r e_s) = -e_r e_r e_s e_s = - 1$, so $x = e^{(e_r e_s) t} = \cos t + (e_r e_s) \sin t$ is defined for $\R$ or $\C$. This means that
	\begingroup
	\renewcommand{\arraystretch}{0.8}
	\begin{equation*}
		\pi (x) = \begin{pmatrix}
			1 &&&&&&&&&&\\
			& \ddots &&&&&&&&&\\
			&& 1 &&&&&&&&\\
			&&& \cos 2t &&&& - \sin 2t &&&\\
			&&&& 1 &&&&&&\\
			&&&&& \ddots &&&&&\\
			&&&&&& 1 &&&&\\
			&&& \sin 2t &&&& \cos 2t &&&\\
			&&&&&&&& 1 &&\\
			&&&&&&&&& \ddots &\\
			&&&&&&&&&& 1\\
		\end{pmatrix}
	\end{equation*}
	\endgroup
	since
	\begin{equation*}
		\left( \cos t + (e_r e_s) \sin t \right) e_u \left( \cos t - (e_r e_s) \sin t \right) = \begin{cases}
		e_u & \T{if } u \neq r, s\ ,\\
		\left(\cos 2t + (e_r e_s) \sin 2t \right) e_u & \T{if } u = r, s\ .\end{cases}
	\end{equation*}
	Thus $\pi (x)$ maps $V$ to $V$, so $x \in \Pin (V)$; in fact, $x \in \Pin (V) \cap \Cl_0 (V)$. We see that the $\K$-multiples of $e_r e_s$ lie in $\mf{pin} (V) \cap \mf{cl}_0 (V)$ and map under $\diff \pi$ to the $\K$-multiples of $\begin{psmallmatrix}
	0 &&&&\\
	& \ddots && -2 &\\
	&& 0 &&\\
	& 2 && \ddots &\\
	&&&& 0
	\end{psmallmatrix} = (\pi (x))'|_{t = 0}$, which form a $\K$-basis for $\mf{o} (V)$, the skew-symmetric matrices. Hence, $\diff \pi$ is surjective.
	
	It follows from this that $\diff \pi : \mf{pin} (V) \to \mf{o} (V)$ is also a surjection; it must further be an injection, since $\ker \pi$ is finite. Thus, the Lie algebra of $\Pin (V) \cap \Cl_0 (V)$ is really the Lie algebra $\mf{pin} (V)$, and $\diff \pi : \mf{pin} (V) \to \mf{o} (V)$ is an isomorphism. As a consequence of this, $\{ e_r e_s | r < s \}$ is a $\K$-basis for $\mf{pin} (V)$.
	
	Let us make the transition from Lie algebra to Lie group. Using $\exp$ and $\log$, we find that $\pi$ maps a small neighbourgood of $1 \in \Pin (V) \cap \Cl_0 (V)$ onto the identity component of $\OR (V)$, i.e.~onto $\SO (V)$. But $\{ \pm 1 \}$ is contained in the identity component of $\Pin (V) \cap \Cl_0 (V)$ if $n \geq 2$, since $\cos t + (e_1 e_2) \sin t$ for $t \in [0, \pi]$ is a path from $1$ to $-1$ in $\Pin (V) \cap \Cl_0(V)$. Thus, $\pi\inv (\SO (V)) = \pi\inv (\det\inv(1))$ is connected and contained in $\Cl_0 (V)$. To show that $\pi\inv(\det\inv(-1))$ is connected and complete the proof, it suffices to produce an element in $\pi\inv (\det\inv (-1))$ multiplication by which will send $\pi\inv (\det\inv (1))$ to $\pi\inv (\det \inv (-1))$. The element $e_1$ will do, for one checks that it lies in $\Pin (V)$ and covers the reflection $\mrm{diag} (-1, 1, \ldots, 1)$.
\end{proof}

\begin{defn}[The $\Spin$ Groups]
	We define $\Spin (n)$ as the subgroup $\pi\inv (\det\inv 1) = \Pin (V) \cap \Cl_0(V)$. It comes with a homomorphism $\pi : \Spin (V) \to \SO (V)$.
\end{defn}

\begin{rmk}
\label{maximaltorusofspinn}
	Over $\R$, the maximal torus in $\Spin (m)$, $m = 2n$ or $2n + 1$, is usually taken to consist of the elements $\prod_{r = 1}^n \left( \cos \frac{x_r}{2} + (e_{2r -1} e_{2r}) \sin \frac{x_r}{2} \right)$ in $\Cl(V)$, $x_r \in \R$, which corresponds to
\begin{equation*}
	\begin{pmatrix}
		\cos x_1 & - \sin x_1 &&&&&&\\
		\sin x_1 & \cos x_1 &&&&&&\\
		&& \cos x_2 & - \sin x_2 &&&&\\
		&& \sin x_2 & \cos x_2 &&&&\\
		&&&& \ddots &&&\\
		&&&&& \cos x_n & - \sin x_n &\\
		&&&&& \sin x_n & \cos x_n &\\
		&&&&&&& 1\\
	\end{pmatrix}\ .
\end{equation*}
Here, row $n + 1$ has 1 as the last entry if $m = 2n + 1$, and is empty for $m = 2n$.
\end{rmk}

\subsection{Clifford Modules and Representations}
\label{cliffmodulesandrepssection}

Since $\Spin (V) \subset \Cl_0 (V)$, any $\Cl_0(V)$-module is a representation of $\Spin (V)$, and some extremely important representations of $\Spin (n)$ arise in this way. We study these now.

\begin{prop}
	The algebras $\Cl (V)$ and $\Cl_0(V)$ are semi-simple, so all their representations are completely reducible.
\end{prop}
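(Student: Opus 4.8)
The plan is to exhibit both $\Cl(V)$ and $\Cl_0(V)$ as quotients of group algebras of finite groups over a field of characteristic zero, and then invoke Maschke's theorem together with the Wedderburn--Artin structure theorem. Fix the standard basis $\{e_1,\dots,e_n\}$ with $e_i^2=-1$ and $e_ie_j=-e_je_i$ ($i\neq j$) as in Corollary~\ref{cliffordbasiscor}, and consider the finite subset
\[ G=\{\pm\,e_{i_1}e_{i_2}\cdots e_{i_k}:\ 1\le i_1<\cdots<i_k\le n\}\subset\Cl(V)\ . \]
Using only the relations of Corollary~\ref{cliffordbasiscor} one checks that $G$ is closed under multiplication and inversion, hence is a finite group of order $2^{n+1}$ containing the central element $z=-1$ of order two. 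Since the $2^n$ monomials $\prod_j e_j^{i_j}$ form a $\K$-basis of $\Cl(V)$, the $\K$-linear extension of the inclusion $G\hookrightarrow\Cl(V)$ is a surjective algebra homomorphism $\K[G]\to\Cl(V)$; its kernel contains $1+z$, and since $\tfrac12(1-z)$ and $\tfrac12(1+z)$ are complementary central idempotents cutting $\K[G]$ into two ideals of dimension $2^n$, a dimension count identifies $\Cl(V)$ with the direct factor $\tfrac12(1-z)\K[G]\cong\K[G]/(1+z)\K[G]$.

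The even part is treated identically: the even elements $G_0=G\cap\Cl_0(V)$ form a subgroup of index two (order $2^n$), whose $\K$-span is all of $\Cl_0(V)$, so $\Cl_0(V)$ is likewise a direct factor of the group algebra $\K[G_0]$. Now, because $\K=\R$ or $\C$ has characteristic zero, Maschke's theorem says $\K[G]$ and $\K[G_0]$ are semisimple; a direct factor (indeed any quotient by a two-sided ideal) of a semisimple algebra is again semisimple, being isomorphic to the sum of some of its simple blocks. Hence $\Cl(V)$ and $\Cl_0(V)$ are semisimple, and the ``so'' of the statement is then exactly Wedderburn--Artin: a finite-dimensional semisimple algebra is a product of matrix algebras over division rings, over which every module is a direct sum of simple modules, i.e.\ every representation is completely reducible.

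The one genuinely delicate point is the treatment of $\Cl_0(V)$: it sits inside $\Cl(V)$ only as a \emph{subalgebra}, and subalgebras of semisimple algebras need not be semisimple, so one cannot simply inherit the conclusion --- one really does need the independent presentation of $\Cl_0(V)$ as a group-algebra quotient (or, equivalently, the classical isomorphism $\Cl_0(\K^n)\cong\Cl(\K^{n-1})$, which reduces it to the odd case). Everything else is routine: verifying closure of $G$ under the Clifford relations and carrying out the dimension count for the kernel. An alternative, more computational route would be to quote the full classification of real and complex Clifford algebras as products of matrix algebras over $\R$, $\C$ and $\mbb{H}$, from which semisimplicity is immediate; I prefer the group-algebra argument, since it is short, uniform in $n$, and needs nothing beyond Maschke's theorem and Corollary~\ref{cliffordbasiscor}.
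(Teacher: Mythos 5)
Your proposal is correct and takes essentially the same route as the paper: both exhibit $\Cl(V)$ (resp.~$\Cl_0(V)$) as the quotient $\K[E]/\langle 1+\nu\rangle$ of the group algebra of the finite group $E$ of order $2^{n+1}$ (resp.~its even subgroup $E_0$ of index two) and then invoke Maschke. You are merely more explicit than the paper in spelling out the central-idempotent decomposition $\tfrac12(1\pm z)$ and the dimension count, and you add a useful cautionary remark that $\Cl_0(V)$ must be handled on its own rather than by inheritance from $\Cl(V)$.
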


\begin{proof}
	Let $\{e_i\}$ be the standard basis in $V = \K^m$, and consider $E = \left\{ \pm \prod_{j = 1}^m e_j^{i _j} \mid i_j = 0 \T{ or } 1 \right\}$, a subgroup of order $2^{m + 1}$ of $\Cl (V)$, corresponding to the matrices $\mrm{diag} ( \pm 1, \ldots , \pm 1 )$ of $\OR (m)$. In $\Cl_0 (V)$ consider the subgroup $E_0$ of $2^m$ elements with $\sum_j i_j$ even; we have $E_0 \subset E \subset \Pin (V)$. Let $\nu$ denote $-1 \in \Cl_0 (V)$ when considered an element of $E_0, E$ or $\Pin (V)$; a module over $\Cl (V)$ gives a representation in which $\nu$ acts as $-1$. Conversely, a representation of $E$ in which $\nu$ acts as $-1$ gives a module over $\K (E) / \langle \nu + 1 \rangle = \Cl (V)$, where $\K (E)$ is the group ring over $\K$ of $E$. The representation theory of $\Cl (V)$ can thus be inferred from that of the finite group $E$, and in particular all representations are completely reducible. We argue similarly for $\Cl_0(V)$, replacing $E$ by $E_0$.
\end{proof}

\begin{prop}
\label{diracspinorprop}
	If $\dim V$ is odd, $m = 2n + 1$, say, then $\Cl_0(V)$ has one irrep $\Delta$ of degree $2^n$, affording a representation $\Delta$ of $\Spin (2n + 1)$ with weights $\frac{1}{2} \left( \pm x_1 \pm x_2 \cdots \pm x_n \right)$; there are $2^n$ of these weights. If $\dim V = m = 2n$, then $\Cl_0(V)$ has two irreps $\Delta^+, \Delta^-$ of degree $2^{n -1}$, affording representations $\Delta^+, \Delta^-$ of $\Spin (2n)$ having weights	$\frac{1}{2} \left( \pm x_1 \pm x_2 \cdots \pm x_n \right)$, with an even number of $-$ signs for $\Delta^+$ and an odd number of $-$ signs for $\Delta^-$; there are $2^{n - 1}$ such  weights. If $\K = \C$, these are complex-analytic representations of $\Spin_\C (m)$.
\end{prop}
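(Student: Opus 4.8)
The plan is to pin down the irreducible $\Cl_0(V)$-modules abstractly from the structure of $\Cl_0(V)$, and then to realise the spinor representation concretely — as the exterior algebra on a maximal isotropic subspace of $V_\C := V\otimes\C$ — in order to read off its weights from the action of the torus of Remark \ref{maximaltorusofspinn}. It suffices to treat $\K=\C$ (over $\R$ one complexifies, the only wrinkle being that the two half-spinor modules can fuse into a single module over $\R$); I write $m=2n$ or $m=2n+1$ throughout.

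For the abstract part, the classification of complex Clifford algebras — obtained by iterating Proposition \ref{cliffordalgebradirectsum} — reads $\Cl(\C^{2n})\cong M_{2^n}(\C)$, $\Cl_0(\C^{2n})\cong M_{2^{n-1}}(\C)\oplus M_{2^{n-1}}(\C)$, and $\Cl_0(\C^{2n+1})\cong\Cl(\C^{2n})\cong M_{2^n}(\C)$, the last isomorphism being $e_i\mapsto e_i e_{2n+1}$ for $i\le 2n$. Hence $\Cl_0(\C^{2n+1})$ has a single irrep of degree $2^n$ and $\Cl_0(\C^{2n})$ has two inequivalent irreps of degree $2^{n-1}$, which are exactly the counts and dimensions claimed. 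To get hold of these concretely, set $f_r=\tfrac12(e_{2r-1}-ie_{2r})$ and $g_r=\tfrac12(e_{2r-1}+ie_{2r})$ for $1\le r\le n$; together with $e_{2n+1}$ in the odd case these span $V_\C$, and Corollary \ref{cliffordbasiscor} gives $f_r^2=g_r^2=0$, $f_rg_s+g_sf_r=-\delta_{rs}$, $f_rf_s+f_sf_r=g_rg_s+g_sg_r=0$, so $W:=\spann_\C\{f_1,\dots,f_n\}$ is isotropic. On $S:=\Lambda^\bullet W$ I let $f_r$ act by $\omega\mapsto f_r\wedge\omega$ and $g_r$ by $-\iota(f_r^*)$, minus the contraction against the dual basis vector; the identity $\epsilon(f_r)\iota(f_r^*)+\iota(f_r^*)\epsilon(f_r)=\mathrm{Id}$ makes these operators obey the relations above, so by the universal property this extends to a nonzero algebra map $\Cl(V_\C)\to\End(S)$ in the even case, and $\Cl_0(V_\C)\to\End(S)$ in the odd case. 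Since the source is simple of the same dimension $4^n=(\dim_\C S)^2$, the map is an isomorphism; so $S$, of degree $2^n$, is the irrep $\Delta$ in the odd case, and in the even case $S$ is the unique irrep of $\Cl(\C^{2n})$.

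For the weights I restrict to the maximal torus of Remark \ref{maximaltorusofspinn}, whose elements are $t(x)=\prod_{r=1}^n\bigl(\cos\tfrac{x_r}{2}+e_{2r-1}e_{2r}\sin\tfrac{x_r}{2}\bigr)$, the $n$ factors commuting. As $(e_{2r-1}e_{2r})^2=-1$, the $r$-th factor is $\exp(\tfrac{x_r}{2}\,e_{2r-1}e_{2r})$, so it is enough to diagonalise the commuting operators $e_{2r-1}e_{2r}=i(g_rf_r-f_rg_r)$ on $S$. A direct computation on the monomial basis $\{f_I=f_{i_1}\wedge\cdots\wedge f_{i_k}:I\subseteq\{1,\dots,n\}\}$, using $f_rg_r+g_rf_r=-\mathrm{Id}$, yields $e_{2r-1}e_{2r}\,f_I=i\,f_I$ if $r\in I$ and $=-i\,f_I$ if $r\notin I$. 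Hence $t(x)$ scales $f_I$ by $\exp\bigl(\tfrac{i}{2}\sum_r\varepsilon_r x_r\bigr)$ with $\varepsilon_r=+1$ for $r\in I$ and $\varepsilon_r=-1$ for $r\notin I$: the weight of $f_I$ is $\tfrac12(\varepsilon_1x_1+\cdots+\varepsilon_nx_n)$, of multiplicity one, and as $I$ ranges over all subsets every one of the $2^n$ sign patterns occurs. This finishes the odd case.

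In the even case, a product of an even number of the $f_r,g_r$ preserves the parity of the grading on $\Lambda^\bullet W$, so $\Cl_0(\C^{2n})$ respects the splitting $S=\Lambda^{\mathrm{even}}W\oplus\Lambda^{\mathrm{odd}}W$; each summand has degree $2^{n-1}$, so these are the two irreps found above. The weight of $f_I$ carries $n-|I|$ minus signs, so one summand holds precisely the weights with an even number of minus signs and the other those with an odd number; naming them $\Delta^+$ and $\Delta^-$ accordingly gives the assertion. Finally, the whole construction being algebraic, over $\K=\C$ these representations of $\Spin(m)$ are restrictions of holomorphic representations of $\Spin_\C(m)$. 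The step needing real care is the Clifford-module structure and the sign bookkeeping it entails — verifying the precise normalisation in $f_rg_r+g_rf_r=-\delta_{rs}$, and tracking the Koszul signs in $e_{2r-1}e_{2r}\,f_I$ accurately enough to know which parity subspace deserves the label $\Delta^+$; everything else (the structure-theoretic count, the dimension comparison, the reduction from $\R$ to $\C$) is routine.
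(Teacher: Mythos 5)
Your argument is correct, but it takes a genuinely different route from the paper. The paper's proof is finite-group-theoretic: it works inside the subgroup $E_0\subset\Cl_0(V)$ of signed products of basis vectors, counts conjugacy classes of $E_0$ to determine how many irreducible modules there are, constructs the spinor modules by induction from the abelian subgroup $F=\langle e_1e_2,\dots,e_{2n-1}e_{2n}\rangle$, and verifies irreducibility by computing characters and invoking orthogonality relations. You instead appeal to the structure theorem for complex Clifford algebras ($\Cl_0(\C^{2n})\cong M_{2^{n-1}}(\C)\oplus M_{2^{n-1}}(\C)$, etc.) to read off the irreducible-module count and dimensions abstractly, then realise the spinor module concretely as $\Lambda^\bullet W$ for $W$ a maximal isotropic subspace, with $f_r$ acting by wedging and $g_r$ by contraction, and compute weights by diagonalising $e_{2r-1}e_{2r}$ on the monomial basis. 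Both approaches are standard; what is notable is that your construction is essentially the one the paper develops later, in Proposition \ref{diracspinorrepprop}, for a different purpose (identifying $\Delta^\pm$ with $\Lambda^{\mathrm{even/odd}}\C^5$). So your route unifies the existence proof and the explicit model, and makes the weight computation entirely mechanical; the paper's route is more self-contained (it needs no Wedderburn-type classification, only the semisimplicity already established) and feeds directly into the paper's subsequent proof of self-duality (Proposition \ref{selfdualspinrepprop}), which explicitly references the character bookkeeping in this proof. One small gloss in your write-up: the claim that the matrix-algebra classification follows ``by iterating Proposition \ref{cliffordalgebradirectsum}'' elides that that isomorphism is a \emph{graded} tensor product; extracting $\Cl(\C^{2n})\cong M_{2^n}(\C)$ requires a short extra step (e.g.\ the 2-periodicity $\Cl(\C^{m+2})\cong\Cl(\C^m)\otimes M_2(\C)$). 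This is easily repaired and does not affect the rest of the argument.
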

\begin{proof}
	By Schur's lemma, $\nu$ acts on any irrep as either 1 or -1. The ones in which it acts as 1 are representations of $\E_0 / \langle \nu \rangle$, which is an abelian group of order $2^{m-1}$, so there are exactly $2^{m-1}$ 1-dimensional representations of $E_0$ in which $\nu$ acts as 1. Since the kernel of $E_0 \to E_0 / \langle \nu \rangle$ has exactly two elements, the conjugacy classes in $E_0$ are either one element (if the element is central) or two elements $\pm g$. For $E_0$, the centre is $\{ \pm 1\}$ if $m = 2n + 1$ and $\left\{ \pm 1, \pm \prod_1^{2n} e_i \right\}$ if $m = 2n$; we can see this as follows. If we conjugate $g = \prod_1^m e_j^{i_j}$ with $e_r e_s$ where $i_r = 1$, $i_s = 0$, we change its sign. So if $g$ is in the centre, $g = \pm 1$ or $\pm e_1 e_2 \ldots e_m$. The latter is in the centre only for $m$ even.
	
	Recalling that the isomorphism classes of irreps (over $\C)$ of a finite group are in a $1 : 1$ correspondence with the conjugacy clasees, we see that $E_0$ has one (resp.~two) more irreducible class(es) of representation(s) than $E_0 / \langle \nu \rangle$ if $m = 2n + 1$ (resp.~m = 2n). Let $F \subset E_0$ be the subgroup generated by $e_1 e_2, \ldots , e_{2r-1} e_{2r} , \ldots e_{2n - 1}e_{2n}$. This is an abelian group of order $2^{n + 1}$, so the index of $F$ in $E_0$ is $2^n$ if $m = 2n + 1$, and $2^{n - 1}$ if $m = 2n$.
	
	Choose now a complex 1-dimensional representation $W$ of $F$ in which $\nu$ acts as $-1$ and $e_{2r - 1} e_{2r}$ acts as $i \epsilon_r$, $\epsilon_r = \pm 1$, $i = \sqrt{-1}$. Then the induced representation $\mrm{Ind}^{E_0}_F W$ is a representation of $E_0$ with degree $2^n$ for $m = 2n + 1$ and $2^{n - 1}$ for $m - 2n$, with $\nu$ acting as $-1$. It has a basis

\begingroup
\centering
\renewcommand*{\arraystretch}{2.5}
	\begin{tabular}{l l l l l}
		$\prod\limits_{\substack{i = 1\\ i\ \mrm{odd}}}^{2n + 1} e_i^{j_i}$ & with & $\sum\limits_{1}^{2n + 1} j_i$ even, & if & $m = 2n + 1$, \\
		$\prod\limits_{\substack{i = 1\\ i\ \mrm{odd}}}^{2n - 1} e_i^{j_i}$ & with & $\sum\limits_{1}^{2n - 1} j_i$ even, & if & $m = 2n$.
\end{tabular}

\endgroup
\medskip
When $m = 2n + 1$, there are $2^n$ choices for $W$ (because we have $n$ choices for $\epsilon_r$), $E_0 / F$ permutes them transitively, and by conjugating with $e_{2r} e_{2r + 1}$, we can change the sign of $\epsilon_r$ without changing anything else. Each choice appears once in $\mrm{Ind}^{E_0}_F W$. We thus get a representation $\Delta$ of $E_0$ with character $2^n$ at $-1$, $-2^n$ at $-1$ and $0$ elsewhere. By the orthogonality relations (corollary \ref{cliffordbasiscor}), $\Delta$ is an irrep of $E_0$.

When $m = 2n$, there are $2^n$ choices for $W$, and under $E_0 / F$ they fall into two orbits: those with $\prod \epsilon_r  = 1$, and those with $\prod \epsilon_r = -1$. We can only change the sign of an even number of $\epsilon_r$, since conjugating with $e_{2r} e_{2s}$, $r < s$, changes the sign of both $\epsilon_r$ and $\epsilon_s$. Hence we have one irrep of $E_0$ which as a representation of $F$ contains all the $W$ with $\epsilon := \prod \epsilon_r = 1$, and another containing all the $W$ with $\epsilon = \prod \epsilon_r = -1$.  The character of these representations is $2^{n - 1}$ at $1$, $- 2^{n - 1}$ at $-1$, $i^n \epsilon$ at $\prod_1^{2n} e_i$, $-i^n \epsilon$ at $-\prod_1^{2n} e_i$, and zero elsewhere. This is an irreducible character by the orthogonality relations, so these are inequivalent representations.
\end{proof}

\begin{rmk}
\label{weightsofspinrepsrmk}
	We calculate the weights of $\Delta$ for $\K = \R$ as follows: the element
		\begin{equation*}
			\prod_{r = 1}^n \left( \cos \frac{2 \pi x_r}{2} + (e_{2r - 1} e_{2r}) \sin \frac{2 \pi x_r}{2}  \right) 
		\end{equation*}
		of the maximal torus acts on $\Delta$ with eigenvalues
		\begin{equation*}
			\prod_{r = 1}^n \left( \cos \frac{2 \pi x_r}{2} + (i \epsilon_r) \sin \frac{2 \pi x_r}{2} \right) = \exp \left( 2 \pi i \sum_{ r = 1}^n \frac{1}{2} \epsilon_r x_r \right) 
		\end{equation*}
		and weights $\frac{1}{2}\sum_{r = 1}^n \epsilon_r x_r$, where $\epsilon_r = \pm 1$.
\end{rmk}

\begin{defn}[Spinors]
	For $\K = \C$, the representations $\Delta$, $\Delta^+$ and $\Delta^-$ are called spinor representations of the (complex) Clifford algebra.
\end{defn}

\begin{prop}
\label{selfdualspinrepprop}
	The representation $\Delta$ of $\Spin (2n + 1)$ is self-dual. The representations $\Delta^+, \Delta^-$ of $\Spin (2n)$ are self-dual if $n$ is even and dual to each other if $n$ is odd. 
\end{prop}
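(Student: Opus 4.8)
The plan is to argue entirely with weights. By the Corollary to the conjugacy theorem for maximal tori (two representations of a compact connected Lie group agreeing on a maximal torus are equivalent), a representation is pinned down up to equivalence by the multiset of its weights; and for a dual representation $V^{*}$ the weights are exactly the negatives of those of $V$, with unchanged multiplicities, since a maximal torus element acts on $V^{*}$ by the inverse transpose of its action on $V$, inverting every eigenvalue. So it suffices to check whether the weight multiset of $\Delta$ (resp.\ $\Delta^{\pm}$) is carried to itself, or onto that of its partner, by the sign change $\varphi \mapsto -\varphi$. Note first that in all three cases every weight has multiplicity $1$: Proposition~\ref{diracspinorprop} lists $2^{n}$ weights for the degree-$2^{n}$ representation $\Delta$ of $\Spin(2n+1)$ and $2^{n-1}$ weights for each degree-$2^{n-1}$ representation $\Delta^{\pm}$ of $\Spin(2n)$, so a bijection of the underlying weight \emph{sets} is the same as an equivalence of weight multisets.

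For $\Spin(2n+1)$ the weights of $\Delta$ are $\tfrac12\sum_{r=1}^{n}\epsilon_{r}x_{r}$ over all sign vectors $(\epsilon_{1},\dots,\epsilon_{n})\in\{\pm1\}^{n}$. Negation replaces $(\epsilon_{1},\dots,\epsilon_{n})$ by $(-\epsilon_{1},\dots,-\epsilon_{n})$, which is again a sign vector, so $\varphi\mapsto-\varphi$ permutes the weight set of $\Delta$; hence $\Delta^{*}\cong\Delta$.

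For $\Spin(2n)$ the weights of $\Delta^{+}$ (resp.\ $\Delta^{-}$) are $\tfrac12\sum_{r=1}^{n}\epsilon_{r}x_{r}$ with an even (resp.\ odd) number of the $\epsilon_{r}$ equal to $-1$. Flipping all $n$ signs turns a weight with $k$ minus signs into one with $n-k$ minus signs, and $n-k\equiv n+k \pmod 2$. If $n$ is even the parity of the number of minus signs is preserved, so negation permutes the weights of $\Delta^{+}$ among themselves and likewise for $\Delta^{-}$, giving $(\Delta^{+})^{*}\cong\Delta^{+}$ and $(\Delta^{-})^{*}\cong\Delta^{-}$. If $n$ is odd the parity is reversed, so negation is a bijection from the weights of $\Delta^{+}$ onto those of $\Delta^{-}$ (and back), giving $(\Delta^{+})^{*}\cong\Delta^{-}$ and $(\Delta^{-})^{*}\cong\Delta^{+}$. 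There is no serious obstacle in this argument; the only points needing care are the multiplicity-one remark above and the fact that dualizing negates weights. One could alternatively exhibit the self- or cross-duality via a $\Spin$-invariant bilinear pairing on spinor space built from the anti-automorphism $\gamma$ of $\Cl(V)$, whose symmetry type varies with $n \bmod 4$, but the weight computation is shorter and uses only what has already been established.
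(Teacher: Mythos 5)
Your proof is correct, and it takes a genuinely different route from the paper's. The paper works inside the Clifford algebra: it uses the anti\-automorphism $\gamma$ to show that $\Cl_0(\C^m)$ acts on $M^*$ by $(ah)(m) = h\bigl((\gamma a)m\bigr)$, then appeals to the finite-group construction of $\Delta$, $\Delta^\pm$ in Proposition~\ref{diracspinorprop} (induced from $F\subset E_0$, with $W^*$ having $\epsilon_r \mapsto -\epsilon_r$) to identify $M^*$ with the claimed partner as a $\Cl_0$-module, and only then restricts to $\Spin(m)\subset\Cl_0$. That argument is constructive and proves something a bit stronger (a $\Cl_0$-module isomorphism, not merely a $\Spin$-equivariant one). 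Your argument instead invokes the corollary to the conjugacy theorem for maximal tori to reduce the question to weight multisets, notes that dualizing negates weights, checks that every weight of $\Delta$, $\Delta^\pm$ is multiplicity one, and then does the sign-parity arithmetic ($k\mapsto n-k$, with $n-k\equiv n+k\pmod 2$). This is shorter, relies only on general compact Lie group theory already in section~\ref{charactersandweightsofrepssubsec} plus the weight list in Proposition~\ref{diracspinorprop}, and avoids re-entering the $E_0$ machinery, at the cost of being nonconstructive: it gives existence of the intertwiner without exhibiting one. Since the proposition asserts only an isomorphism of $\Spin$-representations, your weight argument is a complete proof.
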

\begin{proof}
	We have to prove the isomorphism in the various senses of $M^*$ with $N$, where we write $M$, $N$ for the spinor representations in the proposition statement. Consider that by definition, $E_0$ acts on an $h$ in the dual representation $M^* = \Hom_\C (M, \C)$ as $(gh) (m) = h (g\inv m )$ for $m \in M$; generalising this, we see that $\Cl_0 (\C^m)$ acts on $M^*$ as $(ah) (m) = h ((\gamma a ) m)$. From the discussion in the final two paragraphs of the proof of proposition \ref{diracspinorprop}, it is clear that we have an isomorphism of the representations $N$ and $M^*$ of the finite group $E_0$, which is an isomorphism of the $\Cl_0 (\C^m)$-modules. But $\Spin (m) \subset \Cl_0(\C^m)$, so the isomorphism preserves the action of the elements of $\Spin (m)$, provided this action is defined by $(gh)m = h (g\inv m)$, which is the usual action.
\end{proof}
 
We now in fact have almost everything we need to discuss the $\Spin (10)$ grand unified theory. But before we do so, we end this section by stating a technical result, required to construct $\E_6$ (and also $\mrm{G}_2$, in the appendix). In particular, we need to understand how the representations of the $\Spin$ groups behave under certain inclusions. To this end, we first note that the inclusion $\K^m \hookrightarrow \K^{m + 1}$ induces an inclusion $\Cl (\K^m) \hookrightarrow \Cl (\K^{m + 1})$, so we get $\Spin (m) \hookrightarrow \Spin (m + 1)$ covering the usual map $\SO (m) \hookrightarrow \SO (m + 1)$, $A \mapsto \begin{psmallmatrix}
	A & 0\\
	0 & 1
\end{psmallmatrix}$. We also have, by proposition \ref{cliffordalgebradirectsum}, that $\Cl(\K^p) \otimes \Cl (\K^q) = \Cl (\K^{p + q})$, which gives $\Spin (p) \times \Spin (q) \to \Spin (p + q)$.

\begin{prop} \leavevmode
\label{inclusionofspingroupsinside}
	\begin{enumerate}
		\item Under the inclusions
	\begin{equation*}
		\Spin (2n) \hookrightarrow \Spin (2n + 1) \hookrightarrow \Spin (2n + 2)\ ,
	\end{equation*}
	we have
	\begin{equation*}
		\begin{tikzcd}[row sep = 0.5em]
			& & \ar[dl, mapsto] \Delta^+\\
			\Delta^+ + \Delta^- & \Delta \ar[l, mapsto] & \\
			& & \ar[ul, mapsto] \Delta^-
		\end{tikzcd}
	\end{equation*}
	\item \begingroup
	\renewcommand*{\arraystretch}{1.2}
	\begin{tabular}[t]{c c c c}
	Under &  $\Spin (2r) \times \Spin (2s)$ & $\to$ & $\Spin (2r + 2s)$\\
	& $\Delta^+ \otimes \Delta^+ + \Delta^- \otimes \Delta^-$ & $\mapsfrom$ & $\Delta^+$\\
	& $\Delta^+ \otimes \Delta^- + \Delta^- \otimes \Delta^+$ & $\mapsfrom$ & $\Delta^-$
\end{tabular}
\endgroup
	\end{enumerate}
\end{prop}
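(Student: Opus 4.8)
The plan is to prove both statements purely at the level of weights, using Proposition \ref{diracspinorprop} together with the explicit maximal tori of Remark \ref{maximaltorusofspinn}. The key general facts are that a representation of a compact connected Lie group is determined up to equivalence by its character, hence by its multiset of weights, and that — by the weight count in Proposition \ref{diracspinorprop} — every weight of $\Delta$, $\Delta^+$, $\Delta^-$ occurs with multiplicity one, since the number of weights equals the dimension in each case. So for each inclusion it will suffice to identify how the smaller group's maximal torus sits inside the larger one, track the weights under restriction, and check that the resulting multiset (with multiplicities) matches the claimed one; a dimension count will then close the argument.

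For part (i), first take $\Spin(2n)\hookrightarrow\Spin(2n+1)$: the tori of Remark \ref{maximaltorusofspinn} literally coincide (same product $\prod_{r=1}^n\bigl(\cos\tfrac{x_r}{2}+(e_{2r-1}e_{2r})\sin\tfrac{x_r}{2}\bigr)$, only the ambient Clifford algebra differs), so restriction acts as the identity on weights. Thus $\Delta$ restricts to the representation of $\Spin(2n)$ with all $2^n$ weights $\tfrac12(\pm x_1\pm\cdots\pm x_n)$, and these split as the $2^{n-1}$ weights with an even number of minus signs (those of $\Delta^+$) together with the $2^{n-1}$ with an odd number (those of $\Delta^-$), giving $\Delta|_{\Spin(2n)}\cong\Delta^+\oplus\Delta^-$. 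For $\Spin(2n+1)\hookrightarrow\Spin(2n+2)$, the torus of $\Spin(2n+1)$ is the subtorus $x_{n+1}=0$ of that of $\Spin(2n+2)$, because the $(n+1)$-st factor $\cos\tfrac{x_{n+1}}{2}+(e_{2n+1}e_{2n+2})\sin\tfrac{x_{n+1}}{2}$ involves $e_{2n+2}\notin\Cl(\R^{2n+1})$. A weight $\tfrac12\sum_{r=1}^{n+1}\epsilon_r x_r$ of $\Delta^+$ of $\Spin(2n+2)$ therefore restricts to $\tfrac12\sum_{r=1}^n\epsilon_r x_r$; as $(\epsilon_1,\dots,\epsilon_{n+1})$ runs over all even-sign patterns, its truncation $(\epsilon_1,\dots,\epsilon_n)$ runs over every sign pattern exactly once (for each choice of the first $n$ signs there is a unique $\epsilon_{n+1}$ making the total parity even), so $\Delta^+|_{\Spin(2n+1)}$ has precisely the $2^n$ weights of $\Delta$, each once, and $\Delta^+|_{\Spin(2n+1)}\cong\Delta$ by the dimension count $2^n=2^n$. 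The identical computation with odd total parity gives $\Delta^-|_{\Spin(2n+1)}\cong\Delta$.

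Part (ii) is the same bookkeeping for $\Spin(2r)\times\Spin(2s)\to\Spin(2r+2s)$, which comes from $\Cl(\R^{2r})\otimes\Cl(\R^{2s})\cong\Cl(\R^{2r+2s})$: the torus of $\Spin(2r+2s)$, with coordinates $x_1,\dots,x_{r+s}$, restricts to the product of the torus of $\Spin(2r)$ (coordinates $x_1,\dots,x_r$) with that of $\Spin(2s)$ (coordinates $x_{r+1},\dots,x_{r+s}$). A weight $\tfrac12\sum_{i=1}^{r+s}\epsilon_i x_i$ of $\Delta^+$ of $\Spin(2r+2s)$ has an even number of minus signs overall exactly when the first $r$ and last $s$ signs each have even parity, or each have odd parity; the first family of patterns is precisely the weight set of $\Delta^+\otimes\Delta^+$ and the second precisely that of $\Delta^-\otimes\Delta^-$, and the dimension check $2^{r+s-1}=2^{r-1}2^{s-1}+2^{r-1}2^{s-1}$ shows every multiplicity is one, so $\Delta^+$ restricts to $\Delta^+\otimes\Delta^+ + \Delta^-\otimes\Delta^-$. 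Replacing ``even total parity'' by ``odd total parity'' gives $\Delta^-\mapsto\Delta^+\otimes\Delta^- + \Delta^-\otimes\Delta^+$.

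The only genuinely delicate point is pinning down, for each of the three inclusions, how the maximal torus of the subgroup embeds in that of the overgroup — which weight coordinates survive restriction — and then verifying that the parity conditions cutting out $\Delta^+$ from $\Delta^-$ combine correctly under that restriction. Once the torus inclusions are read off from Remark \ref{maximaltorusofspinn} and the defining Clifford-algebra embeddings, the rest is elementary counting of sign patterns, so I expect no serious obstacle beyond this careful setup.
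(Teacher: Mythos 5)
The paper does not give its own proof of this proposition; it defers to Adams, pp.~23--24, where the argument is carried out more at the level of Clifford modules (via the graded isomorphism $\Cl_0(\K^{m+1}) \cong \Cl(\K^m)$ and the tensor decomposition $\Cl(\K^{p})\otimes\Cl(\K^{q})\cong\Cl(\K^{p+q})$, decomposing the modules directly). Your character-theoretic route is a genuinely different packaging of the same facts, and it is correct. You use only tools already set up in the thesis: Theorem~\ref{bigthmaboutliealgebras}'s surrounding character theory (a representation of a compact connected group is determined by its restriction to the maximal torus), the explicit maximal tori of Remark~\ref{maximaltorusofspinn}, and the weight lists of Proposition~\ref{diracspinorprop}. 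The two points that actually carry the argument are handled carefully: (a) for each inclusion you read off, from the explicit torus elements and the Clifford-algebra embeddings, exactly how the subgroup's torus sits inside the overgroup's (identity for $\Spin(2n)\hookrightarrow\Spin(2n+1)$; the subtorus $x_{n+1}=0$ for $\Spin(2n+1)\hookrightarrow\Spin(2n+2)$; the obvious coordinate split for the product map); (b) you note that all spinor weights have multiplicity one, so the equality of weight multisets together with the dimension count ($2^n=2^{n-1}+2^{n-1}$ and $2^{r+s-1}=2\cdot2^{r-1}2^{s-1}$) forces the isomorphism. The parity bookkeeping (even$+$even or odd$+$odd gives even total parity) is the right and complete observation for part (ii). What the module-theoretic approach buys is that it works before complexifying and yields the real/quaternionic structures as a byproduct; what your approach buys is that it is shorter, purely enumerative, and uses nothing beyond the weight tables the thesis already needs anyway.
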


The proofs of these inclusions are found in \cite[pp.~23--24]{adams96}. We will henceforth denote, as we have here, the direct sum of vector spaces (and representations) by a simple + instead of an $\oplus$.


\section{The $\Spin (10)$ Extension of $\SU(5)$}
\label{thespin10gutsubsection}

Let us revisit the $\SU(5)$ theory. Viewed from a different light, the core idea behind the embedding $\SU(2) \times \SU(3) \hookrightarrow \mrm{S} (\U (2) \times \U(3))$, which subsequently split each irrep of $\SU(5)$ into an isospin and colour piece (each twisted with hypercharge), can be stated as follows: since the Standard Model representation is $32$-dimensional, each particle or antiparticle in the first generation of fermions can be named by a $5$-bit code. Roughly speaking, these bits are the answers to five binary queries.\footnote{There are subtleties when we answer ``yes'' to both of the first two questions, or ``yes'' to more than one of the last three, but we ignore this problem here; it has no bearing on our argument.}
\begin{itemize}
	\item Is the particle isospin up?
	\item Is it isospin down?
	\item Is it red?
	\item Is it green?
	\item Is it blue?
\end{itemize}
This binary code interpretation of the $\SU(5)$ theory requires the dimension of $F + \cc{F}$ to be $32$, and this raises some questions, as we shall see now.

At the time of writing, there is no direct experimental evidence for the existence of the right-handed neutrino, even though they are extremely desirable theoretically, as they could account for several phenomena that have no explanation within the Standard Model.\footnote{For thorough reviews of the current theoretical and phenomenological status of this elusive particle, see \cite{drewes13, verma15} and references therein.} The right-handed neutrino $\nu_R$ has a direct bearing on our grand unified theories; in particular, it presents a mystery for the $\SU(5)$ theory. $\SU(5)$ does not require us to use the full $32$-dimensional representation $\Lambda^* \C^5$. It works just as well with the smaller representation
\begin{equation*}
	\Lambda^1\C^5 + \Lambda^2\C^5 + \Lambda^3\C^5 + \Lambda^4\C^5\ ,
\end{equation*}
which is less-aesthetically pleasing, and moreover, clearly does not allow for the existence of $\nu_R$. It would be nicer to have a theory that required us to use all of $\Lambda^*\C^5$; better still, if our theory were an extension of $\SU(5)$, our explanation for the arbitrary hypercharges of the Standard Model particles would live on. The $\Spin(10)$ grand unified theory is an attempt at such an extension; \cite{fritzsch75} and \cite{georgi74b} are the original references for the same.

In proposition \ref{diracspinorprop}, we constructed the spinor representations for $\Spin (2n)$, $\Delta^\pm$, each of dimension $2^{n - 1}$. It is perhaps not immediately apparent from the somewhat technical proof of that result, but these irreps are intimately related to $\Lambda^* \C^n$, and we will exploit this fact to forge a path to the $\SU(5)$ theory.

Let $V$ be a complex vector space with $\dim V = 2n$, equipped with the standard inner product $\langle\ ,\ \rangle$. Write $V = W + W'$, where the $W$'s are $n$-dimensional isotropic spaces for $\langle\ ,\ \rangle$.\footnote{Recall that a space is isotropic when the chosen symmetric bilinear form restricts to the zero form on it.} In fact, under $\langle\ ,\ \rangle$, we can simply take $W$ to be spanned by the first $n$ standard basis vectors, and $W'$ by the last $n$.

\begin{prop}
\label{diracspinorrepprop}
	The decomposition $V = W + W'$ determines an isomorphism of algebras, $\Cl (V) \cong \End (\Lambda^* W)$.
\end{prop}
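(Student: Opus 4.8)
The plan is to use the universal property of the Clifford algebra: I will build a linear map $\rho\colon V \to \End(\Lambda^* W)$ satisfying $\rho(v)^2 = \langle v,v\rangle\,\id$ for all $v \in V$, which by that universal property extends uniquely to an algebra homomorphism $\Phi\colon \Cl(V) \to \End(\Lambda^* W)$, and then show $\Phi$ is bijective. Since $V = W \oplus W'$ with $W, W'$ isotropic, the form $\langle\,,\,\rangle$ restricts to a nondegenerate pairing $W \times W' \to \C$; in particular every $w' \in W'$ is a linear functional on $W$ and hence determines an interior product (contraction) $\iota_{w'}\colon \Lambda^\bullet W \to \Lambda^{\bullet - 1} W$, while every $w \in W$ gives the exterior multiplication $\epsilon_w\colon \omega \mapsto w \wedge \omega$. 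I define $\rho$ on $v = w + w'$ by $\rho(v) = \epsilon_w + 2\,\iota_{w'}$, the factor $2$ being forced by the normalisation $v^2 = \langle v,v\rangle$ in the definition of $\Cl(V)$.

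The one genuinely computational step is the Clifford identity, which reduces to the graded-Leibniz relation $\epsilon_w \iota_{w'} + \iota_{w'}\epsilon_w = \langle w, w'\rangle\,\id$ on $\Lambda^* W$ (check it on decomposables $w_1 \wedge \cdots \wedge w_k$, where both sides telescope). Granting this together with the evident $\epsilon_w^2 = 0$ and $\iota_{w'}^2 = 0$, one gets $\rho(v)^2 = (\epsilon_w + 2\iota_{w'})^2 = 2(\epsilon_w\iota_{w'} + \iota_{w'}\epsilon_w) = 2\langle w, w'\rangle\,\id$, which equals $\langle v,v\rangle\,\id$ because $\langle w,w\rangle = \langle w',w'\rangle = 0$ and $\langle\,,\,\rangle$ is symmetric. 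Hence $\rho$ extends to $\Phi$.

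To see $\Phi$ is an isomorphism I would combine a dimension count with irreducibility. By the corollary to Proposition~\ref{cliffordalgebradirectsum}, $\dim_\C \Cl(V) = 2^{\dim V} = 2^{2n}$, while $\dim_\C \End(\Lambda^* W) = (2^n)^2 = 2^{2n}$ since $\dim_\C W = n$; so it suffices to show $\Lambda^* W$ is an irreducible $\Cl(V)$-module via $\Phi$. In any nonzero submodule $M$ I take a nonzero homogeneous element of smallest degree and apply contractions $\iota_{w'}$ — using nondegeneracy of the pairing $W \times W' \to \C$ to strictly lower the degree at each stage — until I reach a nonzero scalar, so $\Lambda^0 W \subseteq M$; then applying the $\epsilon_w$ shows $M = \Lambda^* W$. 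Granting irreducibility, the decomposition of the semi-simple algebra $\Cl(V)$ (semi-simplicity was established earlier) into matrix blocks, combined with $\sum n_i^2 = 2^{2n}$ and the existence of this $2^n$-dimensional irreducible, forces $\Cl(V)$ to be simple, hence a full matrix algebra; the nonzero homomorphism $\Phi$ out of a simple algebra is then injective, and bijective by the dimension count. The main obstacle is purely bookkeeping: pinning down the sign and scaling conventions so the wedge/contraction anticommutator matches the paper's Clifford relation, and being careful in the irreducibility argument that the degree-lowering contractions actually exist, which is precisely where nondegeneracy of $\langle\,,\,\rangle|_{W \times W'}$ is used.
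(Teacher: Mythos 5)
Your construction of the linear map $\rho$ and the appeal to the universal property of $\Cl(V)$ is essentially the paper's approach: both send $w + w'$ (with $W$, $W'$ isotropic) to exterior multiplication plus interior multiplication on $\Lambda^* W$. You have actually been more careful than the paper's text with the normalisation: with the contraction as defined there, $\iota_{w'}(w \wedge \xi) + w \wedge \iota_{w'}\xi = \langle w', w\rangle\,\xi$, not $2\langle w, w'\rangle\,\xi$, so the anticommutator claimed in the paper's equation~(\ref{cliffordrelationforformsequationresult}) is off by a factor of two; your $\rho(v) = \epsilon_w + 2\iota_{w'}$ fixes this so that $\rho(v)^2 = \langle v, v\rangle\,\id$. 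Where you genuinely part ways is in showing $\Phi$ is bijective: the paper waves this off with ``compute it on a basis,'' whereas you combine the dimension count $\dim \Cl(V) = 2^{2n} = \dim \End(\Lambda^* W)$ with irreducibility of $\Lambda^*W$ as a $\Cl(V)$-module and the already-established semi-simplicity of $\Cl(V)$. That is a cleaner route and makes good use of structure proved earlier.

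There is, however, a small gap in your irreducibility argument. A nonzero submodule $M \subseteq \Lambda^* W$ need not contain any \emph{homogeneous} element, so ``take a nonzero homogeneous element of smallest degree'' is not available; and even if it were, one should drive the degree down from the \emph{top}, since contractions already annihilate the bottom. The fix: pick any nonzero $\xi \in M$ and let $\xi_k$ be its highest-degree nonzero homogeneous component. By nondegeneracy of the pairing $W \times W' \to \C$, choose $w'_1, \ldots, w'_k \in W'$ with $\iota_{w'_1}\cdots\iota_{w'_k}\xi_k \neq 0$; applying the same string of contractions to $\xi$ kills every lower-degree component outright and leaves a nonzero scalar in $M$, after which the operators $\epsilon_w$ generate all of $\Lambda^* W$. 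With this adjustment your proof is complete.
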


\begin{proof}
	We follow \cite[p.~304]{fulton99}. Mapping $\Cl (V)$ to the algebra $E = \End (\Lambda^* W)$ is the same as defining a linear mapping from $V$ to $E$, satisfying the relation (\ref{cliffordalgebrarelationeqn}). That is, we must construct maps $l : W \to E$ and $l' : W' \to E$ such that $l (w)^2 = 0 = l' (w')^2$ and
	\begin{equation}
	\label{cliffordrelationforformsequationresult}
		l (w) l' (w') + l' (w') l(w) = 2 \langle w, w' \rangle\ ,
	\end{equation}
	for any $w \in W$ and $w' \in W'$. To do this, we will ``deform'' the usual wedge product on the exterior algebra (this is sometimes referred to as \emph{Clifford multiplication of forms}). For each $w \in W$, $w' \in W'$ and $\xi \in \Lambda^* W$, define
	\begin{align*}
		l_w (\xi) &= w \wedge \xi\ ,\\
		l'_{w'} (\xi) &= \iota_{w'} \xi\ ,
	\end{align*}
	where $\iota_{w'} : \Lambda^k W \to \Lambda^{k-1} W$ is the usual contraction by $w'$; on a basis vector it acts as
	\begin{equation*}
		\iota_{w'} (w_1 \wedge \cdots w_k) = \sum_{j = 1}^k (-1)^{j+1} \langle w', w_j \rangle \ w_1 \wedge \cdots \wedge \widehat{w_j} \wedge \cdots \wedge w_k\ .
	\end{equation*}
	It is immediately clear that $l^2$ and $l'^2$ vanish on their domains, and it is a straightforward exercise to check that equation (\ref{cliffordrelationforformsequationresult}) holds. Finally, one confirms that the resulting map from $\Cl (V) \to \End (\Lambda^* W)$ is an isomorphism by computing it on a basis set.	
\end{proof}

The maps $l$ and $l'$ are far more important than they perhaps appear. The first clue is that if we extend them to all of $V = \C^n$, they are in fact adjoint with respect to the inner product induced on $\Lambda^* \C^n$ by $\langle\ ,\ \rangle$, i.e.~for $v \in \C^n$, $p, q \in \Lambda^* \C^n$, $\langle p, l_v q \rangle = \langle l'_v p, q \rangle$. Adjoint operators are the bread and butter of quantum mechanics, so one might ask if these maps have a physical interpretation; indeed, there is one readily available. In the parlance of physics, particles are vectors, so $l_v = v \wedge$ can be said to ``create'' a particle of type $v$ by wedging; analogously $l'_v = \iota_v$ ``destroys'' a particle of type $v$ by contraction. In other words, these maps return, for each $v$, the corresponding \emph{creation} and \emph{annihilation operators}. It is customary to denote the $n$ creation and annihilation operators corresponding to the $n$ basis vectors $e_j$ of $\C^n$ by $a_j^*$ and $a_j$ respectively, and we will do so below.

Consider now the splitting $\Lambda^* W = \Lambda^{\mrm{even}} W + \Lambda^{\mrm{odd}} W$ into the sum of even and odd exterior powers; $\Cl_0(W)$ clearly respects this splitting. Hence, we deduce that there is an isomorphism
\begin{equation*}
	\Cl_0 (V) \cong \End (\Lambda^{\mrm{even}} W) + \End(\Lambda^{\mrm{odd}} W)\ .
\end{equation*}
Restricting now to the case $n = 5$, we conclude that since $\Spin (10) \subset \Cl_0(\C^{10})$, the above Clifford modules, i.e.~the even- and odd-graded powers of the exterior algebra $\Lambda^* \C^5$, are representations of $\Spin (10)$. Moreover, by proposition \ref{diracspinorprop}, they are irreducible. Elements of these two irreps, $\Delta^+$ and $\Delta^-$, are called \emph{left- and right-handed Weyl spinors} respectively, while elements of their direct sum, $\Lambda^* \C^5$, are called \emph{Dirac spinors}.

We are tantalisingly close now to the $\Spin (10)$ grand unified theory; there remains but one question. Does the Dirac spinor representation of $\Spin (10)$ extend the representation of $\SU(5)$ on $\Lambda^* \C^5$? Or more generally, does the Dirac spinor representation of $\Spin(2n)$, which we will call $\rho'$, extend the representation of $\SU(n)$ on $\Lambda^* \C^n$? Recall that this latter representation $\rho : \SU(n) \to \Lambda^* \C^n$ acts as the fundamental representation on $\Lambda^1 \C^n \cong \C^n$ and respects wedge products. The result that we need is answered in the affirmative by the following theorem, which appears in a classic paper by Atiyah, Bott and Shapiro, wherein they also founded the abstract theory of Clifford modules \cite{atiyah64}.

\begin{thm}
\label{atiyahbottshapiro}
	There exists a Lie group homomorphism $\psi$ that makes this triangle commute:
	\begin{equation*}
	\begin{tikzcd}[row sep=large, column sep = large]
		\SU(n) \ar[r, "\psi"] \ar[dr, "\rho"']  & \Spin (2n) \ar[d, "\rho'"]\\
		& \Lambda^* \C^n
	\end{tikzcd}
	\end{equation*}
\end{thm}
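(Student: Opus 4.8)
The plan is to obtain $\psi$ as the unique lift of the standard embedding $\SU(n)\hookrightarrow\SO(2n)$ through the double cover $\pi:\Spin(2n)\to\SO(2n)$ of Proposition~\ref{pingrouppprop}, and then to verify the triangle at the level of Lie algebras. Recall the picture underlying $\rho'$: writing $V=W+W'$ with $W\cong\C^n$ and $W'$ the complementary isotropic subspace, Proposition~\ref{diracspinorrepprop} gives $\Cl(V)\cong\End(\Lambda^*W)$, under which $e_i\in W$ acts on $\Lambda^*W$ as the creation operator $a_i^*=e_i\wedge(-)$ and the paired vector $f_j\in W'$ as the annihilation operator $a_j=\iota_{f_j}$, with $a_i^*a_j+a_ja_i^*=\delta_{ij}$; restricting the resulting isomorphism $\Cl_0(V)\cong\End(\Lambda^{\mathrm{even}}W)\oplus\End(\Lambda^{\mathrm{odd}}W)\hookrightarrow\End(\Lambda^*W)$ to $\Spin(2n)\subset\Cl_0(V)$ is exactly the Dirac spinor representation $\rho'$, acting on $\Lambda^*W\cong\Lambda^*\C^n$.

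For the existence of $\psi$: viewing $\C^n$ as $\R^{2n}$ gives the usual embedding $\U(n)\hookrightarrow\SO(2n)$, which restricts to $\SU(n)\hookrightarrow\SO(2n)$; since $\SU(n)$ is simply connected and $\pi$ is a connected double cover (Proposition~\ref{pingrouppprop}), standard covering-space theory produces a unique Lie group homomorphism $\psi:\SU(n)\to\Spin(2n)$ with $\psi(e)=e$ and $\pi\circ\psi$ equal to this embedding. Because $\SU(n)$ is connected, the identity $\rho'\circ\psi=\rho$ will follow once we check $\diff\rho'\circ\diff\psi=\diff\rho$ on $\su(n)$. Here $\diff\psi(X)\in\mathfrak{spin}(2n)\subset\Cl_0(V)$ is the unique quadratic element mapped by $\diff\pi$ to the skew endomorphism $\widetilde X\in\so(V)$ acting as $X$ on $W$ and as $-X^{T}$ on $W'$ (this $\widetilde X$ is the image of $X$ under $\su(n)\hookrightarrow\so(2n)$, with $W'\cong W^*$ accounting for the $-X^{T}$); concretely $\diff\psi(X)$ is the appropriately normalised element $\propto\sum_{i,j}X_{ij}\,e_if_j$. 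A direct computation, using $a_ja_i^*=\delta_{ij}-a_i^*a_j$, then shows that under $\Cl_0(V)\cong\End(\Lambda^*W)$ this element acts on $\Lambda^*\C^n$ as
\[ \diff\rho'(\diff\psi(X))=\textstyle\sum_{i,j}X_{ij}\,a_i^*a_j-\tfrac12(\tr X)\,\id = D_X-\tfrac12(\tr X)\,\id, \]
where $D_X$ is the derivation of $\Lambda^*\C^n$ extending the action of $X$ on $\Lambda^1\C^n=\C^n$. For $X\in\su(n)$ we have $\tr X=0$, so this equals $D_X$, which is $\diff\rho(X)$ by the defining property of $\rho$ (fundamental on $\Lambda^1$, multiplicative); hence $\diff(\rho'\circ\psi)=\diff\rho$ and, integrating over the connected group $\SU(n)$, $\rho'\circ\psi=\rho$.

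The main obstacle lies entirely in the bookkeeping of the middle paragraph: correctly identifying the quadratic element $\diff\psi(X)$ — using $(\pi x)v=xvx\inv$ for $x\in\Spin(V)$, hence $\diff\pi(\xi)(v)=[\xi,v]$, together with the Clifford relations of Corollary~\ref{cliffordbasiscor} — and propagating the paper's sign convention $vw+wv=-2\langle v,w\rangle$ through to the constants $-\tfrac12$ in the displayed identity. The one genuinely structural point, rather than a mere normalisation, is that the scalar correction is a multiple of $\tr X$, so the two representations agree on the nose only on the traceless algebra $\su(n)$; for $\U(n)$ the surviving $-\tfrac12\tr X$ is exactly the obstruction, mirroring the $\pm1$ ambiguity in the spin lift. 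As a cross-check one can instead compare weights: on $\Lambda^k\C^n$ the representation $\rho$ has weights $x_{i_1}+\cdots+x_{i_k}$, while by Proposition~\ref{diracspinorprop} the weights of $\Delta^+\oplus\Delta^-$ are $\tfrac12(\pm x_1\pm\cdots\pm x_n)$; taking the sign $+$ on the indices $i_1,\dots,i_k$ gives $\sum_\ell x_{i_\ell}-\tfrac12\sum_m x_m$, which matches once $\sum_m x_m=0$ on $\su(n)$ — again the same tracelessness miracle.
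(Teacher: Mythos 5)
Your proof is correct and follows the same overall strategy as the paper: lift the embedding $\SU(n)\hookrightarrow\SO(2n)$ through the double cover using simple connectedness, then verify the triangle at the Lie algebra level in the Fock model $\Cl(V)\cong\End(\Lambda^*W)$. Where you diverge is in how the Lie algebra check is carried out, and your version is the sharper of the two. The paper expands $\su(n)$ in generalised Gell-Mann matrices, writes $\diff\rho$ of each basis element as a combination of $a_j^*a_k$'s, checks that $a_j^*a_k$ is a derivation, and then declares the remaining verification ``straightforward (though tedious)'' without actually identifying the image of $\diff\psi$ inside $\Cl_0(V)$. You instead pin down $\diff\psi(X)$ as the quadratic Clifford element proportional to $\sum_{i,j}X_{ij}e_if_j$ covering $X\oplus(-X^T)$ on $W\oplus W^*$, and compute its spinor action directly: reordering with $a_ja_i^*=\delta_{ij}-a_i^*a_j$ produces $D_X-\tfrac12(\tr X)\,\id$, making the role of tracelessness explicit and explaining structurally why the lift exists for $\SU(n)$ but not for $\U(n)$. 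The weight comparison at the end is a nice independent sanity check and again isolates $\sum_m x_m=0$ as the decisive condition. Both arguments are valid; yours buys a closed-form formula and a conceptual explanation of the obstruction, at the cost of a bit of isotropic-basis bookkeeping that the paper sidesteps by appealing to derivations.
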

\begin{proof}
	We follow the proof as laid out in \cite{baez10}. The connected component of the identity in $\OR(2n)$ is $\SO(2n)$; since $\U(n)$ is connected and $\U(n) \hookrightarrow \OR (2n)$, it follows that there is an inclusion $\SU (n) \hookrightarrow \U(n) \hookrightarrow \SO (2n)$. Passing to Lie algebras, we obtain an inclusion $\su(n) \hookrightarrow \so(2n)$. A homomorphism of Lie algebras gives a homomorphism of the corresponding simply-connected Lie groups, so we now have a map $\psi : \SU (n) \to \Spin (2n)$; we must check that it makes the above triangle commute.
	
	Since all the groups involved are connected, it suffices to check that this diagram
	\begin{equation}
	\label{commutativediagramforliealgebrasabs}
	\begin{tikzcd}[row sep = large, column sep = large]
		\su(n) \ar[r, "\diff \psi"] \ar[dr, "\diff \rho"']  & \so (2n) \ar[d, "\diff \rho'"]\\
		& \Lambda^* \C^n
	\end{tikzcd}
	\end{equation}
commutes. Since the Dirac representation $\diff \rho'$ is defined in terms of creation and annihilation operators, we should try to express $\diff \rho$ in this way. A good choice of basis for $\su (n)$ will be extremely helpful in this regard: we pick the so-called \emph{generalised Gell-Mann matrices}. Let $E_{jk}$ denote the matrix with $1$ in the $jk$th entry, and $0$ everywhere else. Then $\su (n)$ has the basis\smallbreak
\begingroup
\centering
\renewcommand*{\arraystretch}{1.2}
	\begin{tabular}{c l}
		$E_{jk} - E_{kj}$ & for $j < k$,\\
		$i (E_{jk} + E_{kj})$ & for $j > k$, and\\
		$i (E_{jj} - E_{j+1, j+1})$ & for $j = 1, \ldots, n-1$.
	\end{tabular}
	
\endgroup\smallbreak
\noindent These matrices satisfy $E_{jk} (e_l) = \delta_{kl}$, which is in fact how $a_j^* a_k$ acts on $\Lambda^1 \C^n$. So on this space at least, we have the simple relations
\begin{equation}
\label{actionofsu(n)equation}
\begin{aligned}
	\diff \rho (E_{jk} - E_{kj}) &= a_j^* a_k - a_k^* a_j\ ,\\
	\diff \rho (i (E_{jk} + e_{kj})) &= i (a_j^* a_k - a_k^* a_j)\ ,\\
	\diff \rho (i (E_{jj} - E_{j + 1, j + 1})) &= i (a_j^* a_j - a_{j+1}^* a_{j+1}\ .
\end{aligned}
\end{equation}

We claim that these hold on all of $\Lambda^* \C^n$. To see this, first recall that $\rho$ preserves wedge products:
\begin{equation*}
	\rho (x) (v \wedge w) = \rho (x) v \wedge \rho (x) w\ ;
\end{equation*}
differentiating this condition, we see that $\su(n)$ must act as a derivation:
\begin{equation*}
	\diff \rho (X) (v \wedge w) = \diff \rho (X) v \wedge w + v \wedge \diff \rho (X) w\ .
\end{equation*}
Since both the derivative and taking wedge products are linear, derivations on $\Lambda^* \C^n$ are determined by their action on $\Lambda^1 \C^n$; hence, for equations (\ref{actionofsu(n)equation}) to hold on $\Lambda^* \C^n$, it suffices to check that all the operators on the right hand side of the equation are derivations. The annihilation operator is given by contraction, which acts like so on a wedge product:
\begin{equation*}
	a_j (v \wedge w) = a_j (v) \wedge w + (-1)^p v \wedge a_j w\ ,
\end{equation*}
where $p$ is the order of the tensor $v$; this is almost a  derivative, but not quite. On the other hand, the creation operators act in a completely different way:
\begin{equation*}
	a_j^* (v \wedge w) = a_j^* v \wedge w = (-1)^p v \wedge a_j^* w\ ,
\end{equation*}
since $a_j^*$ acts by wedging with $e_j$, and moving this through $v$ introduces $p$ minus signs. This combines with the pervious relation to ensure that $a_j^* a_k$ is a derivation for every combination of $j$ and $k$, as can be checked explicitly. Hence, $\diff \rho$ can be expressed directly as a sum of creation and annihilation operators. Checking now that the diagram \ref{commutativediagramforliealgebrasabs} commutes is straightforward (though tedious). 
\end{proof}

The homomorphism $\psi$ is precisely what allows us to extend the $\SU(5)$ model to $\Spin(10)$, and makes this square commute
\begin{equation*}
	\begin{tikzcd}[row sep = large, column sep = large]
		\SU(5) \ar[r, "\psi"] \ar[d, "\rho"'] & \Spin(10) \ar[d, "\rho'"]\\
		\Lambda^*\C^5 \ar[r, "\cong"] & \Lambda^*\C^5
	\end{tikzcd}
\end{equation*}
From theorem \ref{su(5)isagut} then, we have the final result of this chapter.

\begin{thm}
\label{spin10isagutthm}
	$\Spin(10)$ is a grand unified theory, i.e.~the following diagram commutes:
	\begin{equation*}
		\begin{tikzcd}[column sep = large, row sep = large]
			G_\mrm{SM} / \Z_6 \ar[r, hook] \ar[d] & \Spin(10) \ar[d, "\rho'"]\\
		F \oplus \cc{F} \ar[r, "\cong"] & \Lambda^*\C^5
		\end{tikzcd}
	\end{equation*}
\end{thm}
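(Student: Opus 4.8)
The plan is to obtain the square in the statement by stacking two commuting squares we already possess: the square of Theorem~\ref{su(5)isagut} (that $\SU(5)$ is a grand unified theory) placed on top of the square displayed immediately before the statement, which encodes Theorem~\ref{atiyahbottshapiro} at $n = 5$ and asserts $\rho' \circ \psi = \rho$. The right-hand vertical of the first square is the $\SU(5)$-action $\rho$ on $\Lambda^* \C^5$, which is literally the left-hand vertical of the second; hence the two squares glue along identical data, and the outer rectangle
\begin{equation*}
\begin{tikzcd}[column sep = large, row sep = large]
G_\mrm{SM}/\Z_6 \ar[r, hook] \ar[d] & \Spin(10) \ar[d, "\rho'"]\\
F\oplus\cc F \ar[r, "\cong"] & \Lambda^*\C^5
\end{tikzcd}
\end{equation*}
commutes because both inner squares do. Its top arrow is the composite $G_\mrm{SM}/\Z_6 \cong \mrm{S}(\U(2)\times\U(3)) \hookrightarrow \SU(5) \xrightarrow{\psi} \Spin(10)$, and its bottom arrow is the isomorphism $F\oplus\cc F\cong\Lambda^*\C^5$ recorded in~\eqref{su5particleassignment}. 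Tracing a group element both ways around the rectangle confirms commutativity: down the right edge gives $g \mapsto \rho'(\psi(\iota(g))) = \rho(\iota(g))$ by Theorem~\ref{atiyahbottshapiro}, and left-then-bottom gives the $G_\mrm{SM}/\Z_6$-action on $F\oplus\cc F$ transported to $\Lambda^*\C^5$, which equals $\rho(\iota(g))$ by Theorem~\ref{su(5)isagut}.

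Then I would tidy up three routine points. First, the top arrow is an embedding: $G_\mrm{SM}/\Z_6\hookrightarrow\SU(5)$ is injective by the construction underlying Theorem~\ref{su(5)isagut}, and $\psi$ is injective because $\diff\psi$ is the inclusion $\su(5)\hookrightarrow\so(10)$, so $\ker\psi$ is discrete, hence central, hence contained in $Z(\SU(5)) = \Z_5$; since $\rho'\circ\psi = \rho$ is faithful (it restricts to the fundamental representation on $\Lambda^1\C^5$), $\ker\psi$ must be trivial. Second, the left-hand vertical of the outer rectangle is the fermion representation of $G_\mrm{SM}/\Z_6$; it is well-defined precisely because $\Z_6$ acts trivially on $F\oplus\cc F$ (the four hypercharge relations verified around Theorem~\ref{su(5)isagut}), and it coincides with the left-hand vertical of the upper square. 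Third, one must take the bottom isomorphism in both squares to be the same map, namely the one of~\eqref{su5particleassignment}, so the gluing is along genuinely identical data.

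There is essentially no obstacle here: the substance of the theorem is exactly that Theorems~\ref{su(5)isagut} and~\ref{atiyahbottshapiro} compose, and the argument is a one-step diagram paste together with the bookkeeping above. The only point requiring a moment's thought is the injectivity of $\psi : \SU(5) \to \Spin(10)$, since a priori its kernel could be the full centre $\Z_5$; this is dispatched, as noted, by observing that $\rho' \circ \psi = \rho$ is faithful, so no nontrivial element of $\SU(5)$ can be killed.
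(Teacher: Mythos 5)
Your proof is correct and takes essentially the same route as the paper: Theorem~\ref{spin10isagutthm} is obtained by pasting the commutative square of Theorem~\ref{su(5)isagut} with the square encoding Theorem~\ref{atiyahbottshapiro} along the shared $\SU(5)$-action $\rho$ on $\Lambda^*\C^5$. The one thing you supply that the paper silently omits is the verification that $\psi : \SU(5) \to \Spin(10)$ is injective (so the top composite really is a monomorphism); your argument --- $\diff\psi$ injective forces $\ker\psi$ discrete, hence central in the connected group $\SU(5)$, hence inside $\Z_5$, and faithfulness of $\rho = \rho'\circ\psi$ then kills it --- is a clean and correct way to dispatch that point.
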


\chapter{The $\E_6$ Grand Unified Theory}

A grand unified theory based on the exceptional group $\E_6$ first appeared in a 1976 paper by G\"ursey, Ramond and Sikivie \cite{gursey75}. The authors were motivated by the fact that $\E_6$ has as a maximal subgroup $\SU(3) \times \SU(3) \times \SU(3)$: they took these components to be, respectively, the symmetry groups of the left- and right-handed quarks, and the colour group of the quarks, and considered two assignments of this subgroup into a 27 dimensional irrep of $\E_6$. We will not follow their treatment in this chapter, choosing instead to focus on the following ``cascade'' of theories \cite{barbieri80, gursey81, harvey80}:
\begin{equation*}
	\E_6 \to \Spin(10) \to \SU (5) \to G_\mrm{SM}\ .
\end{equation*}

We will first construct $\E_8$ and $\E_6$ in section \ref{theconstructionofe8sec} below. In the process, we will see how the group $\Spin (10) \times \U(1) / \Z_4$ arises naturally as a maximal subgroup of $\E_6$, which will lead us directly into the proof that $\E_6$ extends the Standard Model in section \ref{e6isagutsection}. Thereafter, we will analyse the new fermions that appear in the $\E_6$ theory.
 
\section{The Construction of $\E_8$ and $\E_6$}
\label{theconstructionofe8sec}

We closely follow \cite{adams96} in this section. Our strategy will be the following: to describe an unknown group $G$, it is useful to find a known subgroup of maximal rank $H \subset G$ and to give an account of $G / H$.\footnote{See the construction of $\mrm{G}_2$ in the appendix for a prototypical example.} The main theorem of this section is the following, the proof of which will be in stages.

\begin{thm}
\label{bigliegrouptabletheorem}
There exist Lie groups $G$ with subgroups $H$ as specified in the following table.

\begingroup
\renewcommand{\arraystretch}{1.7}
\begin{center}
	\begin{tabular}{c c c c c c c c}
	\hline
	$G$ & Rank & Dim. & Local type of $H$ & Rank & Dim. & $\mf{g}/\mf{h}$ as $\C$ Rep. & Dim. \\
	\hline
	\hline
	$\E_6$ & 6 & 78 & $\Spin (10) \times \U(1) / \Z_4$ & 6 & 46 & $\Delta^+ \otimes \xi^3 + \Delta^- \otimes \xi^{-3}$ & 32\\
	$\E_8$ & 8 & 248 & $\Spin (16) / \Z_2$ & 8 & 120 & $\Delta^+$ & 128\\
		\hline
	\end{tabular}
\end{center}
\endgroup

\noindent Here, $\xi$ is the fundamental representation of $\U(1)$ on $\C$.
\end{thm}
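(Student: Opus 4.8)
The plan is to build each $\mf g$ ``by hand'' from the data in the table, following the ``maximal-rank subgroup plus homogeneous piece'' recipe already used for $\mrm{G}_2$. For a given row write $\mf h$ for the Lie algebra of the listed $H$ and $V$ for the representation in the $\mf g/\mf h$ column, and set $\mf g := \mf h \oplus V$ as a vector space. One then puts a bracket on $\mf g$ out of three ingredients: (i) the given bracket on $\mf h$; (ii) the given action of $\mf h$ on $V$, i.e.\ $[X,v] = X\cdot v = -[v,X]$ for $X\in\mf h$, $v\in V$; and (iii) a nonzero $\mf h$-equivariant bilinear map $\mu\colon V\times V\to\mf h$, which exists and is unique up to scaling by Schur's lemma together with the self-duality/duality of the half-spinor representations from Proposition~\ref{selfdualspinrepprop}: for $\E_8$, $\Delta^+$ of $\Spin(16)$ is self-dual, so $\mu$ is the (essentially unique) projection onto the $\Lambda^2\C^{16}\cong\so(16)$-summand of $\Delta^+\otimes\Delta^+$; for $\E_6$, $\Delta^+$ and $\Delta^-$ of $\Spin(10)$ are mutually dual, so the only nonzero $V$-brackets are $\mu\colon (\Delta^+\otimes\xi^3)\times(\Delta^-\otimes\xi^{-3})\to\so(10)\oplus\mf u(1)$, the $\U(1)$-charge forcing $[\Delta^+,\Delta^+]=[\Delta^-,\Delta^-]=0$. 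Equivalently, one may construct $\E_8$ first and obtain this $\E_6$ as the identity component of the centraliser of a suitable $\SU(3)\subset\E_8$, using $\Spin(16)\supset\Spin(10)\times\Spin(6)$ and $\Spin(6)\cong\SU(4)\supset\SU(3)\times\U(1)$.

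The crux is the Jacobi identity for this bracket. Antisymmetry is built in, and Jacobi for three elements of $\mf h$ is given; for two in $\mf h$ and one in $V$ it is precisely the statement that $V$ is an $\mf h$-module, and for one in $\mf h$ and two in $V$ it is the $\mf h$-equivariance of $\mu$. The genuine content is the case of three vectors $v_1,v_2,v_3\in V$, where one needs $\sum_{\mrm{cyc}}[[v_1,v_2],v_3]=0$. This cyclic sum is an $\mf h$-equivariant map $V^{\otimes3}\to V$; the strategy is to bound the dimension of the space of such maps (via the multiplicity of $V$ in $\Lambda^2V\otimes V$, computed with weights) and then check that, once the scale of $\mu$ is fixed appropriately, the combination we wrote down vanishes---for $\Spin(16)$ most cleanly by restricting to a $\Spin(8)\times\Spin(8)$-subalgebra and invoking triality, or by evaluating both sides on a highest-weight spinor. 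For the $3$-graded $\E_6$ this case is almost free: a triple bracket of three elements of the same $\Delta^{\pm}$ has $\U(1)$-charge $\pm3$ and so must be $0$, and the remaining ``two plus one'' identities collapse, by equivariance, to a single symmetry statement about $\mu$ and the module action that one can read off directly (or inherit from the embedding into $\E_8$). I expect this three-vector Jacobi identity for $\E_8$, together with the compatible normalisation of $\mu$ it pins down, to be the main obstacle in the whole proof.

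Granting that $\mf g$ is a Lie algebra, let $G$ be the connected Lie group with Lie algebra $\mf g$. Then $\dim G = \dim\mf h+\dim V$, i.e.\ $120+128=248$ for $\E_8$ and $46+32=78$ for $\E_6$. Since $\mf h\subset\mf g$ contains a Cartan subalgebra of $\mf h$ and is of maximal rank in $\mf g$, we get $\rank G=\rank\mf h$, namely $8$ for $\Spin(16)/\Z_2$ and (computing the rank of the listed $H$) $6$ for $\Spin(10)\times\U(1)/\Z_4$. Compactness of $G$ follows by exhibiting a $\mf g$-invariant positive-definite inner product: glue the invariant inner products on $\mf h$ and on $V$ (the latter from the unitary structure on the spinor space), the cross term $\mf h\perp V$ vanishing for weight reasons; invariance under the $V$-brackets holds exactly for the relative normalisation forced in the Jacobi step, so compactness rests on the same bookkeeping. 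Hence $G$ is a compact connected Lie group of the stated rank and dimension, and matching its Dynkin diagram with the Killing--Cartan list identifies it as $\E_8$, resp.\ $\E_6$.

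It remains to identify $H$. The group $\widetilde H=\Spin(16)$ (resp.\ $\Spin(10)\times\U(1)$) acts on $\mf g$ preserving $\mf h$ (adjointly) and $V$; this integrates the inclusion $\mf h\hookrightarrow\mf g$, giving a homomorphism $\widetilde H\to G$ onto a maximal-rank subgroup with Lie algebra $\mf h$, and its kernel is read off from the action of the centre on $V$ via the volume element $\omega=e_1e_2\cdots e_{2n}$ of the Clifford algebra. For $\Spin(16)$: $\omega$ projects to $-I\in\SO(16)$, hence acts trivially on $\so(16)\cong\Lambda^2\C^{16}$, and acts as $+1$ on $\Delta^+$; the kernel is $\{1,\omega\}$ and $H\cong\Spin(16)/\Z_2$, of dimension $120$. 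For $\Spin(10)\times\U(1)$: the centre of $\Spin(10)$ is cyclic of order $4$ (generated by $\omega$, with $\omega^2=-1$), $\omega$ acts on $\Delta^{\pm}$ by the conjugate primitive fourth roots $\pm i$, and a short computation shows that pairing $\omega$ with the appropriate root of unity in $\U(1)$ yields a generator of a $\Z_4$ acting trivially on $\Delta^+\otimes\xi^3+\Delta^-\otimes\xi^{-3}$; here one must remember that $\E_6$ does not act faithfully on $\mf e_6$ (its centre is $\Z_3$), so the honest kernel of $\widetilde H\to\E_6$ is this $\Z_4$ rather than the larger subgroup acting trivially on $\mf g$. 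Thus $H\cong\Spin(10)\times\U(1)/\Z_4$, of dimension $46$. Finally $\mf g/\mf h\cong V$ as an $H$-representation by construction, which is exactly the entry in the last column, completing the proof.
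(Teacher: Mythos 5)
Your construction of $\mf{e}_8$ is essentially the paper's: the same decomposition $\mf{spin}(16)+\Delta^+$, the same idea of obtaining the bracket $\Delta^+\otimes\Delta^+\to\mf{spin}(16)$ as the (up to scale unique) equivariant map, and the same case split for Jacobi with three $\Delta^+$ vectors being the hard case needing $n=8$ (you propose triality/highest-weight evaluation; the paper cites Adams' symmetry argument). Where you genuinely diverge is $\E_6$. The paper never verifies the Lie algebra axioms for $\mf e_6$ directly: it constructs $\E_6$ only as the identity component of the centraliser of an $\SU(3)\subset\E_8$, so the Jacobi identity comes for free from $\E_8$, and then reads off $\mf e_6$ by restricting the $\Spin(16)$-decomposition of $\mf e_8$ to $\Spin(10)\times\U(1)\times\SU(3)$. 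You make that the alternative and take the direct $3$-graded construction as primary. That is a legitimate route, but ``almost free'' undersells the remaining work: the $\U(1)$-grading only forces $[\Delta^\pm,\Delta^\pm]=0$, and the mixed Jacobi identity $\mu(v,u_1)\cdot u_2=\mu(v,u_2)\cdot u_1$ (for $v\in\Delta^-$, $u_i\in\Delta^+$) is a genuine Clifford-module identity of the same character as the $\E_8$ three-spinor check, not something that collapses by equivariance alone. The paper's centraliser route is precisely the device that avoids re-proving anything of this kind.

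Two smaller points worth tightening. First, a $\mf g$-invariant positive-definite inner product does not by itself give compactness of a connected $G$ with Lie algebra $\mf g$: one also needs $\mf g$ to have trivial centre, i.e.\ negative-definite Killing form. Your weight computation (the Cartan of $\mf h$ remains Cartan, and the roots of $\mf h$ together with the weights of $V$ span $\mf t^*$) does yield this, but it should be said; the paper instead proves it via the Zassenhaus lemma after computing $(\,\cdot\,,\,\cdot\,)_K = -240(\,\cdot\,,\,\cdot\,)$, and in fact dodges the issue of choosing a connected group by taking $\E_8:=\Aut(\mf e_8)^0$, manifestly compact inside $\OR(248)$. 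Second, your worry that the kernel of $\Spin(10)\times\U(1)\to\Aut(\mf e_6)$ might be $\Z_4$ times the $\Z_3$ centre of $\E_6$ is exactly the sort of ambiguity the centraliser description resolves cleanly: the paper pins down the $\Z_4$ by composing $\Spin(10)\times\U(1)\to\Spin(10)\times\Spin(6)\to\Spin(16)\to\E_8$ and intersecting the known $\Z_2$ kernels of these maps, so no appeal to the adjoint group of $\E_6$ is needed.
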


\begin{rmk} We will see in due course that
	\begin{enumerate}
		\item in $\Spin (10) \times \U(1) / \Z_4$, the $\Z_4$ is generated by $( \prod_1^{10} e_j , i)$.
		\item in $\Spin (16) / \Z_2$, the $\Z_2$ is generated by $\prod_1^{16} e_i$, and
	\end{enumerate}
\end{rmk}

The first column we will fill is that of $\dim \mf{g}/ \mf{h}$, proceeding thereafter to find groups which have the required representations of these dimensions. We begin with the construction of the Lie algebra of $\E_8$.

\subsubsection{The Construction of a Lie Algebra of Type $\E_8$}

For $\E_8$, there is no representation of smaller degree than $\Ad$, so let us use this fact. Take $L + \Delta^+$, where we denote $\mf{spin}(16) = L$, and consider this simultaneously over $\R$ and $\C$; its degree is 120 + $2^7$ = 248, as required.

For a while we can work with $\Spin (2n)$; let us try and define a suitable inner product on its Lie algebra $L$. By proposition \ref{pingrouppprop}, $L \subset \Cl_0 (V)$ has a basis $\{ e_r e_s \mid r < s \}$ and $\Delta^+$ is a representation of $\Spin (2n)$, and hence of $L$ over $\R$, i.e.~for all $a \in L$, $u \in \Delta^+$, we have $[a, u] \in \Delta^+$ satisfying the Jacobi identity, where the multiplication is Clifford multiplication. Assume now that $2n \equiv 0 \mod 8$ and consider $\Delta^+$ as a real representation of $\Spin (2n)$. Choose $(\ ,\ )_{\Delta^+} : \Delta^+ \otimes \Delta^+ \to \R$, a symmetric bilinear non-zero map, invariant under $\Spin (2n)$, i.e. $(gu, gv) = (u, v)$ for $g \in \Spin (2n)$, $u, v \in \Delta^+$; the linearised form of this invariance is $([a,u], v) + (u, [a,v]) = 0$. Since $\Spin (2n)$ is the double cover of $\SO (2n)$, we have $L = \spin (2n) \cong \so (2n)$, which is the space of skew-symmetric matrices;  on matrices the form $(A, B) = \tr AB$ is symmetric, bilinear and real on real matrices. The invariance property for $X \in \GL (2n)$ is $\tr X A X\inv X B X\inv = \tr AB$; if we set $X = \id + tY$ and pass to the limit, we obtain the linearised version: $\tr \left( [Y, A]B + A [Y, B] \right) = 0$; hence $([Y, A], B) + (A, [Y, B]) = 0$. Under this identification, $e_r e_s$ corresponds to the matrix with all entries zero except in positions $(r, s)$ and $(s, r)$ where we have respectively $-2$ and $2$, as we saw in proposition \ref{pingrouppprop}. This gives $(e_r e_s , e_r e_s ) = -8$, so to remove this undesirable factor, we set
\begin{equation*}
	(A, B)_L := -\frac{1}{8} \tr AB
\end{equation*}
so that $(e_r e_s , e_t e_u)_L = \delta_{rt} \delta_{su}$. We now transpose the action $L \otimes \Delta^+ \to \Delta^+$ to get a map $\Delta^+ \otimes \Delta^+ \to L$.

\begin{lem}[]
\label{technicalemmaforactionoflondelta}
	For all $u, v \in \Delta^+$, there is a unique $[u, v] \in L$ such that $(a, [u, v])_L = ([a, u], v)_{\Delta^+}$ for all $a \in L$ and $[u , v]$ is bilinear in $u, v$. Furthermore, if $v, w \in \C \otimes \Delta^+$ are such that
	
	\smallskip
	\begingroup
	\centering
	\renewcommand*{\arraystretch}{1.3}
		\begin{tabular}{l l l}
		$e_{2q - 1} e_{2q} v = iv$ & for all $q$ & \bigg(corresponding to a weight $ \displaystyle \frac{1}{2} \sum_{1}^n x_i$ \bigg),\\
		$e_{2q - 1} e_{2q} w = - iw$ & for all $q$ & \bigg(corresponding to a weight $\displaystyle - \frac{1}{2} \sum_1^n x_i$\bigg),
		\end{tabular}
		
	\endgroup
	\smallskip
	
\noindent and $(v, w) = 1$, then
	\begin{enumerate}
		\item $[v , w] = i \left(e_1 e_2 + e_3 e_4 + \cdots e_{2n -1} e_{2n} \right)$;
		\item $[e_{2q} e_{2r} v, w] = \left( e_{2q - 1} + i e_{2q} \right) \left( e_{2r-1} + i e_{2r} \right)$, $q < r$;
		\item $[e_{2q_1} e_{2q_2} \cdots e_{2q_m} v , w ] = 0$ if $m > 1$ and $q_1 < q_2 < \cdots q_{2m}$.
	\end{enumerate}
\end{lem}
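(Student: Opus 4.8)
The plan is to pin down $[u,v]$ from the non-degeneracy of $(\ ,\ )_L$ and then evaluate it by a weight computation in $\Delta^+$, exploiting that $v$ and $w$ are the highest and lowest weight vectors, so that almost every term in the expansion drops out. Work over $\C$ throughout. Since $(e_re_s,e_te_u)_L=\delta_{rt}\delta_{su}$ is non-degenerate, for fixed $u,v\in\Delta^+$ the linear functional $a\mapsto([a,u],v)_{\Delta^+}$ on $L$ is represented by a unique element $[u,v]\in L$, manifestly bilinear in $u,v$; expanding in the $(\ ,\ )_L$-orthonormal basis $\{e_re_s:r<s\}$ of $L=\spin(2n)$ (Proposition~\ref{pingrouppprop}) and using that $L$ acts on $\Delta^+$ by Clifford multiplication, this reads
\[
  [u,v]=\sum_{r<s}\,((e_re_s)u,\,v)_{\Delta^+}\,e_re_s .
\]
Write $H_p:=e_{2p-1}e_{2p}$ for the Cartan elements. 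By Remark~\ref{weightsofspinrepsrmk} a weight vector of weight $\tfrac12\sum_p\epsilon_px_p$ is a simultaneous $H_p$-eigenvector with eigenvalue $i\epsilon_p$, so $v$ (all $\epsilon_p=+1$) and $w$ (all $\epsilon_p=-1$) span one-dimensional weight spaces, the top and bottom ones; invariance of $(\ ,\ )_{\Delta^+}$ under the torus then forces $((e_re_s)u,w)_{\Delta^+}=0$ unless $(e_re_s)u$ has a nonzero component of the top weight $\tfrac12\sum_kx_k$. Hence in $[u,w]$ only those $e_re_s$ contribute that carry the weight of $u$ exactly onto the top weight.

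The rest runs on raising/lowering operators: put $a_p^{+}:=e_{2p-1}-ie_{2p}$ and $a_p^{-}:=e_{2p-1}+ie_{2p}$. One checks at once that $(a_p^{\pm})^2=0$, $a_p^{+}a_p^{-}+a_p^{-}a_p^{+}=-4$, the $a_p^{\pm}$ for distinct $p$ anticommute, $e_{2p-1}=\tfrac12(a_p^{+}+a_p^{-})$, $e_{2p}=\tfrac i2(a_p^{+}-a_p^{-})$, and $[H_p,a_p^{\pm}]=\pm2i\,a_p^{\pm}$; in particular $a_p^{+}v=0$ and $a_p^{-}w=0$ for every $p$ (else one produces an $H_p$-eigenvalue $\pm3i$), whence $a_p^{+}a_p^{-}v=-4v$. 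Consequently, applying generators to $v$ only lowers: $e_{2p-1}v=\tfrac12a_p^{-}v$, $e_{2p}v=-\tfrac i2a_p^{-}v$, and more generally a product $e_{2q_1}\cdots e_{2q_j}v$ of generators in pairwise distinct blocks equals a scalar multiple of $a_{q_1}^{-}\cdots a_{q_j}^{-}v$, a weight vector whose weight is the top weight with the $j$ signs $q_1,\dots,q_j$ reversed.

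Now the three parts. (i) Since $v$ already sits at the top weight, $(e_re_s)v$ can contain the top weight only when $e_re_s$ fixes the weight, i.e.\ $\{r,s\}=\{2p-1,2p\}$, and then $(e_{2p-1}e_{2p})v=H_pv=iv$; if $r,s$ lie in distinct blocks, $(e_re_s)v$ is a scalar times a vector with two signs reversed. So $[v,w]=\sum_p i\,(v,w)\,e_{2p-1}e_{2p}=i(e_1e_2+e_3e_4+\cdots+e_{2n-1}e_{2n})$, using $(v,w)=1$. (ii) With $u:=e_{2q}e_{2r}v$ and $q<r$, the above gives $u=-\tfrac14\,a_q^{-}a_r^{-}v$, a weight vector with the $q$th and $r$th signs reversed; for $(e_se_t)u$ to reach the top weight, $e_se_t$ must reverse back exactly those two signs, forcing $s\in\{2q-1,2q\}$ and $t\in\{2r-1,2r\}$ — four cases. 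In each, $(e_se_t)u$ is a scalar multiple of $a_q^{+}a_r^{+}u$, and $a_q^{+}a_r^{+}u=4v$ by a one-line computation from $a^{+}a^{-}+a^{-}a^{+}=-4$; the four scalars are $\tfrac14,\tfrac i4,\tfrac i4,-\tfrac14$, so the surviving coefficients are $1,i,i,-1$ and
\[
  [u,w]=e_{2q-1}e_{2r-1}+i\,e_{2q-1}e_{2r}+i\,e_{2q}e_{2r-1}-e_{2q}e_{2r}=(e_{2q-1}+ie_{2q})(e_{2r-1}+ie_{2r}).
\]
(iii) For $\xi:=e_{2q_1}e_{2q_2}\cdots e_{2q_{2m}}v$ with $q_1<\cdots<q_{2m}$, the same reasoning shows $\xi$ is a scalar multiple of $a_{q_1}^{-}\cdots a_{q_{2m}}^{-}v$, a weight vector with $2m$ signs reversed relative to the top weight; when $m>1$ that is at least four reversed signs, whereas a single $e_se_t$ reverses at most two signs, so $(e_se_t)\xi$ never hits the top weight, every coefficient in the master formula vanishes, and $[\xi,w]=0$.

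The conceptual content is entirely the extremality of the weights of $v$ and $w$, which collapses the sum; the only genuine work is the sign-and-$\tfrac i2$ bookkeeping in the second paragraph and the verification $a_q^{+}a_r^{+}(e_{2q}e_{2r}v)=4v$, which I expect to be the most error-prone step rather than a real obstacle.
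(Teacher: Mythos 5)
Your proof is correct and follows essentially the same route as the paper: define $[u,v]$ via nondegeneracy of $(\,,\,)_L$, expand in the orthonormal basis $\{e_re_s : r<s\}$ as $[u,w]=\sum_{r<s}((e_re_s)u,w)_{\Delta^+}\,e_re_s$, and kill almost every coefficient by observing that $w$ is the lowest-weight vector so only the top-weight component of $(e_re_s)u$ can pair with it. The paper does exactly this weight elimination (it even records it parenthetically in its proof of (ii)) but simply announces the surviving coefficients $1,i,i,-1$; your raising/lowering operators $a_p^\pm=e_{2p-1}\mp ie_{2p}$ just make that bookkeeping explicit, and the key computation $a_q^+a_r^+(e_{2q}e_{2r}v)=4v$ checks out.
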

\begin{proof}
	Clearly, $([a, u], v)_{\Delta^+}$ is a linear function of $a$. Since the inner product on $L$ is non-singular, we must have $([a, u] , v)_{\Delta^+} = (a, b)_L$ for some $b = [u, v] \in L$. Since $([a, u] , v)_{\Delta^+}$ is bilinear in $u$ and $v$, so is $b$. We proceed to derive the explicit formulae. All the inner products in the following are over $\Delta^+$.
	\begin{enumerate}
		\item First we have $(e_{2q - 1} e_{2q} v , w) = (iv, w) = i$ and $(e_r e_s v , w) = 0$ if $e_r e_s$ is not one of the basis elements $e_{2q -1} e_{2q}$. Thus $[v, w]$ is paired to $i$ if $a = e_{2q - 1} e_{2q}$ and to $0$ for all the other basis elements.
		\item For simplicity of notation, consider $[e_2 e_4 v, w]$. Then $( e_r e_s e_2 e_4 v, w) = 0$ except when $(r, s) = (1, 3), (1, 4), (2, 3)$ or $(2, 4)$, when we get respectively $1, i, i, -1$. This yields $[e_2 e_4 v, w] = e_1 e_3 + i e_1 e_4 + i e_2 e_3 - e_2 e_4 = (e_1 + i e_2) (e_3 + i e_4)$, as desired. (Notice that $e_2 e_4 v$ has weight $\frac{1}{2} (-x_1 - x_2 + x_3 + \cdots + x_n)$ while $w$ has weight $\frac{1}{2} (-x_1 - \cdots - x_n)$ so that $[e_2 e_4v, w]$ must have weight $-x_1 -x_2$. In fact, $e_1 + i e_2$ has weight $-x_1$ and $e_3 + i e_4$ has weight $-x_2$.)
		\item It suffices to note that for all $e_r e_s, r < s$, we have $(e_r e_s e_{2q_1} e_{2q_2} \cdots e_{2q_{2m}} v, w) = 0$. \qedhere
	\end{enumerate}
\end{proof}

\begin{rmk}
	Note that the map $[\ ,\ ] : \Delta^+ \otimes \Delta^- \to L$ is invariant under $\Spin (2n)$ because everything in the construction is invariant under $\Spin (2n)$. The linearised form of invariance, i.e.~invariance under $L$, is
\begin{equation*}
	[a , [u, v]] = [[a, u], v] + [u, [a, v]]\ ;
\end{equation*}
this is established as follows. It is sufficient to show that for all $b \in L$,
\begin{equation*}
	(-b, [a, [u,v]])_L + (b, [[a, u], v])_L + (b, [u, [a, v]])_L
\end{equation*}
is zero. This expression, using the invariance of $(\ ,\ )_L$ under $L$ and the definition of $[u, v]$, is equivalent to
\begin{equation*}
	([a, b], [u, v])_L + ([b, [a, u]], v)_L + ([b, u], [a, v])_L\ ;
\end{equation*}
the invariance of $(\ ,\ )_{\Delta^+}$ means that this in turn can be written as
\begin{equation*}
	([[a, b], u], v)_{\Delta^+} + ([b, [a, u]], v)_{\Delta^+} - ([a, [b, u]], v)_{\Delta^+} = 0\ ,
\end{equation*}
where we used the properties of the action of $L$ on $\Delta^+$.
\end{rmk}

We now proceed to give $L + \Delta^+$ the inner product with $L$ and $\Delta^+$ orthogonal, and
\begin{equation*}
	(a + u, b + v) = (a , b)_L + (u, v)_{\Delta^+}
\end{equation*}
for all $a, b \in L$ and $u, v \in \Delta^+$. The Lie bracket $[a, u]$ is as in $L$; $[a, u]$ as the action of $L$ on $\Delta^+$ satisfies $[u, a] = - [a, u]$, and $[u, v]$ as in \ref{technicalemmaforactionoflondelta}.

\begin{thm}
	If $2n = 16$, $L + \Delta^+$ becomes a Lie algebra with an invariant inner product.
\end{thm}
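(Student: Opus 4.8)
The plan is to check in turn that the bracket on $L + \Delta^+$ is antisymmetric, that the stated form is $\mathrm{ad}$-invariant, and that the Jacobi identity holds; much of the scaffolding is already in the Remark preceding the statement. Antisymmetry on $L \times L$ is that of $\spin(16)$, and $[a,u] = -[u,a]$ holds by definition. Antisymmetry of $[\ ,\ ] : \Delta^+ \times \Delta^+ \to L$ I would deduce from the defining relation $(a,[u,v])_L = ([a,u],v)_{\Delta^+}$, the linearised invariance $([a,u],v)_{\Delta^+} + (u,[a,v])_{\Delta^+} = 0$, and non-degeneracy of $(\ ,\ )_L$: for every $a \in L$ one has $(a,[v,u])_L = ([a,v],u)_{\Delta^+} = (u,[a,v])_{\Delta^+} = -([a,u],v)_{\Delta^+} = -(a,[u,v])_L$, so $[v,u] = -[u,v]$.

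For invariance of the inner product I would write out $([\xi,\eta],\zeta) + (\eta,[\xi,\zeta])$ and split into cases according to how many of $\xi,\eta,\zeta$ lie in $\Delta^+$. With none in $\Delta^+$ this is the invariance of $-\tfrac18 \tr AB$ on $\spin(16)$; with exactly one, it is the statement that $\Delta^+$ is an $L$-module preserving $(\ ,\ )_{\Delta^+}$; with exactly two, it is precisely the relation defining $[u,v]$ (after transposing one term and using the symmetry of $(\ ,\ )_{\Delta^+}$); and with all three in $\Delta^+$ it is trivial, since then each bracket lies in $L$, which is orthogonal to $\Delta^+$, so both terms vanish.

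The Jacobi identity I would treat by the same case split on the number of $\Delta^+$ arguments. With at most one $\Delta^+$ argument we are merely asserting that $L$ is a Lie algebra and that $\Delta^+$ is a module over it. With two $\Delta^+$ arguments it is exactly the identity $[a,[u,v]] = [[a,u],v] + [u,[a,v]]$ for $a \in L$, $u,v \in \Delta^+$, already established in the Remark. The only substantive case is three $\Delta^+$ arguments, where one must show
\[
  J(u,v,w) := [u,[v,w]] + [v,[w,u]] + [w,[u,v]] = 0 \qquad (u,v,w \in \Delta^+).
\]
Each inner bracket lands in $L$ and acts back on $\Delta^+$, so $J$ is a trilinear map into $\Delta^+$; by the antisymmetry proved above it is alternating, hence defines a $\Spin(16)$-equivariant map $\Lambda^3 \Delta^+ \to \Delta^+$. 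It therefore suffices to check $J = 0$ on triples of weight vectors, and — using invariance under the maximal torus (so $J$ of a weight-vector triple is a weight vector for the sum of the weights, which must then be a weight of $\Delta^+$) together with the Weyl group of $D_8$, namely permutations and even sign changes of the $x_i$ — on only a few representative triples: those built from a pair $v,w$ with $e_{2q-1}e_{2q}v = iv$, $e_{2q-1}e_{2q}w = -iw$ for all $q$ and $(v,w) = 1$, together with vectors of the form $e_{2q_1}\cdots e_{2q_{2m}}v$. On these, formulas (i)--(iii) of Lemma \ref{technicalemmaforactionoflondelta} give the inner brackets explicitly (including the vanishing of the higher ones), the outer action is Clifford multiplication governed by Corollary \ref{cliffordbasiscor}, and substituting and collecting terms yields $J = 0$. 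I expect this last computation to be the real obstacle: it is the one place where the hypothesis $2n = 16$ is used essentially — the cyclic sum of Clifford products cancels precisely for this dimension — and it takes some care to arrange the $\Spin(16)$-equivariance so that the infinitely many cases collapse to a short finite list. Once $J \equiv 0$, all axioms hold and $L + \Delta^+$ is a Lie algebra with the invariant form constructed above.
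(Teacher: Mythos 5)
Your proposal follows essentially the same path as the paper's proof sketch: antisymmetry of $[\ ,\ ]:\Delta^+\times\Delta^+\to L$ via non-degeneracy of $(\ ,\ )_L$ and the defining adjointness relation, invariance of the inner product by a case count on how many arguments lie in $\Delta^+$, and the Jacobi identity by the same four-way case split, with the all-$\Delta^+$ case deferred to a weight-vector computation that hinges on $2n=16$ (which the paper likewise only sketches, citing Adams). The extra detail you supply on the invariance cases and on how the $\Spin(16)$-equivariance reduces the final Jacobi check to a finite list of weight triples is a faithful expansion of what the paper compresses, not a different argument.
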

\begin{proof}[Proof (Sketch)]
	The inner product is invariant under $L$ by definition, and under $\Delta^+$ by the definition of $[u, v]$. We need to prove anti-commutativity and the Jacobi identity.
	
	Clearly, $[a, b] = -[b, a]$ since $L$ is a Lie algebra, and we define $[u, a]$ to be $- [a, u]$. To see that $[u, v] = - [v, u]$, observe that for all $a \in L$, $u, v \in \Delta^+$, we have
	\begin{align*}
		(a, [u, v] + [v, u])_L &= ([a, u] , v)_L + ([a, v], u)_L\\
		&=([a, u] , v)_L + (u, [a, v])_L\\
		&= 0\ ,
	\end{align*}
	where the penultimate equality follows from the symmetry of $(\ ,\ )_L$, and the last one from the invariance of $(\ ,\ )_L$ under $L$. Since this is true for all $a$, we have $[u, v] + [v, u] = 0$.
	
	For the Jacboi identity, there are several cases that need to be discussed
	\begin{itemize}
		\item Three variables in $L$, none in $\Delta^+$. The identity will hold here since $L$ is a Lie algebra.
		\item Two variables in $L$, one in $\Delta^+$. We have $[[a, b], u] = [a, [b, u]] - [b, [a, u]] = 0$. Thus $[a, [b,u]] + [u, [a,b]] + [b, [u, a]] = 0$.
		\item One variable in $L$, two in $\Delta^+$. Here, by the invariance of the bracket under $L$, $[a, [u, v]] = [[a, u], v] + [u, [a, v]]$, which leads to the Jacobi identity.
		\item All three variables in $\Delta^+$. This is where one needs the fact that $n = 8$.  Reference \cite[pp.~40--42]{adams96} argues this case in full detail, setting down a general procedure for checking identities of this type by using symmetry. \qedhere
	\end{itemize}
\end{proof}

\subsubsection{The Construction of a Lie Group of Type $\E_8$}

Our construction of the simple, connected, compact Lie group with Lie algebra of type $\E_8$ proceeds according to the following steps.
\begin{enumerate}
	\item Take the Lie algebra $L + \Delta^+$ (over $\R$ or $\C$).
	\item Take the group of automorphisms of this Lie algebra; this is closed subgroup of $\GL (L + \Delta^+)$ preserving the Lie bracket.
	\item Take the identity component and call it $\E_8$. (In fact, the result of step 2 is already connected.)
\end{enumerate}

All our constructions are invariant under $\Spin (16)$, over $\R$ or $\C$, so we get a map $\Spin (16) \to \Aut (L + \Delta^+)$, and since $\Spin (16)$ is connected, we get a homomorphism into $\E_8$. To find the kernel, note that $e_1 e_2 \cdots e_{16} \in \Spin (16)$ acts as $i^8 = 1$ on $\Delta^+$. It covers $-\id \in \SO (16)$, so it acts as $-1$ on $\R^{16}$ but it acts as $1$ on $L$. Therefore it acts as 1 on $L + \Delta^+$. This and the identity are the only elements which act as 1 on $L + \Delta^+$, so we get an embedding $\Spin (16)/ \Z_2 \to \E_8$. We now check that $\E_8$ has the required properties.

Let $A$ be a finite dimensional algebra over $\R$ or $\C$ (for example, a Lie algebra) and let $\Aut (A)$ be the group of automorphism of $A$, that is, linear bijections $\alpha : A \to A$ such that $\alpha (ab) = \alpha (a) \alpha (b)$. Then $\Aut (A)$ is a closed subgroup of $\GL (A)$, hence a Lie group.

\begin{defn}
	A linear map $\delta : A \to A$ is a \emph{derivation} if $\delta (ab) = (\delta a) b + a \delta (b)$.
\end{defn}

The commutator $[ \delta, \delta']$ is a derivation if $\delta$ and $\delta'$ are, so the derivations form a Lie algebra, $\mf{der} (A)$. We then have

\begin{lem}
\
	The Lie algebra of $\Aut (A)$, $\mf{aut} (A)$, is the algebra of derivations of $A$.
\end{lem}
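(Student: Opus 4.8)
The plan is to identify the tangent space $T_e \Aut(A)$ explicitly as a subspace of $T_e \GL(A) = \End(A)$ and show it coincides with $\mf{der}(A)$. First I would recall that $\Aut(A)$ is cut out inside $\GL(A)$ by the polynomial equations $\alpha(ab) = \alpha(a)\alpha(b)$, one for each pair $a, b \in A$ (equivalently, for each pair of basis vectors), so it is a closed subgroup and hence a Lie subgroup of $\GL(A)$; its Lie algebra is therefore a subalgebra of $\mf{gl}(A) = \End(A)$ with bracket the commutator, as in theorem \ref{bigthmaboutliealgebras}.

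The core of the argument is the two inclusions. For $\mf{der}(A) \subseteq \mf{aut}(A)$: given a derivation $\delta$, consider the one-parameter subgroup $t \mapsto \exp(t\delta) \in \GL(A)$ and check it lands in $\Aut(A)$. The quickest route is to verify that both sides of $\exp(t\delta)(ab) = \big(\exp(t\delta)a\big)\big(\exp(t\delta)b\big)$ satisfy the same linear ODE in $t$ with the same initial value at $t = 0$: differentiating the left side gives $\delta$ applied to a product, differentiating the right side and using the derivation property reproduces the Leibniz rule, so by uniqueness the two curves agree. Hence $\exp(t\delta)$ is an automorphism for all $t$, so $\delta = \frac{\diff}{\diff t}\big|_{t=0} \exp(t\delta) \in \mf{aut}(A)$.

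For the reverse inclusion $\mf{aut}(A) \subseteq \mf{der}(A)$: let $\delta \in \mf{aut}(A)$, so there is a smooth curve $\alpha : (-\varepsilon, \varepsilon) \to \Aut(A)$ with $\alpha(0) = \id$ and $\alpha'(0) = \delta$. Differentiating the identity $\alpha(t)(ab) = \alpha(t)(a)\cdot \alpha(t)(b)$ at $t = 0$ using the product rule gives $\delta(ab) = \delta(a)\cdot b + a \cdot \delta(b)$, i.e.\ $\delta$ is a derivation. Finally, I would note the remark already made in the text that $[\delta, \delta']$ is again a derivation when $\delta, \delta'$ are, so $\mf{der}(A)$ is genuinely a Lie subalgebra of $\End(A)$ under the commutator, consistent with it being $\mf{aut}(A)$.

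The only mildly delicate point — the main obstacle such as it is — is the $\mf{der}(A) \subseteq \mf{aut}(A)$ direction, since it requires producing an actual one-parameter subgroup of automorphisms from an infinitesimal derivation; the ODE-uniqueness argument (or equivalently, expanding $\exp(t\delta)(ab)$ as a series and recognising the binomial identity $\delta^n(ab) = \sum_k \binom{n}{k} \delta^k(a)\,\delta^{n-k}(b)$, which follows by induction from the Leibniz rule) handles it cleanly. Everything else is a routine differentiation of defining relations.
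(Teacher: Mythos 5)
Your proof is correct and follows essentially the same strategy as the paper: differentiate the defining relation $\alpha_t(ab) = \alpha_t(a)\alpha_t(b)$ at $t=0$ for one inclusion, and exponentiate a derivation via $\alpha_t = \sum_n \tfrac{t^n}{n!}\delta^n$ together with the Leibniz binomial identity $\delta^n(ab) = \sum_{i+j=n}\binom{n}{i}(\delta^i a)(\delta^j b)$ for the other. The paper uses the power-series verification directly rather than the ODE-uniqueness phrasing you lead with, but you note that equivalence yourself, so the two arguments coincide in substance.
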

\begin{proof}
	First we show that $\mf{aut} (A) \subset \mf{der} (A)$. To see this, take a short curve in $\mf{aut} (A)$, $\alpha_t = 1 + \gamma t$ starting at the identity. Then $\alpha_t (ab) = \alpha_t (a) \alpha_t (b)$, which gives us $\gamma (ab) = \gamma (a) b + a \gamma (b)$, so $\gamma \in \mf{aut} (A)$ is a derivation.
	
	For the reverse inclusion, consider a derivation $\delta : A \to A$; we have (by induction) the standard formula $\delta^n (ab) =\sum_{i + j = n} \binom{n}{i} (\delta^i a) (\delta^j b)$. Define now $\alpha_t : A \to A$ by $\alpha_t = \sum_{n = 0}^{\infty} \frac{t^n}{n!} \delta^n$. Then
	\begin{equation*}
		\alpha_t (ab) = \sum_{i, j} \frac{t^{i + j}}{i!\,j!} (\delta^i a)(\delta^j b) = (\alpha_t a )(\alpha_t b)\ ,
	\end{equation*}
	so $\alpha_t \in \Aut (A)$, and the tangent vector to $\alpha$ is $\delta \in \mf{aut} (A)$.
\end{proof}

The most familiar example of a derivation is the map $\ad_x (y) = [x, y]$ for $x \in A$. It is easy to see that we further have the formula $[\ad_x, \delta] = \ad_{x \delta}$ for $\delta \in \mf{der} (A)$. The next result says that $\ad_x$ is an isomorphism if the Killing form is non-degenerate.


\begin{lem}[Zassenhaus]
\label{zassenhauslemma}
	If $A$ is a Lie algebra with non-degenerate Killing form, then its algebra of derivations is $A$.
\end{lem}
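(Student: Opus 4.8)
The goal is to show that every derivation of $A$ is inner, i.e.~that $\ad : A \to \mf{der}(A)$ is surjective; write $D := \mf{der}(A)$ (a finite-dimensional Lie algebra, since $D \subseteq \mathfrak{gl}(A)$) and $I := \ad(A) \subseteq D$. The plan is to show that $I$ is an ideal of $D$ on which the Killing form of $D$ restricts non-degenerately, and then to argue that its orthogonal complement must vanish, forcing $I = D$.

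First I would record two preliminaries. Since $(\,\cdot\,,\,\cdot\,)_K$ is non-degenerate, $\ad$ is injective: its kernel is the centre $Z(A)$, and any $z \in Z(A)$ has $\ad_z = 0$, whence $(z, y)_K = \tr(\ad_z \circ \ad_y) = 0$ for all $y \in A$, so $z = 0$. Second, the formula $[\ad_x, \delta] = \ad_{x\delta}$ recorded above gives $[I, D] \subseteq I$, so $I$ is an ideal of $D$. I would then use the standard fact that the Killing form of an ideal is the restriction of the ambient one: for $x, y \in I$ the endomorphism $\ad_x^D \circ \ad_y^D$ of $D$ has image inside $I$, so its trace on $D$ equals the trace of $\ad_x^I \circ \ad_y^I$ on $I$; hence $(\,\cdot\,,\,\cdot\,)_D$ restricted to $I$ is the Killing form of $I$, which the isomorphism $\ad : A \to I$ identifies with $(\,\cdot\,,\,\cdot\,)_K$. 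In particular $(\,\cdot\,,\,\cdot\,)_D|_I$ is non-degenerate.

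Next let $I^{\perp}$ be the orthogonal complement of $I$ in $D$ with respect to $(\,\cdot\,,\,\cdot\,)_D$. The linearised form of the $\Ad$-invariance proved earlier, $([z, x], y)_D + (x, [z, y])_D = 0$, shows that $I^{\perp}$ is again an ideal of $D$; and because $(\,\cdot\,,\,\cdot\,)_D|_I$ is non-degenerate, $I \cap I^{\perp} = 0$. Since both $I$ and $I^{\perp}$ are ideals, $[I^{\perp}, I] \subseteq I \cap I^{\perp} = 0$. The key step is then: for $\delta \in I^{\perp}$ and any $x \in A$, $\ad_{x\delta} = [\ad_x, \delta] \in [I, I^{\perp}] = 0$, so $x\delta = 0$ by injectivity of $\ad$; as $x$ ranges over $A$ this forces $\delta = 0$, hence $I^{\perp} = 0$.

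Finally I would close with a dimension count: non-degeneracy of $(\,\cdot\,,\,\cdot\,)_D|_I$ makes the linear map $D \to I^{*}$, $d \mapsto (d, \,\cdot\,)_D|_I$, surjective (its restriction to $I$ is already an isomorphism onto $I^{*}$) with kernel $I^{\perp}$, so $\dim D = \dim I + \dim I^{\perp} = \dim I$; as $I \subseteq D$, this gives $D = I = \ad(A)$. I expect the only genuinely delicate point to be the bookkeeping with the two Killing forms --- keeping straight on which algebra the restriction-to-an-ideal identity and the invariance identity are applied --- while everything else is formal.
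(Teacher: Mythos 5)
Your proof is correct but takes a genuinely different route from the paper's. Both proofs establish injectivity of $\ad$ by the same computation (if $\ad_x = 0$ then $(w,x)_K = 0$ for all $w$). For surjectivity the paper follows Jacobson: given a derivation $\delta$, it uses non-degeneracy of $(\,\cdot\,,\,\cdot\,)_K$ to produce $d \in A$ with $(d,x)_K = \tr(\ad_x\,\delta)$ for all $x$, sets $\partial = \delta - \ad_d$, and verifies directly that $(x\partial, y)_K = 0$ for all $x,y$, whence $\partial = 0$. You instead argue structurally inside $D = \mf{der}(A)$: the bracket identity $[\ad_x,\delta] = \ad_{x\delta}$ makes $I = \ad(A)$ an ideal, the restriction-to-an-ideal property of Killing forms (together with $\ad : A \to I$ being an isomorphism) makes $(\,\cdot\,,\,\cdot\,)_D|_I$ non-degenerate, and then the orthogonal ideal $I^{\perp}$ satisfies $[I,I^{\perp}] \subseteq I \cap I^{\perp} = 0$, which by injectivity of $\ad$ forces $I^{\perp} = 0$ and hence $D = I$. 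The paper's version is shorter and avoids any discussion of the Killing form of $D$; it constructs the inverting element $d$ explicitly from $\delta$. Your version is more conceptual, exhibiting $\ad(A)$ as a non-degenerate ideal of $\mf{der}(A)$ whose orthogonal complement must vanish --- a decomposition that fails exactly when the Killing form is degenerate, so it also illuminates why the hypothesis is needed. One small redundancy in your write-up: once you have shown $I^{\perp}=0$, the closing dimension count $\dim D = \dim I + \dim I^{\perp}$ is exactly the standard $D = I \oplus I^{\perp}$ decomposition (valid because $(\,\cdot\,,\,\cdot\,)_D|_I$ is non-degenerate), so you could equally well have invoked that decomposition first and then concluded $D = I$ directly from $I^{\perp}=0$; the argument is circular-free either way, just stated in a slightly roundabout order.
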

\begin{proof}
	To see that $\ad_x$ is injective, suppose that $\ad_x = [x, y] = 0$ for all $y \in A$. Then $[w, [x, y]] = 0$ for all $y$, i.e.~$[w, [x, \cdot]]$ is the zero function, so $\tr [w, [x, \cdot]] = (w, x)_K = 0$ for all $w$. This implies that $x = 0$, as the Killing form is non-degenerate.
	
	To prove that $\ad_x$ is surjective\footnote{We follow \cite[p.~74]{jacobson79}.}, let us first identify $\ad_x$ with $x$, and so extend the Killing form from elements $x$ to derivations. Now fix a derivation $\delta$; the linear map $x \mapsto \tr (\ad_x) \delta$ is a then a linear mapping of $A$ into $\C$, i.e.~an element of the dual vector space $A^*$; since $(\ ,\ )_K$ is non-degenerate, it follows that there exists an element $d \in A$ such that $(d, x)_K = \tr (\ad_x) \delta$ for all $x \in A$. Let us denote $\partial := \delta - \ad_d$. Then,
	\begin{equation*}
		\tr (\ad_x) \partial = \tr (\ad_x) \delta - \tr \ad_x \ad_d = \tr (\ad_x) \circ \delta - (d, x)_K = 0\ .
	\end{equation*}
	Now consider, for $x, y \in A$,
	\begin{align*}
		(x\partial, y)_K &= \tr \ad_{x \partial} \ad_y\\
		&= \tr [\ad_x, \partial] \ad_y\\
		&= \tr \left( (\ad_x) \partial \ad_y - \partial \ad_x \ad_y \right)\\
		&= \tr ( \partial \ad_y \ad_x - \partial \ad_x \ad_y)\\
		&= \tr \partial [\ad_y, \ad_x]\\
		&= \tr \partial \ad_{yx} = 0\ ,
	\end{align*}
	by the above result. Since $(\ ,\ )_K$ is non-degenerate, this implies that $\partial = 0$; hence $\delta = \ad_d$ for some $d \in A$.
\end{proof}

\begin{lem}[]
\label{killingformone8}
	The Killing form on $L + \Delta^+$ is non-singular: indeed, $(x, y)_K = -240 (x, y)$.
\end{lem}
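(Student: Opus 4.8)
The plan is to play off against each other the two invariant symmetric bilinear forms available on $\mf g := L + \Delta^+$: the Killing form $(\,\cdot\,,\cdot\,)_K$ and the inner product $(\,\cdot\,,\cdot\,)$ just constructed. First I would record two structural facts about $(x,y)_K = \tr_{\mf g}(\ad_x\,\ad_y)$. With respect to the splitting $\mf g = L \oplus \Delta^+$, for $a \in L$ the operator $\ad_a$ preserves each summand, whereas for $u \in \Delta^+$ the operator $\ad_u$ interchanges them (it sends $a' \mapsto [u,a'] \in \Delta^+$ and $v \mapsto [u,v] \in L$). Hence $\ad_a\ad_u$ and $\ad_u\ad_a$ are block-off-diagonal, so traceless. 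This gives at once that $L$ and $\Delta^+$ are $(\,\cdot\,,\cdot\,)_K$-orthogonal (exactly as for $(\,\cdot\,,\cdot\,)$), and that for $x = a + u$ one has $(x,x)_K = (a,a)_K + (u,u)_K$. By polarisation it therefore suffices to prove $(a,a)_K = -240\,(a,a)_L$ for $a \in L$, and $(u,u)_K = -240\,(u,u)_{\Delta^+}$ for $u \in \Delta^+$.

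For $a \in L$: restricting $\ad$ of $\mf g$ to $L$ gives precisely the $L$-module $L \oplus \Delta^+$ (immediate from $[L,L]\subseteq L$ and $[L,\Delta^+]\subseteq\Delta^+$), so $(a,a)_K = \tr_L(\ad_a^2) + \tr_{\Delta^+}(a^2)$, the second trace being taken over the Clifford action. Since $\so(16)$ is simple, its Killing form and the trace form of the half-spinor representation are each a fixed multiple of $\tr(AB)$, hence of $(\,\cdot\,,\cdot\,)_L$; so it is enough to evaluate both at $a = e_1 e_2$, where $(e_1 e_2, e_1 e_2)_L = 1$. By Proposition \ref{pingrouppprop} the element $e_1 e_2$ corresponds to the $16\times 16$ matrix $A$ with $A_{12} = -2$, $A_{21} = 2$ and all other entries $0$, so $\tr(A^2) = -8$ and, using that the Killing form of $\so(N)$ is $(N-2)\tr(AB)$, $\tr_L(\ad_{e_1 e_2}^2) = 14\cdot(-8) = -112$; by Remark \ref{weightsofspinrepsrmk}, $e_1 e_2$ acts on each of the $2^7 = 128$ weight vectors of $\Delta^+$ with eigenvalue $i\epsilon_1 = \pm i$, so $(e_1 e_2)^2 = -\id$ on $\Delta^+$ and $\tr_{\Delta^+}((e_1 e_2)^2) = -128$. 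Adding, $(e_1 e_2, e_1 e_2)_K = -240$, so $(\,\cdot\,,\cdot\,)_K = -240\,(\,\cdot\,,\cdot\,)$ on all of $L$.

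For $u \in \Delta^+$: working over $\C$, $(\,\cdot\,,\cdot\,)_K|_{\Delta^+}$ is a $\Spin(16)$-invariant symmetric bilinear form on $\Delta^+$, which is irreducible by Proposition \ref{diracspinorprop}; by Schur's lemma the space of invariant bilinear forms on $\Delta^+$ is at most one-dimensional, so $(\,\cdot\,,\cdot\,)_K|_{\Delta^+} = \mu\,(\,\cdot\,,\cdot\,)_{\Delta^+}$ for some scalar $\mu$. To identify $\mu$ I would again use traces: since $\ad_u$ interchanges $L$ and $\Delta^+$, the identity $\tr(\phi\psi) = \tr(\psi\phi)$ yields $(u,u)_K = \tr_L(\ad_u^2|_L) + \tr_{\Delta^+}(\ad_u^2|_{\Delta^+}) = 2\,\tr_{\Delta^+}\bigl(v\mapsto [u,[u,v]]\bigr)$, which I would evaluate on $u = v + w$ for the weight vectors $v,w$ of Lemma \ref{technicalemmaforactionoflondelta} normalised so that $(v,w)_{\Delta^+} = 1$ (hence $(u,u)_{\Delta^+} = 2$), using the explicit brackets $[v,w] = i(e_1 e_2 + e_3 e_4 + \cdots + e_{15}e_{16})$ and its companions there; comparison with the constant $-240$ already found forces $\mu = -240$. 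Alternatively, once one knows that $\mf g$ is the simple Lie algebra of type $\E_8$, uniqueness of the invariant symmetric form up to a scalar is global and the single evaluation on $L$ already gives $(\,\cdot\,,\cdot\,)_K = -240\,(\,\cdot\,,\cdot\,)$ everywhere. Either way, non-singularity of the Killing form is immediate, since $(\,\cdot\,,\cdot\,)$ is non-singular and $-240 \neq 0$.

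The routine part is the computation in $L$: it uses only the matrix of $e_r e_s$, the weight data of $\Delta^+$, and the standard formula for the Killing form of $\so(N)$. I expect the main obstacle to be the $\Delta^+$-contribution --- either carrying out the analogous trace over the $128$-dimensional half-spinor module while keeping careful track of weights and signs via Lemma \ref{technicalemmaforactionoflondelta}, or, on the slicker route, first establishing that $L + \Delta^+$ is simple so that one evaluation determines the global constant.
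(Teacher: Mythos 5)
Your reduction to the two summands (orthogonality of $L$ and $\Delta^+$ under $(\,\cdot\,,\cdot\,)_K$ via block structure of $\ad$, followed by Schur on each) and your computation of the constant $\lambda = -240$ on $L$ are correct and agree with the paper: the paper likewise computes $\tr_L \delta = -112$ and $\tr_{\Delta^+}\delta = -128$ at $e_1 e_2$. Where the argument does not close is in determining the constant $\mu$ on $\Delta^+$. Your primary route --- evaluating $2\,\tr_{\Delta^+}(v' \mapsto [u,[u,v']])$ at $u = v + w$ via Lemma \ref{technicalemmaforactionoflondelta} --- is in principle legitimate, but it is a $128$-dimensional trace through the implicitly-defined pairing $\Delta^+ \otimes \Delta^+ \to L$ and you neither carry it out nor give a reason to believe it is as tractable as the $L$ computation; moreover, the phrase ``comparison with the constant $-240$ already found forces $\mu = -240$'' has the logic backwards, since the $L$ computation gives you no leverage on $\mu$ unless you already know the constants agree. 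Your alternative route --- invoking that $\mf{g} = L + \Delta^+$ is ``the simple Lie algebra of type $\E_8$'' --- is circular at this point in the development: the corollary $\mf{e}_8 = L + \Delta^+$ is deduced \emph{from} this lemma via Zassenhaus (lemma \ref{zassenhauslemma}), so its non-degeneracy cannot be assumed.

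The idea that is missing is the paper's short argument that $\lambda = \mu$, which makes the $\Delta^+$ computation unnecessary. Both forms are $\ad$-invariant for all of $A := L + \Delta^+$, so the intertwiner $f$ defined by $(x,y)_K = (fx,y)$ is $A$-equivariant, not merely $\Spin(16)$-equivariant. By the $\Spin(16)$-decomposition, $f = \lambda$ on $L$ and $\mu$ on $\Delta^+$; if $\lambda \neq \mu$, the eigenspace $L$ would be an ideal of $A$. But the only $L$-submodules of $A$ are $0, L, \Delta^+, A$ (as $L$ and $\Delta^+$ are irreducible and non-isomorphic over $L$), and neither $L$ nor $\Delta^+$ is closed under $\ad_{\Delta^+}$ --- since $[\Delta^+,L]\subseteq\Delta^+$ is nonzero and $[\Delta^+,\Delta^+]\subseteq L$ is nonzero by lemma \ref{technicalemmaforactionoflondelta} --- so $A$ has no nontrivial ideals. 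Hence $\lambda = \mu$, and the single evaluation on $L$ gives the result. You do gesture at ``first establishing that $L + \Delta^+$ is simple'' in your closing paragraph, which is exactly the right move, but you neither supply the (short, purely $L$-module-theoretic) argument nor distinguish it from the circular appeal to the $\E_8$ identification; as written, the proof has a genuine gap at its central step.
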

\begin{proof}
	Both $(\ ,\ )_K$ and $(\ ,\ )$ are invariant under $\Spin (16)$ and $L$ and $\Delta^+$ are irreps of $\Spin (16)$ which are not dual to one another. We can define a map $f : A \to A$, where $A = L + \Delta^+$, by $(x, y)_K = (fx, y)$ for all $y \in A$; $A$ splits as a sum of eigenspaces of $f$ invariant under $\Spin (16)$. Thus $(a + u, b + v)_K = \lambda (a, b) + \mu (u, v)$. But $(\ ,\ )_K$ and $(\ ,\ )$ are invariant under $A$ and $A$ is an irrep of $A$, for the only possible subspaces closed under $L$ are $L$ and $\Delta^+$ and they are not closed under $\Delta^+$. Thus $\lambda = \mu$.
	
To find $\lambda$, we calculate $(e_1 e_2, e_1 e_2)_K = \tr (z \mapsto [e_1 e_2 , [e_1 e_2 , z]]) = \tr (\delta)$,say. Now from lemma \ref{technicalemmaforactionoflondelta}, we have $[e_1 e_2 , e_1 e_2] = 0, [e_1 e_2 , e_1 e_r] = 2e_2 e_r , [e_1 e_2, e_2 e_r] = -2e_1 e_r$ and $[e_1 e_2 , e_r e_s] = 0$ for $2 < r < s$. So $\delta (e_1 e_2) = 0$, $\delta (e_1 e_r) = -4 e_1 e_r$, $\delta (e_2 e_r) = -4 e_2 e_r$, $\delta (e_r e_s) = 0$ and $\tr_L \delta = -112$. On $\Delta^+$ the action is Clifford multiplication, so $[e_1 e_2, [e_1 e_2, u]] = e_1 e_2 e_1 e_2 = -u$, so $\tr_{\Delta^+} (\delta) = -128$ and we have $\tr_A (\delta) = -240$. Hence $\lambda = -240$ since $(e_1 e_2, e_2 e_1) = 1$.
\end{proof}

\begin{cor}
	With the above constructions, $\mf{e}_8 = L + \Delta^+$.
\end{cor}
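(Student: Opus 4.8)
The plan is to simply chain together the three preceding lemmas. Recall that $\E_8$ was \emph{defined} as the identity component of $\Aut(L + \Delta^+)$, so its Lie algebra is $\mf{aut}(L + \Delta^+)$. First I would invoke the lemma identifying the Lie algebra of an automorphism group with its derivation algebra, which gives
\begin{equation*}
	\mf{e}_8 = \mf{aut}(L + \Delta^+) = \mf{der}(L + \Delta^+)\ .
\end{equation*}
So the corollary reduces to showing that every derivation of $A := L + \Delta^+$ is inner, and that the map $x \mapsto \ad_x$ identifies $A$ with $\mf{der}(A)$.

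Next I would feed in the input supplied by the previous lemma. By Lemma~\ref{killingformone8}, the Killing form of $A$ satisfies $(x, y)_K = -240\,(x, y)$; since $(\ ,\ )$ is positive definite, hence non-degenerate, $(\ ,\ )_K$ is non-degenerate. The Zassenhaus lemma (Lemma~\ref{zassenhauslemma}) then applies verbatim and shows $\mf{der}(A) = \{ \ad_x \mid x \in A \}$. Moreover, the first half of the proof of that lemma shows that $x \mapsto \ad_x$ is injective (again because $(\ ,\ )_K$ is non-degenerate), and $[\ad_x, \ad_y] = \ad_{[x, y]}$ is just the Jacobi identity; hence $x \mapsto \ad_x$ is an isomorphism of Lie algebras from $A$ onto $\mf{der}(A)$. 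Combining, $\mf{e}_8 = \mf{der}(A) \cong A = L + \Delta^+$, as claimed.

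There is really no obstacle here beyond assembling the pieces in the right order; the only point that requires care --- and which was the substance of Lemma~\ref{killingformone8} --- is the non-degeneracy of the Killing form, without which neither the injectivity of $\ad$ nor the surjectivity furnished by Zassenhaus would be available. As a sanity check one may also confirm directly from the construction that $\E_8$ has rank $8$ (it contains the image of the maximal torus of $\Spin(16)/\Z_2$, which by Remark~\ref{maximaltorusofspinn} has dimension $8$, and it can have no larger torus since $\dim \mf{e}_8 = 120 + 128 = 248$ forces rank $8$ by the structure of simple Lie algebras) and dimension $248$, matching the entry for $\E_8$ in the table of Theorem~\ref{bigliegrouptabletheorem}.
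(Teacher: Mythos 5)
Your proof is correct and follows exactly the route the paper intends: the paper's own proof is the one-liner ``Immediate from the preceding three lemmas,'' and what you have written is precisely the expansion of that chain---identifying $\mf{e}_8 = \mf{aut}(A) = \mf{der}(A)$, invoking non-degeneracy of the Killing form from Lemma~\ref{killingformone8}, and applying Zassenhaus to conclude $\mf{der}(A) = \ad(A) \cong A$. The closing rank/dimension sanity check is a nice bonus but not needed.
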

\begin{proof}
	Immediate from the preceding three lemmas.
\end{proof}

%


\subsubsection{The Construction of $\E_6$}

The map $\Spin (10) \times \Spin (6) \to \Spin (16)$ gives
\begin{equation*}
	\begin{tikzcd}[row sep = large]
		&& \Spin (6) \ar[r] \ar[d] & \Spin (10) \times \Spin (6) \ar[r] & \Spin (16) \ar[r] & E_8\\
		\SU(3) \ar[urr] \ar[r] & \U(3) \ar[r] & \SO(6) &&&
	\end{tikzcd}
\end{equation*}
Consider the centraliser\footnote{The \emph{centraliser} of a subgroup $S \subset G$ is the set of elements in $G$ which commute with $S$.} of the image of this $\SU(3)$ in $\E_8$; we will call its identity component $\E_6$. Note that since the identity component of a topological group is a closed (and normal) subgroup, the group obtained, $\E_6$, is automatically compact. We proceed to check that it has the subgroup of type $\Spin (10) \times \U(1) / \Z_4$, as claimed in theorem \ref{bigliegrouptabletheorem}.

Consider the diagrams
\begin{equation*}
	\Spin (10) \times \Spin (6) \to \Spin (16) \to \E_8\,
\end{equation*}
and
\begin{equation*}
	\begin{tikzcd}[row sep = large]
		(z, g) \ar[d, mapsto] & S^1 \times \SU(3) \ar[d] \ar[rr] && \Spin (6) \ar[d]\\
		(z^2 , g) & S^1 \times \SU(3) \ar[r] & \U(3) \ar[r] & \SO(6)
	\end{tikzcd}
\end{equation*}
where the maps are the obvious ones.\footnote{In proposition \ref{spin6congsu4}, we show that $\Spin (6) \cong \SU(4)$.} We have $\Spin (10) \times S^1 \times \SU(3) \to \Spin (16) \to E_8$ where $\Spin (10) \times S^1$ centralises the (image of) $\SU(3)$ in $\E_8$. It remains to identify the kernel of $\Spin (10) \times S^1 \to E_8$.

The kernel of $\Spin (16) \to \E_8$ is $\Z_2$ generated by $e_1 e_2 \cdots e_{16}$ and the kernel of $\Spin (10) \times \Spin (6) \to \Spin (16)$ is $\Z_2$ generated by $(-1, -1)$. Thus $\ker (\Spin (10) \times \Spin (6) \to \E_8)$ has four elements, generated by $(e_1 \cdots e_{10}, e_{11} \cdots e_{16})$ and $(-1, -1)$, so this kernel is $\Z_4$. To see that all four elements lie in $S^1$, note that $S^1$ is the image of $t \mapsto (\cos t +  (e_{11}e_{12})\sin t)(\cos t + (e_{13}e_{14})\sin t)(\cos t + (e_{15} e_{16})\sin t)$, and for $t = \pi /2, \pi, 3\pi/2$, this image goes through $e_{11}e_{12} \cdots e_{16}$, $-1$, and $-e_{11}e_{12} \cdots e_{16}$ respectively, corresponding to the points $i, -1$ and $-i$ of $S^1$ in $\C$. This completes the check of subgroups mentioned in theorem \ref{bigliegrouptabletheorem}.

\subsubsection{Identification of $\mf{e}_6$}

Call the subgroups $\Spin (10) \times \U(1) / \Z_4$ and $\SU(3)$ $H$ and $K$ respectively.\footnote{Our strategy here is as in the proof of theorem \ref{g2thm}.} We know that $\mf{e}_8 = \mf{spin}(16) + \Delta^+$ as a representation of $\Spin (16)$, and we have a subgroup $H \times K$ mapping into $\Spin (16)$. We wish to determine the centraliser $\E_6$ of $K$ in $\E_8$, so we write $\mf{e}_8$ as a representation of $H \times K$, and take the part fixed under $K$. This is $\mf{e}_6$, and we regard it as a representation of $H$.

When we restrict from $\Spin (16)$ to $\Spin (10) \times \Spin (6)$, $\spin (16)$ restricts to $\spin (10) + \spin (6) + \Lambda^1_{10} \otimes \Lambda^1_6$, where we have introduced the notation $\Lambda^1_{n} := \Lambda^1 (\K^{n})$. We can see this as follows: since $\Spin (10)$ is the double cover of $\SO(n)$, it shares its Lie algebra $\so(n)$, the skew-symmetric $n \times n$ matrices; hence, $\so(16)$ can be decomposed into a block-diagonal $\so(10) + \so(6)$ plus a leftover $10 \times 6$ block, which is isomorphic to $\Lambda^1_{10} \otimes \Lambda^1_{6}$. On the other hand, $\Delta^+$ restricts to $\Delta^+ \otimes \Delta^+ + \Delta^- \otimes \Delta^-$ by proposition \ref{inclusionofspingroupsinside}.

Recall that we defined $\xi$ to be the fundamental representation of $S^1 = \U(1)$. Then, on restricting $\Spin (6)$ to $S^1 \times \SU(3)$ under our map $S^1 \times \SU(3) \to \Spin (6)$, we find that $\spin(6)$ restricts to\footnote{The roots of $\Spin (6)$ which are not roots of $S^1 \times \SU(3)$ are $\pm (x_i + x_j)$.} $\mf{u}(1) + \su(3) + (\xi^4 \otimes \Lambda^2_3 + \xi^{-4} \otimes \Lambda^1_3)$. By looking at weights, we see that $\Delta^+, \Delta^-$ and $\Lambda^1_6$ restrict respectively to $\xi^3 \otimes 1 + \xi\inv \otimes \Lambda^1_3, \xi^{-3} \otimes 1 + \xi \otimes \Lambda^2_3$ and $\xi^2 \otimes \Lambda^1_3 + \xi^{-2} \otimes \Lambda^2_3$. Putting this all together gives
\begin{align*}
	\mf{e}_8 =\ &\spin (10) + \mf{u}(1) + \su(3)\\
	&+ \xi^4 \otimes \Lambda^2_3 + \xi^{-4} \otimes \Lambda^1_3 + \Lambda^1_{10} \otimes \Lambda^1_3 + \Lambda^1_{10} \otimes \xi^{-2} \otimes \Lambda^2_3\\
	&+ \Delta^+ \otimes \xi^3 \otimes 1 + \Delta^+ \otimes \xi\inv \otimes \Lambda^1_3 + \Delta^- \otimes \xi^{-3} \otimes 1 + \Delta^- \otimes \xi \otimes \Lambda^2_3\ , 
\end{align*}
where $\mf{e}_6$ is the part on which $\SU(3)$ acts trivially:
\begin{equation*}
	\mf{e}_6 = \spin (10) + \mf{u}(1) + (\Delta^+ \otimes \xi^3 + \Delta^- \otimes \xi^{-3})\ .
\end{equation*}
This completes the proof of theorem \ref{bigliegrouptabletheorem}.

Finally, note that we have a map $\E_6 \times \SU(3) \to \E_8$, so we may consider $\mf{e}_8$ as a representation of $\E_6 \times \SU(3)$. Regarded thus,
\begin{equation*}
	\mf{e}_8 = \mf{e}_6 + \su(3) + (\xi^{-4} + \Lambda^1_{10} \otimes \xi^2 + \Delta^+ \otimes \xi\inv) \otimes \Lambda^1_3 + (\xi^4 + \Lambda^1_{10} \otimes \xi^{-2} + \Delta^- \otimes \xi) \otimes \Lambda^2_3\ ;
\end{equation*}
this leads to a result of paramount importance for us.

\begin{cor}
\label{e6repscor}
	$\E_6$ has two representations, whose restrictions\footnote{We use the word ``restriction'' here a little loosely. What we mean is that we obtain a representation on $\Spin (10) \times \U(1)$ as it is homomorphic to the subgroup $\Spin (10) \times \U(1)/ \Z_4$ of $\E_6$. We will pick up this point in the next section.} to $\Spin (10) \times \U (1)$ are respectively $\xi^{-4} + \Lambda^1_{10} \otimes \xi^2 + \Delta^+ \otimes \xi\inv$ and $\xi^4 + \Lambda^1_{10} \otimes \xi^{-2} + \Delta^- \otimes \xi$. These are of degree 27 and complex conjugate.
\end{cor}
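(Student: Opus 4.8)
The plan is to read off the two representations directly from the decomposition of $\mf{e}_8$ as a representation of $\E_6 \times \SU(3)$ established just above, and then check the three assertions in turn: that the two summands are genuine $\E_6$-representations, that each has dimension $27$, and that they are complex conjugate.

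First, since $\E_6$ is \emph{defined} as the identity component of the centraliser of $\SU(3)$ in $\E_8$, the $\E_6$-action and the $\SU(3)$-action on $\mf{e}_8 \otimes \C$ commute. Hence the $\SU(3)$-isotypic decomposition of $\mf{e}_8 \otimes \C$ is simultaneously a decomposition into $\E_6$-subrepresentations, and for each $\SU(3)$-irrep $M$ the multiplicity space $\Hom_{\SU(3)}(M, \mf{e}_8 \otimes \C)$ carries a representation of $\E_6$. Applying this to $M = \Lambda^1_3$ and $M = \Lambda^2_3$ extracts from the formula for $\mf{e}_8$ precisely the two $\E_6$-representations whose pull-backs along $\Spin(10) \times \U(1) \to \Spin(10) \times \U(1)/\Z_4 \hookrightarrow \E_6$ are
\begin{align*}
	V &= \xi^{-4} + \Lambda^1_{10} \otimes \xi^2 + \Delta^+ \otimes \xi\inv\ , \\
	W &= \xi^4 + \Lambda^1_{10} \otimes \xi^{-2} + \Delta^- \otimes \xi\ .
\end{align*}
The degree count is then immediate: $\xi^{\pm 4}$ is one-dimensional, $\Lambda^1_{10}$ is ten-dimensional, and $\Delta^\pm$ is $2^{5-1} = 16$-dimensional by proposition \ref{diracspinorprop}, so $\dim V = \dim W = 1 + 10 + 16 = 27$.

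For the conjugacy I would compute $\cc{V}$ summand by summand. Complex conjugation negates $\U(1)$-weights, so $\cc{\xi^{-4}} = \xi^4$ and $\cc{\Delta^+ \otimes \xi\inv} = \cc{\Delta^+} \otimes \xi$; and $\Lambda^1_{10} = \C^{10}$, being the standard representation of $\SO(10)$ pulled back to $\Spin(10)$, is real and hence self-conjugate, so $\cc{\Lambda^1_{10} \otimes \xi^2} = \Lambda^1_{10} \otimes \xi^{-2}$. Finally, for a compact group the complex conjugate of an irrep is its dual, and by proposition \ref{selfdualspinrepprop} the half-spinor representations of $\Spin(2n)$ are dual to one another when $n$ is odd; taking $n = 5$ gives $\cc{\Delta^+} \cong \Delta^-$. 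Assembling these yields $\cc{V} \cong \xi^4 + \Lambda^1_{10} \otimes \xi^{-2} + \Delta^- \otimes \xi = W$, as claimed.

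The only point requiring genuine care is the first one --- arguing that the tensor factors in the $\E_6 \times \SU(3)$-decomposition of $\mf{e}_8$ are bona fide $\E_6$-representations and not merely representations of the subgroup $\Spin(10) \times \U(1)/\Z_4$; everything after that is bookkeeping. One could alternatively obtain the conjugacy of $V$ and $W$ without the explicit spinor computation: $\mf{e}_8 \otimes \C$ is self-conjugate since it is the complexification of a real representation, while $\mf{e}_6 \otimes \C$ and $\mf{sl}(3, \C)$ are self-conjugate as complexified adjoint representations, so conjugating the decomposition and matching $\SU(3)$-isotypic components forces $\cc{V} \cong W$.
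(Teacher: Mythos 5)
Your proposal is correct and follows the same route as the paper: both read off the two $\E_6$-representations as the multiplicity spaces of $\Lambda^1_3$ and $\Lambda^2_3$ in the $\E_6 \times \SU(3)$-decomposition of $\mf{e}_8$ computed immediately beforehand. The paper states the corollary without justification, so your degree count and summand-by-summand conjugacy check (invoking $\cc{\Delta^+} \cong (\Delta^+)^* \cong \Delta^-$ from proposition~\ref{selfdualspinrepprop} with $n = 5$ odd, the reality of $\Lambda^1_{10}$, and negation of $\U(1)$-weights), together with the alternative conjugacy argument via self-conjugacy of $\mf{e}_8 \otimes \C$ and uniqueness of the $\SU(3)$-isotypic decomposition, supply exactly the details the paper leaves implicit.
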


\section{The $\E_6$ Extension of $\Spin (10)$}
\label{e6isagutsection}

We only need a few more things to be able to write down a theorem for $\E_6$ as a grand unified theory. Firstly, we have not shown that the above 27-dimensional representations of $\E_6$ are irreducible. This is in fact the case, but the proof of this result is unfortunately quite involved, and we will omit it in this paper; the interested reader is referred to \cite[Ch.~11]{adams96}. 

The second thing that we need to check is that these representations, call them $N$ and $\cc{N}$, are unitary. This seems problematic, since we have no direct description of them; the only thing we know is their dimension, and how they reduce to $\Spin (10) \times \U(1) \to \E_6$. Fortunately, there is a way to circumvent this difficulty. We have used several times already that an equivalent charecterisation of a unitary representation $V$ of a group $G$ is the requirement that the action of $G$ on $V$ is an isometry---indeed, this is sometimes taken to be the definition; with this in mind, we have the following handy result, often referred to as \emph{Weyl's unitarian trick}. It requires the notion of a \emph{Haar measure}, which we do not define here; see~\cite[Ch.~1.5]{broecker95}.

\begin{prop}
Any representation $V$ of a compact group $G$ possesses a $G$-invariant inner product.
\end{prop}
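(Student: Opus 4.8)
The plan is to average an arbitrary inner product over the group using the Haar measure, exploiting the fact that $G$ is compact so that this measure is finite (and can be normalised to total mass $1$). First I would pick \emph{any} Hermitian inner product $h_0$ on $V$; such a form exists because $V$ is a finite-dimensional complex vector space — choose a basis and declare it orthonormal. This $h_0$ need not be $G$-invariant, so the idea is to symmetrise it.

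Next I would define, for $v, w \in V$,
\begin{equation*}
	h(v, w) := \int_G h_0\big(g \cdot v,\, g \cdot w\big)\, \diff \mu(g)\ ,
\end{equation*}
where $\mu$ is the normalised Haar measure on $G$. The integrand is continuous in $g$ (the action $G \times V \to V$ is continuous, and $h_0$ is continuous), and $G$ is compact, so the integral converges. I would then check the three defining properties of a Hermitian inner product: sesquilinearity follows from sesquilinearity of $h_0$ together with linearity of the integral; conjugate symmetry, $h(w, v) = \cc{h(v,w)}$, follows pointwise from the same property of $h_0$; and positive-definiteness follows because the integrand $h_0(g\cdot v, g\cdot v)$ is a non-negative continuous function of $g$ that is strictly positive at $g = e$ (hence on a neighbourhood of $e$), so its integral against the Haar measure is strictly positive whenever $v \neq 0$.

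Finally I would verify $G$-invariance. For $g_0 \in G$ and $v, w \in V$,
\begin{equation*}
	h(g_0 \cdot v,\, g_0 \cdot w) = \int_G h_0\big(g g_0 \cdot v,\, g g_0 \cdot w\big)\, \diff \mu(g) = \int_G h_0\big(g' \cdot v,\, g' \cdot w\big)\, \diff \mu(g') = h(v, w)\ ,
\end{equation*}
where the middle step is the substitution $g' = g g_0$ together with the right-invariance of the Haar measure. This shows $G$ acts by isometries of $h$, which is exactly the assertion that $V$ is unitary with respect to $h$.

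The only genuine subtlety — really the \emph{sole} reason compactness is needed — is the existence and finiteness of a (bi-)invariant Haar measure; on a compact group this is standard (see the reference to \cite[Ch.~1.5]{broecker95} cited in the statement), and once it is in hand the rest is the routine averaging argument above. A minor point worth stating explicitly is that right-invariance of $\mu$ is what powers the invariance computation; on a compact group the left and right Haar measures coincide, so there is no ambiguity. I do not expect any real obstacle here beyond being careful that the integrand is continuous so that the integral is well-defined.
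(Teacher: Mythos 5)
Your argument is exactly the paper's: Weyl's unitarian trick, averaging an arbitrary Hermitian form against the normalised Haar measure and verifying sesquilinearity, positive-definiteness, and invariance. Your observation that right-invariance is what the substitution uses (and that this coincides with left-invariance on a compact group) is a small but welcome clarification, since the paper's terse phrasing elides it; otherwise the two proofs are identical.
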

\begin{proof}[Proof (Sketch)]
	Let $b : V \times V \to \C$ be any inner product, and define
	\begin{equation*}
		c (u, v) := \int_G b (gu, gv) \diff g\ ,
	\end{equation*}
	where the integral is normalised.	$c : V \times V \to \C$ is then linear in $u$, conjugate linear in $v$, $G$-invariant since the integral is left-invariant, and positive definite since the integral of a positive continuous function is positive.
\end{proof}

$N + \cc{N}$ endowed with this natural $\E_6$-invariant inner product is thus a direct sum of unitary irreps of the compact group $\E_6$. To extend theorem \ref{spin10isagutthm} and prove that $\E_6$ is a grand unified theory however, we need to check something still further: we need a homomorphism $\Delta^+ + \Delta^- \to N + \cc{N}$ as unitary representations of $\Spin (10)$ and $\E_6$ respectively. But since $\Spin (10) \hookrightarrow \Spin (10) \times \U(1) \to \E_6$, and we know how $N + \cc{N}$ restricts to $\Spin (10) \times \U(1)$, it suffices to produce a homomorphism between $\Delta^+ + \Delta^-$ and these restricted representations. But first, let us run through our usual checklist: the restricted representations, as the direct sum of irreps, are clearly irreps. Are they unitary? (i) $\xi$ was defined to be the fundamental representation of the unitary group $\U (1)$, so there is nothing to check here. (ii) From proposition \ref{diracspinorrepprop}, the spinor representations can be seen to be unitary: recall that these are defined via the creation and annihilation operators, which are adjoint; we therefore have $(l^\dagger_v l_v) (\psi) = \iota_v (v \wedge \psi) = \id \psi$, so this is indeed a unitary representation. Lastly, (iii) $\Spin (10) \times \U(1) \ni (s, u)$ acts unitarily on the complex representation $\Lambda^1_{10} \otimes \xi^2 \ni a \otimes p, b \otimes q$:
\begin{align*}
	\langle s \cdot a \otimes u \cdot p , s \cdot b \otimes u \cdot q \rangle_{\Lambda^1_{10} \otimes \xi^2} &= \langle s \cdot a , s \cdot b \rangle_{\Lambda^1_{10}} \langle u \cdot p , u \cdot q \rangle_{\xi^2} \\
	&= \langle a , b \rangle_{\Lambda^1_{10}} \langle p , q \rangle_{\xi^2}\\
	&= \langle a \otimes p , b \otimes q \rangle_{\Lambda^1_{10} \otimes \xi^2}\ ,
\end{align*}
where we simply used the definitions of the tensor product of representations and Hilbert spaces, and the fact that $\Spin (10)$ and $\U(1)$ are each isometries on these representations. So, (i), (ii) and (iii), together with the fact that the tensor product of unitary representations is again unitary, means that we are done, and can write down the following commuting diagram:
\begin{equation*}
	\begin{tikzcd}[row sep = large]
		\Spin (10) \ar[d, "\rho'"'] \ar[r, hook] & \Spin (10) \times \U(1) \ar[d] \ar[r] & \E_6 \ar[d] \\
		\Delta^+ + \Delta^- \ar[r, hook] & (\Delta^+ \otimes \xi\inv) + (\Delta^- \otimes \xi) + \cdots \ar[r] & N + \cc{N}
	\end{tikzcd}
\end{equation*}

We have but one final check. Recall that the homomorphism $\Spin (10) \times \U(1) \to \E_6$ has the kernel $\Z_4$; it is hence incumbent on us to verify, just as we did for the $\SU(5)$ theory, that this kernel acts trivially on every fermion. Explicitly, the four elements of the kernel are
\begin{equation*}
	\{k_1, k_2, k_3, k_4\} := \left\{ (1 , 1) , (\textstyle{\prod_1^{10}} e_j, i) , (-1, -1) , (- \textstyle{\prod_1^{10}} e_j, -i) \right\}\ ; 
\end{equation*}
let us begin with the easiest pieces of the restrictions of $N$ and $\cc{N}$. For $\xi^{-4}$ and $\xi^4$, there is nothing to check for $\Spin (10)$, and since the $\U(1)$ components of the $k_i$'s are precisely the fourth roots of unity, they do in fact act trivially. Now what about the representations $\Delta^+ \otimes \xi\inv$ and $\Delta^- \otimes \xi$? The elements $k_1$ and $k_3$ clearly act trivially. For $k_2$and $k_4$, recall the construction of the spinor representations in proposition \ref{diracspinorprop}: we saw there that $\prod_1^{10} e_j$ acts as $(\prod \epsilon_r) i^5$, where $\prod \epsilon_r = \pm 1$ for $\Delta^\pm$ respectively; coupling this with the fact that the $\U(1)$ components act as $i^{\mp 1}$, means that $k_2$, for example, acts on $\Delta^+$ as $i^5 \otimes i\inv = 1$. The other three cases work out just as easily. The final representations we need to consider are $\Lambda^1_{10} \otimes \xi^{\pm2}$, where $\Spin (10)$ acts by conjugation. Once again, $k_1$ and $k_3$ pose no problem. To tackle $k_2$ and $k_4$, we will need the following

\begin{claim}[]
	$\prod_1^{2n} e_j \in \Spin (2n)$ acts on $v \in \Lambda^1_{2n}$ as $v \mapsto -v$.
\end{claim}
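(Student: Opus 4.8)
The claim is about how the volume element $\omega := \prod_1^{2n} e_j \in \Spin(2n) \subset \Cl_0(\K^{2n})$ acts on $\Lambda^1_{2n} \cong \K^{2n}$, where the action of $\Spin(2n)$ on $\Lambda^1_{2n}$ is the one induced from the vector representation $\SO(2n)$ (i.e. conjugation, cf. proposition \ref{pingrouppprop}). First I would recall that for $x \in \Spin(V) \subset \Cl_0(V)$ the action on a vector $v \in V$ is the ``twisted conjugation'' $v \mapsto x v (\beta x)$, but since $x \in \Cl_0(V)$ we have $\beta x = \gamma x$ (as $\alpha$ is trivial on $\Cl_0$), and since also $x(\gamma x) = 1$ we get $\beta x = \gamma x = x^{-1}$; hence the action is honest conjugation $v \mapsto x v x^{-1}$. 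So the claim reduces to the purely computational statement that $\omega e_k \omega^{-1} = -e_k$ for every basis vector $e_k$.

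\textbf{Key steps.} The computation is a parity count using corollary \ref{cliffordbasiscor}: the $e_i$ anticommute pairwise and $e_i^2 = -1$. To move $e_k$ past $\omega = e_1 e_2 \cdots e_{2n}$, I would commute $e_k$ to the left through the product: it anticommutes with each of the $2n-1$ factors $e_j$ with $j \neq k$, contributing a sign $(-1)^{2n-1} = -1$, and it ``passes through'' the one factor $e_k$ trivially (they are equal). Thus $e_k \omega = -\omega e_k$, i.e. $\omega e_k = -e_k \omega$, which gives $\omega e_k \omega^{-1} = -e_k$ once we know $\omega$ is invertible. Invertibility of $\omega$ (indeed $\omega \in \Spin(2n)$) follows since $\omega^2 = \pm 1$ by the same anticommutation bookkeeping (precisely $\omega^2 = (-1)^{n(2n-1)}(-1)^n = (-1)^{n(2n-1)+n}$, which I would record but need not simplify — the only point is $\omega^{-1} = \pm \omega$), and because $\omega$ manifestly lies in $\Cl_0$ and, being a product of unit vectors, lies in $\Pin(2n) \cap \Cl_0 = \Spin(2n)$. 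I should also double-check that $\beta \omega = \gamma \omega$ is consistent with $\omega(\gamma\omega) = 1$, which again is just the sign count $\gamma\omega = \alpha\beta\omega = \beta\omega = e_{2n}\cdots e_1$ and $\omega(e_{2n}\cdots e_1) = (-1)^{?}\cdot(\pm 1)$; I would verify this equals $1$, adjusting by the scalar if the convention forces it, but for the action on $V$ the relevant identity is simply $\omega e_k \omega^{-1}$, for which the bare anticommutation argument suffices.

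\textbf{Main obstacle.} There is no real obstacle here — the claim is elementary. The only thing to be careful about is the bookkeeping of signs and, more importantly, making sure the action of $\Spin(2n)$ on $\Lambda^1_{2n}$ used in the statement of theorem \ref{bigliegrouptabletheorem} (where $\Lambda^1_{10} \otimes \xi^{\pm 2}$ appears, with $\Spin(10)$ acting ``by conjugation'') really is the vector representation $\pi$ of proposition \ref{pingrouppprop}, so that the reduction to $\omega e_k \omega^{-1} = -e_k$ is legitimate; I would state this identification explicitly at the start. Given that, the proof is two lines: $\omega e_k = (-1)^{2n-1} e_k \omega = -e_k \omega$, hence $\omega \cdot e_k = \omega e_k \omega^{-1} = -e_k$ for all $k$, so $\omega$ acts as $-\id$ on $\Lambda^1_{2n}$. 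This immediately finishes the $k_2, k_4$ cases for $\Lambda^1_{10}\otimes\xi^{\pm 2}$: $\prod_1^{10} e_j$ acts as $-\id$ on $\Lambda^1_{10}$ and $(\pm i)$ acts as $(\pm i)^{\pm 2} = -1$ on $\xi^{\pm 2}$, and the product $(-1)(-1) = 1$.
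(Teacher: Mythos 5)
Your proposal is correct and takes essentially the same route as the paper: reduce the twisted conjugation to honest conjugation for elements of $\Cl_0$, then a direct sign count showing $e_k\,\omega = -\,\omega\, e_k$ with $\omega = e_1\cdots e_{2n}$. The paper writes out $\omega\, e_k\, \omega^{-1}$ term by term with cancellations, whereas you get the same conclusion a bit more compactly from the anticommutation relation; the preliminaries you add (that the $\Spin(2n)$-action is genuine conjugation, that $\omega \in \Spin(2n)$) are taken for granted in the paper but are correct and harmless to spell out.
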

\begin{proof}
	This is a direct computation. Since Clifford multiplication is linear, it suffices to show this for $v = e_k$, for some $1  \leq k \leq 2n$. Consider then
	\begin{align*}
		(e_1 \cdots e_{2n}) \cdot e_k &= (e_1 \cdots e_{2n}) e_k (e_1 \cdots e_{2n} )\inv\\
		&= (-1)^{2n-k} e_1 \cdots \underbrace{e_k e_k}_{-1} \underbrace{e_{k + 1} \cdots e_{2n} (-e_{2n}) \cdots (-e_{k+1})}_{1} \cdots (-e_1)\\ 
		&= (-1)^{2n- k - 1} e_1 \cdots e_{k -1} (-e_k) (-e_{k-1}) \cdots (-e_1)\\
		&= (-1)^{2n- k - 1}(-1)^{k - 1} (-e_k) \underbrace{e_1 \cdots e_{k-1} (-e_{k-1}) \cdots (-e_1)}_{1}\\
		&= -e_k\ . \qedhere
	\end{align*}
\end{proof}

Thus, $k_2$ acts on $\Lambda^1_{10} \otimes \xi^2$ as $(-1) \otimes i^2 = 1$; the other cases are similar. We conclude that the kernel $\Z_4$ does indeed act trivially on all the fermions in the $\E_6$ theory. By theorem \ref{spin10isagutthm} then, we can write down

\begin{thm}
\label{e6isagutthm}
	$\E_6$ is a grand unified theory, i.e.~the following diagram commutes:
	\begin{equation*}
		\begin{tikzcd}[row sep = large, column sep = large]
			G_\mrm{SM} / \Z_6 \ar[r, hook] \ar[d] & \E_6 \ar[d]\\
		F + \cc{F} \ar[r, hook] & N + \cc{N}
		\end{tikzcd}
	\end{equation*}
\end{thm}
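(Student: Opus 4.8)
The plan is to deduce the theorem by concatenating the $\Spin(10)$ result of Theorem~\ref{spin10isagutthm} with the commuting diagram displayed immediately before the statement; once the maps in the square are pinned down, essentially no new computation is needed. First I would fix the two horizontal arrows. The hook $G_\mrm{SM}/\Z_6 \hookrightarrow \E_6$ is the composite $G_\mrm{SM}/\Z_6 \hookrightarrow \Spin(10) \hookrightarrow \E_6$, where the first inclusion is the one from Theorem~\ref{spin10isagutthm} and the second is the restriction to the image of $\Spin(10) \times \{1\}$ of the subgroup inclusion $\Spin(10) \times \U(1)/\Z_4 \hookrightarrow \E_6$ supplied by Theorem~\ref{bigliegrouptabletheorem}. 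This composite is injective, because the kernel of $\Spin(10) \to \E_6$ is $\left( \Spin(10) \times \{1\} \right) \cap \Z_4$, and since
\begin{equation*}
	\Z_4 = \left\{ (1, 1),\ (\textstyle{\prod_1^{10}} e_j, i),\ (-1, -1),\ (-\textstyle{\prod_1^{10}} e_j, -i) \right\}
\end{equation*}
has a trivial $\U(1)$-component only in its identity element, this kernel is trivial. The bottom arrow $F + \cc{F} \hookrightarrow N + \cc{N}$ is the composite $F + \cc{F} \xrightarrow{\cong} \Lambda^* \C^5 = \Delta^+ + \Delta^- \hookrightarrow N + \cc{N}$: the first isomorphism is the one from Theorem~\ref{spin10isagutthm}, the middle identity is the identification of the half-spinor representations of $\Spin(10)$ with the even and odd exterior powers of $\C^5$ (Proposition~\ref{diracspinorrepprop} and the surrounding discussion), and the last inclusion maps $\Delta^+$ onto the summand $\Delta^+ \otimes \xi\inv$ of $N$ and $\Delta^-$ onto the summand $\Delta^- \otimes \xi$ of $\cc{N}$, as furnished by Corollary~\ref{e6repscor}.

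With these choices in place, the square of the theorem factors horizontally as
\begin{equation*}
\begin{tikzcd}[column sep = large, row sep = large]
	G_\mrm{SM}/\Z_6 \ar[r, hook] \ar[d] & \Spin(10) \ar[r, hook] \ar[d, "\rho'"] & \E_6 \ar[d]\\
	F + \cc{F} \ar[r, "\cong"] & \Lambda^* \C^5 \ar[r, hook] & N + \cc{N}
\end{tikzcd}
\end{equation*}
The left-hand square commutes by Theorem~\ref{spin10isagutthm}. The right-hand square is precisely the restriction along $\Spin(10) \hookrightarrow \Spin(10) \times \U(1)$ of the diagram established just before the theorem; its commutativity rests on Corollary~\ref{e6repscor} (which pins down $N + \cc{N}$ as a representation of $\Spin(10) \times \U(1)$) together with the unitarity and well-definedness checks carried out above, in particular the verification that the kernel $\Z_4$ of $\Spin(10) \times \U(1) \to \E_6$ acts trivially on every fermion. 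Pasting the two squares yields the theorem, the two composite rows being by construction the hook maps appearing in the statement.

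I do not expect a genuine obstacle in the proof itself: all of the substance is upstream --- the construction of $\E_6$ and the computation of its $27$-dimensional representations (Corollary~\ref{e6repscor}), the chain $\SU(5) \to \Spin(10)$ (Theorem~\ref{spin10isagutthm}), and the $\Z_4$ triviality calculation --- so the one thing demanding care is the bookkeeping of the three finite quotients in play, namely $\Z_6 \subset G_\mrm{SM}$, $\Z_4 \subset \Spin(10) \times \U(1)$, and the injectivity of $\Spin(10) \to \E_6$, which must be threaded together consistently so that the outer square really is a square of the asserted maps rather than of some further quotient.
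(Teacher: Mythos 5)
Your proposal is correct and follows essentially the same route as the paper: factor the square through $\Spin(10)$ and $\Lambda^*\C^5$, invoke Theorem~\ref{spin10isagutthm} for the left square, and use the commuting diagram established just before the theorem (resting on Corollary~\ref{e6repscor}, the unitarity checks, and the $\Z_4$ triviality computation) for the right square. Your explicit verification that $\Spin(10)\to\E_6$ is injective (because $\Z_4$ meets $\Spin(10)\times\{1\}$ only in the identity) is a small but worthwhile addition the paper leaves implicit.
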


\section{The New Fermions}
\label{thenewfermionssubsec}

We are now in uncharted territory: this latest extension of the Standard Model has, for the first time, yielded new particles. We started  with a 32 dimensional representation $F \oplus \cc{F}$ of all the standard model fermions, and found that they fit exactly into the irrep $\Lambda^* \C^5$ of $\SU(5)$; this was in turn shown to be isomorphic to the spinor representation $\Delta^+ + \Delta^-$ of $\Spin (10)$. But now, we have added a significant number of dimensions: $N \oplus \cc{N}$ is $27 + 27 = 54$ dimensional, which means that we have 11 new fermions and antifermions. How can we understand them?

Let us think again about the $\SU(5)$ grand unified theory. There, we matched irreps, one by one, of $\SU(5)$ and $G_\mrm{SM}$; this perhaps obscured the fact that the particles of the $\SU(5)$ theory as such, are \emph{not} characterised by the $G_\mrm{SM}$ charges. Said another way, if we lived in a universe governed by an unbroken $\SU(5)$ theory, there would be no need to think of the Standard Model charges, in the same way that the representation theory of the strong force is remarkably simple because its symmetry group $\SU(3)$ is unbroken at the vacuum. But more often than not, we find ourselves in the opposite situation, and so out of necessity, we characterise particles based on how they transform under the broken symmetry of our vacuum, $G_\mrm{SM}$. In short, to understand these new fermions, we need to think about symmetry breaking, and in particular, we need to understand what charges they carry under $G_\mrm{SM}$.

Happily, one can see a whole lot at the level of representation theory, without venturing into the (complicated) dynamics of symmetry breaking; indeed, without saying so explicitly, we have laid most of the groundwork. Consider again the irrep matching of the $\SU(5)$ theory, equation (\ref{su5particleassignment}). Once we confirmed that that $\Z_6$ kernel of $\phi : G_\mrm{SM} \to \SU (5)$ acted trivially on $F$, matching irreps was precisely the act of understanding how the $\SU(5)$ symmetry broke down to a $G_\mrm{SM}$ theory. In much the same way, theorem \ref{atiyahbottshapiro} was the attempt to see how $\Spin (10)$ broke to $\SU(5)$. In both cases, we had no need for any new charges; with $\E_6$, the situation is different. The proof that $\E_6$ is a grand unified theory rested on the inclusion  $\Spin (10) \hookrightarrow \Spin (10) \times \U(1) \to \E_6$, so a new $\U(1)$ charge seems to be demanded by the mathematics; let us denote it with $\U(1)'$ to differentiate it from the $\U(1)$ of electromagnetism. Then since we have no obvious reason to not do so, let us simply declare that each particle now carries the $\U(1)'$ charge $Q'$ dictated by the superscript of the $\xi$ representation to which it belongs. For example, the particles which live in the representation $1 \otimes \xi^{-4}$ of $\Spin (10) \times \U(1)'$ will carry a charge $Q'$ of $-4$.

The ease with which we able to incorporate a new symmetry into our theory should not obscure the fact that this has huge physical implications: if this $\U(1)'$ symmetry were to remain unbroken at the vacuum, this would imply the presence of a new force (similar to electromagnetism) mediated by a hitherto unobserved massless boson (akin to the photon). No such force has been detected to date, so let us take this into account and posit the following cascade of theories:
\begin{equation*}
	\begin{tikzcd}[row sep = large]
		G_\mrm{SM} / \Z_6 \ar[r, hook] \ar[d] & \SU(5) \times \U(1)' \ar[r, hook] \ar[d] & \Spin (10) \times \U(1)' \ar[r] \ar[d] & \E_6 \ar[d] \\
		F + \cc{F} + \cdots \ar[r, "\cong"] & \Lambda^* \C^5 \otimes Q' + \cdots \ar[r, "\cong"] & \Delta^+ \otimes \xi\inv + \Delta^- \otimes \xi + \cdots \ar[r, hook] & N + \cc{N}
	\end{tikzcd}
\end{equation*}
Following the discussion in the previous paragraphs, we have introduced here the notation $\Lambda^* \C^5 \otimes Q'$ for the extended $\SU(5)$ theory\footnote{Masiero's paper \cite{masiero80} considers some of the phenomenological implications of such an extension to the $\SU(5)$ theory. The article by King \cite{king81} is a general reference for extended $\SU(5)$ theories. Some of these extensions are still viable as grand unified theories \cite{abe17}.} to indicate that the particle representations are now tensored with an additional $\xi^{Q'}$: for the left-handed electron for example, we would write $e_L^- \in \Lambda^4 \C^5 \otimes \xi\inv$, since in the $\Spin(10)$ theory, $e_L^-$ lives in $\Delta^+$, and this is now tensored with $\xi\inv$. In fact, it should be clear that this analysis works for all the Standard Model fermions: we know already which Weyl spinor representation they live in, so it is a simple matter to assign to them a $Q' = \mp 1$, according to whether they are in $\Delta^\pm$, respectively.

The first legitimately new particles appear in $\xi^{\pm 4}$, but these are easy to understand since they do not transform in any group other than $\U(1)'$. Hence, at the level of $\SU(5) \times \U(1)'$, we can simply state that they are the sole elements of the one-dimensional representations $1 \otimes \xi^{\pm4}$; under this assignment, they would be antiparticles of each other, and not interact with any of the Standard Model particles. We will return to this interesting point in section \ref{furtherreadingsec}.

The representations $\Lambda^1_{10} \otimes \xi^{\pm 2}$ will take the most work to sort through. Clearly, the first step is to understand how the $\Spin (10)$ representation $\Lambda^1_{10}$ breaks to $\SU(5)$. We make the following

\begin{claim}
\label{su5assubsetspin10breakingclaim}
	Under $\SU(5) \hookrightarrow \Spin (10)$, the representation $\Lambda^1_{10}$ restricts to $\Lambda^1_5 + \cc{\Lambda^1_5}$, where $\SU(5)$ acts on the former as its fundamental representation, and on the latter as the complex conjugate thereof.
\end{claim}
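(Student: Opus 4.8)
The plan is to reduce the claim to the vector representation of $\SO(10)$ and then to the elementary observation that complexifying the underlying real space of a complex vector space returns that space together with its conjugate. First I would recall that the inclusion $\SU(5) \hookrightarrow \Spin(10)$ at issue is the homomorphism $\psi$ of theorem \ref{atiyahbottshapiro}, which by construction covers $\SU(5) \hookrightarrow \U(5) \hookrightarrow \SO(10)$, the last map coming from regarding $\C^5$ as $\R^{10}$ with $\U(5)$ acting $\C$-linearly. Since the vector representation $\Lambda^1_{10} = \C^{10}$ of $\Spin(10)$ factors through $\pi : \Spin(10) \to \SO(10)$, its restriction along $\psi$ is nothing but the complexification $\R^{10} \otimes_\R \C$ of the real $\SU(5)$-representation $\R^{10}$.

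Next I would invoke the following standard fact, very much in the spirit of remark \ref{structuremapsremark}: if $W$ is a complex vector space carrying a $\C$-linear action of a group $G$ and $W_\R$ is its underlying real representation, then the complex structure $I$ of $W$ induces an operator $I \otimes 1$ on $W_\R \otimes_\R \C$ with $(I \otimes 1)^2 = -1$, commuting with $G$ and with the ambient scalars; its $\pm i$-eigenspaces give a $G$-equivariant splitting $W_\R \otimes_\R \C \cong W \oplus \cc{W}$. Applying this with $W = \C^5 = \Lambda^1_5$, the fundamental representation of $\SU(5)$, yields at once $\Lambda^1_{10} \cong \Lambda^1_5 \oplus \cc{\Lambda^1_5}$ as $\SU(5)$-representations; which summand is ``the'' fundamental one and which its conjugate is just the choice of eigenspace, pinned down once a standard basis is fixed.

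As an independent check — and to keep contact with the weight machinery of section \ref{sketchofliegroupclasssubsec} — I would recompute via weights. The embedding $\mf{u}(5) \hookrightarrow \so(10)$ carries $\diag(ix_1, \dots, ix_5)$ to the block-diagonal skew matrix generating the rotations by $x_1, \dots, x_5$, so the maximal torus of $\SU(5)$ (cut out by $\sum x_i = 0$) lands in the maximal torus of $\Spin(10)$ of remark \ref{maximaltorusofspinn}. The vector representation $\C^{10}$ has weights $\pm x_1, \dots, \pm x_5$, which are precisely the weights $x_1, \dots, x_5$ of $\Lambda^1_5$ together with $-x_1, \dots, -x_5$ of $\cc{\Lambda^1_5}$; since a representation of a compact connected group is determined by its weights and multiplicities, this confirms the decomposition.

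The ``hard part'' here is really bookkeeping rather than mathematics. One must check that the copy of $\C^5$ produced this way is genuinely the same fundamental representation of $\SU(5)$ used throughout the $\SU(5)$ discussion, so that restricting $\Lambda^1_5$ and $\cc{\Lambda^1_5}$ to $G_\mrm{SM}$ via $\phi$ reproduces the hypercharge assignments of table \ref{thesmfermionstable}; and that, after the usual identification of bases, the real-structure splitting $\C^{10} \cong \Lambda^1_5 \oplus \cc{\Lambda^1_5}$ is the isotropic decomposition $W \oplus W'$ underlying the spinor construction of proposition \ref{diracspinorrepprop}. Both are immediate once bases are fixed, but they deserve to be spelled out, since the entire point of this computation is to track $G_\mrm{SM}$-charges faithfully through the cascade.
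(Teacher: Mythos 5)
Your argument is correct, and the core mechanism is the same as the paper's: both prove the splitting by finding the $\pm i$-eigenspaces of the complex structure on the complexified underlying real representation of $\C^5$. But you package the reduction more efficiently. The paper first invokes simple-connectedness of $\SU(5)$ and $\Spin(10)$ to descend to $\su(5)\hookrightarrow\so(10)$, then proves a small lemma identifying the Clifford-conjugation action of $\spin(2n)$ on $\Lambda^1_{2n}$ with the standard matrix action of $\so(2n)$, and finally solves the eigenvalue problem for $\begin{psmallmatrix} A_1 & A_2\\ -A_2 & A_1\end{psmallmatrix}$ explicitly, finding eigenvectors $(u,\pm iu)$. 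You bypass both the Lie-algebra reduction and the Clifford lemma in one stroke by observing that $\Lambda^1_{10}$ factors through $\pi:\Spin(10)\to\SO(10)$, so its restriction along $\psi$ is literally the complexification of the real $\SU(5)$-module underlying $\C^5$; the abstract $W_\R\otimes_\R\C\cong W\oplus\cc{W}$ splitting via $I\otimes 1$ then does the rest. These are two faces of the same coin—the paper's vectors $(u,\pm iu)$ \emph{are} the eigenvectors of $I\otimes 1$ once $I$ is written in block form—but your formulation saves a lemma and a detour. Your weight-check is a sound independent verification in the spirit of section \ref{sketchofliegroupclasssubsec}, though the paper does not use it here; and your closing remarks on compatibility with the isotropic decomposition $W\oplus W'$ of proposition \ref{diracspinorrepprop} and with the $G_\mrm{SM}$ hypercharges flag exactly the bookkeeping one must track in the subsequent application, so they are well placed.
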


The proof of this will be in stages. The first thing we will do is to ask whether it suffices to consider the same question at the level of Lie algebras, since in that case we have the explicit embedding (and corresponding eigenvalue problem),
\toenvnum{\tau : \&[-29pt] \su(n)}{\so(2n) \cong \spin(2n)}{\& A_1 + i A_2}{\begin{pmatrix}
	A_1 & A_2\\
	-A_2 & A_1
\end{pmatrix}\ ,}{sunembeddingintoso2neqn}
where $A_1$ and $A_2$ are real $n \times n$ matrices such that $A_1^T = - A_1$, $A_2^T = A_2$, and $\tr A_2 = 0$. The result that we will need comes from a classic query in the theory of representations: can every representation of the Lie algebra of a Lie group be associated with a representation of the group itself, where we moreover require that the differential of the group representation returns the one of the algebra? The answer turns out to be in the affirmative in the case where the Lie group is simply connected \cite[p.~105]{zelobenko73}, which works out nicely for us since both $\SU(n)$ and $\Spin (2n)$ are indeed simply connected: $\Spin (2n)$ is simply connected by virtue of being the universal cover of $\SO(2n)$; for a proof for $\SU(n)$, see \cite{wong67}.

Now as we saw above, $\Spin (2n)$ acts on $\Lambda^1_{2n}$ by conjugation; the differential of this action is the commutator, $X \cdot v = [X , v]$, for $X \in \spin (2n)$, $v \in \Lambda^1_{2n}$. Note that the multiplication on the right hand side of the equation is Clifford multiplication, where we canonically embed both $\Lambda^1_{2n} \cong \C^{2n}$ and
\begin{equation*}
	\so (2n) \cong \spin (2n) = \spann \left\{e_r e_s \in \Cl (\C^{2n}) \mid 1 \leq r < s \leq 2n \right\}
\end{equation*}
into $\Cl (\C^{2n})$. How do we now relate this to our other embedding, (\ref{sunembeddingintoso2neqn})?

\begin{lem}
	For $X \in \so (2n)$ and $v \in \C^{2n}$,
	\begin{equation*}
		X \cdot v = [X , v]\ ,
	\end{equation*}
	where the on the left we have the standard action of $\so (2n)$ on $\C^{2n}$, and on the right, Clifford multiplication.
\end{lem}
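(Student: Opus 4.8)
The plan is to read this lemma straight off the construction of the $\Spin$ groups in Section~\ref{cliffordalgebrassubsection}: the identification $\so(2n) \cong \spin(2n)$ was set up there precisely so as to make a statement of this kind hold, so the only real work is to unwind that definition and keep the normalisation conventions straight.

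Recall that $\Spin(2n) = \pi\inv(\det\inv 1) \subset \Cl_0(\C^{2n})$ carries the covering map $\pi : \Spin(2n) \to \SO(2n)$, $(\pi x)v = x v(\beta x)$, and that ``the standard action of $\so(2n)$ on $\C^{2n}$'' means the action of $\spin(2n)$ transported through the Lie algebra isomorphism $\diff\pi : \spin(2n) \to \so(2n)$ of Proposition~\ref{pingrouppprop} and then read off via the defining representation of $\so(2n)$ on $\C^{2n}$. So it is enough to show that $\diff\pi(X)$ acts on $v \in \C^{2n} = \Lambda^1_{2n}$ by the Clifford commutator $[X,v]$.

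For this, first I would note that for $x \in \Spin(2n) \subset \Cl_0(\C^{2n})$ one has $\alpha x = x$, hence $\beta x = \alpha\beta(x) = \gamma(x) = x\inv$ by the defining relation of $\Pin$; thus $(\pi x)v = x v x\inv$. Differentiating along the curve $x_t = \exp(tX)$ with $X \in \spin(2n)$ gives
\begin{equation*}
	\diff\pi(X)\,v \;=\; \td{}{t}\Big|_{t=0}\big(x_t\,v\,x_t\inv\big) \;=\; X v - v X \;=\; [X,v]\ ,
\end{equation*}
all products being Clifford products; and since each $\pi(x_t)$ maps $\C^{2n}$ into $\C^{2n}$, so does $[X,-]$, so the right-hand side really is an endomorphism of $\C^{2n}$. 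This is exactly the assertion of the lemma.

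As a sanity check one can verify the formula on the spanning set $\{e_r e_s \mid r<s\}$ of $\spin(2n)$ directly: using $e_i e_j = -e_j e_i$ for $i\neq j$ and $e_i^2 = -1$ (Corollary~\ref{cliffordbasiscor}), one computes $[e_r e_s, e_k] = 0$ for $k\notin\{r,s\}$, $[e_r e_s, e_r] = 2 e_s$ and $[e_r e_s, e_s] = -2 e_r$, which is precisely how the skew-symmetric matrix with entries $-2$ and $2$ in positions $(r,s)$ and $(s,r)$ --- i.e.~$\diff\pi(e_r e_s)$ --- acts on $\C^{2n}$; bilinearity in $X$ and in $v$ then gives the general case. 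The only place one could slip is in these factors of $2$ and the signs, so I would fix the conventions (the Clifford algebra of $-\langle\,,\,\rangle$, and the maps $\alpha,\beta,\gamma$) explicitly before computing. There is no conceptual obstacle here; the lemma is pure bookkeeping whose role is to bridge the abstract embedding $\tau$ of (\ref{sunembeddingintoso2neqn}) with the Clifford-algebraic description of the action on $\Lambda^1_{2n}$.
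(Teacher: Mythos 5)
Your proof is correct, but your main argument follows a genuinely different and more conceptual route than the paper's. The paper goes straight to the basis: it works with $\epsilon_{rs} = E_{sr} - E_{rs}$ on the $\so(2n)$ side, invokes the normalisation $\diff\pi(e_r e_s) = 2\epsilon_{rs}$ established in the proof of Proposition~\ref{pingrouppprop}, and then verifies $[\epsilon_{rs},e_j] = \tfrac{1}{2}[e_r e_s, e_j] = \epsilon_{rs}e_j$ using only the defining Clifford relations --- i.e.\ exactly what you relegate to a ``sanity check'' in your final paragraph. Your main argument instead observes that $\pi$ restricted to $\Spin(2n)$ is literally conjugation ($\beta x = \gamma x = x\inv$ for $x \in \Cl_0$) and differentiates along $\exp(tX)$, which produces $[X,v]$ immediately and for free. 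This is cleaner and bypasses all bookkeeping with $\epsilon_{rs}$ and the factor of $2$ --- it shows \emph{why} the lemma must be true rather than verifying it entry by entry. The paper's computational proof has the small virtue of not needing the differential-geometric interlude and of making the $e_r e_s \leftrightarrow 2\epsilon_{rs}$ normalisation explicit, which is the one place your abstract argument could otherwise silently hide a slip; you are right to flag that, and your basis check resolves it. Both are valid; yours buys conceptual clarity, the paper's buys self-containment.
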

\begin{proof}
	As with all linear algebra results, it suffices to check this on a basis. As we have seen, a natural one for $\so(2n)$, the space of skew-symmetric matrices, is
	\begin{equation*}
		\left\{ \epsilon_{rs} = E_{sr} - E_{rs} \mid 1 \leq r < s \leq 2n \right\}\ ,
	\end{equation*}
	where $E_{rs}$ is the $2n \times 2n$ matrix with 1 in the $rs$ entry, and 0 everywhere else. We have, for $e_j$ a standard basis vector of $\C^{2n}$,
	\begin{equation*}
		\epsilon_{rs} e_j = \delta_{jr} e_s - \delta_{js} e_r\ .
	\end{equation*}
	From proposition \ref{pingrouppprop}, the isomorphism between $\spin (2n)$ and $\so (2n)$ is given by $e_r e_s = 2 \epsilon_{rs}$; we thus compute, 
	\begin{align*}
		[\epsilon_{rs} , e_j] &= \frac{1}{2} [e_r e_s , e_j]\\
		&= \frac{1}{2} \left(e_r (-2 \delta_{js} - e_j e_s) - (-2 \delta_{jr} - e_r e_j) e_s \right)\\
		&= \frac{1}{2} \left(2 \delta_{jr} e_s - 2 \delta_{js} e_r \right)\\
		&= \epsilon_{rs} e_j\ . \qedhere		
	\end{align*}
\end{proof}

Therefore, we now have an honest-to-goodness eigenvalue problem for the matrix $\begin{psmallmatrix} A_1 & A_2\\ -A_2 & A_1 \end{psmallmatrix} \in \spin (2n)$. A quick calculation shows that the two $n$-dimensional eigenspaces of this matrix are spanned by $(u,\pm i u)$, $u \in \C^n$:
\begin{equation*}
	\begin{pmatrix}
		A_1 & A_2\\
		-A_2 & A_1
	\end{pmatrix} \begin{pmatrix}
		u\\
		\pm i u
	\end{pmatrix} = \begin{pmatrix}
		(A_1 \pm i A_2) u\\
		(\pm i A_1 \mp A_2) u
	\end{pmatrix} \cong (A_1 \pm i A_2) u\ ,
\end{equation*}
whence we conclude that $\SU(5) \hookrightarrow \Spin (10)$ does indeed act as its fundamental representation (resp.~complex conjugate fundamental representation) on $\Lambda^1_5$ (resp.~$\cc{\Lambda^1_5}$). This completes the proof of claim \ref{su5assubsetspin10breakingclaim}.

We are almost done. The last step we must make is to understand how the $\Lambda^1_5$ and $\cc{\Lambda^1_5}$ of $\SU(5)$ break down to $G_\mrm{SM} / \Z_6$ so we can assign the Standard Model charges to these particles; but this is easy. Indeed, the homomorphism from before,
\toenv{\phi : \&[-29pt] G_\mrm{SM}}{\SU(5)\ ,}{\& (\alpha, g, h)}{\begin{pmatrix}
	\alpha^3 g & 0\\
	0 & \alpha^{-2} h
\end{pmatrix}}
contains all the information that we need. We can just read off the restricted representations (recall that because of how the hypercharge representation $\C_Y$ was defined, we have to divide the exponent of $\alpha$ by 3): 
\begin{equation*}
	\begin{tikzcd}[row sep = large]
		\SU(5) \ar[r] \ar[d] & (\U(1) \times \SU(2)) + (\U(1) \times \SU(3)) \ar[d]\\
		\Lambda^1_5 \ar[r] & (\C_1 \otimes \C_2) + (\C_{-2/3} \otimes \C^3)
	\end{tikzcd}
\end{equation*}
Let us consider a quick example to see how we might catalogue these particles: to the particles in the $\C_1 \otimes \C^2$  doublet of $\U(1) \times \SU(2)$, we would assign as usual the isospins $\pm 1/2$, and they would each carry a hypercharge $Y$ of $1$. In addition, at the level of the $\SU(5)$ theory and beyond, they would carry a new $Q' = \pm 2$, according whether the $\Lambda^1_5$ came from the $\Lambda^1_{10} \otimes \xi^{\pm 2}$. Finally, we note that for the (antiparticle) representation $\cc{\Lambda^1_5}$, one simply passes to the complex conjugate of the representation on the bottom right of the commuting diagram above.

We summarise all of the information in this section in table \ref{particlesinthenofe6}.\footnote{Up to the sign of $Q'$ (he chooses the opposite convention), we have reproduced Table 21 in \cite{slansky81}, for example.} The hypercharge $Y$ therein gives the corresponding Standard Model $\U(1)$ representation $\C_Y$; only the doublets (and hence the particles with non-zero isospin) transform in $\SU(2)$; the $\SU(3)$ representations are written down explicitly. The electromagnetic charge $Q$ can be obtained from the $Y$ and $I_3$ columns via the NNG formula. The $Q'$ column gives the corresponding $\U(1)'$ representation that should be tensored with the representations of $G_\mrm{SM}$, $\SU(5)$ or $\Spin (10)$. Finally, the corresponding table for $\cc{N}$ is easily obtained from this one by passing to the dual representations and taking the opposite charges throughout.
\begingroup
\renewcommand{\arraystretch}{1.5}
\begin{table}[t]
\centering
\caption{Particles in the Representation $N$ of $\E_6$}
\begin{tabular}{c c c c c c c}
		\hline
		\textbf{Symbol} & $Y$ & $I_3$ & $\SU(3)$ \textbf{Rep.} & $Q'$ & $\SU(5)$ \textbf{Rep.} & $\Spin (10)$ \textbf{Rep.}\\[0.5em]
		\hline \hline \\[-1em]
		$\cc{\nu}_L$ & $0$ & $0$ & $\C$ & $-1$ & $\Lambda^0_5$ & $\Delta^+$\\
		$e_L^+$ & $2$ & $0$ & $\C$ & $-1$ & $\Lambda^2_5$ & $\Delta^+$\\
		$\begin{pmatrix} u_L \\ d_L \end{pmatrix}$ & $1/3$ & $\pm 1/2$ & $\C^3$ & $-1$ & $\Lambda^2_5$ & $\Delta^+$\\
		$\cc{u}_L$ & $-4/3$ & 0 & $\cc{\C^3}$ & $-1$ & $\Lambda^2_5$ & $\Delta^+$\\
		$\begin{pmatrix} \nu_L \\ e_L^- \end{pmatrix}$ & $-1$ & $\pm 1/2$ & $\C$ &  $-1$ & $\Lambda^4_5$ & $\Delta^+$\\
		$\cc{d}_L$ & $2/3$ & $0$ & $\cc{\C^3}$ & $-1$ & $\Lambda^4_5$ & $\Delta^+$\\ \\
		$\begin{pmatrix} ? \\ ? \end{pmatrix}$ & $1$ & $\pm 1/2$ & $\C$ & $2$ & $\Lambda^1_5$ & $\Lambda^1_{10}$\\
		$?$ & $-2/3$ & $0$ & $\C^3$ & $2$ & $\Lambda^1_5$ & $\Lambda^1_{10}$\\
		$\begin{pmatrix} ? \\ ? \end{pmatrix}$ & $-1$ & $\pm 1/2$ & $\C$ & $2$ & $\cc{\Lambda^1_5}$ & $\Lambda^1_{10}$\\
		$?$ & $2/3$ & $0$ & $\cc{\C^3}$ & $2$ & $\cc{\Lambda^1_5}$ & $\Lambda^1_{10}$\\ \\
		$?$ & $0$ & $0$ & $\C$ & $-4$ & $\C$ & $\C$\\[1em]
		\hline
	\end{tabular}
\label{particlesinthenofe6}
\end{table}
\endgroup

\chapter{Aspects of Phenomenology}

\label{aspectsofphenochapter}

As mathematically interesting as grand unified theories are, they are ultimately statements about the real world. So in this section, we pose the following question: at the level of group (representation) theory, what can we say about the phenomenology of these theories? Clearly, a healthy amount of physics is required to motivate and supplement any such discussion, but the aim is to stay as close to mathematics as possible; the relevant physics is introduced where necessary.

In the first section, we discuss a proper prediction of grand unified theories, the weak mixing angle, which has a simple closed formula in terms of the eigenvalues of the intertwining operators $\hat{I}_3$ and $Q$. Following this, we will discuss anomalies, which are not so much a phenomenological prediction as they are a basic physical requirement on unification groups. They have a rather nice interpretation in terms of a certain Casimir operator on the Lie algebras of said groups, so this issue is completely reduced to a mathematical property that we can understand fairly easily, given the machinery we have already built. Finally, section \ref{furtherreadingsec} functions as something of a survey section, where we consider other expected signatures of the $\E_6$ theory, and discuss its outlook.

\section{The Weak Mixing Angle}
\label{weakmixinganglesec}

One of the unambiguous predictions of grand unified theories is the weak mixing angle or Weinberg angle, which we have already encountered in section \ref{theewinteractionsec}. Recall that equation (\ref{photozbosonthetaw}) offered a rather geometric interpretation of this angle, as the parameter that characterised the rotation of the $W^0 - B$ boson plane after symmetry breaking; it can also be written in terms of the gauge couplings $g_2$ and $g_1$, of the $\SU(2)$ and $\U(1)$ groups of the electroweak theory respectively, as
\begin{equation}
	\sin^2{\thetaw} = \frac{g_2^2}{g_1^2 + g_2^2}\ .
\label{weinbergangleeqn}
\end{equation}

In 1974, Georgi, Quinn, and Weinberg derived a formula for the weak mixing angle $\thetaw$ in grand unified theories \cite{georgi74c}. The only assumption that they needed in the proof thereof was that the $\U(1) \times \SU(2)$ group of the electroweak theory is embedded in the grand unification group $G$ in such a way that the NNG formula still holds. We have assumed this throughout, so this theorem is applicable to all the grand unified theories we have analysed; let us hence state and prove their result.

\begin{thm}
	Let $R$ be some fermion representation of the unification group $G$. Then the weak mixing angle is given by
	\begin{equation*}
		\sin^2{\thetaw} = \frac{\displaystyle\sum_{\mrm{fermions}} I_3^2}{\displaystyle\sum_{\mrm{fermions}} Q^2}\ .
	\end{equation*}
\end{thm}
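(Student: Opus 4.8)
The plan is to compute both sides of the claimed identity by working in the Lie algebra $\mf{g}$ of the unification group $G$, using the fact that both the isospin operator $\hat I_3$ and the electric charge operator $\hat Q$ are (proportional to) elements of $\mf{g}$, and that $G$ being simple (or a product of copies of a simple group with equal couplings) means there is essentially one $\Ad$-invariant inner product on $\mf{g}$, i.e.\ the Killing form up to scale. First I would recall that the gauge couplings are precisely the normalisation constants relating the physical inner product on each simple factor to the fixed reference form; since after unification there is a single coupling $g$, the operators $\hat I_3$ (generating the $\SU(2)$ piece) and $\hat Y/2$ (generating the hypercharge $\U(1)$ that sits inside $G$) must be normalised \emph{with respect to the same} invariant form on $\mf{g}$. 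Writing $\hat Q = \hat I_3 + \hat Y/2$ via the NNG formula, and noting that the $\SU(2)$ and hypercharge directions are orthogonal under the invariant form, the statement $\sin^2\thetaw = g_2^2/(g_1^2+g_2^2)$ translates, after tracking how $g_1,g_2$ rescale the $\hat I_3$ and $\hat Y/2$ directions to unit length in the unified form, into the ratio of the squared lengths of $\hat I_3$ and $\hat Q$ measured by that form.

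The key computational input is that for any invariant inner product on $\mf{g}$ proportional to the Killing form, the inner product of two elements $X,Y\in\mf{g}$ equals a constant times $\tr_R(XY)$ for \emph{any} faithful representation $R$ (by Schur, all invariant forms are proportional, and the trace form in a representation is invariant). Hence $\langle \hat I_3,\hat I_3\rangle \propto \tr_R(\hat I_3^2) = \sum_{\text{fermions}} I_3^2$ and $\langle \hat Q,\hat Q\rangle \propto \tr_R(\hat Q^2) = \sum_{\text{fermions}} Q^2$, with the \emph{same} proportionality constant because it is fixed by $R$ and the ambient form, not by which element we plug in. Therefore I would carry out the argument in the order: (i) recall the NNG embedding $\hat Q=\hat I_3+\hat Y/2$ and the orthogonality of the isospin and hypercharge generators; (ii) express $\sin^2\thetaw = g_2^2/(g_1^2+g_2^2)$ in terms of normalisations and show it equals $\langle\hat I_3,\hat I_3\rangle / \langle\hat Q,\hat Q\rangle$ in the unified invariant form; (iii) invoke proportionality of invariant forms to replace $\langle\,\cdot\,,\cdot\,\rangle$ by the trace form $\tr_R$ in the fermion representation; (iv) evaluate the traces as sums over the fermion content, the constant cancelling in the quotient.

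The main obstacle, I expect, is step (ii): making precise the dictionary between the physicists' coupling constants $g_1,g_2$ and the mathematical normalisation constants of the scalar products on $\mf{u}(1)$ and $\mf{su}(2)$ relative to the single unified form, and in particular getting the hypercharge normalisation right (the factor of $Y/2$ versus $Y$, and the conventional $3Y$ appearing in the definition of $\C_Y$). This is exactly the point at which one must be careful about conventions; once the normalisations are pinned down so that $\hat I_3$ and $\hat Q$ are honest elements of $\mf{g}$ measured by the \emph{same} invariant form, the rest is the short Schur-plus-trace computation. A secondary subtlety worth flagging is that the sums run over all fermions in $R$ (summing a generation's worth, antifermions included as dictated by $R$), and that the formula is independent of which such $R$ one picks — a consequence of the proportionality of invariant forms — so the value $\sin^2\thetaw$ genuinely depends only on the embedding of the electroweak group, as Georgi--Quinn--Weinberg assert.
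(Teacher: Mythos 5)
Your proposal is correct and arrives at the formula by essentially the same overall route as the paper (following Bjorken and Georgi--Quinn--Weinberg): exploit the NNG decomposition $\hat Q = \hat I_3 + \hat Y/2$, orthogonality of the isospin and hypercharge directions (which kills the cross term $\sum I_3 Y$), and equality of the norms of $\hat I_3$ and the suitably normalised $\hat Y$ under the single unified invariant form. Where you genuinely differ is in how that norm equality is justified, and your version is the more careful one. The paper asserts that ``there is a symmetry operation of the group that can transform $W^0$ into $B$''; read literally as an inner automorphism of $G$, this is false already in $\SU(5)$, where $\hat I_3 = \tfrac12\diag(1,-1,0,0,0)$ and $\hat Y \propto \diag(3,3,-2,-2,-2)$ have incompatible eigenvalue multiplicities and so cannot be $\Ad$-conjugate. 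Your Schur argument --- that for a simple $\mf{g}$ the trace form $\tr_R$ in any representation is a scalar multiple of the Killing form, so Killing-orthonormal Cartan elements stay equal-norm under $\tr_R$ --- is the correct underlying reason and makes this step rigorous. The step (ii) you flag as the main obstacle does work out cleanly: matching the photon coupling $e\hat Q$ against $g\bigl(\sin\thetaw\,\hat I_3/\|\hat I_3\| + \cos\thetaw\,\hat Y/\|\hat Y\|\bigr)$ gives $\tan\thetaw = \|\hat I_3\|/\|\hat Y/2\|$, whence $\sin^2\thetaw = \|\hat I_3\|^2/(\|\hat I_3\|^2 + \|\hat Y/2\|^2) = \|\hat I_3\|^2/\|\hat Q\|^2$ by orthogonality; note that you work in the physical normalisation of $Y$ throughout, whereas the paper silently rescales $Y$ (its NNG reads $Q = I_3 + \cot\thetaw\,Y$ rather than $Q = I_3 + Y/2$). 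Finally, your remark that the ratio is independent of the choice of $R$ (the Schur constant cancels) is exactly right and worth stressing: $\sin^2\thetaw$ is determined by the embedding $\U(1)\times\SU(2)\hookrightarrow G$ alone, so it cannot change as one passes from $\SU(5)$ to $\Spin(10)$ to $\E_6$ along this paper's cascade --- a consequence that should give one pause about the value $9/20$ quoted for $\E_6$ later in the section.
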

\begin{proof}
	We follow the lecture notes of Bjorken \cite{bjorken77}. The $W^0 = 2 \hat{I}_3$ and $B$ bosons, appropriate to the broken $\SU(2)$ and $\U(1)$ theory, are gauge bosons of the full gauge group $G$; the coupling of $W^0$ to any fermion is proportional to $I_3$, and the coupling of the $B$ boson is proportional to the hypercharge $Y$. Because $W^0$ and $B$ are both gauge particles for the group $G$, we must have, for any representation of $G$,
	\begin{equation}
	\label{isospinhyperchargerelation}
		\sum_\mrm{fermions} I_3^2 = \sum_\mrm{fermions} Y^2\ ,
	\end{equation}
	since there is a symmetry operation of the group that can transform $W^0$ into $B$, but that transforms the representation $R$ into itself.
	
	Completing the proof is now a matter of simple algebra. From equation (\ref{photozbosonthetaw}), we must have that the electric charge is given by
	\begin{equation*}
		Q \propto \left(Y \cos{\thetaw} + I_3 \sin{\thetaw} \right)\ ; 
	\end{equation*}
	in order to have the difference of $Q$ between two members of the same isospin doublet be $\pm 1$, we must set the constant of proportionality to $(\sin{\thetaw})\inv$, i.e.
	\begin{equation*}
		Q = I_3 + \cot{\thetaw} Y\ .
	\end{equation*}
	We now square this equation, and sum over all fermions. The cross term $\sum I_3 Y$ vanishes, because the only non-zero contributions to this sum come from isospin doublets, and for each doublet this term is zero (since $Y$ is constant on a doublet, while the $I_3$'s come with opposite signs). We hence obtain
	\begin{equation*}
		\sum_\mrm{fermions} Q^2 = \sum_\mrm{fermions} I_3^2 + \cot^2{\thetaw} \sum_\mrm{fermions} Y^2\ . 
	\end{equation*}
	Utilising equation (\ref{isospinhyperchargerelation}) above, we obtain the formula stated in the theorem.
\end{proof}

In the same paper, Georgi et al.~immediately applied this result to the $\SU(5)$ theory, leading to the famous prediction $\sin^2_{\SU(5)} \thetaw = 3/8$. It should be clear that since the $\Spin (10)$ theory introduces no new fermions, the prediction for the Weinberg angle is same as for $\SU(5)$. In our $\E_6$ theory however, we do have new fermions, so we should see a different value; indeed, on consulting table \ref{particlesinthenofe6} and doing the necessary arithmetic, we obtain the following:
\begin{equation*}
	\sin^2_{\E_6} \thetaw = \frac{9}{20} = 0.45\ .
\end{equation*}
As far as representation theory goes, this is all we can say. But it is too tempting to not compare such a definite phenomenological prediction with the real world; unfortunately, the comparison is none too comforting: one standard estimate \cite{mohr12} for the weak mixing angle is $\sin^2 \thetaw = 0.2223$. Is there a way to fix this massive discrepancy?

The most plausible answer comes from \emph{renormalisation theory}, a catch-all term for techniques used to deal with the infinities that plague quantum field theory. We have neither the desire nor the pages to get into any details here\footnote{We point the reader to \cite{peskin95, schwartz14, weinberg05} or any standard quantum field theory reference.}, but we would like to at least state a result from Marciano \cite{marciano79} which succinctly accounts for renormalisation effects on the value of $\sin^2{\thetaw}$ in grand unified theories. What follows is hence necessarily sketchy; the reader is encouraged to consult the original paper for an excellent discussion. The main assumption that he needed in its derivation goes back to an earlier paper of Weinberg's \cite{weinberg72}: all gauge bosons in the grand unified theory must have large masses (on the order of some superheavy $M_S$, say) compared with the $W^\pm$ and the $Z$, (the order of which we denote by $M_W$) and also compared with the standard model fermions in the theory (also on the order $M_W$).\footnote{He leaves open the possibility that there might be exotic fermions in the theory with masses on the order of $M_S$. As we will see in section \ref{furtherreadingsec}, this is the case with $\E_6$. See also \cite{robinett82}.} The motivation for this was mostly phenomenological: effects mediated by these gauge bosons had eluded detection thus far.\footnote{Even today, the lower bounds on the masses of grand unified theory gauge bosons are at least two orders of magnitude larger than the known masses of the $W$ and $Z$ bosons \cite{patrignani16}.} Of course, there was a further technical aspect to this assumption, but since it involes some quantum field theory, we relegate it to a footnote dedicated to the interested reader.\footnote{The argument, as seen in \cite{georgi74c}, runs as follows. The gauge couplings---of the grand unified symmetry group $G$, and the Standard Model subgroups $\U(1)$, $\SU(2)$ and $\SU(3)$---are functions of the momentum scale which we denote by $\mu$; in particular, equation (\ref{weinbergangleeqn}) only holds when $\mu$ is much larger than the superheavy boson masses, where the breaking of $G$ may be neglected. However, the observed values of the gauge couplings refer to much smaller values of $\mu$, of the order of the $W^\pm$ and $Z$ masses, or even smaller. The problem is therefore to bridge the gap between superlarge values of $\mu$, where $G$ imposes relations among the gauge couplings, and ordinary values of $\mu$, where the gauge couplings are observed. In order to deal with this, Georgi and collaborators employed a theorem from Appelquist et al.~\cite{appelquist73}, which proved that all matrix elements involving particles with masses much less than the superheavy scale could be calculated in an effective renormalisable theory. In this case, one could simply consider the original theory with all the superheavy particles omitted (but with coupling constants that could depend on the superheavy masses). All other effects of the superheavy particles are suppressed by factors of an ordinary mass divided by a superheavy mass.} In any case, Marciano's result is written as\footnote{We note that this formula is specifically for theories with $\sin^2{\thetaw^0} \neq 3/8$.}
\begin{multline*}
	\sin^2{\thetaw} (M_W) = \sin^2{\thetaw^0} \Bigg\{ 1 - \frac{\alpha (M_w)}{2\pi} \bigg[ \frac{22}{3} \cot^2{\thetaw^0} - \frac{1}{6} N_H \left( \frac{1}{\sin^2{\thetaw^0}} - 2 \right) \\
	- \frac{2}{3} N_f \left( \frac{1}{\sin^2{\thetaw^0}} - \frac{8}{3} \right) \bigg] \log{\frac{M_S}{M_W}} \Bigg\}\ .
\end{multline*}
From left to right, here are the quantities we have not yet defined. The superscript $0$ in $\sin^2{\thetaw^0}$ simply indicates that this is the theortical value of the weak angle given by the theory, i.e.~$9/20$ in the case of $\E_6$. $\alpha (M_W)$ is the \emph{fine structure constant}; it depends on the mass scale because it is defined through the electric charge as $e^2/4 \pi$, and $e$ has a mass scale dependence. Marciano provides the estimate $\alpha (M_W) \approx 1/128.5$; we will use the same. Next, the term $N_H$ is the number of complex Higgs doublets in the theory; we set $N_H = 1$, the minimum value. The quantity $N_f$ is the number of fermion flavours: for the Standard Model (and $\SU(5)$ and $\Spin (10)$), this equals $6$, as seen in table \ref{quarksbygenstable}; similarly, from table \ref{particlesinthenofe6}, we see that we have $N_f = 4 \times 3 = 12$ for $\E_6$, since we add two new flavours (one each in $N$ and $\cc{N}$) per generation. Plugging all this in, the above formula simplifies to
\begin{equation*}
	\sin^2_{\E_6} \thetaw (M_W) = \frac{9}{20} \left( 1 - 0.015 \log{\frac{M_S}{M_W}} \right)\ .
\end{equation*}

So for example, if we take the measured values $\sin^2 \thetaw = 0.2223$ and $M_W = 80.385$, we see that the superheavy mass scale for the $\E_6$ theory is of the order $M_S = 3.592 \times 10^{16}\,\mrm{GeV}$. We caution that the value obtained from the formula is is quite sensitive to changes in the value of $\sin^2 \thetaw$ because of the logarithm; it decreases by about $50 \%$ for each increase of $0.005$ in $\sin^2 \thetaw$. One final remark is that we fixed the value of $N_H = 1$ (keeping with Marciano) since in general, Higgs scalars are often considered the ugliest features of gauge theories, and one would prefer to have as few of them as possible\footnote{See section \ref{furtherreadingsec} for references for the Higgs mechanism in $\E_6$.}; if this restriction is relaxed, there is some wiggle room in the above formulae to increase the value of $\sin^2{\thetaw}$ by increasing the value of $N_H$, and this in turn obviously has a direct bearing on $M_S$; table II in \cite{marciano79} estimates the size of this effect.

\section{Anomaly Cancellation}
\label{anomalycancellationsec}

In section \ref{symmetriessubsec}, we discussed Lagrangian symmetries, and saw their paramount importance; the transformation laws that we considered there were indeed the foundation for everything that came after. We return to this theme now, but with a different question as our starting point: which classical symmetries of the Lagrangian are elevated to \emph{quantum} symmetries?

The business of quantising a classical Lagrangian is a messy one. By way of illustration, consider the simplest case: given a Lagrangian $\mc{L}$ of a (real) scalar field $\phi$, one defines the \emph{generating functional} as
\begin{equation*}
	Z[J] := \int \Diff \phi \exp \left[i \int \diff^4 x \left( \mc{L} + J \phi \right) \right]\ ,
\end{equation*}
where $J \phi$ suggestively denotes a \emph{source} term, akin to electromagnetism.\footnote{See \cite[\S\ 28--30]{landau95} for a cogent presentation of the same.} The measure of integration, $\Diff \phi$, represents an integration over all possible field configurations.\footnote{It is common knowledge that foundational questions about the mathematical validity of this definition, and about the path integral formalism in general, remain. See \cite{albeverio11} for a mathematical overview, and \cite{helling12} for a physicist's take on the same.} We can now define the \emph{effective action}:
\begin{equation*}
	\Gamma [\phi_\mrm{cl}] = W[J] - \int \diff \tau \diff x J \phi_\mrm{cl}\ ,
\end{equation*}
where $W[J]$ is defined implicitly via $Z[J] = e^{-W[J]}$, $\phi_\mrm{cl}$ is the functional derivative\footnote{See \cite[Chs.~1.1.2, 1.3.3]{nakahara03}} $\delta W[J] / \delta J$, and $\tau = i t$ is the so-called \emph{Wick-rotated} time. Now, the effective action is given its name for obvious reasons: just as in classical mechanics, where the equations of motion are derived from the principle of stationary action, the equations of motion for the vacuum expectation values of quantum fields can be derived from the requirement that the effective action be stationary.\footnote{For example, if our Lagrangian includes a potential $V(\phi)$, at a low temperatures, the quantum field $\phi$ will not settle in a local minimum of $V(\phi)$ as in the classical case, but rather in a local minimum of the effective potential.} The details, and examples of such calculations can be found in \cite[Ch.~9]{peskin95}, or any other standard reference on quantum field theory; what we have here should suffice to motivate an effective quantum action for a classical Lagrangian.

The first anomaly that we will consider is the \emph{chiral anomaly}; to introduce the same, we will need the \emph{Dirac equation}. We have encountered in some detail already the Dirac spinors in chapter \ref{spin10gutchapter}, as elements of certain irreps of the $\Spin$ groups. This is a description free of dynamics, and therefore, far removed from Dirac's original conception of these particles.\footnote{In the 1928 paper \cite{dirac28} presenting his equation for the first time, he begins by asking ``why Nature should have chosen this particular model for the electron, instead of being satisfied with the point charge;'' his remarkable solution to this quandary was a theory that, for the first time, fully accounted for special relativity in the context of quantum mechanics. For a captivating account of the history and a lucid derivation of the equation, see \cite[Ch.~1.1]{weinberg05}.} The equation describes all spin-$1/2$ massive particles for which parity is a symmetry, i.e.~the leptons. In symbols, for a field $\psi$, which we take to be massless, and a gauge boson $A_\mu = \sum_\alpha \tensor{A}{_\mu^\alpha} T_\alpha$, written in a basis of generators\footnote{We have already seen in section \ref{symmetriessubsec} that gauge bosons are Lie-algebra valued 1-forms; hence, they can be expressed locally in a basis of (anti-Hermitian) generators $\{ T_\alpha \}$ of the Lie algebra $\mf{g}$. For more details, c.f~\cite[Ch.~4.6]{fuchs03}.} of the compact semi-simple symmetry group $G$, the theory is described by the Lagrangian
\begin{equation}
\label{diraceqnlagrangianeqn}
	\mc{L} = \sum_\mu i \cc{\psi} \gamma^\mu (\partial_\mu - A_\mu) \psi\ .
\end{equation}
The $\gamma^\mu$, $0 \leq \mu \leq 3$, are the \emph{gamma matrices}, which form a basis for the Clifford algebra $\Cl_{1,3} (\R)$; the subscript 1,3 denotes that instead of relation (\ref{cliffordalgebrarelationeqn}), we use $\{\gamma^\mu, \gamma^\nu \} = 2 \eta^{\mu \nu}$, where $\eta^{\mu\nu}$ is the Minkowski metric with the (physicists') signature $(+ - - -)$. As in section \ref{symmetriessubsec}, the Lagrangian is invariant under the local gauge transformation $\psi \mapsto g \psi$, $A_\mu \mapsto g A_\mu g\inv + \partial_\mu g g\inv$, but there is now an additional \emph{global} symmetry $\psi \mapsto e^{i \alpha \gamma^5} \psi$, where $\alpha \in \R$, and $\gamma^5 := i \gamma^0 \gamma^1 \gamma^2 \gamma^3 = \begin{psmallmatrix}
	\id_2 & 0\\
	0 & -\id_2
\end{psmallmatrix}$.\footnote{We note here that adding a mass term $m \psi \cc{\psi}$ to the Lagrangian spoils this symmetry, which is why we restrict ourselves to the massless situation. One can show that only such massless fermions contribute to anomalies anyway; see \cite[\S\ 7.2]{bilal08}.} This symmetry is chiral: in the standard (Dirac) basis, the left- and right-handed Weyl components of Dirac 4-spinor correspond to the first two and second two components of the 4-vector respectively; the effect of the $e^{i \alpha \gamma^5}$ is then to rotate these Weyl spinors in opposite directions, by the angle $\alpha$.

As with any other Lagrangian symmetry, the chiral symmetry corresponds to a current, which in this case can be shown to be
\begin{equation*}
	j^\mu_5 := \cc{\psi} \gamma^\mu \gamma_5 \psi\ .
\end{equation*}
Recall that we \emph{need} our theory of leptons to be chiral. The question of the hour is therefore the following: does this classically conserved quantity (i.e.~$\sum_\mu \partial_\mu j^\mu_5 = 0$) stay conserved when we pass to the quantised Dirac Lagrangian? The answer turns out to be \emph{no}. Unfortunately, deriving this result is a nuanced, technical calculation in quantum field theory, far outside the scope of this paper; we list some references in a footnote.\footnote{The lectures of Bilal \cite{bilal08} are specifically on this topic; Schwartz's book \cite[Ch.~30]{schwartz14} also has a detailed treatment; Nakahara \cite[Ch.~13]{nakahara03} derives the same result from a geometric point of view, making the connection to Atiyah-Singer-Index theory; the book of Nash \cite{nash91} goes even deeper into the mathematics.} The final result of this computation is stated as
\begin{align*}
	\sum_\mu \partial_\mu j^\mu_5 &= \sum_{\kappa, \lambda, \mu, \nu} \frac{1}{16 \pi^2} \epsilon^{\kappa \lambda \mu \nu} \tr F_{\kappa \lambda} F_{\mu \nu}\\
	&= \frac{1}{4 \pi^2} \tr \left[ \sum_{\kappa, \lambda, \mu, \nu} \epsilon^{\kappa \lambda \mu \nu} \partial_k \left( A_\lambda \partial_\mu A_\nu + \frac{2}{3} A_\lambda A_\mu A_\nu \right) \right]\ .
\end{align*}
This is not particularly illuminating as it stands. One can show however \cite[\S\ 4]{bilal08} that the right hand side can be rewritten such that it contains the term (recall that the $A_\mu$'s are written in terms of the group generators $T_\alpha$)
\begin{equation}
\label{anomalyvanisheqn}
	\tr \left( T_a \{ T_b , T_c\} \right)_\mrm{L}^{} - \tr \left( T_a \{ T_b , T_c \} \right)_\mrm{R}^{}\ ,
\end{equation}
where the subscript $\mrm{L}$ (resp.~$\mrm{R}$) denotes the representation of the left-handed (resp.~right-handed) fermions under consideration; our theory is said to be ``anomaly-free'' if this quantity vanishes.

References \cite[\S\ 7.3]{bilal08} and \cite[Ch.~30.4]{schwartz14} show how the explicit check proceeds in the case of the Standard Model gauge group; it is a surprisingly tame affair. In the case of the $\U(1)-\U(1)-\U(1)$ anomaly for instance, equation (\ref{anomalyvanisheqn}) simply reduces to $\sum_{\mrm{L}}^{} Y^3 - \sum_{\mrm{R}}^{} Y^3$, and a glance at table \ref{thesmfermionstable} will confirm that this indeed vanishes. The checks for the other subgroups of $G_\mrm{SM}$ are just as straightforward.\footnote{It is not necessary to carry out this check for all possible triples that can be made from $\U(1)$, $\SU(2)$ and $\SU(3)$; cf.~\cite[Ch.~30.4]{schwartz14}.}

Closer to our purposes, the vanishing of equation (\ref{anomalyvanisheqn}) is also a requirement on the groups used for grand unified theories; what can we say about these? Georgi and Glashow showed in \cite{georgi72} that if a group has only real (or pseudoreal) representations, it is automatically anomaly-free---as far as simple Lie groups go, this immediately allowed all theories with gauge groups $\SO(2n + 1)$ (including $\SU(2) \cong \SO(3))$, $\SO(4n)$ for $n \geq 2$, $\Sp (2n)$ for $n \geq 3$, as well as all the exceptional Lie groups other than $\E_6$ (which we know has complex representations). In the same paper, they went on to prove that all $\SO(n)$'s, excluding $\SO(6)$, are also anomaly-free, thus adding the case $\SO(4n + 2)$ for $n \geq 2$ to the list of allowed groups (note that this includes the $\Spin (10)$ theory). This left only the $\SU(n)$'s, for $n \geq 3$, and $\E_6$. The case of the $\SU(5)$ theory is discussed in \cite[Ch.~5.2]{mohapatra02} for example; it suffices for us to note that the representation that we employed is indeed free of anomalies. As for $\E_6$, the following fact was clear to G\"ursey and collaborators \cite{gursey75} right at the advent of this theory:

\begin{thm}
	All representations of $\E_6$ are anomaly-free.
\end{thm}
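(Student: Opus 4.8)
The plan is to reduce the vanishing of the anomaly coefficient $\tr(T_a\{T_b,T_c\})_R$ to a purely Lie-algebraic statement about $\mf{e}_6$, and then exploit the structure theory of simple Lie algebras. The key observation is that for a simple Lie algebra $\mf{g}$, the symmetric trilinear form $d_R(X,Y,Z) := \tr_R\left(X\{Y,Z\}\right)$ on $\mf{g}$ is an $\Ad$-invariant symmetric tensor; by complete reducibility and Schur's lemma, the space of $\Ad$-invariant symmetric trilinear forms on $\mf{g}$ is at most one-dimensional, so $d_R$ is proportional to a single ``canonical'' cubic form $d_{\mathrm{can}}$, with the proportionality constant $A(R)$ (the \emph{anomaly coefficient}) depending only on the representation $R$. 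Thus all anomalies vanish for $\mf{e}_6$ as soon as one shows $d_{\mathrm{can}} \equiv 0$, i.e. that $\mf{e}_6$ admits \emph{no} nonzero $\Ad$-invariant symmetric cubic form. Equivalently, in the language used later in the paper, this is the statement that the relevant cubic Casimir of $\mf{e}_6$ vanishes identically.

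First I would set up the invariant-tensor argument carefully: show that $d_R$ is symmetric (immediate from the anticommutator) and $\Ad$-invariant (differentiate $\tr_R(R(g)XR(g)^{-1}\cdots)$, or use the cyclicity of the trace together with $[T_a, T_b]$-identities), and then invoke the uniqueness-up-to-scalar of such forms for simple $\mf{g}$. This already shows the anomaly coefficient is well-defined and that it suffices to test on \emph{one} convenient representation $R$. Second, I would choose $R$ to be the $27$-dimensional representation $N$ of $\E_6$ from Corollary~\ref{e6repscor}, and compute $d_N$ using the explicit decomposition of $N$ under $\Spin(10)\times\U(1)/\Z_4$, namely $N|_{\Spin(10)\times\U(1)} = \xi^{-4} + \Lambda^1_{10}\otimes\xi^2 + \Delta^+\otimes\xi^{-1}$. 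The point is that $\cc{N}\cong N^*$, and for \emph{any} representation $V$ one has $d_{\cc{V}} = -d_V$ (since $\cc{V}$ is the dual, where generators act by minus the transpose); hence it suffices to produce, for $\E_6$, a representation $V$ with $V\cong\cc{V}$ — any self-dual representation forces $d_V = 0$, and since $d$ is proportional to $d_{\mathrm{can}}$, this kills $d_{\mathrm{can}}$ and hence \emph{all} anomaly coefficients. So the cleanest route: exhibit a self-conjugate (real or pseudoreal) representation of $\E_6$, e.g. the adjoint $\mf{e}_6$ itself (which is manifestly real, being the Lie algebra), conclude $d_{\mathrm{adj}} = 0$, and therefore $d_{\mathrm{can}} = 0$.

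The main obstacle is the uniqueness claim — that the space of $\Ad$-invariant symmetric cubic forms on a \emph{simple} complex Lie algebra is at most one-dimensional. This is the technical heart: it is true for all simple Lie algebras except those of type $A_n$ with $n\geq 2$ (where there genuinely is a one-dimensional space, spanned by the symmetrized trace in the defining representation, which is why $\SU(n)$ theories require explicit anomaly cancellation), and for $\mf{e}_6$ one must know that this one-dimensional space is present but that its generator evaluates to zero on self-dual representations — which is automatic once we know the adjoint is self-dual and the space is one-dimensional. I would cite the standard fact (e.g. via the classification of invariants, or van der Waerden's argument, or simply the observation that the third symmetric power of $\mf{g}^*$ contains the trivial representation with multiplicity $\leq 1$ for all simple $\mf{g}$ except $A_{n\geq2}$, and exactly once for $A_{n\geq 2}$ and $\mf{e}_6$) rather than reprove it. Then the argument closes: $d_{\mathrm{can}}$ spans this at-most-one-dimensional space, $d_{\mathrm{adj}}$ is a scalar multiple of it, $d_{\mathrm{adj}} = -d_{\mathrm{adj}^*} = -d_{\mathrm{adj}}$ forces $d_{\mathrm{adj}} = 0$, hence $d_{\mathrm{can}} = 0$, hence $A(R) = 0$ for every representation $R$ of $\E_6$. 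This is exactly Okubo's argument, and I would present it in that form, deferring the self-duality-kills-cubic-invariants lemma and the uniqueness statement to citations of \cite{okubo77}.
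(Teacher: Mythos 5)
Your overall plan---reduce the anomaly to the vanishing of a $\mathfrak{g}$-invariant symmetric cubic form on $\mathfrak{e}_6$---is the correct one and matches the paper's (Okubo's) strategy. But the specific mechanism you propose for killing that cubic form contains a genuine gap, and the factual claim that props it up is false.

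The gap: you argue that since the adjoint representation is self-dual, $d_{\mathrm{adj}} = 0$, and since the space of invariant symmetric cubic forms is at most one-dimensional with $d_{\mathrm{adj}}$ proportional to its generator $d_{\mathrm{can}}$, it follows that $d_{\mathrm{can}} = 0$. This inference is invalid: $d_{\mathrm{adj}} = A(\mathrm{adj})\,d_{\mathrm{can}}$ and $d_{\mathrm{adj}} = 0$ only tells you that either $A(\mathrm{adj}) = 0$ or $d_{\mathrm{can}} = 0$; it does not pin down which. Worse, the adjoint representation of \emph{every} simple Lie algebra is real, hence self-dual, so $d_{\mathrm{adj}} = 0$ always. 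If your implication were valid, it would prove that $\mathfrak{su}(3)$ is anomaly-free, which is false---the fundamental of $\SU(3)$ has $d_{\mathrm{fund}} \neq 0$. So self-duality of the adjoint carries no information here; it is a red herring for every simple algebra, $\mathfrak{e}_6$ included. Relatedly, the supporting claim you make, that $S^3\mathfrak{g}^*$ contains the trivial representation ``exactly once for $A_{n\geq 2}$ and $\mathfrak{e}_6$,'' is incorrect: the degrees of the fundamental Casimir invariants of $\mathfrak{e}_6$ are $2, 5, 6, 8, 9, 12$---there is no degree-$3$ invariant at all, so the space of invariant symmetric cubics is \emph{zero}-dimensional, not one-dimensional. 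The whole content of the theorem is precisely this fact, and it needs to be proved, not assumed and then ``killed'' by a self-duality trick that cannot work.

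What the paper actually does (Okubo \cite{okubo77}) is establish that zero-dimensionality directly, via vector operators on the adjoint representation: the number of linearly independent vector operators on an irrep is $\nu(\rho) = l - \nu_0(\rho)$, and applied to the adjoint this gives $\nu(\mathrm{ad}) = 2$ for $A_{n\geq 2}$ and $\nu(\mathrm{ad}) = 1$ for every other simple algebra, including $\mathfrak{e}_6$. Any vector operator on the adjoint yields a triple form $X_{abc}$ satisfying equation~(\ref{equivwayofwritestreqn}) which is either completely symmetric or completely antisymmetric, and the antisymmetric one is forced to be proportional to the structure constants $f_{abc}$, i.e.~it comes from the vector operator $\{T_a\}$ itself. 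When $\nu(\mathrm{ad}) = 1$, that is the \emph{only} vector operator, so no symmetric triple form exists, hence $\tr(T_a\{T_b,T_c\}) \equiv 0$ for every representation. That argument is genuinely different from yours: it uses concrete root-system data (the $m_i$'s in the highest-weight decomposition of the adjoint) rather than a formal self-duality argument, and this is unavoidable---the distinction between $A_n$ and everything else is not visible at the level of ``which representations are self-dual,'' but only at the level of counting invariant tensors. If you want to retain your formulation, you would need to replace the self-duality step with a citation to, or a proof of, the statement that $[S^3\mathfrak{e}_6^* : \mathbf{1}] = 0$, which is exactly Okubo's $\nu(\mathrm{ad}) = 1$ computation in different clothing.
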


The remainder of this section will be dedicated to sketching a proof of this result; we will follow Okubo's paper from 1977 \cite{okubo77}. At the crux of his proof is the following claim: the calculation of an $n$-fermion closed-loop diagram is related to a study of the $n$th order Casimir invariant of the algebra $\mf{g}$ of the symmtery group $G$. We will slowly work towards understanding what this means, and how the proof proceeds therefrom.

Introducing Feynman diagrams in detail is outside our scope here\footnote{The reader is once again encouraged to consult \cite{peskin95, schwartz14, weinberg05} or any standard quantum field theory reference.}, but we must say a few words: these diagrams are representations of certain mathematical expressions that arise in perturbative (read: almost all) calculations in quantum field theory, usually of the scattering amplitudes of particles. The chiral anomaly we have been considering thus far is often called a \emph{triangular anomaly} because to one-loop (first order), the Feynman diagram looks like so.
\begin{center}
	\begin{tikzpicture}[scale=0.4]
	\draw [thick]
	(0,0) -- (3,1.73)
	(3,1.73) -- (3,-1.73)
	(0,0) -- (3,-1.73);
	\draw [thick] decorate [decoration={coil, aspect=0}]
	{(-3.5,0) -- (-0.5,0)
	(3,1.73) -- (5.6,3.23)
	(3,-1.73) -- (5.6,-3.23)};
	\end{tikzpicture}
\end{center}
Here, the external legs represent any of the gauge bosons of the theory, while the fermions circulating in the internal lines can be in any relevant representation of the gauge groups. So according to Okubo, since our anomaly arises in the $3$-fermion closed-loop diagram, we need to concern ourselves with the $3$rd order Casimir invariant of $\E_6$. To introduce the same, we shift our analysis to the level of Lie algebras, recalling (section \ref{thenewfermionssubsec}) that for simply connected Lie groups, this involves no sacrifice of generality.

\begin{defn}[Structure Constants]
	For a $d$-dimensional Lie algebra, consider any set of $d$ basis vectors, or generators, $\{ t_a \}$ . Because of bilinearity, the Lie bracket is determined uniquely if it is known on a basis set; therefore one can define the Lie bracket, and hence the Lie algebra $\mf{g}$ abstractly through the expansions
	\begin{equation*}
		[t^a , t^b] = \sum_{c = 1}^d \tensor{f}{^{ab}_c} t^c\ ,
	\end{equation*}
	where the $\tensor{f}{^{ab}_c}$ are called the structure constants of $\mf{g}$.
\end{defn}

Because of the antisymmetry of the Lie bracket, the structure constants satisfy $\tensor{f}{^{ab}_c} = - \tensor{f}{^{ba}_c}$; from the Jacobi identity, we further have
\begin{equation*}
	\sum_{c = 1}^d \left( \tensor{f}{^{ab}_c}\tensor{f}{^{cd}_e} + \tensor{f}{^{da}_c}\tensor{f}{^{cb}_e} + \tensor{f}{^{bd}_c}\tensor{f}{^{ca}_e} \right) = 0\ .
\end{equation*}
The next idea that we wish to consider requires the tensor algebra of the vector space (definition \ref{tensoralgebradefn}) over which $\mf{g}$ is defined. To endow this very general product with the structure that $\mf{g}$ carries, we make an obvious identification: an element of the form (we suppress the $\otimes$ symbol for brevity)
\begin{equation*}
	x_1 x_2 \cdots x_i x_{i+1} x_{i +2} \cdots x_n - x_1 x_2 \cdots x_{i + 1} x_i x_{i +2} \cdots x_n \in V^{\otimes n}
\end{equation*}
is identified with
\begin{equation*}
	x_1 x_2 \cdots [x_i , x_{i + 1} ] x_{i + 2} \cdots x_n \in V^{\otimes (n -1)}\ .
\end{equation*}
This quotient still has the structure of an associative algebra (with a unit element), and is called the \emph{universal enveloping algebra} of $\mf{g}$.

\begin{defn}[Vector Operator]
	A collection  of $d$ elements $\{x_a \}$ belonging to the universal enveloping algebra of $\mf{g}$ is called a vector operator on $\mf{g}$ if the following relation holds:
	\begin{equation}
	\label{vectoroperatoreqn}
		[t^a , x^b] = \sum_{c = 1}^d \tensor{f}{^{ab}_c} t^c\ .
	\end{equation}
\end{defn}

Obviously, $\{ t_a \}$ is a vector operator, but it is not in general the only one. We can also define vector operators $\{ x_a \}$ for a given $n$-dimensional representation $\rho$ of $\mf{g}$, if $\{ t_a \}$ and $\{ x_a \}$ are $n \times n$ matrices satisfying the structure equation and the relation above.

Let us restrict to the case that $\rho$ is an irrep of $\mf{g}$, and moreover demand that $\mf{g}$ be simple. Then Okubo showed in \cite{okubo77b} that there is a simple relationship between the number of all linearly independent vector operators on the representation $\rho$, and the \emph{highest weight} $\Lambda$ of that representation. This latter quantity is defined by
\begin{equation*}
	\Lambda = m_1 \Lambda_1 + m_2 \Lambda_2 + \cdots m_l \Lambda_l\ ,
\end{equation*}
where $l$ is the rank\footnote{We have technically only defined the notions of rank and roots for Lie groups, but it should be clear that these concepts can be extended quite naturally to Lie algebras. For the details, see \cite[Ch.~6]{hall04}.} of $\mf{g}$, the $\Lambda_i$'s are its roots (definition \ref{rootsdefn}), and the $m_i$'s are non-negative integers specified uniquely\footnote{It is not at all obvious that such a unique decomposition in terms of roots should exist for an arbitrary irrep of $\mf{g}$; we refer the reader to \cite[Ch.~7]{hall04} for a proof of this, the so-called \emph{highest weight theorem}.} by the representation $\rho$. Let us denote the number of $m_i$'s which are zero by $\nu_0 (\rho)$; then the number of linearly independent vector operators $\nu (\rho)$ is given by
\begin{equation*}
	\nu (\rho) = l - \nu_0 (\rho)\ .
\end{equation*}
In other words, $\nu (\rho)$ is equal to the number of $m_i$'s which are positive. We can apply this theorem immediately: for the standard Lie algebras, their roots (and weights) have been studied and tabulated \cite{patera89}; consulting these, we see quickly that the algebras with Dynkin diagram type $A_n$, for $n \geq 2$, have $\nu (\ad) = 2$, and $\nu (\ad) = 1$ for all other algebras. This fact will be extremely important in what follows.

Let us consider the adjoint representation of $\mf{g}$ in some more detail. Set $T_a = \ad \, t_a$, so that the $pq$-th entry of this matrix is given by $\tensor{(T_a)}{^p_q} = \tensor{f}{^p_{aq}}$; the $d \times d$ matrices $\{T_a\}$ clearly satisfy the structure equation
\begin{equation*}
	[T^a , T^b] = \sum_{c = 1}^d \tensor{f}{^{ab}_c} T^c\ .
\end{equation*}
Recall now the Killing form, definition \ref{thekillingformdefn}; we will denote the same by $g_{ab} = \tr T_a T_b$. Introduce now the vector operator $\{ X_a \}$ on the adjoint representation; by definition, it satisfies $[T^a , X^b] = \sum_{c = 1}^d \tensor{f}{^{ab}_c} X^c$. We further introduce the triple linear forms
\begin{align*}
	\tensor{X}{^p_{aq}} &= \tensor{(X_a)}{^p_q}\ ,\\
	X_{abc} &= \sum_{i = 1}^d g_{ai}^{} \tensor{X}{^i_{bc}}\ .
\end{align*}
In particular, for $\{ X_a \} = \{ T_a \}$, the second equation reduces to $T_{abc} = \sum_{i = 1}^d g_{ai}^{}\tensor{f}{^i_{bc}}$; so $T_{abc}$ is completely antisymmetric in its indices. Now, Okubo showed that an equivalent way of writing the vector operator equation (\ref{vectoroperatoreqn}) is the following:
\begin{equation}
\label{equivwayofwritestreqn}
	\sum_{a = 1}^d \tensor{f}{^a_{bc}} \tensor{X}{_{ade}} + \tensor{f}{^a_{bd}} \tensor{X}{_{cae}^{}} + \tensor{f}{^a_{be}} \tensor{X}{_{cda}} = 0\ ; 
\end{equation}
he further showed that the $X_{abc}$ satisfying these equations can be chosen to be either completely symmetric or completely antisymmetric, and moreover, that the completely antisymmetric $X_{abc}$ must be proportional to the structure constants $f_{abc}$. Hence, by the result in the previous paragraph, we conclude that for all simple Lie algebras other than the $A_n$'s, the $X_{abc}$'s must be antisymmetric, because $\{ X_a \}$ must be proportional to $\{ T_a \}$, the one and only vector operator on the adjoint representation. For the $A_n$'s, there is an additional vector operator.

In order to obtain the ``upper-indices'' version of the form $X_{abc}$, we do the obvious thing, raising indices with the Killing form: $X^{abc} = \sum_{p,q,r} g^{ap} g^{bq} g^{cr} X_{pqr}$; the form we recover must clearly also be either symmetric or antisymmetric. If we then set
\begin{equation*}
	I_3 = \sum_{a, b, c = 1}^d X^{abc} T_a T_b T_c\ ,
\end{equation*}
this defines a \emph{Casimir operator} (of the third order) of $\mf{g}$. These operators are sometimes referred to as Casimir invariants\footnote{For further discussion of these invariants, see \cite[Ch.~17.8]{fuchs03} and references therein.}, because they are invariant under the action of $\mf{g}$ on its adjoint representation---they are, in other words, intertwining operators, as defined in section \ref{symmetriessubsec}. What is important for our purposes here is that the formula for $I_3$ above is in fact the most general form of a third-order Casimir operator, and that the coefficients $X_{abc}$ \emph{must} be symmetric. So from the result in the previous paragraph, we conclude that $I_3 \equiv 0$ for every algebra other than the $A_n$'s.\footnote{For completeness, we note that in the case of the antisymmetric $X_{abc}$'s, $I_3$ reduces to a scalar multiple of the Killing form, which is in fact the second-order Casimir invariant.}

We are now ready to understand the claim that the calculation of an $n$-fermion closed-loop diagram is related to a study of the $n$th order Casimir invariant of $\mf{g}$. Let $\{ T_a \}$ be the representation matrices of the $\{ t_a \}$ for a generic irrep of $\mf{g}$; set
\begin{equation*}
	X_{abc} = \tr{T_a \{ T_b , T_c \}}\ .
\end{equation*}
Obviously, $X_{abc}$ is completely symmetric. Moreover, it satisfies equation (\ref{equivwayofwritestreqn}) if we note the trivial identity $\tr{[T_p , T_a \{ T_b, T_c \} ]} = 0$. Thus, $X_{abc}$ is a symmetric triple form, corresponding to some vector operator $\{ X_a \}$; it must hence vanish identically, and so must equation (\ref{anomalyvanisheqn}). This completes the sketch of the proof that all representations of $\E_6$ are anomaly-free.\footnote{We have of course shown more: we have shown that all representations of every algebra other than $\SU (n)$ for $n \geq 3$ are anomaly-free.}

\section{Other Signatures}
\label{furtherreadingsec}

Since grand unified theories have ever been within the purview of physicists, it is only fitting that we devote this final section of the paper to these tireless inquirers, for whom it is never good enough that a theory be beautiful, and rightly so; they demand that it be predictive, and hence, falsifiable. So in the following paragraphs, we will attempt to paint in broad strokes the answers to some of the questions that have naturally arisen during our analysis, but on which we have hitherto been silent; it is scarcely necessary to add that we are striving neither for comprehensiveness nor exhaustiveness here.

Let us begin with the simplest question: what signatures might we expect from the $\E_6$ theory? We quote from a recent survey paper precisely about this topic that will set the tone for the rest of the discussion:
\begin{quote}
	Signatures of $\E_6$ include [an] extension of the Higgs sector; existence of neutral $Z'$ gauge
bosons at masses above the electroweak scale\ldots; the production of new vector-like quarks
and leptons, and manifestations of the neutral fermion\ldots through its mixing with
other neutral leptons, giving rise to signatures of ``sterile'' neutrinos. Up to now,
with the possible exception of weak evidence for sterile neutrinos there has been no
indication of the extra degrees of freedom entailed by the $27$-plet of $\E_6$.  \cite{joglekar16}
\end{quote}
Setting aside for the moment that the outlook for the $\E_6$ theory is rather bleak, let us try and understand the terms above that we have not yet encountered.

We have refrained from speaking about the Higgs mechanism thus far, and unfortunately, our silence on the same will continue---references \cite{barbieri81, barbieri80, bowick81, gursey78, gursey81} are some early papers that examine this mechanism within the context of mass scales in the $\E_6$ model. One thread that runs through them all is worth examining, since it is directly concerned with the most obvious thing one would think to look for to ratify the $\E_6$ theory, namely, new fermions. Here we introduce the \emph{Survival Hypothesis} \cite{barbieri80b, georgi79}: stated succinctly, it says that low-mass fermions are those that cannot receive $G_\mrm{SM}$ invariant masses. To understand what this means, recall from the previous section that mass terms spoil the invariance of the Lagrangian under chiral symmetry; the survival hypothesis thus postulates that when the grand unification symmetry group is broken down to the Standard Model gauge group, the fermions which do not acquire mass are those that cannot receive mass terms invariant under $G_\mrm{SM}$; in particular, this means that all the particles that do admit such a mass term will receive a superheavy mass, since the symmetry breaking occurs at grand unification scales.\footnote{In the previous section, for instance, the superheavy mass scale was found to be on the order of $10^{16}\,\mrm{GeV}$.} Put another way, most fermions in a grand unified theory should have masses on unification scales; those that do not are associated with one of the two unbroken gauge symmetry groups $G_\mrm{SM}$ or $\U(1) \times \SU(2)$, since both of these demand chiral symmetry.\footnote{They could also be Higgs $\SU(2)$ doublets, but we ignore this possibility here.} The upshot is the following: we do not see the new fermions of the $\E_6$ theory because they are phenomenally heavy, many orders of magnitude outside the reach of even the most powerful detectors. For a thorough examination of mass scales in grand unified theories in general, and $\E_6$ in particular, see \cite{robinett82}. We note that this discussion is of course not valid only for $\E_6$, but is formulated as a general principle in grand unified theories; this is the ``fermion desert''. In Georgi's words, ``If [the above] picture is correct, physics between $300\,\mrm{GeV}$ [the Standard Model mass scale] and $10^{14}\,\mrm{GeV}$ is boring. There is a grand plateau in momentum scale on which the world is well-described by an $\SU(3) \times \SU(2) \times \U(1)$ gauge theory\ldots there will be no new interactions below $10^{15}\,\mrm{GeV}$.'' \cite{georgi79}

Let us turn our attention now to the aforementioned $Z'$ bosons. A detailed analysis of their phenomenology is outside the scope of this paper, but they arise quite naturally in representation theory, and \emph{this} we can certainly understand. If gauge bosons live in the complexified adjoint representation of the symmetry group $G$, it is clear that there is something special about the maximal set of elements in $\mf{g}$ that commute with each other, i.e.~the\footnote{There are of course many different choices of a Cartan subalgebra for a given semisimple Lie algebra, but all of these are related by automorphisms of $\mf{g}$, so we abuse terminology and speak here as though our choice were unique.} \emph{Cartan subalgebra} of $\mf{g}$. In general, these commuting elements make for good quantum numbers (charges); the best way to see this is by choosing the \emph{Cartan-Weyl basis} for $\mf{g}$, which we describe now. Let us consider the complexified adjoint representation of $\mf{g}$ (while suppressing the use of the $\ad$ operator notation): if $\mf{g}$ has rank $l$ and dimension $d$, let $\{ x^i \}$, for $i = 1, \ldots, l$, be the Cartan subalgebra; then one can show that the set $\{ x^i \}$ can be completed to a basis $\{ x^i, t^\alpha \}$ of $\mf{g}$ such that
\begin{equation*}
	[x^i , t^\alpha] = \alpha (x^i) t^\alpha \qquad \text{for } i = 1, 2, \ldots l\ ,
\end{equation*}
where the eigenvalue $\alpha (x^i)$ is non-vanishing for at least one value of $i$. This is the Cartan-Weyl basis for $\mf{g}$, sometimes called the \emph{canonical} or \emph{standard} basis. The most pertinent example of this construction is something that we have already encountered in the $\SU(2)$ weak force. The relevant Lie algebra, $\mf{sl} (2, \C)$, is of rank $1$ and spanned by the $W$ bosons; they form a Cartan-Weyl basis since they satisfy
\begin{equation*}
	[W^0, W^\pm] = \pm 2 W^\pm\ , \qquad [W^+, W^-] = W^0\ .
\end{equation*}
Hence, we see that the basis for the 1-dimensional Cartan subalgebra is indeed given by a quantum number operator, the isospin matrix $\hat{I}_3 = 2 W^0$.

To extend the notion of charges to gauge bosons, recall first this aspect of $\hat{I}_3$: in a standard (doublet) basis for the space of weak-theory fermions $\C^2$, the isospin of a particle was simply given by the eigenvalue of the action of the $\hat{I}_3$ operator on said fermion. It should be clear that the correct generalisation of the above concept is the following: since the gauge bosons transform in the adjoint representation of the symmetry group $G$, they can act on \emph{each other} through the adjoint action of $\mf{g}$ on itself\footnote{Clearly, this action is non-trivial only if $G$ is non-abelian.}; moreover, in the Cartan-Weyl basis, the charges of the gauge bosons are given (up to normalisation) by their roots. In the case of the $\SU(2)$ theory for instance, this yields the correct isospin for $W^\pm$, i.e.~$\pm 1$, since $[\hat{I}_3, W^\pm] = \pm W^\pm$. But notice that we now have a highly interesting statement about these number-generating gauge bosons themselves: all of their quantum charges must vanish since they belong to the Cartan subalgebra (and hence commute with each other). In other words, the number of neutral gauge bosons in a theory is given by the rank of its symmetry group \cite{leike99}. The bearing of this discussion on grand unified theories is straightforward: when we break from $\E_6 \to \Spin (10) \to \SU(5) \to G_{\mrm{SM}}$, we break from a rank $6$ group to a $5$ to a $4$, and then once again to a $4$. Hence, while additional neutral gauge bosons are forbidden in the (standard) $\SU(5)$ theory, both $\Spin (10)$ and $\E_6$ each obtain one additional neutral gauge boson when their symmetry is broken.

This is where representation theory ends and quantum field theory begins. The literature on neutral $Z'$ bosons is vast and varied, and we will not undertake a review of the same here; we point the reader instead to \cite{langacker09, leike99} and references therein for summaries of the physics and the phenomenology respectively; both cover $\E_6$ in some detail. Reference \cite{patrignani16} has the current exclusion limits on the masses of these $Z'$ bosons for the $\Spin (10)$ and $\E_6$ theories: the lower bounds are all on the order of $10^3\,\mrm{GeV}$.

Let us now consider sterile neutrinos. We have assumed throughout our analysis that these exist, incorporating them into the $\SU(5)$ theory, and then consequently into the $\Spin (10)$ and $\E_6$ theories. Rosner \cite{rosner14} recently carried out a phenomenological analysis for the sterile neutrinos in the $\E_6$ theory, of which there are three, from the three copies of $\cc{N}$. In his framework, the traditional candidates for sterile neutrinos, the $\nu_R$'s, obtain extremely heavy masses and become unimportant, while the $\Spin (10)$ singlets  (the final entry in table \ref{particlesinthenofe6}) acquire light masses, and are promoted to sterile neutrino status. The other exotic fermions in the theory remain heavy, and mix only weakly with the Standard Model fermions, as per the survival hypothesis. Finally, he notes that only two of these three $\Spin (10)$ singlets are required to account for present data in neutrino oscillation experiments, leaving one neutrino free to be a candidate dark matter particle. This would fit neatly into the picture of ``dark electromagnetism'' proposed by Ackerman et al.~\cite{ackerman09}: in this scenario, dark matter particles interact via a new gauge boson corresponding to some $\U(1)$ theory that is unbroken at the vacuum---the $\U(1)'$ gauge group that arises in the breaking of the $\E_6$ theory is a natural candidate for the same. Schwichtenberg however, argues in \cite{scwichtenberg17} that the $\Spin (10)$ singlet is not the correct choice for a candidate dark matter particle in the $\E_6$ theory. Instead, he makes a case for the exotic neutrino in the $\cc{\Lambda^1_5} \otimes \xi^2$ representation of $\SU(5) \times \U(1)'$ (this is the third entry from the bottom in table \ref{particlesinthenofe6}). Diving into the details of this interesting debate is unfortunately beyond our scope here; we simply wanted to note that the exotic fermions in the $\E_6$ theory provide a playground to explore such ideas.

We end with a brief discussion on perhaps the most famous prediction of grand unified theories: proton decay. Simply put, since each inclusion of $G_\mrm{SM}$ into the unification gauge groups that we have been considering involves significant jumps in dimension---from $12$ to $24$ to $45$ and finally to the $78$-dimensional $\E_6$---we obtain at each step a huge number of new gauge bosons. These mediate new interactions between particles, one of which is the decay of the proton, which is stable in the standard model; the review article \cite{langacker80} by Langacker treats the subject of proton decay in depth. The original $\SU(5)$ model predicted a maximum proton lifetime on the order of $10^{31}$ years \cite{frampton83} which was subsequently disproved by the Super-K(amiokande) experiment.\footnote{\cite{abe17} is the most recent publication from the collaboration, summarising data from the 20 years (!) that this experiment has been running.} Following a long period where the $\Spin (10)$ theory was thought to be as dead as the $\SU(5)$, Bertolini et al.~\cite{bertolini09} re-examined proton decay in $\Spin (10)$and discovered that it was still viable. For the $\E_6$ theory, \cite{london86} and \cite{robinett82} are two early references that go into great detail regarding proton decay via many possible symmetry breaking chains of $\E_6$; one take-away point is that in almost every case, the proton decay rate for the $\SU(5)$ theory is a lower-bound for the same in the $\E_6$ theory, with the possibility of extending the proton lifetime by some orders of magnitude above the $\SU(5)$ bound depending on how certain parameters in the theory are chosen. 

This brings us to our final point: the parameter space of grand unified theories is (usually) large enough for all manner of tinkering and fine-tuning to match data. In some sense then, they still have a shot at corresponding to reality. And yet, we steadily seem to be approaching a point where testing them is getting difficult to the point of being infeasible. As a recent article notes,
\begin{quotation}
	But while [the aforementioned] Super-K could suddenly strike gold in the next few years and confirm one of these models, it could also run for another 20 years, nudging up the lower limit on the proton's lifetime, without definitively ruling out any of the models.
	
	Japan is considering building a \$1 billion detector called Hyper-Kamiokande, which would be between 8 and 17 times bigger than Super-K and would be sensitive to proton lifetimes of $10^{35}$ years after two decades. It might start seeing a trickle of decays. Or it might not. ``We could be unlucky,'' [S.~M.] Barr\footnote{One of the inventors of the flipped $\SU(5)$ theory \cite{barr82}.} said. ``We could build the biggest detector that anyone is ever going to build and protons decay just a little bit too slow and then we're out of luck.''~\cite{fisker16}
\end{quotation}
Indeed. So while the incompleteness and seeming arbitrariness of the Standard Model remain strong motivators to seek a more complete, natural physics, it seems that the best we can do right now, at least as regards grand unified theories, is simply to wait. It would appear that nature is not so keen to give up her secrets just yet.

\begin{appendices}
\setcounter{chapter}{-1}
\chapter{The Construction of $\mrm{G}_2$}
\renewcommand{\thechapter}{A}
\counterwithin{defn}{chapter}

\label{constructionofg2appen}

The reference for this appendix is \cite{adams96}. For large values of $n$, the Dynkin diagrams of type $A_n$, $B_n$, $C_n$, $D_n$ are distinct. For small values of $n$, we have the possibility of exceptional isomorphisms between the classical groups as follows.
\begin{enumerate}
	\item $\Spin (6) \cong \SU (4)$. Both have dimension 15, rank 3 and the Dynkin diagram
	\raisebox{-1em}{\begin{tikzpicture}[scale=0.8, transform shape]
\draw[fill=black]
(0,0.5) circle [radius=.08] 
(0,-0.5) circle [radius=.08]
(1,0) circle [radius=.08];
\draw 
(0,0.5) -- (1,0)
(0,-0.5) -- (1, 0);
\end{tikzpicture}}\ .
	\item $\Spin (5) \cong \Sp (2)$. Both have dimension 10, rank 2 and the Dynkin diagram \begin{tikzpicture}[scale=0.8, transform shape]
\draw[fill=black]
(0,0) circle [radius=.08] 
(1,0) circle [radius=.08];
\draw 
(0,-0.06) -- (1,-0.06)
(0,+0.06) -- (1,0.06);                      
\end{tikzpicture}\ .
	\item $\Spin (3) \cong \SU (2) \cong \Sp (1) = S^3 \subset \mbb{H}$.
	\item $\Spin (4) \cong S^3 \times S^3$.
\end{enumerate}
We prove the first two isomorphisms.

\begin{prop}
\label{spin6congsu4}
	We have the following isomorphisms of Lie groups: $\Spin (6) \cong \SU (4)$ and $\Spin (5) \cong Sp (2)$.
\end{prop}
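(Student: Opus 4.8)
The plan is, in each case, to realise $\Spin(6)$ (resp.\ $\Spin(5)$) inside $\SU(4)$ (resp.\ $\Sp(2)$) by means of a spinor representation of the smallest possible complex degree, and to finish with a dimension count. Since $6 = 2\cdot 3$ and $5 = 2\cdot 2 + 1$, the representations supplied by Proposition \ref{diracspinorprop} are the half-spinor representation $\Delta^+$ of $\Spin(6)$ and the spinor representation $\Delta$ of $\Spin(5)$, each of complex degree $2^2 = 4$. In both cases I would first check faithfulness: the kernel is a discrete normal subgroup of a connected group, hence central, hence contained in any maximal torus $T$; and the weights $\tfrac12(\pm x_1\pm x_2\pm x_3)$ of $\Delta^+$ (even number of minus signs), resp.\ $\tfrac12(\pm x_1\pm x_2)$ of $\Delta$, separate $e$ from every other point of $T$. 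For instance, for $\Delta^+$ suitable pairs of weights sum to $x_1,x_2,x_3$, so if all weights are integral on some $t\in T$ then so is every $x_j$, which forces $t=e$; the argument for $\Delta$ is identical.

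For $\Spin(6)\cong\SU(4)$: regard $\Delta^+$ as a homomorphism $\Spin(6)\to\GL(4,\C)$. Weyl's unitarian trick supplies an invariant Hermitian form, so the image lies in $\U(4)$; since $\Spin(6)$ is semisimple it admits no non-trivial homomorphism to the abelian group $\U(1)$, so the determinant is identically $1$ and the image lies in $\SU(4)$. By the previous paragraph the map is injective, so its differential $\so(6)\to\su(4)$ is an injection of Lie algebras of the common dimension $15$, hence an isomorphism; therefore $\Delta^+$ has open image, and $\SU(4)$ being connected, it is onto. Thus $\Delta^+\colon\Spin(6)\xrightarrow{\sim}\SU(4)$.

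For $\Spin(5)\cong\Sp(2)$: by Proposition \ref{selfdualspinrepprop} the representation $\Delta$ is self-dual, so, being irreducible, it carries a non-degenerate invariant bilinear form, unique up to scale and necessarily symmetric or antisymmetric. It cannot be symmetric: in that case, together with the invariant Hermitian form, the (faithful) image would land in $\OR(4,\C)\cap\U(4)\cong\OR(4)$ (a matrix that is both complex-orthogonal and unitary is real orthogonal), a group of dimension $6<10=\dim\Spin(5)$. Hence the form is antisymmetric, $\Delta$ is a unitary symplectic representation, and the image lies in $\Sp(4,\C)\cap\U(4)=\Sp(2)$ (the compact symplectic group) — equivalently, the Hermitian and symplectic forms combine into a quaternionic structure as in Remark \ref{structuremapsremark}, so that $\Spin(5)$ acts $\mbb{H}$-unitarily on $\Delta\cong\mbb{H}^2$. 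The differential $\so(5)\to\mf{sp}(2)$ is again an injection of Lie algebras of the common dimension $10$, so the same open-image-and-connectedness argument gives $\Delta\colon\Spin(5)\xrightarrow{\sim}\Sp(2)$. Alternatively, once the first isomorphism is in hand, $\Spin(5)\subset\Spin(6)\cong\SU(4)$ can be identified with the stabiliser of a non-degenerate element of $\Lambda^1_6\cong\Lambda^2\C^4$, i.e.\ of a symplectic form on $\C^4$, which is precisely $\Sp(2)$.

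The main obstacle is the preparatory step in each argument: showing these small spinor representations are faithful and, for $\Spin(5)$, deciding that the invariant bilinear form is symplectic rather than orthogonal. Once those are settled — the first from the explicit weights of $\Delta^\pm$ in Proposition \ref{diracspinorprop}, the second from the inequality $\dim\Spin(5)>\dim\OR(4)$ — the concluding dimension count is routine, the matching of Lie-algebra dimensions being exactly the diagram coincidences $D_3=A_3$ and $B_2=C_2$ recorded just before the proposition. One could even bypass the construction entirely, since $\Spin(6)$ and $\SU(4)$, resp.\ $\Spin(5)$ and $\Sp(2)$, are compact, connected, simply connected and share a Dynkin diagram, hence are isomorphic by the uniqueness statement following Example \ref{bigexamplewithdynkindiagrams}; but the explicit spinor realisation is the more informative proof.
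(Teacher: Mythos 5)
Your argument is correct and follows the paper's basic plan: use the degree-$4$ spinor representations $\Delta^+$ and $\Delta$, show they are faithful with image in $\SU(4)$ resp.\ $\Sp(2)$, then conclude via a Lie-algebra dimension count plus connectedness of the target. The differences lie in the sub-steps. For faithfulness, the paper uses the algebra isomorphisms $\Cl_0(V)\cong\Hom_\C(\Delta,\Delta)$ and $\Cl_0(V)\cong\Hom_\C(\Delta^+,\Delta^+)\oplus\Hom_\C(\Delta^-,\Delta^-)$ (the latter invoking proposition \ref{selfdualspinrepprop} to identify $\Delta^-$ with $(\Delta^+)^*$ for odd $n$), whereas you confine the kernel to the maximal torus and read it off the weights. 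This works, but your intermediate claim that integrality of every $x_j$ already forces $t=e$ is not by itself correct in $\Spin(6)$ --- $(x_1,x_2,x_3)=(1,0,0)$ gives $t=-1$ --- one also needs $\tfrac12\sum x_j\in\Z$, which is fortunately among the integrality conditions you imposed (it is the first weight), so the conclusion stands; you should say ``the weight conditions recover exactly the integer lattice of the $\Spin$ torus.'' The genuinely different move is in the $\Spin(5)\cong\Sp(2)$ half: the paper simply asserts that $\Delta$ has degree $2$ over $\mbb{H}$ and then maps into $\Sp(2)\subset\GL(2,\mbb{H})$, whereas you derive the symplectic (rather than orthogonal) invariant form from self-duality together with the inequality $\dim\OR(4)=6<10=\dim\Spin(5)$. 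This is more self-contained and arguably the cleanest way to produce the quaternionic structure without appealing to the structure of the real Clifford algebra. Your two alternative routes (the stabiliser of a symplectic form in $\Lambda^2\C^4$, and the compact--simply-connected--same-Dynkin-diagram shortcut via $D_3=A_3$, $B_2=C_2$) are both also valid.
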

\begin{proof}
	We do both cases in parallel. $\Spin (5)$ has the representation $\Delta$ of degree 4 over $\C$ and degree 2 over $\mbb{H}$. We can impose a Hermitian form, invariant under the compact group $\Spin (5)$, giving us a homomorphism\footnote{This is because $\Sp (2)$ is \emph{defined} to be the subgroup of $\GL (2, \mbb{H})$ under which the inner product on $\mbb{H}$ is invariant.} from $\Spin (5) \to \Sp (2)$ which we also denote with $\Delta$. Similarly, $\Spin (6)$ has the representation $\Delta^+$ of degree 4 over $\C$ and we have $\Delta^+ : \Spin (6) \to \U(4)$. We first wish to show that $\imag \Delta^+ \subset \SU (4)$.

Let $t \in T \subset \Spin (6)$. Then $t$ acts with eigenvalues defined by weights
		\begin{equation*}
			\left\{ \frac{1}{2} \left(x_1 + x_2 + x_3 \right) , \frac{1}{2} \left(x_1 - x_2 - x_3 \right), \frac{1}{2} \left(- x_1 + x_2 - x_3 \right), \frac{1}{2} \left(-x_1 - x_2 + x_3 \right) \right\}\ ;
		\end{equation*}
		these add to zero so the eigenvalues multiply to 1 and $t$ must act with determinant $1$. Hence, any $gtg\inv$ acts with $\det 1$ and $\Delta^+$ maps to $\SU(4)$.
		
		Now $\Delta$ is faithful: if $g \in \Spin (2n + 1)$ and $g \mapsto 1$ in $\Hom_\C (\Delta, \Delta) \cong \Cl (V)_0$ then $g$ is 1. Also $\Delta^+$ is faithful if $n$ is odd, for if $g$ acts as 1 on $\Delta^+$, then $g \in \Spin (2n)$ acts as 1 on the dual, $\Delta^-$, so $g \mapsto 1$ in $\Hom_\C (\Delta^+ , \Delta^+) + \Hom_\C (\Delta^- , \Delta^-) = \Cl(V)_0$. Thus $g = 1$. Hence the two maps $\Delta$, $\Delta^+$ are injective homomorphisms, and induce injective homomorphisms $\diff \Delta$, $\diff \Delta^+$; they are thus isomorphisms for dimensional reasons. Hence, $\Delta$ and $\Delta^+$ map small neighbourhoods in $\Spin (5)$, $\Spin (6)$ onto small neighbourhoods of $\Sp (2)$, $\SU(4)$ respectively. But $\Sp (2)$ and $\SU(4)$ are connected, so $\Delta : \Spin (5) \to \Sp (2)$ and $\Delta^+ : \Spin (6) \to \SU(4)$ must be surjections.
\end{proof}

\begin{cor}
	The group $\Spin (5)$ acts transitively on the unit sphere $S^7 \subset \Delta$.
\end{cor}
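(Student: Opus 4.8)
The plan is to transport the question through the isomorphism $\Delta : \Spin(5) \to \Sp(2)$ established in Proposition \ref{spin6congsu4}, reducing it to the familiar statement that $\Sp(2)$ acts transitively on the unit sphere of $\mbb{H}^2$.

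First I would recall that, as a representation of $\Spin(5)$, the spinor module $\Delta$ is a $4$-dimensional complex (equivalently $2$-dimensional quaternionic) space carrying a $\Spin(5)$-invariant Hermitian form, and that under $\Delta$ the action of $\Spin(5)$ becomes the tautological action of $\Sp(2)$ on $\mbb{H}^2$. The real norm determined by this form is preserved, so the action of $\Spin(5)$ restricts to an action on $S^7 \subset \Delta \cong \R^8$, and transitivity of $\Spin(5)$ on $S^7$ is equivalent to transitivity of $\Sp(2)$ on the set of unit vectors of $\mbb{H}^2$.

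Next I would prove the latter directly. Given a unit vector $v \in \mbb{H}^2$, extend it to an $\mbb{H}$-basis of $\mbb{H}^2$ and apply Gram--Schmidt with respect to the quaternionic Hermitian form to obtain an orthonormal pair $(v, w)$; the $2 \times 2$ quaternionic matrix with columns $v$ and $w$ is then $\mbb{H}$-unitary, i.e.~lies in $\Sp(2)$, and carries the first standard basis vector $e_1$ to $v$. Since $v$ was arbitrary, $\Sp(2)$ is transitive on unit vectors, and hence $\Spin(5)$ is transitive on $S^7$.

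The one point needing care --- and the main (if minor) obstacle --- is the bookkeeping forced by the non-commutativity of $\mbb{H}$: one must fix whether $\Delta \cong \mbb{H}^2$ is a left or right $\mbb{H}$-module, multiply quaternionic scalars consistently on that side throughout Gram--Schmidt, and verify that the normalisation and orthogonalisation steps genuinely produce a matrix in $\Sp(2)$. An alternative that sidesteps this is to compute the stabiliser of $e_1$ in $\Sp(2)$: it consists of the block matrices $\begin{psmallmatrix} 1 & 0 \\ 0 & b \end{psmallmatrix}$ with $b \in \Sp(1) \cong S^3$, so the orbit of $e_1$ is a submanifold of $S^7$ of dimension $\dim \Sp(2) - \dim \Sp(1) = 10 - 3 = 7$; being compact it is closed, being of full dimension it is open, and $S^7$ is connected, so the orbit is all of $S^7$.
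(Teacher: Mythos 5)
Your proof is correct and follows essentially the same route as the paper: reduce via the isomorphism $\Spin(5) \cong \Sp(2)$ from the preceding proposition to the transitivity of $\Sp(2)$ on $S^7 \subset \mbb{H}^2$. The paper's proof is a one-line invocation of that transitivity as a known fact, whereas you go on to supply two valid justifications for it (quaternionic Gram--Schmidt, and the orbit--stabiliser dimension count), either of which suffices.
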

\begin{proof}
	$\Sp (2)$ acts transitively on $S^7 \subset \mbb{H}^2$.
\end{proof}

\begin{cor}
\label{subgroupofspin(6)cor}
	The group $\Spin (6)$ acts transitively on pairs $(x, z)$, $x \in S^5 \subset \R^6$, $z \in S^7 \subset \Delta^+$. Moreover the subgroup fixing $z \in S^7 \subset \Delta^+$ may be taken, by a suitable choice of $z$, to be $\SU(3) \subset \Spin (6)$, where this inclusion arises by lifting the composite $\SU(3) \hookrightarrow \U(3) \hookrightarrow \SO(6)$ to $\Spin (6)$.
\end{cor}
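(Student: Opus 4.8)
The plan is to bootstrap from the two preceding corollaries. We already know that $\Spin(6)\cong\SU(4)$ acts transitively on $S^7\subset\Delta^+\cong\C^4$ (via the identification $\Delta^+\cong\C^4$ as the standard representation of $\SU(4)$), and that $\Spin(5)\cong\Sp(2)$ acts transitively on $S^7\subset\Delta$. First I would establish transitivity on the pairs $(x,z)$: fix $z\in S^7\subset\Delta^+$. The stabiliser of $z$ in $\Spin(6)\cong\SU(4)$ is a copy of $\SU(3)$ — this is the classical fact that $\SU(4)$ acts transitively on $S^7$ with stabiliser $\SU(3)$, i.e.\ $S^7\cong\SU(4)/\SU(3)$. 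Now I must check that this $\SU(3)$ still acts transitively on the $x$-sphere $S^5\subset\R^6$. For this, observe that the composite $\Spin(6)\to\SO(6)$ restricted to the stabiliser $\SU(3)$ lands in the subgroup of $\SO(6)$ fixing nothing in particular on $\R^6$; concretely, under $\Spin(6)\cong\SU(4)$ the vector representation $\R^6\otimes\C\cong\C^6$ is $\Lambda^2\C^4$, and $\SU(3)\subset\SU(4)$ (as the stabiliser of a line, or rather of the vector $z$) acts on $\Lambda^2\C^4\cong\Lambda^2\C^3\oplus(\C^3\wedge z)\cong\cc{\C^3}\oplus\C^3$. The real points $\R^6$ of this form a real $6$-dimensional space on which $\SU(3)$ acts as $\C^3$ viewed as $\R^6$ — and $\SU(3)$ acts transitively on $S^5\subset\C^3$. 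Hence $\Spin(6)$ is transitive on the pairs, and the stabiliser of a pair $(x,z)$ is an $\SU(2)$.

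The substantive part of the statement is the last sentence: that the stabiliser of a suitable $z$ is \emph{exactly} the lift to $\Spin(6)$ of the standard chain $\SU(3)\hookrightarrow\U(3)\hookrightarrow\SO(6)$. The approach is to identify both subgroups concretely and match them. On one side, $\U(3)\hookrightarrow\SO(6)$ comes from regarding $\C^3$ as $\R^6$; lifting to $\Spin(6)$ and passing to $\SU(4)$ via $\Delta^+$, one computes (using the weight description of $\Delta^+$ recorded in the proof of Proposition \ref{spin6congsu4}, namely the weights $\tfrac12(\pm x_1\pm x_2\pm x_3)$ with an even number of minus signs) that $\Delta^+$ restricted to this $\U(3)$ splits as $\xi^{-3}\cdot 1 + \xi\cdot\Lambda^1_3$ in the notation of the $\E_6$ construction — i.e.\ there is a distinguished line (the $\xi^{-3}$ piece) on which $\SU(3)\subset\U(3)$ acts trivially. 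Choosing $z$ to be a unit vector in that line, the standard $\SU(3)$ fixes $z$. On the other side, the stabiliser of $z$ in $\SU(4)$ is $\SU(3)$ by the homogeneous-space description; since both subgroups are connected, of the same dimension $8$, and one contains the other, they coincide. This is the step I expect to be the main obstacle: keeping the identifications $\Spin(6)\cong\SU(4)$, $\R^6\cong\Lambda^2\C^4$, and the weight bookkeeping all consistent, so that "the $\SU(3)$ fixing $z$" and "the lift of the standard $\SU(3)$" are literally the same subgroup and not merely conjugate. The cleanest route is to fix an explicit maximal torus throughout, compute the weights of $z$ and of the standard $\U(3)$-action on both $\R^6$ and $\Delta^+$, and read off that the trivial summand of $\Delta^+|_{\SU(3)}$ is one-dimensional, forcing $z$ (up to scalar) and hence forcing the two $\SU(3)$'s to agree. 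The transitivity claims and the $\SU(2)$-stabiliser then follow immediately by dimension count, since $\dim\Spin(6)-\dim S^5-\dim S^7 = 15-5-7 = 3 = \dim\SU(2)$.
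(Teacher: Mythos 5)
Your proof is correct, and the substantive half --- identifying the stabiliser of a suitable $z$ with the lift of $\SU(3)\hookrightarrow\U(3)\hookrightarrow\SO(6)$ by computing that $\Delta^+|_{\SU(3)}=1+\Lambda^1$ from the half-sum weights, then observing that the stabiliser of a unit vector in $\C^4$ under $\SU(4)\cong\Spin(6)$ is exactly an $\SU(3)$ so no collapse or enlargement can occur --- is essentially the paper's argument. Where you diverge is in establishing transitivity on pairs $(x,z)$. The paper fibres the other way: it uses $\Spin(6)\twoheadrightarrow\SO(6)$ transitive on $S^5$, fixes $x$ to get a $\Spin(5)$, restricts $\Delta^+$ to $\Spin(5)$ (getting $\Delta$ by Proposition \ref{inclusionofspingroupsinside}), and invokes the immediately preceding corollary that $\Spin(5)\cong\Sp(2)$ is transitive on $S^7\subset\Delta$. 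You instead fix $z$ first, get $\SU(3)$, and must then argue that $\SU(3)$ is transitive on $S^5$. This works, but it costs you the extra identification $\R^6\otimes\C\cong\Lambda^2\C^4$ as $\SU(4)$-representations (which you verify via weights) together with the observation that the real structure on $\Lambda^2\C^4$ restricted to $\SU(3)$ swaps the summands $\cc{\C^3}\oplus\C^3$ so that the real points form $\C^3$ as a real $\SU(3)$-module --- a point worth spelling out if you take this route, since it is where the argument could slip. The paper's ordering reuses corollaries already in hand and avoids introducing $\Lambda^2\C^4$; yours is self-contained inside the $\SU(4)$ picture and makes the homogeneous-space structure $S^7\cong\SU(4)/\SU(3)$ the organising principle. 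One small sign slip: with the paper's conventions (see the $\E_6$ section) $\Delta^+$ restricts to $S^1\times\SU(3)$ as $\xi^3\otimes 1+\xi^{-1}\otimes\Lambda^1_3$, not $\xi^{-3}\cdot 1+\xi\cdot\Lambda^1_3$; this does not affect the conclusion, since only the $\SU(3)$-content $1+\Lambda^1$ is used.
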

\begin{proof}
	$\Spin (6)$ covers $\SO(6)$, which acts transitively on $S^5$. Taking a suitable $x$, e.g.~$x = (0, \ldots , 0, 1)$, the subgroup fixing $x$ is $\Spin (5)$. The restriction of $\Delta^+$ to $\Spin (5)$ is $\Delta$ (by proposition \ref{inclusionofspingroupsinside}) and $\Spin (5)$ acts transitively on $z \in S^7 \subset \Delta$ by the last corollary. The weights of $\Delta^+$ are $\left\{ \frac{1}{2} (x_1 + x_2 + x_3 ) , \frac{1}{2} (x_1 - x_2 - x_3) , \ldots \right\}$ and their restrictions to $T \subset \SU(3)$ are\footnote{This follows from \ref{rootsoftheothermatrixgroups} and \ref{maximaltoursofun}.} $\{ 0, x_1, x_2, x_3 \}$; hence the restriction of $\Delta^+$ to $\SU(3)$ is $1 + \Lambda^1$. So this $\SU(3)$ fixes points of $S^7 \subset \Delta^+$: in fact, a whole circle of them.\footnote{The representation $1 + \Lambda^1 = 1 + \C^3 \subset \C^4$ of $\SU (3)$ has $\SU(3)$ acting by $\diag (1, A)$ for $A \in \SU(3)$. Hence, the action of $\SU(3)$ on $S^7 \subset \C^4$ fixes a (complex) circle. A familiar analogy is $\SO (2)$ rotating the 3-sphere about the $z$-axis, which fixes an $S^0$, the north and south poles.} For any such fixed point, the subgroup fixing it is no bigger, since in $\SU(4) \cong \Spin (6)$ the subgroup fixing a unit vector in $\C^4$ is an $\SU(3)$.  
\end{proof}

\begin{cor}
	The group $\Spin (7)$ acts transitively on triples $(x, y, z)$ where $x, y$ are orthogonal and $z \in S^7 \subset \Delta$.
\end{cor}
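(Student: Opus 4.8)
The plan is to imitate the proof of Corollary~\ref{subgroupofspin(6)cor}, building the orbit one coordinate at a time and then invoking the transitivity statements already in hand: transitivity of $\SO(7)$ on frames, and transitivity of $\Spin(5)$ on the spinor sphere. First I would recall that $\Spin(7)$ acts on $\R^7$ through $\SO(7)$, which acts transitively on ordered orthonormal $2$-frames in $\R^7$. Fixing a convenient frame, say $x_0 = (0,\dots,0,1)$ and $y_0 = (0,\dots,1,0)$, the subgroup of $\SO(7)$ fixing $x_0$ is $\SO(6)$ (acting on $x_0^\perp$), whose connected double cover inside $\Spin(7)$ is $\Spin(6)$; running the same step once more inside $\Spin(6)$ — which covers $\SO(6)$ acting transitively on $S^5\subset x_0^\perp$ — the subgroup of $\Spin(7)$ fixing both $x_0$ and $y_0$ is $\Spin(5)$. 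This is exactly the chain $\Spin(5)\hookrightarrow\Spin(6)\hookrightarrow\Spin(7)$ of Proposition~\ref{inclusionofspingroupsinside}.

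Next I would identify the restriction of $\Delta$ to this $\Spin(5)$. By Proposition~\ref{inclusionofspingroupsinside}(i), $\Delta$ of $\Spin(7)$ restricts to $\Delta^+ + \Delta^-$ of $\Spin(6)$, and each of $\Delta^\pm$ restricts in turn to $\Delta$ of $\Spin(5)$. Keeping careful track of dimensions — the spinor representations of $\Spin(7)$ and of $\Spin(5)$ are each $8$-dimensional over $\R$, the apparent doubling over $\C$ being absorbed by the quaternionic structure of the $\Spin(5)$ spinor — one concludes that $\Delta|_{\Spin(5)}$ is just the spinor representation $\Delta$ of $\Spin(5)$, and in particular that the sphere $S^7\subset\Delta$ is carried onto the sphere $S^7\subset\Delta$ of the corollary to Proposition~\ref{spin6congsu4}. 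That corollary, stating that $\Spin(5)$ acts transitively on $S^7\subset\Delta$, then applies without change.

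Finally I would run the standard fibre argument. Given triples $(x,y,z)$ and $(x',y',z')$, choose $g\in\Spin(7)$ with $g\cdot(x,y)=(x',y')$ (possible by the first step, lifting the corresponding element of $\SO(7)$); since $\Delta$ is a real representation of the compact group $\Spin(7)$ it acts by isometries, so $g\cdot z$ again lies on $S^7\subset\Delta$. Now $g\cdot z$ and $z'$ both lie in the spinor sphere, and the stabiliser of $(x',y')$ is a copy of $\Spin(5)$ acting transitively there, so some $h$ in it sends $g\cdot z$ to $z'$ while fixing $x'$ and $y'$; then $hg$ carries $(x,y,z)$ to $(x',y',z')$. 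I expect the middle step to be the only real obstacle: one must be careful with real versus complex dimensions when chasing $\Delta$ through the two inclusions, so as to confirm that the genuine $8$-dimensional real spinor representation of $\Spin(5)$ reappears (rather than two copies of a $4$-dimensional one), which is what guarantees that $S^7$ — not an $S^{15}$ — is the object in play and that the earlier corollary is directly applicable.
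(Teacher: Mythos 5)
Your proof is correct, and the orbit--stabiliser scaffolding (move the $2$-frame, then move $z$ inside the stabiliser) is the right mechanism; the delicate part, as you flag, is confirming that the real $8$-dimensional $\Delta$ of $\Spin(7)$ restricts to the real $8$-dimensional spinor of $\Spin(5)$ rather than something of twice the size, and your dimension/structure-map bookkeeping handles that properly. Where you diverge from the paper is in the depth of the descent: the paper's proof peels off only one coordinate, observing that the stabiliser of $y$ is $\Spin(6)$ and that the restricted $\Delta$ is the real $8$-dimensional representation underlying $\Delta^+$, and then cites the immediately preceding corollary that $\Spin(6)$ is already transitive on pairs $(x,z)$ with $x\in S^5$, $z\in S^7\subset\Delta^+$. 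You instead peel off two coordinates down to $\Spin(5)$ and invoke the earlier corollary about $\Spin(5)$ on $S^7\subset\Delta$ --- in effect re-proving the $\Spin(6)$ pair-transitivity inline. Both are valid; the paper's version is shorter precisely because it has that intermediate corollary already on the shelf, whereas yours is more self-contained but duplicates work that the chapter had already packaged. If you adopt the paper's shortcut, the only nontrivial check left is the identification of $\Delta|_{\Spin(6)}$ with the real $\Delta^+$, which you have already done implicitly en route to $\Spin(5)$.
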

\begin{proof}
	$\Spin (6)$ covers $\SO(7)$, which is transitive on points $y \in S^6$. Choose $y = (0, \ldots, 0, 1)$. Then the subgroup fixing $y$ is $\Spin (6)$. Now $\Delta$ restricts to $\Delta^+ + \Delta^-$ over $\C$ (by proposition \ref{inclusionofspingroupsinside} again), but starting with $\Delta$ as a fixed vector space of dimension 8 over $\R$, the restriction to $\Spin (6)$ is the representation of dimension 8 underlying $\Delta^+$ (or $\Delta^-$). Finally, note that by the previous corollary, $\Spin (6)$ is transitive on pairs $(x, z)$, $x \in S^6$, $z \in S^7 \subset \Delta^+$.
\end{proof}

\begin{thm}
\label{g2thm}
	Consider the subgroup $G$ of $\Spin (7)$ which fixes a point $z \in S^7 \subset \Delta$. Then $G$ is a compact, connected, simply connected Lie group of rank 2 and dimension 14, with the Dynkin diagram \begin{tikzpicture}[scale=0.8, transform shape]
\draw[fill=black]
(0,0) circle [radius=.08] 
(1,0) circle [radius=.08];
\draw 
(0,-.06) --++ (1,0)
(0, 0) --++ (1, 0)
(0,+.06) --++ (1,0);                      
\end{tikzpicture}
and commonly called $\mrm{G}_2$. Moreover, $\mrm{G}_2$ is transitive on pairs $(x, y)$ of orthogonal vectors in $S^6 \subset \R^7$.
\end{thm}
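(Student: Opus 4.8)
The plan is to exploit the chain of transitivity results proved in the preceding corollaries, culminating in the statement that $\Spin(7)$ acts transitively on triples $(x,y,z)$ with $x,y$ orthogonal unit vectors in $S^6 \subset \R^7$ and $z \in S^7 \subset \Delta$. Since $\mrm{G}_2$ is by definition the stabiliser in $\Spin(7)$ of a fixed $z_0 \in S^7 \subset \Delta$, the claimed transitivity of $\mrm{G}_2$ on orthogonal pairs $(x,y)$ in $S^6$ is exactly the ``fibrewise'' consequence of that triple-transitivity: given two orthogonal pairs $(x,y)$ and $(x',y')$, I would first use transitivity of $\Spin(7)$ on triples to find $g \in \Spin(7)$ with $g \cdot (x,y,z_0) = (x',y',z_0)$; such a $g$ automatically fixes $z_0$, hence lies in $\mrm{G}_2$, and carries $(x,y)$ to $(x',y')$. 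This is the clean, structural argument and it is essentially immediate once the corollary on $\Spin(7)$ is in hand.

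Next I would establish the asserted invariants of $\mrm{G}_2$: compactness, connectedness, simple-connectedness, rank $2$, dimension $14$, and the Dynkin diagram. Compactness is clear since $\mrm{G}_2$ is a closed subgroup (a stabiliser) of the compact group $\Spin(7)$. For the dimension, I would compute via the orbit--stabiliser principle applied along the tower: $\Spin(7)$ has dimension $21$; the orbit of $(x,y,z)$ is a bundle over $S^6$ (dimension $6$) with fibre over $x$ the orbit of $(y,z)$ under $\Spin(6)$, which is a bundle over $S^5$ (dimension $5$) with fibre $S^7$ (dimension $7$) by Corollary~\ref{subgroupofspin(6)cor}; hence the orbit has dimension $6+5+7 = 18$ and $\dim \mrm{G}_2 = 21 - 18 + \text{(dimension of the circle of fixed points)}$. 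Care is needed here: Corollary~\ref{subgroupofspin(6)cor} shows the stabiliser of $z$ inside $\Spin(6)$ is $\SU(3)$ which fixes a whole circle in $S^7$, so the orbit of $z$ under $\Spin(6)$ is $S^7$ minus nothing essential --- rather, the honest count is $\dim\mrm{G}_2 = \dim\Spin(7) - \dim(\Spin(7)\text{-orbit of }z) = 21 - 7 = 14$ provided the orbit of $z_0$ under $\Spin(7)$ is all of $S^7$, which follows because $\Spin(7) \twoheadrightarrow \Spin(6) \supset$ the transitive action on $S^7 \subset \Delta$. So the transitivity of $\Spin(7)$ on $S^7 \subset \Delta$ (itself a consequence of $\Spin(5)$ acting transitively on $S^7$, via the restriction $\Delta|_{\Spin(5)} = \Delta^+ + \Delta^-$) gives $\dim \mrm{G}_2 = 14$ directly.

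For connectedness I would note that $S^7$ is simply connected, so the long exact homotopy sequence of the fibration $\mrm{G}_2 \to \Spin(7) \to S^7$ gives $\pi_0(\mrm{G}_2) \hookrightarrow \pi_0(\Spin(7)) = 0$ and $\pi_1(\mrm{G}_2)$ surjects onto $\pi_1(\Spin(7)) = 0$ with kernel a quotient of $\pi_2(S^7) = 0$, whence $\mrm{G}_2$ is connected and simply connected. The rank I would pin down by exhibiting a maximal torus: the maximal torus $T$ of $\Spin(7)$ has rank $3$, and the subgroup of $T$ fixing $z_0$ is cut out by one linear condition on the weights $\tfrac12(\pm x_1 \pm x_2 \pm x_3)$ of $\Delta$ (picking $z_0$ in the weight space $\tfrac12(x_1+x_2+x_3)$ forces $x_1+x_2+x_3 = 0$ on the stabiliser), leaving a $2$-torus; combined with $\dim\mrm{G}_2 = 14$ this forces rank exactly $2$. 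Finally, the Dynkin diagram: a rank-$2$ simple compact Lie group of dimension $14$ must be of type $\mrm{G}_2$ by the Killing--Cartan classification (the rank-$2$ candidates $A_2, B_2=C_2, A_1\times A_1, \mrm{G}_2$ have dimensions $8, 10, 6, 14$), so the diagram is the double-tripled-bond two-node graph as drawn. The main obstacle I anticipate is the bookkeeping in the dimension and rank counts --- making sure the orbit-stabiliser arithmetic and the ``circle of fixed points'' subtlety from Corollary~\ref{subgroupofspin(6)cor} are handled without an off-by-one error --- rather than any conceptual difficulty; the transitivity claim itself is a one-line deduction from the triple-transitivity corollary.
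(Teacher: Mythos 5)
Your transitivity argument, the dimension count $\dim G = \dim\Spin(7) - \dim S^7 = 21 - 7 = 14$, and the homotopy long-exact-sequence argument for connectedness and simple-connectedness are all correct. In fact your choice of fibration $G \to \Spin(7) \to S^7 \subset \Delta$ is slightly cleaner than the paper's ($\SU(3) \to G \to S^6$), since it sidesteps identifying the stabiliser $\SU(3)$ at that stage. The Dynkin-diagram step also takes a genuinely different route: you invoke the Killing--Cartan classification to recognise $\mrm{G}_2$ from the pair (rank, dimension), whereas the paper computes the root system of $G$ explicitly --- weights $\{0,0,\pm(x_i - x_j), \pm x_i\}$ for the maximal torus of $\SU(3) \subset G$ acting on $\mf{g}$ --- and reads off the diagram from the simple roots. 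Your classification route is valid provided you note that a compact, connected, simply connected Lie group has no torus factor (it is semisimple), so only the semisimple rank-$2$ types need be compared; that observation is worth stating since you list $A_1 \times A_1$ under the heading ``simple.''

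The genuine gap is in the rank argument. Exhibiting a $2$-torus $T \cap G$ gives rank $\geq 2$, but the claim that ``combined with $\dim G = 14$ this forces rank exactly $2$'' does not follow: $\SU(2)^2 \times \SU(3)$ is compact, connected, simply connected, semisimple, of rank $4$ and dimension $14$, so dimension alone does not pin down the rank, and your Dynkin-diagram step therefore dangles. What is missing is an argument that no torus in $G$ has dimension $\geq 3$. The clean fix, entirely in the spirit of your weight computation, is to observe that a $3$-torus in $\Spin(7)$ is a maximal torus, and every maximal torus of $\Spin(7)$ acts on $\Delta$ with weights $\tfrac12(\pm x_1 \pm x_2 \pm x_3)$, all nonzero; hence it fixes no nonzero vector of $\Delta$ and cannot lie in $G$. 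The paper instead verifies maximality of its $2$-torus directly via the criterion (stated as a Proposition in its section on characters and weights) that a torus $T'$ in a compact group is maximal iff the zero weight of $\Ad\big|_{T'}$ has multiplicity exactly $\dim T'$; the explicit weight list above shows the zero weight occurs exactly twice. Either repair closes the gap.
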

\begin{proof}
	 The last sentence follows from the above corollary. Now clearly, $G$ is a closed subgroup of $\Spin (7)$. Since $\Spin (7)$ is transitive on $S^7$, we have $\dim G = \dim \Spin (7) - \dim S^7 = 21 - 7 = 14$. Let $H \subset G$ be the subgroup that fixes $y = (0, 0, \ldots, 1)$. Then $H$ is the same as the subgroup of $\Spin (6)$ which fixes $z$, and by a suitable choice of $z$ we can take $H = \SU (3) \subset \Spin (6)$ (this was corollary \ref{subgroupofspin(6)cor}). Since $H$ is connected and $G/ H = S^6$ is connected, we find that $G$ is connected.\footnote{Use the homotopy exact sequence of the fibration $H \to G \to G/H$.} Similarly, $G$ is simply connected.
	
	We determine the roots of $G$. $H = \SU (3)$ acts on $\mf{h} \subset \mf{g}$ by the adjoint action; we wish to know how $H$ acts on $\mf{g}/\mf{h}$, the tangent space to $S^6$ at $y$. By construction, $S^6 = \Spin (7) / \Spin (6)$, so we need to look at the action of $\Spin (6)$ on $\mf{spin}(7)/\mf{spin}(6)$. This is the geometrically obvious action where the tangent space to $S^6$ at $(0, 0, \ldots, 1)$ is $\R^6$, the space of the first 6 coordinates, and $\Spin (6)$ acts on it as usual. The weights are hence given by\footnote{This can be seen directly: the maximal torus $\tilde{T}$ of $\Spin (6)$ is given in remark \ref{maximaltorusofspinn}. Since we are in an even-dimensional space, the eigenvalues of this rotation matrix occur in complex conjugate pairs, $e^{\pm ix_j}$. Hence, the eigenvalues of the action of $\tilde{\mf{t}}$ on $T\R^6 \cong \R^6$ are $\pm x_j$. See also remark \ref{weightsofspinrepsrmk}.} $\{\pm x_1, \pm x_2, \pm x_3 \}$. Thus, $T \subset \SU(3)$ acts on $\mf{g}$ with weights\footnote{That is, with the (standard) weights of $\SU(3)$ given in example \ref{weightsofothermatrixgroups} together with the weights just computed.} $\{0, 0, \pm (x_1 - x_2) , \pm (x_2 - x_3), \pm (x_3 - x_1), \pm x_1, \pm x_2, \pm x_3 \}$. We conclude that $T$ is maximal\footnote{The trivial representation occurs exactly twice in the list of the 14 (i.e.~all) weights.} (so $G$ has rank $2$), that these are the roots of $G$, and that the Dynkin diagram is \begin{tikzpicture}[scale=0.8, transform shape]
\draw[fill=black]
(0,0) circle [radius=.08] 
(1,0) circle [radius=.08];
\draw 
(0,-.06) --++ (1,0)
(0, 0) --++ (1, 0)
(0,+.06) --++ (1,0);                      
\end{tikzpicture}.
\end{proof}

$\mrm{G}_2$ starts life with two obvious representations: a 7-dimensional representation $G_2 \subset \Spin (7)$ acting on $\R^7$ with weights $\{ 0, \pm x_1 , \pm x_2 , \pm x_3 \}$, and the 14-dimensional adjoint representation, $\Ad$ with weights as in the above theorem.
\end{appendices}

\bibliographystyle{plain}
\bibliography{biblio}

\end{document}